\definecolor{shadecolor}{rgb}{0.95, 0.95, 0.86}
\renewcommand{\d}{{\mathrm d}}
\newcommand{\im}{\mathrm{i}}
\newcommand{\e}{\mathrm{e}}
\newcommand{\z}{\zeta}
\def\eq{\begin{equation}}
\def\endeq{\end{equation}}
\numberwithin{equation}{section}
\newtheorem{theo}{Theorem}[section]
\newtheorem{lem}[theo]{Lemma}
\newtheorem{rem}[theo]{Remark}
\newtheorem{problem}[theo]{Riemann-Hilbert Problem}
\newtheorem{remark}[theo]{Remark}
\newtheorem{prop}[theo]{Proposition} 
\newtheorem{cor}[theo]{Corollary}
\begin{document}

\title[Deformed Tracy-Widom distributions I.]{Large deformations of the Tracy-Widom distribution I. Non-oscillatory asymptotics}

\author{Thomas Bothner}
\address{Department of Mathematics, University of Michigan, 2074 East Hall, 530 Church Street, Ann Arbor, MI 48109-1043, United States}
\email{bothner@umich.edu}

\author{Robert Buckingham}
\address{Department of Mathematical Sciences, University of Cincinnati, P.O. Box 210025, Cincinnati, OH 45221-0025, United States}
\email{buckinrt@uc.edu}

\keywords{Thinned GOE/GUE/GSE process, transition asymptotics, Weibull statistics, integrable integral operators, Riemann-Hilbert problem, Deift-Zhou nonlinear steepest descent method.}

\subjclass[2010]{Primary 60B20; Secondary 45M05, 82B26, 33C10, 33C15.}

\thanks{T.B. is grateful to Alexander Its for stimulating discussions about this project.  R.B.'s work was supported by the National Science Foundation through grants DMS-1312458 and DMS-1615718, and by a Faculty Release Fellowship from the Charles Phelps Taft Research Center.}

\begin{abstract}
We analyze the left-tail asymptotics of deformed Tracy-Widom distribution 
functions describing the fluctuations of the largest eigenvalue in 
invariant random matrix ensembles after removing each soft edge eigenvalue 
independently with probability $1-\gamma\in[0,1]$.  As $\gamma$ 
varies, a transition from Tracy-Widom statistics ($\gamma=1$) to classical 
Weibull statistics ($\gamma=0$) was observed in the physics literature by Bohigas, de Carvalho, and Pato 
\cite{BohigasCP:2009}.  We provide a description of this transition by rigorously computing the leading-order left-tail asymptotics of the thinned GOE, GUE and GSE Tracy-Widom distributions. In this paper, we obtain the asymptotic behavior in the non-oscillatory region with $\gamma\in[0,1)$ fixed (for the GOE, GUE, and GSE distributions) and $\gamma\uparrow 1$ at a controlled rate (for the GUE distribution).  This is the first step in an ongoing program to completely describe the transition between Tracy-Widom and Weibull statistics. As a corollary to our results, we obtain a new total-integral formula involving the Ablowitz-Segur solution to the second Painlev\'e equation.
%
%
\end{abstract}

\date{\today}
\maketitle

\section{Introduction}  The Tracy-Widom distribution functions $F_{\beta}$ are universal probability distributions describing extremal behavior of, among a host of other applications, eigenvalues in the Gaussian invariant ensembles \cite{TracyW:1994,TW}, the increasing subsequences of a random permutation \cite{BaikDJ:1999}, last-passage percolation, randomly growing Young diagrams, and vicious random walkers \cite{Johansson:2000,BaikR:2001}, the KPZ growth model \cite{PrahoferS:2000}, and growing interfaces in liquid crystals \cite{TakeuchiS:2010}.  Specifically, consider the three classical random matrix ensembles GOE ($\beta=1$), GUE ($\beta=2$) and GSE ($\beta=4$), i.e. we choose a Hermitian matrix {\bf X} with real ($\beta=1$), complex ($\beta=2$), or real quaternion ($\beta=4$) entries and underlying eigenvalue probability density function of the form
\begin{equation*}
	\frac{1}{Z_{N,\beta}}\prod_{1\leq j<k\leq N}|\lambda_k-\lambda_j|^{\beta}\exp\left(-\frac{\beta}{2}\sum_{j=1}^N\lambda_j^2\right)
\end{equation*}
with a normalization constant $Z_{N,\beta}$. The Tracy-Widom functions $F_{\beta}(s)$ are the distribution functions of the (properly centered and scaled) largest eigenvalue $\lambda_{\max}({\bf X})$ as the matrix size $N$ tends to infinity:
\begin{equation}\label{TW:form}
	\lim_{N\rightarrow\infty}\mathbb{P}\left(\frac{\lambda_{\max}({\bf X})-\sqrt{2N}}{2^{-\frac{1}{2}}N^{-\frac{1}{6}}}\leq s\right)=F_{\beta}(s),\ \ s\in\mathbb{R},\ \ \ \ \beta=1,2,4.
\end{equation}
Quite remarkably, the three distribution functions admit the representations \cite{TracyW:1994,TW}
\begin{equation*}
	F_2(s)=\exp\left(-\int_s^{\infty}(t-s)u_{_{\textnormal{HM}}}^2(t)\,\d t\right),\ \ \ \big(F_1(s)\big)^2=F_2(s)\e^{-\mu(s)},\ \ \ \big(F_4(s)\big)^2=F_2(s)\cosh^2\left(\frac{1}{2}\mu(s)\right)
\end{equation*}
in terms of the Hastings-McLeod \cite{HM} solution 
$u_{_\textnormal{HM}}=u_{_{\textnormal{HM}}}(x)$ to the Painlev\'e-II equation
\begin{equation*}
	u''_{_{\textnormal{HM}}}=xu_{_{\textnormal{HM}}}+2u^3_{_{\textnormal{HM}}},\ \ \ \ \ (')=\frac{\d}{\d x};\ \ \ \ \ \ \ u_{_{\textnormal{HM}}}(x)=\frac{x^{-\frac{1}{4}}}{2\sqrt{\pi}}\e^{-\frac{2}{3}x^{\frac{3}{2}}}\big(1+o(1)\big),\ \ x\rightarrow+\infty
\end{equation*}
and its antiderivative
\begin{equation*}
	\mu(s):=\int_s^{\infty}u_{_\textnormal{HM}}(t)\,\d t.
\end{equation*}
The distribution functions $F_{\beta}(s)$ are central to modern integrable probability but they are very different from the classical normal distribution;  in particular we note that
\begin{equation*}
	F_{\beta}(s)=1-c_{\beta}s^{-\frac{3\beta}{4}}\e^{-\frac{2\beta}{3}s^{\frac{3}{2}}}\big(1+o(1)\big),\ \ \ s\rightarrow+\infty;\ \ \ \ c_1=\frac{1}{4\sqrt{\pi}},\ c_2=\frac{1}{16\pi},\ c_4=\frac{1}{512\pi},
\end{equation*}
and \cite{TracyW:1994,DeiftIK:2008,BaikBD:2008,BaikBDI:2009,BorotN:2012}, as $s\rightarrow-\infty$,
\begin{align}
	F_1(s)=&\,\,\tau_1(-s)^{-\frac{1}{16}}\e^{\frac{1}{24}s^3-\frac{1}{3\sqrt{2}}(-s)^{\frac{3}{2}}}\big(1+o(1)\big),\ \ \ \ \ \ F_2(s)=\tau_2(-s)^{-\frac{1}{8}}\e^{\frac{1}{12}s^3}\big(1+o(1)\big),\label{F2tail}\\
	F_4(s)=&\,\,\tau_4(-s)^{-\frac{1}{16}}\e^{\frac{1}{24}s^3+\frac{1}{3\sqrt{2}}(-s)^{\frac{3}{2}}}\big(1+o(1)\big);\ \ \ \ \tau_1=2^{-\frac{11}{48}}\e^{\frac{1}{2}\zeta'(-1)},\tau_2=2^{\frac{1}{24}}\e^{\zeta'(-1)},\  \tau_4=2^{-\frac{35}{48}}\e^{\frac{1}{2}\zeta'(-1)}\nonumber
\end{align}
in terms of the Riemann-zeta function $\z(z)$. Our focus lies on the distribution of the largest eigenvalue $\lambda_{\max}({\bf X},\gamma)$ in the following {\it thinned} process: let $\lambda_1({\bf X})<\ldots<\lambda_{\max}({\bf X})$ denote the eigenvalues of a GOE, GUE, or GSE matrix ${\bf X}$ and fix a number $\gamma\in[0,1]$. Now, discard each eigenvalue independently with probability $1-\gamma$ and define $\lambda_{\max}({\bf X},\gamma)$ as the largest observed eigenvalue after thinning. In the large-$N$ limit its distribution function leads to a one-parameter generalization of \eqref{TW:form}:
\begin{equation}\label{TW:formAS}
	F_\beta(s,\gamma):=\lim_{N\rightarrow\infty}\mathbb{P}\left(\frac{\lambda_{\max}({\bf X},\gamma)-\sqrt{2N}}{2^{-\frac{1}{2}}N^{-\frac{1}{6}}}\leq s\right),\ \ \ \ \ s\in\mathbb{R},\ \ \ \beta=1,2,4.
\end{equation}
This again admits a Painlev\'e representation, as proven in Subsection \ref{detproof} below.  
\begin{prop}\label{detformu} Let $u_{_\textnormal{AS}}=u_{_{\textnormal{AS}}}(x,\gamma)$ denote the Ablowitz-Segur \cite{AblowitzS:1976} solution to the Painlev\'e-II equation
\begin{equation}\label{AS:behavior}
	u''_{_{\textnormal{AS}}}=xu_{_{\textnormal{AS}}}+2u^3_{_{\textnormal{AS}}},\ \ \ \ \ (')=\frac{\d}{\d x};\ \ \ \ \ \ \ u_{_{\textnormal{AS}}}(x,\gamma)=\sqrt{\gamma}\,\frac{x^{-\frac{1}{4}}}{2\sqrt{\pi}}\e^{-\frac{2}{3}x^{\frac{3}{2}}}\big(1+o(1)\big),\ \ x\rightarrow+\infty.
\end{equation}
For any $s\in\mathbb{R}$ and $\gamma\in[0,1]$, the deformed Tracy-Widom distribution functions \eqref{TW:formAS} equal
\begin{equation}\label{th:1}
	F_2(s,\gamma)=\exp\left(-\int_s^{\infty}(t-s)u_{_{\textnormal{AS}}}^2(t,\gamma)\,\d t\right),\ \ \ \ \big(F_4(s,\gamma)\big)^2=F_2(s,\gamma)\cosh^2\left(\frac{1}{2}\mu(s,\gamma)\right),
\end{equation}
and
\begin{equation}\label{th:2}
	\big(F_1(s,\gamma)\big)^2=F_2(s,\overline{\gamma})\frac{\gamma-1-\cosh\mu(s,\overline{\gamma})+\sqrt{\overline{\gamma}}\sinh\mu(s,\overline{\gamma})}{\gamma-2},\ \ \ \overline{\gamma}:=2\gamma-\gamma^2\in[0,1],
\end{equation}
where
\begin{equation}
\label{mu-s-gamma}
	\mu(s,\gamma):=\int_s^{\infty}u_{_\textnormal{AS}}(t,\gamma)\,\d t.
\end{equation}
\end{prop}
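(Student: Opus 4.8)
The plan is to realise each thinned distribution, for finite matrix size $N$, as the generating functional of the corresponding eigenvalue point process, to identify it with a Fredholm determinant (for $\beta=2$) or a Fredholm Pfaffian (for $\beta=1,4$) whose kernel carries an extra factor $\gamma$ on the ray $(s,\infty)$, to pass to the soft-edge scaling limit, and finally to reduce the resulting Airy-type (Pfaffian) determinant to Painlev\'e--II data by the Tracy--Widom method while tracking the role of $\gamma$. Since the thinned largest eigenvalue lies below a given level exactly when every eigenvalue above that level is deleted independently with probability $1-\gamma$, for finite $N$ one has
\[
\mathbb P\Big(\tfrac{\lambda_{\max}(\mathbf X,\gamma)-\sqrt{2N}}{2^{-1/2}N^{-1/6}}\le s\Big)=\mathbb E\Big[\prod_j\big(1-\gamma\,\chi_{(s,\infty)}(\xi_j)\big)\Big],
\]
with $\xi_j$ the scaled eigenvalues, and this equals $\det(I-\gamma K_N)|_{L^2(s,\infty)}$ for the GUE with $K_N$ the Christoffel--Darboux kernel, while for the GOE and GSE it is a Fredholm Pfaffian $\mathrm{Pf}(J-\gamma\mathbb K_N)|_{(s,\infty)}$ with $\mathbb K_N$ the $2\times 2$ matrix correlation kernel. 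Sending $N\to\infty$ and using the standard trace-norm convergence of these kernels on $(s,\infty)$ to the Airy kernel $K_{\mathrm{Ai}}$, respectively to the $2\times2$ Airy matrix kernels $\mathbb K^{\beta}_{\mathrm{Ai}}$, yields $F_2(s,\gamma)=\det(I-\gamma K_{\mathrm{Ai}})|_{L^2(s,\infty)}$ and $F_\beta(s,\gamma)=\mathrm{Pf}(J-\gamma\mathbb K^{\beta}_{\mathrm{Ai}})|_{(s,\infty)}$ for $\beta=1,4$.

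For $\beta=2$ the right-hand side is exactly the object of \cite{TracyW:1994}: there $\det(I-\lambda K_{\mathrm{Ai}})|_{L^2(s,\infty)}=\exp\big(-\int_s^\infty(t-s)q(t)^2\,\d t\big)$, with $q$ the Painlev\'e--II solution satisfying $q(x)\sim\sqrt\lambda\,\mathrm{Ai}(x)$ as $x\to+\infty$; taking $\lambda=\gamma$ and recalling $\mathrm{Ai}(x)=\frac{x^{-1/4}}{2\sqrt\pi}\e^{-\frac23x^{3/2}}(1+o(1))$ identifies $q$ with $u_{_{\textnormal{AS}}}(\cdot,\gamma)$ from \eqref{AS:behavior}, giving the first formula in \eqref{th:1}. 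For $\beta=4$ I would carry out the Tracy--Widom reduction of the Fredholm Pfaffian of $J-\gamma\mathbb K^4_{\mathrm{Ai}}$: because the non--Airy-kernel entries of $\mathbb K^4_{\mathrm{Ai}}$ are built from $\mathrm{Ai}$ and its antiderivative, the Pfaffian collapses to $\sqrt{\det(I-\gamma K_{\mathrm{Ai}})}$ times a finite determinant in $(I-\gamma K_{\mathrm{Ai}})^{-1}\mathrm{Ai}$ and $\int_s^\infty(I-\gamma K_{\mathrm{Ai}})^{-1}\mathrm{Ai}$; here $\gamma$ propagates without distortion, the Tracy--Widom ODE system evaluates the finite determinant to $\cosh^2\!\big(\tfrac12\int_s^\infty u_{_{\textnormal{AS}}}(t,\gamma)\,\d t\big)$, and the second identity in \eqref{th:1} follows.

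The essential new difficulty is $\beta=1$. Running the same reduction for $J-\gamma\mathbb K^1_{\mathrm{Ai}}$, the GOE matrix kernel contains, besides the Airy-kernel block and the rank-one $\mathrm{Ai}$/antiderivative pieces, a term built from the sign kernel $\epsilon(x,y)=\tfrac12\,\mathrm{sgn}(x-y)$ (equivalently a constant piece $\tfrac12\mathrm{Ai}(x)$ tested against $L^2(s,\infty)$), which has no counterpart for $\beta=4$ and which carries the parity-of-$N$ subtlety of the GOE soft edge. Once this term is multiplied by $\gamma$ and fed through the Pfaffian it couples to the Airy block so that the scalar Fredholm determinant coming out of the computation is $\det(I-\overline\gamma K_{\mathrm{Ai}})$ with $\overline\gamma=2\gamma-\gamma^2=1-(1-\gamma)^2$, rather than $\det(I-\gamma K_{\mathrm{Ai}})$ --- a doubling that is also visible at the level of the limiting processes through Forrester--Rains-type superposition identities between the GUE and GOE edge ensembles. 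The remaining finite determinant is evaluated by the Tracy--Widom ODE system for the Ablowitz--Segur function with parameter $\overline\gamma$, producing the quantities $\cosh\mu(s,\overline\gamma)$, $\sinh\mu(s,\overline\gamma)$ and $\sqrt{\overline\gamma}$, while the combinatorial prefactors of the GOE Pfaffian supply the rational factor with numerator $\gamma-1-\cosh\mu(s,\overline\gamma)+\sqrt{\overline\gamma}\sinh\mu(s,\overline\gamma)$ and denominator $\gamma-2$; assembling these pieces yields \eqref{th:2}. I expect this last step to be the main obstacle: because the $\epsilon$-kernel term breaks the clean propagation of $\gamma$ enjoyed for $\beta=2,4$, one must isolate the $\overline\gamma$-deformed Airy determinant from the GOE Pfaffian and match every finite-rank correction and prefactor exactly. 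As consistency checks, $\gamma=1$ (so $\overline\gamma=1$) returns $(F_1(s))^2=F_2(s)\e^{-\mu(s)}$, and $\gamma=0$ gives $F_1\equiv 1$.
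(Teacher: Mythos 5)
Your first step---recognizing that thinning the scaled point process at level $s$ with retention probability $\gamma$ turns $F_\beta(s,\gamma)$ into the generating functional $\mathbb{E}\big[\prod_j(1-\gamma\,\chi_{(s,\infty)}(\xi_j))\big]$---is exactly the observation the paper makes, written there as $F_\beta(s,\gamma)=\sum_{m\geq 0}E_\beta(m,(s,\infty))(1-\gamma)^m$. For $\beta=2$ the two routes then coincide: $\det(I-\gamma K_{\mathrm{Ai}})$ is the object analyzed in \cite{TracyW:1994}, and the $\lambda$-deformed Painlev\'e-II representation with $\lambda=\gamma$ gives the first identity in \eqref{th:1} directly.

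Where the routes diverge is $\beta=1,4$. The paper does \emph{not} re-run the Pfaffian reductions: it quotes the generating-function formulas for $E_4((s,\infty),\xi)$ and $E_1((s,\infty),\xi)$, including the $\overline{\xi}=2\xi-\xi^2$ substitution and the rational prefactor, from \cite{TracyW:1994,D,F2}, and then simply observes that $\sum_{m}\frac{1}{m!}\partial_\xi^m E_\beta\big|_{\xi=1}(\gamma-1)^m=E_\beta((s,\infty),\gamma)$ by Taylor's theorem. The whole proof is one algebraic step once those formulas are in hand. Your proposal instead attempts to rederive them by running the Tracy--Widom Pfaffian reduction with a multiplicative $\gamma$ in the kernel. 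Your sketch of the structure (the $2\times 2$ Airy matrix kernel, the finite determinants built from $(I-\gamma K_{\mathrm{Ai}})^{-1}\mathrm{Ai}$ and its antiderivative, the extra $\epsilon$-kernel piece for GOE) is plausible, but the decisive $\beta=1$ step is not carried out: the claim that the $\epsilon$-kernel coupling converts the scalar Fredholm determinant into $\det(I-\overline\gamma K_{\mathrm{Ai}})$ with $\overline{\gamma}=2\gamma-\gamma^2$, and simultaneously produces precisely the rational factor $\frac{\gamma-1-\cosh\mu(s,\overline\gamma)+\sqrt{\overline\gamma}\sinh\mu(s,\overline\gamma)}{\gamma-2}$, is asserted, not proven; you yourself flag it as ``the main obstacle.'' That is a genuine gap, since the emergence of $\overline{\gamma}$ and the exact prefactor is the entire content of the $E_1$ formula. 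The quickest way to close it is to do what the paper does: take the $\xi$-generating-functional formulas as known from \cite{D,F2} (they are the Painlev\'e representations of the $\xi$-deformed soft-edge Fredholm Pfaffians that Dieng and Forrester established), and reduce the proposition to the trivial identity that a generating function Taylor-expanded about $\xi=1$ with increment $\gamma-1$ evaluates to itself at $\xi=\gamma$.
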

\noindent The thinning operation is well known in the theory of point processes, see e.g.\ \cite{IPS}, but was studied in a random matrix theory context only recently by Bohigas and Pato 
\cite{BohigasP:2004,BohigasP:2006} and later by Bohigas, de Carvalho, and 
Pato \cite{BohigasCP:2009}.  Their 
motivation was a fundamental question in nuclear physics:  how can one 
recover a missing energy level (i.e.\ eigenvalue) from an otherwise 
complete set of measurements?  As observed by Dyson, if the energy levels
are completely correlated (such as evenly spaced levels, also referred to 
as picket fence statistics), then missing levels can be easily identified, 
while if there is no correlation (Poisson statistics), there is no 
possibility of finding the missing level.  The scattering resonances of 
neutrons scattered off a heavy nucleus are known to obey random matrix 
eigenvalue statistics, which lie in an intermediate correlation regime.  
The thinning process reduces the amount of level repulsion (a correlating 
effect), so the thinned Gaussian ensembles interpolate between moderate correlations when 
$\gamma=1$ (random matrix statistics) and no correlations when $\gamma=0$ 
(classical extreme value statistics).  The extreme value theorem states that 
the limiting distribution of the maximum of a sequence of independent 
identically distributed random variables must be a Gumbel (type I), Fr\'echet 
(type II), or Weibull (type III) distribution, provided the limit exists.  
Since GOE, GUE, and GSE eigenvalues obey the semicircle law \cite{F}, i.e.\ we 
have the following weak convergence for the empirical measure:
\begin{equation*}
	\frac{1}{N}\sum_{j=1}^N\delta_{\lambda_j({\bf X})}(y)\rightharpoonup \frac{1}{\pi}\sqrt{2N-y^2}\,\chi_{(-\sqrt{2N},\sqrt{2N})}(y),\ \ \ \chi_A(y)=\begin{cases}1,&y\in A\\ 0,&y\notin A,\end{cases}
\end{equation*}
it is reasonable to expect type III behavior in the 
$\gamma=0$ limit. Indeed, following the approach of Bohigas, de Carvalho, and Pato \cite{BohigasCP:2009}, we first note that \eqref{TW:form} is equivalent to a gap probability for the rescaled eigenvalues $\mu_j({\bf X}):=2^{\frac{1}{2}}N^{\frac{1}{6}}(\lambda_j({\bf X})-\sqrt{2N})$, i.e.
\begin{equation*}
	F_{\beta}(s)=\lim_{N\rightarrow\infty}\mathbb{P}\big(\#\{j:\ \mu_j({\bf X})\in(s,\infty)\}=0\big).
\end{equation*}
But once we thin out the soft-edge scaled eigenvalues $\mu_j({\bf X})$ the average number of eigenvalues in the interval $(s,\infty)$ is reduced, so we want to scale $s$ with $\gamma$ accordingly in order to keep the level density of the remaining fraction $\gamma$ constant. In order to find the correct scale we note that, as $N\rightarrow\infty$, to leading order
\begin{equation*}
	N=\frac{1}{\pi}\int_{-\sqrt{2N}}^{\sqrt{2N}}\sqrt{2N-y^2}\,\d y\sim \frac{1}{\pi}\int_{-4N^{\frac{2}{3}}}^0\sqrt{-\mu}\,\d\mu,\ \ 
\end{equation*}
so the average number of eigenvalues in $(s,\infty)$ equals
\begin{equation*}
	\frac{1}{\pi}\int_s^{\infty}\sqrt{-\mu}\,\chi_{(-\infty,0)}(\mu)\,\d\mu=\frac{2}{3\pi}(-s)^{\frac{3}{2}},\ s<0,
\end{equation*}
i.e. we should replace $s$ in \eqref{TW:formAS} by $\gamma^{-\frac{2}{3}}s$ in order to keep it invariant. Now, return to \eqref{TW:formAS} and use \eqref{AS:behavior} and \eqref{th:1} to see 
\begin{equation*}
	F_{\beta}(s,\gamma)=1-c_{\beta}\gamma^{\frac{\beta}{2}}s^{-\frac{3\beta}{4}}\e^{-\frac{2\beta}{3}s^{\frac{3}{2}}}\big(1+o(1)\big),\ \ \ s\rightarrow+\infty,
\end{equation*}
which holds uniformly for $\gamma\in[0,1]$. Thus, 
\begin{equation}\label{Weib:1}
	\lim_{\gamma\downarrow 0}F_{\beta}\big(\gamma^{-\frac{2}{3}}s,\gamma\big)=1,\ \ s>0.
\end{equation}
The left ($s\rightarrow-\infty$) tail behavior of $F_{\beta}(s,\gamma)$ is more subtle, and it is here where the derivation in \cite{BohigasCP:2009} becomes non-rigorous: combine
\begin{equation*}
	\frac{\partial}{\partial s}\ln F_2(s,\gamma)=\int_s^{\infty}u^2_{_{\textnormal{AS}}}(x,\gamma)\,\d x=\big(u_{_{\textnormal{AS}}}'(s,\gamma)\big)^2-su_{_\textnormal{AS}}^2(s,\gamma)-u_{_\textnormal{AS}}^4(s,\gamma)
\end{equation*}
with the known asymptotic expansion \cite{AblowitzS:1976,AS2}
\begin{equation*}
	u_{_\textnormal{AS}}(x,\gamma)=(-x)^{-\frac{1}{4}}\sqrt{\frac{v}{\pi}}\cos\left(\frac{2}{3}(-x)^{\frac{3}{2}}-\frac{v}{2\pi}\ln\big(8(-x)^{\frac{3}{2}}\big)+\phi\right)+\mathcal{O}\left((-x)^{-1}\right),\ \ x\rightarrow-\infty,
\end{equation*}
valid for fixed $v=-\ln(1-\gamma)\in[0,\infty)$ where $\phi=\frac{\pi}{4}-\textnormal{arg}\,\Gamma(\frac{v}{2\pi\im})$. Thus, back in \eqref{th:1} and \eqref{th:2} we have
\eq
\label{F2gamma-conjecture}
F_{\beta}(s,\gamma) = \tau_{\beta}(\gamma)(-s)^{\frac{3v^2}{4\beta\pi^2}}\exp\left(-vg_{\beta}(-s)^{\frac{3}{2}}\right)\big(1+o(1)\big),\ \ s\rightarrow-\infty;\ \ \ \ g_1=g_2=\frac{2}{3\pi},\ \ g_4=\frac{1}{3\pi}
\endeq
with undetermined integration constants $\tau_{\beta}(\gamma)$. Ignoring these {\it constant factors} and all error terms we then find from \eqref{F2gamma-conjecture}, together with \eqref{Weib:1},
\eq\label{BDP}
\lim_{\gamma\downarrow 0}F_{\beta}\big(\gamma^{-\frac{2}{3}}s,\gamma\big)=\begin{cases}\exp\left(-g_{\beta}(-s)^{\frac{3}{2}}\right), & s\leq 0 \\ 1, & s>0, \end{cases}
\endeq
which is a simple transformation of the Weibull distribution function with scale parameter $g_{\beta}$ and Weibull slope $\frac{3}{2}$, see Figure \ref{Weibplots} below.
\begin{figure}[tbh]
\includegraphics[width=5.4cm,height=3.9cm]{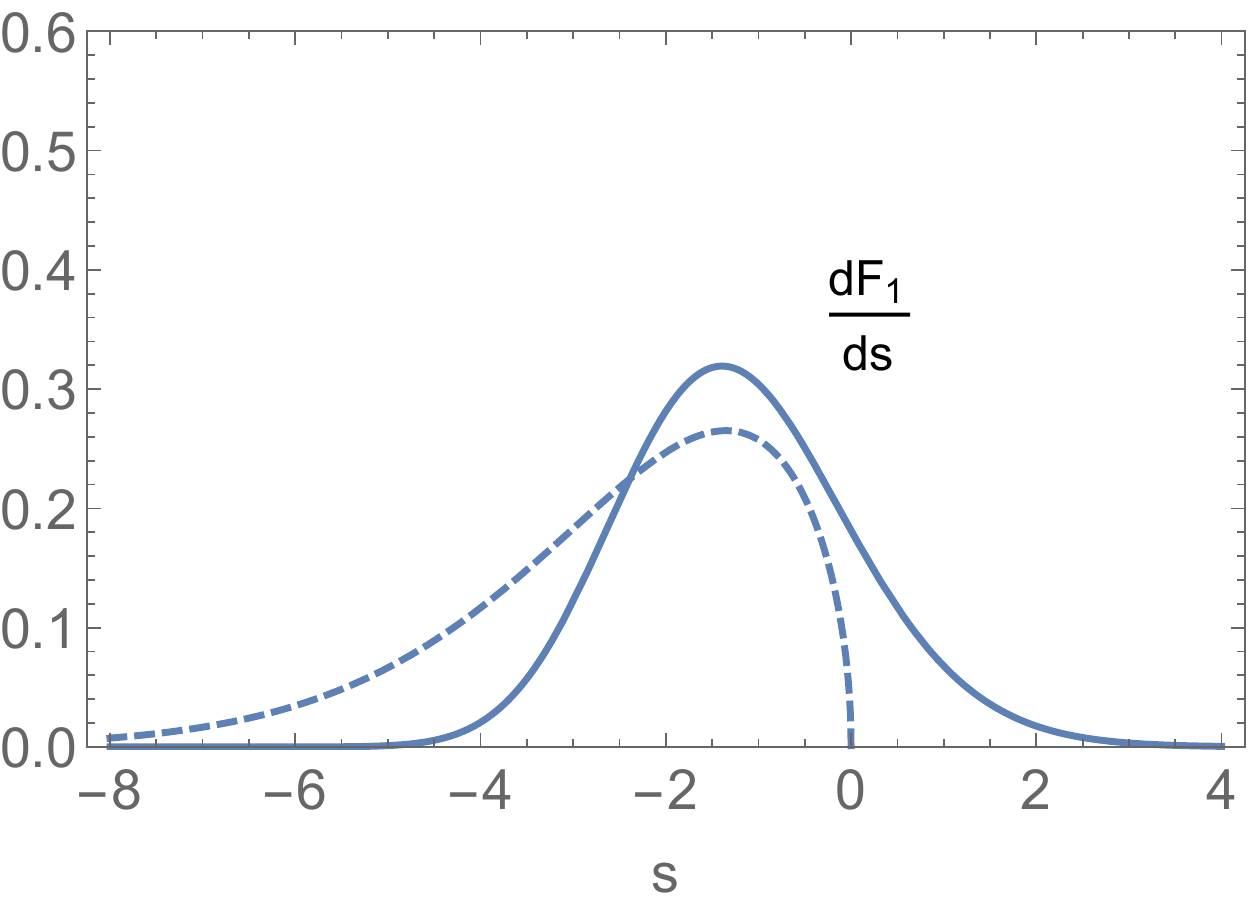}
\includegraphics[width=5.4cm,height=3.9cm]{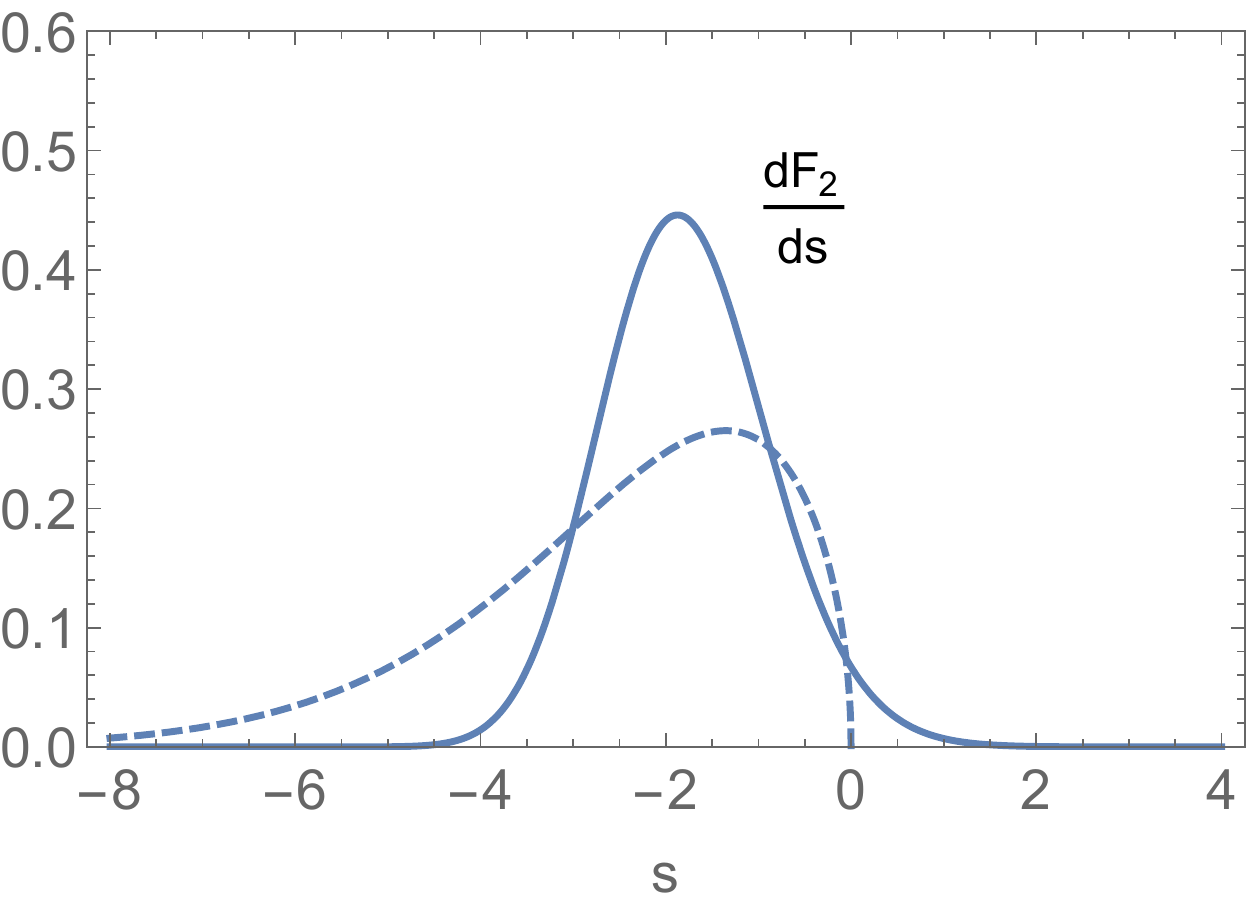}
\includegraphics[width=5.4cm,height=3.9cm]{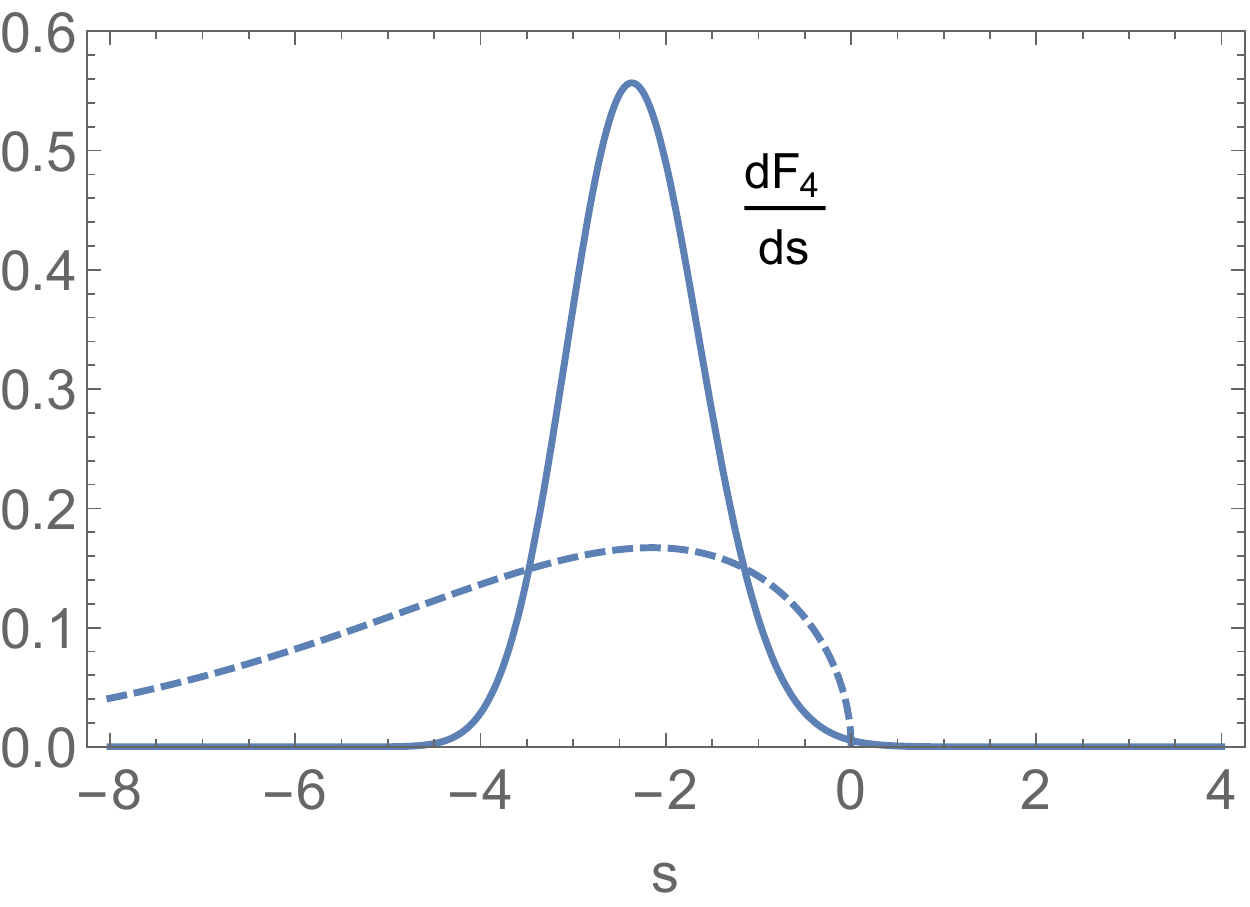}
\caption{The three Tracy-Widom densities $\frac{\d F_{\beta}}{\d s}$ as solid lines for $\beta=1,2,4$ from left to right. The dashed lines are the corresponding limiting Weibull densities from \eqref{BDP}.}
\label{Weibplots}
\end{figure}
We can confirm \eqref{BDP} rigorously using Theorems \ref{Eres:3} and \ref{OStheo} below which are the main results of the paper.
\subsection{Statement of results and discussion}
\begin{theo}\label{Eres:3} For any fixed $0<\epsilon<\frac{1}{2}$, there exist constants $s_0=s_0(\epsilon)>0$ and $c_j=c_j(\epsilon)>0$, $j=1,2$, so that
\begin{equation}\label{Eexp:3}
	\ln F_2(s,\gamma)=-\frac{2v}{3\pi}(-s)^{\frac{3}{2}}+\frac{v^2}{4\pi^2}\ln\big(8(-s)^{\frac{3}{2}}\big)+\ln\left(G\Big(1+\frac{\im v}{2\pi}\Big)G\Big(1-\frac{\im v}{2\pi}\Big)\right)+r_2(s,v)
\end{equation}
for $-s\geq s_0$ and $0\leq v=-\ln(1-\gamma)<(-s)^{\frac{1}{2}-\epsilon}$. Here $G(z)$ is the Barnes $G$-function (see \cite{NIST}), the error term $r_2(s,v)$ is differentiable with respect to $s$, and we have
\begin{equation*}
	\big|r_2(s,v)\big|\leq c_1\frac{v^3}{|s|^{\frac{3}{2}}}+c_2\frac{v}{|s|} \ \ \ \ \ \forall\,(-s)\geq s_0,\ \ 0\leq v<(-s)^{\frac{1}{2}-\epsilon}.
\end{equation*}
\end{theo}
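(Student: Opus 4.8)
The plan is to work on the Riemann--Hilbert side of Proposition~\ref{detformu}. Write $F_2(s,\gamma)=\det\bigl(1-\gamma K_{\mathrm{Ai}}\bigr)_{L^2(s,\infty)}$ for the Airy-kernel Fredholm determinant thinned by $\gamma$: this is an integrable operator in the Its--Izergin--Korepin--Slavnov sense, and after the standard undressing through the Airy equation its resolvent is controlled by the $2\times2$ Painlev\'e-II Riemann--Hilbert problem $\Psi(\lambda;s,\gamma)$ whose Stokes multipliers are fixed by $\gamma$ and whose large-$\lambda$ expansion reproduces $u_{\mathrm{AS}}(\,\cdot\,,\gamma)$. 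The argument rests on two differential identities: the one recorded in the Introduction,
\[
\frac{\partial}{\partial s}\ln F_2(s,\gamma)=\int_s^{\infty}u_{\mathrm{AS}}^2(x,\gamma)\,\d x,
\]
which expresses $\partial_s\ln F_2$ through the subleading coefficient of $\Psi$, and a companion identity for $\partial_v\ln F_2$ (with $\gamma=1-\e^{-v}$) written as a contour integral of $\Psi$ against the integrable kernel. Since $\ln F_2(s,\gamma)\big|_{v=0}=0$, integrating the $v$-identity from $0$ to $v$ will recover the whole right-hand side of \eqref{Eexp:3}, the Barnes-$G$ term included, so that no separate constant problem need be solved.

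The heart of the proof is a Deift--Zhou nonlinear steepest descent analysis of $\Psi$ as $s\to-\infty$, carried out \emph{uniformly} in $0\le v<(-s)^{1/2-\epsilon}$. Rescaling $\lambda=(-s)^{1/2}\zeta$ turns the exponent $\tfrac43\lambda^3+s\lambda$ into $(-s)^{3/2}\vartheta(\zeta)$ with $\vartheta(\zeta)=\tfrac43\zeta^3-\zeta$ and turning points $\zeta=\pm\tfrac12$; one introduces a $g$-function adapted to the sign structure of $\operatorname{Re}\vartheta$, opens lenses along the steepest-descent rays so that all lens jumps become $I+O\bigl(\e^{-c(-s)^{3/2}}\bigr)$, and is left with a model problem on a neighbourhood of the band joining the turning points. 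The outer parametrix solves a scalar problem carrying Fisher--Hartwig-type factors $(\zeta-\tfrac12)^{\pm\im v/2\pi}(\zeta+\tfrac12)^{\mp\im v/2\pi}$; it is exactly this purely imaginary exponent $\tfrac{\im v}{2\pi}$ that manufactures the $(-s)^{3v^2/(8\pi^2)}$ growth and the constant $8$ inside the logarithm in \eqref{Eexp:3}, in agreement with \eqref{F2gamma-conjecture}. Local parametrices are then needed at both turning points: since a Fisher--Hartwig-type discontinuity sits at the left endpoint, its parametrix must be built from confluent hypergeometric functions, in the spirit of the Deift--Its--Krasovsky analysis of Fisher--Hartwig determinants, whereas the right endpoint is an ordinary soft edge handled by the Airy parametrix. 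Matching the local to the outer parametrices on fixed circles yields the residual problem $R$ with jumps that are $I$ plus errors whose sizes are dictated by these matchings and carry the Fisher--Hartwig charge $v$; the small-norm theorem then controls $R$.

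Feeding the uniform expansion into the $v$-differential identity gives
\[
\frac{\partial}{\partial v}\ln F_2(s,\gamma)=-\frac{2}{3\pi}(-s)^{3/2}+\frac{v}{2\pi^2}\ln\bigl(8(-s)^{3/2}\bigr)+\frac{\d}{\d v}\ln\!\Bigl(G\bigl(1+\tfrac{\im v}{2\pi}\bigr)G\bigl(1-\tfrac{\im v}{2\pi}\bigr)\Bigr)+\partial_v r_2(s,v),
\]
the Barnes-$G$ term arising because the $v$-derivative of the multiplicative constant produced by the confluent hypergeometric parametrix is a combination of digamma values whose $v$-antiderivative is $\ln G$. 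Integrating from $0$ to $v$ and using $G(1)=1$ together with $\ln F_2\big|_{v=0}=0$ yields \eqref{Eexp:3}; the error integrates from the steepest-descent estimate $\partial_v r_2=O\bigl(v^2(-s)^{-3/2}+(-s)^{-1}\bigr)$ to the claimed bound $|r_2|\le c_1 v^3|s|^{-3/2}+c_2 v|s|^{-1}$, and differentiability of $r_2$ in $s$ follows because the whole construction is smooth in $s$, the estimates survive one $s$-derivative, and the outcome can be cross-checked against the first identity. I expect the main obstacle to be the uniformity in a \emph{growing} parameter $v$: the confluent hypergeometric local parametrix, and its matching error, must be controlled with explicit (rather than merely $O(1)$) dependence on $v$; the behaviour of all parametrices near the band endpoints must be tracked as the Fisher--Hartwig exponent grows; and one must verify that powers of $v$ accumulate no faster than $v^3(-s)^{-3/2}+v(-s)^{-1}$. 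This bookkeeping of $v$ against $(-s)$ is precisely what forces the restriction $v<(-s)^{1/2-\epsilon}$ and what separates the present non-oscillatory regime from the oscillatory one deferred to the sequel.
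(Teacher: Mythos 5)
Your overall strategy coincides with the paper's: undress the integrable-operator Riemann--Hilbert problem, run a Deift--Zhou steepest-descent analysis with a Szeg\H o function carrying the imaginary Fisher--Hartwig exponent $\nu=\frac{t\varkappa}{2\pi\im}$, and then extract $\ln F_2$ from a $\gamma$- (equivalently $v$-) differential identity integrated definitely from $v=0$, so that the Barnes $G$-function emerges from the antiderivative of $\textnormal{arg}\,\Gamma$ (the paper's \eqref{barnes}) and no separate constant problem must be solved; the $\partial_s$ identity serves as a cross-check, again as in the paper (Sections~\ref{sderivsec} and \ref{gamma-deriv-sec}). One point of geometry differs: the paper works with the undressed Fredholm-determinant RHP \ref{unifRHP}, whose $g$-function $g(z)=\tfrac23(z-1)^{3/2}$ has a single turning point at $z=1$ (Airy parametrix) while the confluent hypergeometric parametrix sits at $z=0$, the image of the \emph{hard} cut endpoint $\lambda=s$; you instead place the confluent hypergeometric parametrix at a soft-edge turning point of the Painlev\'e-II $\Psi$-RHP. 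This is a different (and not obviously consistent) local picture, but it does not affect which ingredients appear and so is a minor concern compared with the issue below.

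The genuine gap is in your error analysis. You propose matching the local parametrices to ${\bf P}^{(\infty)}$ on \emph{fixed} circles and then checking that powers of $v$ accumulate no faster than $c_1v^3|s|^{-3/2}+c_2v|s|^{-1}$. That cannot work once $v$ is allowed to grow like $(-s)^{\frac12-\epsilon}$. On a fixed circle of radius $r$ around $z=1$ the Szeg\H o factor in \eqref{o:1}--\eqref{o:2} obeys $|\mathcal D(z)\e^{t\varkappa/2}|^{\pm1}=\exp\!\big[\pm\frac{v}{2\pi}\arctan\!\big(\tfrac{2\sqrt r\cos(\cdot)}{1-r}\big)\big]$, and the analogous factors $f^\pm(z;s,\gamma)$ of \eqref{f+1} appear on the circle around $z=0$. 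For fixed $r$ these are of size $\e^{cv}$, so the jump $\|{\bf G}_R-{\bf I}\|$ on the matching circles is of order $t^{-1}\e^{cv}$ (resp.\ $t^{-2/3}\e^{cv}$), which is \emph{not} small when $v\sim(-s)^{\frac12-\epsilon}=t^{\frac13-\frac{2}{3}\epsilon}$, and the small-norm theorem fails. The whole point of Section~\ref{sec:extension} is that one must let the matching circles \emph{shrink} with $t$: $r_t=t^{-\frac23+\epsilon}$ about $z=1$ and $\hat r_t=t^{-\frac13}$ about $z=0$, tuned so that $v\sqrt{r_t}$ and $v\hat r_t$ stay bounded while $t^{2/3}r_t,\,t\hat r_t\to\infty$ keeps the local models in their asymptotic regime (Propositions~\ref{shrink1}--\ref{shrink4}), and one must conjugate by $|s|^{\frac14\sigma_3}$ as in \eqref{Rfact} before applying the small-norm estimate because ${\bf G}_R-{\bf I}$ itself has unbounded $|s|^{\pm1/2}$ entries. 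Without the shrinking disks your sketch does not yield the claimed bound and the proof does not close.
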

\begin{theo}\label{OStheo} For any fixed $\gamma\in[0,1)$, there exist positive constants $s_0=s_0(v)$ and $c_j=c_j(v)$ such that
\begin{equation*}
	\ln F_1(s,\gamma)=-\frac{2v}{3\pi}(-s)^{\frac{3}{2}}+\frac{v^2}{2\pi^2}\ln\big(8(-s)^{\frac{3}{2}}\big)+\frac{1}{2}\ln\left(G\left(1+\frac{\im v}{\pi}\right)G\left(1-\frac{\im v}{\pi}\right)\right)+\frac{1}{2}\ln\left(\frac{2}{1+\e^{v}}\right)+r_1(s,v)
\end{equation*}
and
\begin{align*}
	\ln F_4(s,\gamma)=&\,-\frac{v}{3\pi}(-s)^{\frac{3}{2}}+\frac{v^2}{8\pi^2}\ln\big(8(-s)^{\frac{3}{2}}\big)+\frac{1}{2}\ln\left(G\left(1+\frac{\im v}{2\pi}\right)G\left(1-\frac{\im v}{2\pi}\right)\right)\\
	&+\ln\left(\frac{1}{2}\left(\frac{1+\sqrt{1-\e^{-v}}}{1-\sqrt{1-\e^{-v}}}\right)^{\frac{1}{4}}+\frac{1}{2}\left(\frac{1-\sqrt{1-\e^{-v}}}{1+\sqrt{1-\e^{-v}}}\right)^{\frac{1}{4}}\right)+r_4(s,v)
\end{align*}
for $-s\geq s_0$ and $0\leq v=-\ln(1-\gamma)<+\infty$. The function $G(z)$ is again the Barnes G-function, the error terms $r_1(s,v)$ and $r_4(s,v)$ are differentiable with respect to $s$, and
\begin{equation*}
	\big|r_j(s,v)\big|\leq\frac{c_j(v)}{|s|^{\frac{3}{4}}}\ \ \ \ \ \forall\,(-s)\geq s_0,\ \ \ j=1,4.
\end{equation*}
\end{theo}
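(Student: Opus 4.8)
The plan is to obtain Theorem~\ref{OStheo} by substituting Theorem~\ref{Eres:3} together with the large‑negative‑$s$ behaviour of the primitive $\mu$ from \eqref{mu-s-gamma} into the deterministic identities of Proposition~\ref{detformu}. Since $F_\beta(s,\gamma)\in(0,1)$ for every finite $s$ and $\gamma\in[0,1)$, I may take logarithms and square roots in \eqref{th:1}--\eqref{th:2}:
\begin{align*}
	\ln F_4(s,\gamma)&=\tfrac12\ln F_2(s,\gamma)+\ln\cosh\!\big(\tfrac12\mu(s,\gamma)\big),\\
	\ln F_1(s,\gamma)&=\tfrac12\ln F_2(s,\overline\gamma)+\tfrac12\ln\!\left(\frac{\gamma-1-\cosh\mu(s,\overline\gamma)+\sqrt{\overline\gamma}\,\sinh\mu(s,\overline\gamma)}{\gamma-2}\right),
\end{align*}
with $\overline\gamma=2\gamma-\gamma^2$. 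Because $1-\overline\gamma=(1-\gamma)^2$, the Ablowitz--Segur parameter attached to $\overline\gamma$ is $\overline v=-\ln(1-\overline\gamma)=2v$, so Theorem~\ref{Eres:3} — which covers every fixed $v\ge0$, hence every fixed $\gamma\in[0,1)$, comfortably inside its range $v<(-s)^{1/2-\epsilon}$ once $-s$ is large — supplies the complete asymptotics of $\ln F_2(s,\gamma)$ and of $\ln F_2(s,\overline\gamma)$, the latter with $v$ replaced by $2v$. The only ingredient not yet available is the behaviour of $\mu(s,\gamma)$ and $\mu(s,\overline\gamma)$ as $s\to-\infty$.

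The crux is the claim that, for fixed $\gamma\in[0,1)$,
\[
	\mu(s,\gamma)=\operatorname{arctanh}\sqrt{\gamma}+\mathcal{O}\big((-s)^{-3/4}\big),\qquad s\to-\infty .
\]
Since $u_{_{\textnormal{AS}}}$ decays like $\e^{-\frac23 x^{3/2}}$ at $+\infty$, so that $\mu(s,\gamma)\to0$ there, the constant on the right is precisely the conditionally convergent total integral $\int_{-\infty}^{\infty}u_{_{\textnormal{AS}}}(t,\gamma)\,\d t$ — information that the purely local oscillatory asymptotics of $u_{_{\textnormal{AS}}}$ near $-\infty$ cannot by themselves produce, which is exactly why it deserves to be recorded as a separate corollary. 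I would extract it from a Riemann--Hilbert analysis run in parallel with the one behind Theorem~\ref{Eres:3}: represent $\e^{\pm\mu(s,\gamma)}$ through the solution of the Painlev\'e~II Riemann--Hilbert problem carrying the Ablowitz--Segur Stokes data (equivalently, run a direct Deift--Zhou steepest-descent analysis of the $\beta=1$ Fredholm determinant, of which $\e^{-\mu/2}$-type ratios are building blocks), open lenses along the steepest-descent contours, and match the outer solution against the local parabolic-cylinder parametrices. The exponentially large and the oscillatory pieces cancel, leaving $\mu$ bounded; the matching constant evaluates to $\operatorname{arctanh}\sqrt\gamma$ — an \emph{elementary} function of $\e^{v}$, carrying no new Barnes-$G$ transcendental, since those are already accounted for by the $\ln F_2$ terms — and the next correction is $\mathcal{O}\big((-s)^{-3/4}\big)$, uniformly on compact $v$-sets. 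This connection/total-integral step is the main obstacle; everything else is algebra and bookkeeping.

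Granting this, I would conclude by substitution and elementary hyperbolic algebra. Writing $\sqrt\gamma=\sqrt{1-\e^{-v}}$ and using $\tanh\theta=\sqrt\gamma\Rightarrow\cosh\theta=(1-\gamma)^{-1/2}$ one finds
\[
	\cosh\!\big(\tfrac12\operatorname{arctanh}\sqrt\gamma\big)=\tfrac12\Big(\tfrac{1+\sqrt{1-\e^{-v}}}{1-\sqrt{1-\e^{-v}}}\Big)^{1/4}+\tfrac12\Big(\tfrac{1-\sqrt{1-\e^{-v}}}{1+\sqrt{1-\e^{-v}}}\Big)^{1/4},
\]
which is the bracketed constant in the $F_4$ formula; combining it with $\tfrac12\ln F_2(s,\gamma)$ from Theorem~\ref{Eres:3} produces the remaining $F_4$ terms, including $\tfrac12\ln\!\big(G(1+\tfrac{\im v}{2\pi})G(1-\tfrac{\im v}{2\pi})\big)$. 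For $\beta=1$ one has $\overline\gamma=1-\e^{-2v}$, hence $\cosh\operatorname{arctanh}\sqrt{\overline\gamma}=\e^{v}$ and $\sqrt{\overline\gamma}\,\sinh\operatorname{arctanh}\sqrt{\overline\gamma}=\overline\gamma\,\e^{v}=\e^{v}-\e^{-v}$, so the $\beta=1$ fraction tends to $\frac{-\e^{-v}-\e^{v}+\e^{v}-\e^{-v}}{-1-\e^{-v}}=\frac{2}{1+\e^{v}}$; with $\tfrac12\ln F_2(s,\overline\gamma)$ (Theorem~\ref{Eres:3} at $v\mapsto2v$) this yields the stated $F_1$ main terms together with $\tfrac12\ln\!\big(G(1+\tfrac{\im v}{\pi})G(1-\tfrac{\im v}{\pi})\big)$ and $\tfrac12\ln\frac{2}{1+\e^{v}}$. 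Finally, for fixed $v$ the error $r_2$ of Theorem~\ref{Eres:3} is $\mathcal{O}\big((-s)^{-1}\big)$, while $\mu(s,\gamma)-\operatorname{arctanh}\sqrt\gamma$ and $\mu(s,\overline\gamma)-\operatorname{arctanh}\sqrt{\overline\gamma}$ are $\mathcal{O}\big((-s)^{-3/4}\big)$; since $\ln\cosh$ and the $\beta=1$ log-fraction are smooth near their limiting arguments (the fraction lying in a compact subset of $(0,\infty)$ once $-s$ is large), the induced contributions to $\ln F_1$ and $\ln F_4$ are $\mathcal{O}\big((-s)^{-3/4}\big)$ too. As $(-s)^{-3/4}$ dominates $(-s)^{-1}$ and both $r_2$ and $\mu$ (with $\partial_s\mu=-u_{_{\textnormal{AS}}}$) are differentiable in $s$, the residual $r_j(s,v)$ is differentiable in $s$ with $|r_j(s,v)|\le c_j(v)(-s)^{-3/4}$, as claimed.
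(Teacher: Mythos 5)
Your proposal is correct and follows essentially the same route as the paper: reduce $\ln F_1$ and $\ln F_4$ via \eqref{th:1}--\eqref{th:2} to $\ln F_2$ (covered by Theorem~\ref{Eres:3}, with $\overline v=2v$ for the GOE via $1-\overline\gamma=(1-\gamma)^2$) plus the limit $\mu(s,\gamma)\to\operatorname{arctanh}\sqrt\gamma=\frac12\ln\frac{1+\sqrt\gamma}{1-\sqrt\gamma}$ with error $\mathcal{O}((-s)^{-3/4})$, and then the hyperbolic algebra you carry out is exactly what produces the stated constants. The one place where the paper is more concrete than your sketch is the key lemma on $\mu(s,\gamma)$: rather than re-running a steepest-descent analysis from scratch, the paper quotes the total integral $\int_{-\infty}^\infty u_{_{\textnormal{AS}}}\,\d t=\frac12\ln\frac{1+\sqrt\gamma}{1-\sqrt\gamma}$ directly from \cite{BaikBDI:2009} (Theorem 2.1) and obtains the $\mathcal{O}((-s)^{-3/4})$ rate by a short refinement of that paper's estimate on $P^{21}(x)=\im\sinh\mu-\im\sqrt\gamma\cosh\mu$ — so the substance of your ``main obstacle'' step is genuinely nontrivial and is handled by citation rather than by the parametrix-matching argument you outline.
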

\noindent While the results of Theorem \ref{Eres:3} and \ref{OStheo} settle the deformation from Tracy-Widom $F_1$, $F_2$, and $F_4$ statistics ($\gamma=1$) to classical Weibull statistics \eqref{BDP} ($\gamma=0$) rigorously, they do not capture the full transition regime! In fact, comparing \eqref{BDP} to \eqref{F2tail} we might ask the important question:\smallskip
\begin{quote}
How is the exponential decay $\exp(-vg_{\beta}(-s)^{\frac{3}{2}})$ in \eqref{F2gamma-conjecture} changed to the super-exponential decay $\exp(s^3/24)$ or $\exp(s^3/12)$ in \eqref{F2tail} as $\gamma\uparrow 1$, or equivalently as $v=-\ln(1-\gamma)\rightarrow+\infty$?\smallskip
\end{quote}
This question is the central reason behind the growth condition placed on $v$ in \eqref{Eexp:3}. As long as $v$ does not grow faster as $(-s)^{\frac{1}{2}-\epsilon}$, all leading terms in the left-tail GUE expansion are unchanged from $v\in[0,+\infty)$ fixed, i.e. we are dealing with quasi-Weibull tails. On the other hand, we have
\begin{theo}[\cite{B2}, Theorem $1.4$]
Given $\chi\in\mathbb{R}$, let $p=p(\chi)\in\mathbb{Z}_{\geq 0}$ be such that $p=0$ for $\chi<-\frac{1}{2}$ and $\chi+\frac{1}{2}<p\leq\chi+\frac{3}{2}$ for $\chi\geq -\frac{1}{2}$. Then, as $s\rightarrow-\infty$,
\begin{equation}\label{eigen}
	\ln F_2(s,\gamma)=\frac{s^3}{12}-\frac{1}{4}\ln(-s)+\ln\tau_2+\sum_{j=0}^{p-1}\ln\left(1+\frac{j!}{\sqrt{\pi}}2^{-\frac{7}{2}j-\frac{9}{4}}(-s)^{-\frac{3}{2}j-\frac{3}{4}}\e^{\frac{2}{3}\sqrt{2}(-s)^{\frac{3}{2}}-v}\right)+o(1),
\end{equation}
which holds uniformly for $v\geq\frac{2}{3}\sqrt{2}\,(-s)^{\frac{3}{2}}-\chi\ln\big((-s)^{\frac{3}{2}}\big)$.
\end{theo}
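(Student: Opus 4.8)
The plan is to trade the Painlev\'e representation of Proposition~\ref{detformu} for a Fredholm determinant and to analyze the latter through the spectrum of the Airy operator. By the Tracy--Widom identity underlying \eqref{th:1}, $F_2(s,\gamma)=\det\big(I-\gamma K_{\mathrm{Ai}}\big)$, the Fredholm determinant of $\gamma$ times the Airy kernel $K_{\mathrm{Ai}}$ acting on $L^2(s,\infty)$. Writing $\gamma=1-\e^{-v}$ and factoring $I-\gamma K_{\mathrm{Ai}}=(I-K_{\mathrm{Ai}})\big(I+\e^{-v}R_s\big)$ with the resolvent $R_s:=(I-K_{\mathrm{Ai}})^{-1}K_{\mathrm{Ai}}$ (legitimate since $I-K_{\mathrm{Ai}}$ is invertible on $L^2(s,\infty)$ for every finite $s$), I would start from
\begin{equation*}
	\ln F_2(s,\gamma)=\ln F_2(s)+\ln\det\big(I+\e^{-v}R_s\big)=\ln F_2(s)+\sum_{n\ge 0}\ln\!\left(1+\e^{-v}\,\frac{\lambda_n(s)}{1-\lambda_n(s)}\right),
\end{equation*}
where $1>\lambda_0(s)>\lambda_1(s)>\cdots>0$ are the eigenvalues of $K_{\mathrm{Ai}}$ on $L^2(s,\infty)$. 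The first summand is already described by \eqref{F2tail}, so the whole problem is reduced to the resolvent sum.

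The difficulty is that, as $s\to-\infty$, each $\lambda_n(s)\uparrow 1$ exponentially fast, so $\lambda_n(s)/(1-\lambda_n(s))$ blows up and the series cannot be expanded termwise in the determinant; it is genuinely non-perturbative in $\e^{-v}$. The key analytic input I would establish is, for each fixed $n$ (and uniformly for $n$ up to a slowly growing cutoff), an expansion of the form
\begin{equation*}
	1-\lambda_n(s)=\frac{\sqrt\pi}{n!}\,2^{\frac72 n+\frac94}\,(-s)^{\frac32 n+\frac34}\,\e^{-\frac23\sqrt2\,(-s)^{3/2}}\big(1+O\big((-s)^{-3/2}\big)\big),\qquad s\to-\infty,
\end{equation*}
so that the $n$-th summand becomes $\ln\!\big(1+\frac{n!}{\sqrt\pi}2^{-\frac72 n-\frac94}(-s)^{-\frac32 n-\frac34}\e^{\frac23\sqrt2(-s)^{3/2}-v}(1+o(1))\big)$, i.e.\ exactly the $n$-th bracket in \eqref{eigen} up to an admissible relative error. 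To obtain this I would pass to the integrable resolvent kernel of $R_s$, which is built out of the $2\times 2$ Riemann--Hilbert problem for the Ablowitz--Segur solution \eqref{AS:behavior}, and carry out the Deift--Zhou nonlinear steepest descent as $s\to-\infty$: rescale the spectral variable by $\sqrt{-s}$, introduce the $g$-function attached to the cubic phase $\frac43\zeta^3-\zeta$ with stationary points $\zeta=\pm\frac12$, open lenses, and install Airy parametrices at $\zeta=\pm\frac12$; feeding the resulting diagonal formula for $R_s(x,x)$ into a WKB/quantization analysis of the associated second-order spectral problem then yields the eigenvalue asymptotics, with the $n$-dependent constants tracked through the parametrices.

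Granting this, I would fix $\chi\in\mathbb{R}$ and restrict to $v\ge\frac23\sqrt2\,(-s)^{3/2}-\chi\ln\big((-s)^{3/2}\big)$, equivalently $\e^{\frac23\sqrt2(-s)^{3/2}-v}\le(-s)^{\frac32\chi}$, and split the sum at $n=p=p(\chi)$. For $n\ge p$ the condition $p>\chi+\frac12$ pushes the exponents $-\frac32 n-\frac34+\frac32\chi$ below $-\frac32$; combining this with the monotonicity bound $\lambda_n(s)/(1-\lambda_n(s))\le\lambda_p(s)/(1-\lambda_p(s))$ and the global estimate $\sum_n\lambda_n(s)=\mathrm{tr}\,\big(K_{\mathrm{Ai}}|_{L^2(s,\infty)}\big)=O\big((-s)^{3/2}\big)$ for the far tail, one gets $\sum_{n\ge p}\ln\big(1+\e^{-v}\lambda_n(s)/(1-\lambda_n(s))\big)=o(1)$ uniformly in $v$ in the admissible range. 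The surviving finite sum $\sum_{n=0}^{p-1}$ then reproduces the sum in \eqref{eigen}, and adding back $\ln F_2(s)$ via \eqref{F2tail} yields the claim. (An alternative, fully Riemann--Hilbert route would run the steepest descent directly on $\det(I-\gamma K_{\mathrm{Ai}})$ with $\gamma=1-\e^{-v}$ in the double-scaling regime and integrate a differential identity in $s$ from $-\infty$, fixing the additive constant via the factorization above.)

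I expect the main obstacle to be the threshold character of the double-scaling regime. Precisely when $v$ sits on the boundary $\frac23\sqrt2(-s)^{3/2}-\chi\ln((-s)^{3/2})$, the jump entries of the Riemann--Hilbert problem that are exponentially small for $\gamma$ fixed become $O(1)$ for the first $p$ of them, so the standard small-norm estimate controlling the error of the steepest-descent reduction fails for exactly those contributions. They must instead be absorbed into the local/model parametrix, which then solves a model Riemann--Hilbert problem carrying $p$ extra residue (soliton-type) conditions; proving that this model problem is uniformly solvable \emph{throughout the admissible range of $v$} --- not merely near a single threshold value --- is what simultaneously produces the explicit finite sum in \eqref{eigen} and the uniform $o(1)$ remainder. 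Establishing the eigenvalue expansion of the second paragraph with errors that survive multiplication by the possibly large factor $\e^{\frac23\sqrt2(-s)^{3/2}-v}$ is the technical core of this step.
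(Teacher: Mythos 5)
This result is not proven in the present paper; it is quoted verbatim from \cite{B2} (``From gap probabilities in random matrix theory to eigenvalue expansions''). The very title of that reference tells you that the route there is exactly the one you sketch: write $F_2(s,\gamma)=\det(I-\gamma K_{\mathrm{Ai}}\upharpoonright_{L^2(s,\infty)})$, factor out $\det(I-K_{\mathrm{Ai}})=F_2(s)$, and express the remainder as the infinite product $\prod_{n\geq 0}\bigl(1+\e^{-v}\lambda_n(s)/(1-\lambda_n(s))\bigr)$ over the Airy-kernel eigenvalues, then feed in asymptotics for $1-\lambda_n(s)$. Your factorization $I-\gamma K=(I-K)(I+\e^{-v}R_s)$ with $R_s=(I-K)^{-1}K$ is correct, and the proposed eigenvalue asymptotics
\begin{equation*}
1-\lambda_n(s)\sim\frac{\sqrt{\pi}}{n!}\,2^{\frac{7}{2}n+\frac{9}{4}}\,(-s)^{\frac{3}{2}n+\frac{3}{4}}\,\e^{-\frac{2}{3}\sqrt{2}\,(-s)^{3/2}}
\end{equation*}
is precisely what reproduces the coefficients $\frac{j!}{\sqrt{\pi}}2^{-\frac{7}{2}j-\frac{9}{4}}(-s)^{-\frac{3}{2}j-\frac{3}{4}}$ in \eqref{eigen} (the reciprocal of your prefactor, since $\lambda_n\to 1$). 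So the skeleton of your argument is the skeleton of the cited proof.

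Two remarks on where your sketch is thinner than what is actually needed. First, the tail estimate for $\sum_{n\geq p}$ as you phrase it --- monotonicity of $\lambda_n/(1-\lambda_n)$ together with the trace bound $\sum_n\lambda_n=O((-s)^{3/2})$ --- does not close on its own: monotonicity bounds each term by the $p$-th one but says nothing about how many terms are comparable to it, and the trace bound only controls $\sum\lambda_n$, not $\sum\lambda_n/(1-\lambda_n)$. You need an intermediate cutoff $N(s)$: for $p\leq n<N(s)$ use the uniform eigenvalue asymptotics (super-exponential decay in $n$) so the block is dominated by its first term, which is $O((-s)^{\frac32\chi-\frac32 p-\frac34})=O((-s)^{-\frac32+})$ by the definition of $p$; for $n\geq N(s)$ use $\lambda_n$ small, $\lambda_n/(1-\lambda_n)\lesssim\lambda_n$, and the trace bound together with the exponential smallness of $\e^{-v}$. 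This is implicit in your ``slowly growing cutoff'' remark, but it is the actual mechanism. Second, your alternative ``fully Riemann--Hilbert route'' is stated too optimistically: the reason \cite{B2} goes through eigenvalues at all is that a direct steepest-descent for $\det(I-\gamma K_{\mathrm{Ai}})$ in this regime confronts a model problem with $p$ residue conditions whose uniform solvability across the whole admissible range of $v$ is nontrivial, exactly as you identify in your closing paragraph. In short: right approach, right key lemma, a couple of spots where the hand-waving in the tail estimate needs to be replaced by the two-stage split above.
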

\noindent 
So, once $v$ grows at least as fast as 
$\frac{2}{3}\sqrt{2}(-s)^{\frac{3}{2}}$, we observe already Tracy-Widom 
tails.  Combining \eqref{Eexp:3} and \eqref{eigen}, we, thus far, only have an answer to our question for the GUE in the disjoint regions
\begin{equation*}
	(t,v)\in\mathbb{R}^2:\ \ \ t=(-s)^{\frac{3}{2}}\geq t_0,\ \ \ \ \ \ \varkappa:=\frac{v}{t}\in\left[0,t^{-\frac{1}{3}-\frac{2}{3}\epsilon}\right)\bigsqcup\left[\frac{2}{3}\sqrt{2}-\chi\frac{\ln t}{t},+\infty\right].  
\end{equation*}
See Figure \ref{transition-regions} below.  These regions form the non-oscillatory part of the transition asymptotics of $F_2(s,\gamma)$ as $s\rightarrow-\infty$. Our next step is the analysis in the outstanding parameter domain which is ongoing work \cite{BB2} for the thinned GOE/GUE and GSE Tracy-Widom distributions. There, for faster growing $\gamma$, the Riemann-Hilbert analysis carried out below changes significantly and requires the use of elliptic functions. These elliptic functions degenerate at both ends ($\varkappa\downarrow 0$ and $\varkappa\uparrow\frac{2}{3}\sqrt{2}$) and oscillations vanish: at one end ($\varkappa\downarrow 0$) oscillations die out via decreasing amplitudes and fixed periods, at the other end ($\varkappa\uparrow\frac{2}{3}\sqrt{2}$) via fixed amplitudes and increasing periods.
\begin{figure}[tbh]
\begin{center}
\includegraphics[height=2.2in]{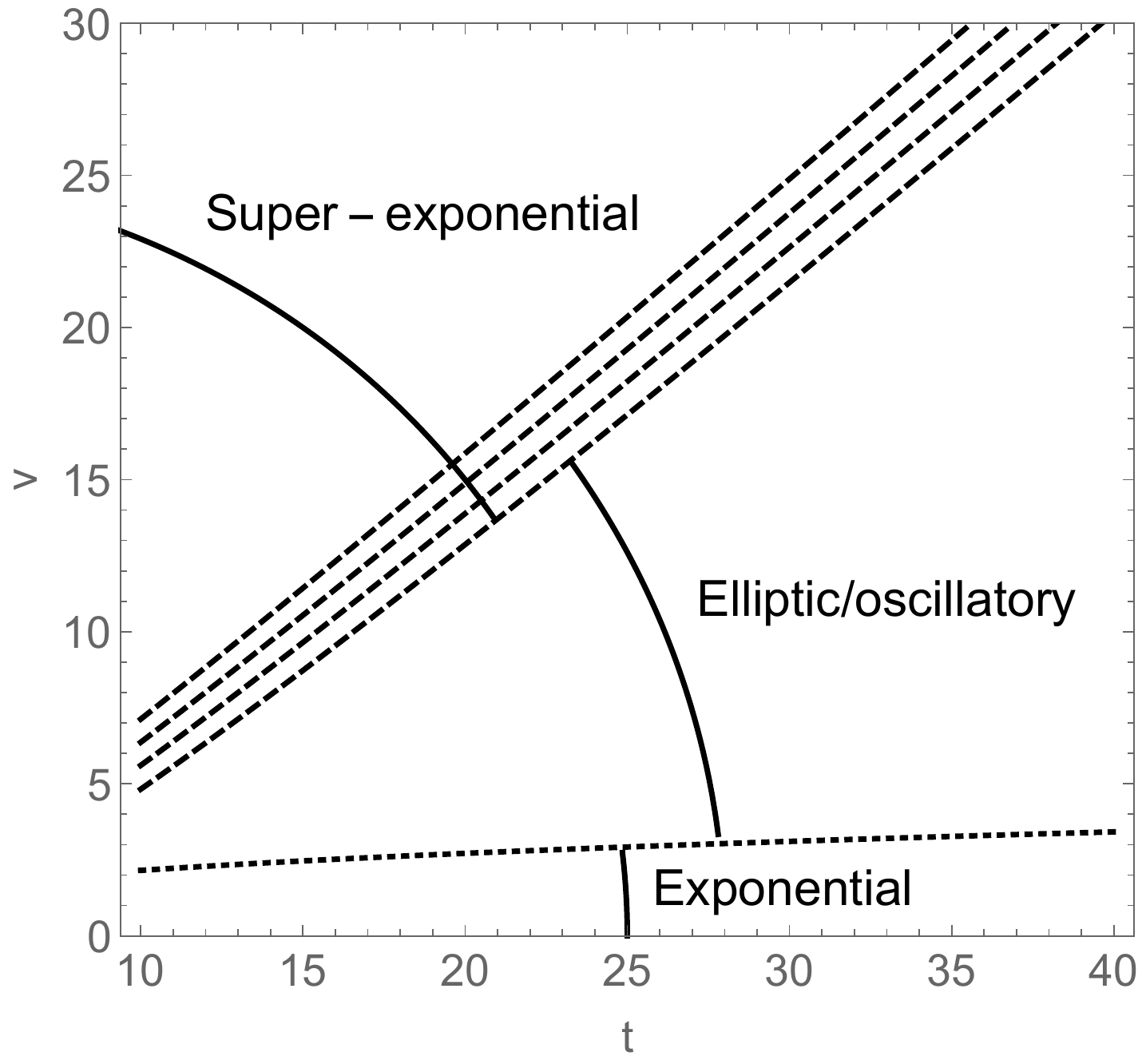}
\caption{Left-tail behavior of $F_2(s,\gamma)$ as $s\to-\infty$, $\gamma\to 1$ in terms of $v=-\ln(1-\gamma)$ and $t=(-s)^{\frac{3}{2}}$.  This work determines the asymptotics in the exponential region (see Theorem \ref{Eres:3}).  For the super-exponential region see Equation \eqref{F2tail}. Dotted line:  $v=t^{\frac{1}{3}}$.  Dashed lines:  $v=\frac{2}{3}\sqrt{2}t-\chi\ln t$ for various values of $\chi$.}
\label{transition-regions}
\end{center}
\end{figure}

\begin{rem}
We emphasize that the appearance of an oscillatory intermediate regime in the left tail expansion of $F_{\beta}(s,\gamma)$ was first observed numerically in the paper by Bohigas, de Carvalho and Pato \cite{BohigasCP:2009}. The similarity between $F_2(s)$ and $F_2(s,\gamma)$ with $\gamma\neq 1$ in \eqref{th:1} suggests an (alternate) approach to understanding these oscillations by analyzing the transition between $u_{_\textnormal{AS}}(x,\gamma)$ and $u_{_\textnormal{HM}}(x)$ as $\gamma\uparrow 1$.  See \cite{B1} for recent progress on this front.
\end{rem}
\noindent At this point we reformulate the result of Theorem \ref{Eres:3} in terms of the underlying Painlev\'e transcendent \eqref{AS:behavior}, which leads us to the following total integral formula.
\begin{cor}\label{cor:1} Let $u_{_{\textnormal{AS}}}(x,\gamma)$ denote the Ablowitz-Segur Painlev\'e-II solution as defined in \eqref{AS:behavior}. Then, for any fixed $\gamma\in[0,1)$,
\begin{equation*}
	\lim_{s\rightarrow-\infty}\left(-\int_s^{\infty}(t-s)u_{_{\textnormal{AS}}}^2(t,\gamma)\,\d t+\frac{2v}{3\pi}(-s)^{\frac{3}{2}}-\frac{3v^2}{8\pi^2}\ln(-s)\right)=\frac{3v^2}{4\pi^2}\ln 2+\ln\left(G\Big(1+\frac{\im v}{2\pi}\Big)G\Big(1-\frac{\im v}{2\pi}\Big)\right).
\end{equation*}
\end{cor}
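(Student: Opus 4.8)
The plan is to read the corollary directly off Theorem \ref{Eres:3} via the determinantal identity of Proposition \ref{detformu}. First I would use the first formula in \eqref{th:1}, which says that for every $s\in\mathbb{R}$ and $\gamma\in[0,1]$,
\[
	\ln F_2(s,\gamma)=-\int_s^{\infty}(t-s)u_{_{\textnormal{AS}}}^2(t,\gamma)\,\d t,
\]
so that the quantity whose limit is sought is exactly $\ln F_2(s,\gamma)+\frac{2v}{3\pi}(-s)^{\frac{3}{2}}-\frac{3v^2}{8\pi^2}\ln(-s)$.

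Next I would invoke Theorem \ref{Eres:3}. Fix $\gamma\in[0,1)$; then $v=-\ln(1-\gamma)$ is a fixed finite number, so for any admissible $\epsilon\in(0,\frac{1}{2})$ the growth constraint $0\le v<(-s)^{\frac{1}{2}-\epsilon}$ is satisfied once $-s\ge s_0(\epsilon)$ is large enough. Hence \eqref{Eexp:3} applies for all sufficiently negative $s$. Substituting it, and splitting the logarithm as $\ln\big(8(-s)^{\frac{3}{2}}\big)=3\ln 2+\frac{3}{2}\ln(-s)$, the above combination collapses to
\[
	\frac{3v^2}{4\pi^2}\ln 2+\ln\left(G\Big(1+\frac{\im v}{2\pi}\Big)G\Big(1-\frac{\im v}{2\pi}\Big)\right)+r_2(s,v),
\]
because $\frac{v^2}{4\pi^2}\cdot\frac{3}{2}=\frac{3v^2}{8\pi^2}$ cancels the $\ln(-s)$ subtraction, while $\frac{v^2}{4\pi^2}\ln 8=\frac{3v^2}{4\pi^2}\ln 2$ is the residual constant.

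Finally I would let $s\to-\infty$ with $v$ held fixed. The error bound supplied by Theorem \ref{Eres:3} gives $|r_2(s,v)|\le c_1 v^3|s|^{-3/2}+c_2 v|s|^{-1}\to 0$, which yields the stated identity. There is no genuine obstacle in this argument: the corollary is a transcription of Theorem \ref{Eres:3} through \eqref{th:1}, and the only thing to watch is the bookkeeping of the logarithmic coefficient and constant just described; all the analytic content of the statement resides in the proof of \eqref{Eexp:3} itself.
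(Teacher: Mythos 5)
Your proof is correct and is exactly the argument the paper has in mind: the paper introduces Corollary \ref{cor:1} with the phrase "we reformulate the result of Theorem \ref{Eres:3} in terms of the underlying Painlev\'e transcendent," i.e.\ it too reads the statement off directly from \eqref{th:1} and \eqref{Eexp:3}. Your algebraic bookkeeping of the $\ln 8 = 3\ln 2$ split and the observation that the growth constraint on $v$ is vacuous for fixed $\gamma$ are both correct, and the error bound from Theorem \ref{Eres:3} (or equally well from Theorem \ref{Eres:2}) tends to zero as $s\to-\infty$, finishing the argument.
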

\begin{rem} In recent work computing asymptotics of orthogonal polynomials with a discontinuous Gaussian weight, Bogatskiy, Claeys, and Its \cite{BogatskiyCI:2016} in fact conjectured the above total integral, or equivalently the thinned GUE constant factor
\begin{equation*}
	\ln\tau_2(\gamma) = \frac{3v^2}{4\pi^2}\ln 2 + \ln\left(G\left(1+\frac{iv}{2\pi}\right)G\left(1-\frac{iv}{2\pi}\right)\right)
\end{equation*}
in \eqref{F2gamma-conjecture} for any fixed $\gamma\in\mathbb{C}\backslash[1,+\infty)$.
\end{rem}
\begin{rem} The total integral formula in Corollary \ref{cor:1} is analogous to another total integral formula for the aforementioned Hastings-McLeod solution to the Painlev\'e-II equation (see \cite{BaikBD:2008}).  For any $c<0$,
\begin{equation*}
\int_c^\infty(t-s)u_{_{\textnormal{HM}}}^2(t)\,\d t + \int_{-\infty}^c\left((t-s)u_{_{\textnormal{HM}}}^2(t)-\frac{1}{4}t^2+\frac{1}{8t}\right)\d t=-\frac{1}{24}\ln 2-\zeta'(-1)-\frac{c^3}{12}+\frac{1}{8}\ln(-c).
\end{equation*}
Computation of the asymptotics for the GOE and GSE Tracy-Widom laws also 
leads to a complete integral for the function $u_{_\textnormal{HM}}(x)$ itself \cite{BaikBD:2008,BaikBDI:2009}.
\end{rem}
\noindent The computation of undetermined constants of integration, e.g.\ $\tau_2$ in \eqref{F2tail} and $\tau_2(\gamma)$ in \eqref{F2gamma-conjecture}, in the asymptotic expansion of tails of distribution functions and gap probabilities is a standard but notoriously difficult problem in integrable probability.  In the interior of the eigenvalue bulk, Dyson \cite{Dyson:1976} conjectured that 
\begin{equation}\label{Dyson}
P(s):=\mathbb{P}\left[\textnormal{No bulk GUE eigenvalues in}\left(-\frac{s}{\pi},\frac{s}{\pi}\right)\right] = c_0s^{-\frac{1}{4}}\e^{-\frac{1}{2}s^2}\left(1+\mathcal{O}\left(s^{-1}\right)\right) \ \ \textnormal{as}\ s\to\infty
\end{equation}
with
\eq\nonumber
\ln c_0 = \frac{1}{12}\ln 2 + 3\zeta'(-1).
\endeq
This expansion was proven up to determining the constant $c_0$ by Widom \cite{Widom:1995}.  The constant was finally proven ten years later by Krasovsky 
\cite{Krasovsky:2004}, with subsequent alternate proofs given by Ehrhardt \cite{Ehrhardt:2006} and Deift, Its, Krasovsky, and Zhou \cite{DeiftIKZ:2007}.  
The corresponding bulk constants for the GOE and GSE were first proven by 
Ehrhardt \cite{Ehrhardt:2007}, with a subsequent alternate proof given by 
Baik, Buckingham, DiFranco, and Its \cite{BaikBDI:2009}.  
Interestingly, in the bulk the analogous result for the thinned GUE process is 
simpler than the result for the non-thinned GUE process 
and follows immediately from work of Basor and Widom \cite{BasorW:1983}, Budylin and Buslaev \cite{BB}, and Bothner, Deift, Its, and Krasovsky \cite{BDIK2}:
\begin{eqnarray}
P(s,\gamma)&:=&\mathbb{P}\left[\textnormal{No bulk GUE eigenvalues in}\left(-\frac{s}{\pi},\frac{s}{\pi}\right)\ \textnormal{after thinning}\right]\nonumber\\
 &=& c_0(\gamma)\left(4s\right)^{\frac{v^2}{2\pi^2}}\e^{-\frac{2v}{\pi}s}\left(1+o(1)\right)\ \ \textnormal{as}\ \ s\rightarrow\infty\ \ \textnormal{with}\ \ 0<v<s^{\frac{1}{3}},\ \label{Poisson}
\end{eqnarray}
where
\begin{equation*}
c_0(\gamma)=\left\{ G\left(1+\frac{iv}{2\pi}\right)G\left(1-\frac{iv}{2\pi}\right)\right\}^2.
\end{equation*}
\begin{rem}
As an immediate consequence of \eqref{Poisson} we emphasize that the thinning operation, when applied to bulk scaled GUE eigenvalues, interpolates between random matrix theory statistics, see \eqref{Dyson}, and a particle system obeying classical Poisson statistics.  Indeed,
\begin{equation*}
	\lim_{\gamma\downarrow 0}P(s\gamma^{-1},s)=\begin{cases}\e^{-\frac{2s}{\pi}},&s\geq 0\\ 0,&s<0,\end{cases}
\end{equation*}
which is the gap probability of a Poisson particle system after removing a fraction $1-\gamma$ of particles. The full transition between \eqref{Dyson} and \eqref{Poisson} was described rigorously in the recent works \cite{BDIK1,BDIK2}. Previously, Dyson \cite{Dys} had already established an oscillatory transition regime for $P(s,\gamma)$ as $s\rightarrow+\infty$ and $\gamma\uparrow 1$ based on a non-rigorous log-gas interpretation of $P(s,\gamma)$.
\end{rem}
\noindent Towards the end of our discussion we would like to mention a few other recent works on thinned random matrix ensembles, here in the context of Haar-distributed random matrices, i.e. circular ensembles. For instance, gap and conditional probabilities for the thinned CUE have been computed via Toeplitz determinants and orthogonal polynomials on the unit circle in \cite{CC}. For all three classical circular ensembles, \cite{BFM} deals with the computation of the spacing distributions in the large-N limit through the use of Fredholm determinants. As an intriguing application of the thinned CUE, the Odlyzko data set of Riemann zeros is analyzed in the paper \cite{FM}, and two-point correlation functions as well as nearest neighbor spacings are computed for the thinned data set. Finally, \cite{BD} is devoted to the analysis of mesoscopic fluctuations in the thinned CUE.
\subsection{Determinantal \texorpdfstring{formul\ae\,for}{formulae for} \texorpdfstring{$F_{\beta}(s,\gamma)$}{F}}
\label{detproof} 
We now prove Proposition \ref{detformu}.
\begin{proof} It is well known (see \cite{F}) that the GUE Tracy-Widom distribution can be written as the Fredholm determinant
\begin{equation}
\label{F2-determinant}
	F_2(s)=\det(1-K_{\textnormal{Ai}}\upharpoonright_{L^2(s,\infty)}),\ \ s\in\mathbb{R},
\end{equation}
where $K_{\textnormal{Ai}}:L^2\big((s,\infty);\d\lambda\big)\circlearrowleft$ denotes the trace-class integral operator
\begin{equation}\label{Airyker}
\big(K_{\textnormal{Ai}}f\big)(\lambda):=\int_s^{\infty}K_{\textnormal{Ai}}(\lambda,\mu)f(\mu)\d\mu;\ \ \ \ K_{\textnormal{Ai}}(\lambda,\mu):=\frac{\textnormal{Ai}(\lambda)\textnormal{Ai}'(\mu)-\textnormal{Ai}'(\lambda)\textnormal{Ai}(\mu)}{\lambda-\mu},\ \ \ \lambda,\mu\in(s,\infty)
\end{equation}
constructed in terms of the classical Airy function $\textnormal{Ai}(z)$. Moreover, the probability that there are exactly $m\in\mathbb{Z}_{\geq 0}$ edge-scaled eigenvalues $\mu_j({\bf X})$ in the interval $(s,\infty)$ in the large-N limit equals \cite{TracyW:1994,D,F2}
\begin{equation}\label{mprob}
	E_{\beta}(m,(s,\infty))=\frac{(-1)^m}{m!}\frac{\partial^m}{\partial\xi^m}E_{\beta}((s,\infty),\xi)\Big|_{\xi=1},
	\end{equation}
using the generating functions
\begin{equation*}
	E_2((s,\infty),\xi)=\det(1-\xi K_{\textnormal{Ai}}\upharpoonright_{L^2(s,\infty)})=\exp\left(-\int_s^{\infty}(t-s)u_{_{\textnormal{AS}}}^2(t,\xi)\,\d t\right),
\end{equation*}
and
\begin{equation*}
	\big(E_4((s,\infty),\xi)\big)^2=E_2((s,\infty),\xi)\cosh^2\left(\frac{1}{2}\mu(s,\xi)\right),
\end{equation*}
as well as
\begin{equation*}
	\big(E_1((s,\infty),\xi)\big)^2=E_2((s,\infty),\overline{\xi})\frac{\xi-1-\cosh\mu(s,\overline{\xi})+\sqrt{\overline{\xi}}\sinh\mu(s,\overline{\xi})}{\xi-2},\ \ \ \ \ \overline{\xi}:=2\xi-\xi^2
\end{equation*}
in terms of the Ablowitz-Segur transcendent $u_{_\textnormal{AS}}(x,\gamma)$ from \eqref{AS:behavior} and its antiderivative $\mu(s,\xi)$ (see \eqref{mu-s-gamma}).  
Hence,
\begin{equation}\label{det:1}
	F_{\beta}(s,\gamma)=\lim_{N\rightarrow\infty}\mathbb{P}(\#\{j:\ \mu_j({\bf X},\gamma)\in(s,\infty)=0)=\sum_{m=0}^{\infty}E_{\beta}(m,(s,\infty))(1-\gamma)^m,
\end{equation}
since each edge-scaled eigenvalue $\mu_j({\bf X})$ is removed independently with probability $1-\gamma$. Substituting \eqref{mprob} into \eqref{det:1}, we then find
\begin{equation*}
	F_{\beta}(s,\gamma)=\sum_{m=0}^{\infty}\frac{1}{m!}\frac{\partial^m}{\partial\xi^m}E_{\beta}((s,\infty),\xi)\Big|_{\xi=1}(\gamma-1)^m=E_{\beta}((s,\infty),\gamma)
\end{equation*}
by Taylor's theorem.
\end{proof}
\begin{rem} The $\gamma$-deformed distribution functions $F_{\beta}(s,\gamma)$ are of interest in their own right aside from 
the thinned processes. They appear, for instance, in the formula for the distribution $F_{\beta}(s|m)$ of 
the $m^\text{th}$ largest eigenvalue \cite{TracyW:1994,D}:
\begin{equation*}
	F_2(s|m+1)-F_2(s|m)=\frac{(-1)^m}{m!}\frac{\partial^m}{\partial\xi^m}F_2(s,\xi)\Big|_{\xi=1},\ F_{\beta}(s|m+1)-F_{\beta}(s|m)=\frac{(-1)^m}{m!}\frac{\partial^m}{\partial\xi^m}\big(F_{\beta}(s,\xi)\big)^{\frac{1}{2}}\Big|_{\xi=1}
\end{equation*}
with $\beta=1,4$ in the last case and in general $F_{\beta}(s|0)=0$.
Similarly, the function $F_2(s,\gamma)$ appears in determinantal form in 
the law for the length of the second row of a random Young diagram 
\cite{BaikDJ:2000}.
\end{rem}
\noindent As mentioned earlier, the Tracy-Widom distributions are 
\emph{universal} in the sense that they describe the limiting behavior of 
a wide variety of seemingly unrelated processes, much like the normal, 
Gumbel, Fr\'echet, and Weibull distributions of classical probability.  
The thinning process makes sense for most processes with extremal statistics 
given by Tracy-Widom laws, and our results also apply. For one example, let $Y_n$ denote the set of all Young diagrams of size $n$.  
For $\mu\in Y_n$, define Plancherel measure to be 
\eq
\mathbb{P}(\mu) = \frac{d_\mu^2}{n!},
\endeq
where $d_\mu$ is the number of standard Young tableaux (filled Young 
diagrams) with shape $\mu$.  Also define $\ell_n^{(1)}(\mu)$ as the number 
of boxes in the first row of $\mu$.  Baik, Deift, and Johansson 
\cite{BaikDJ:1999} showed that 
\eq
\lim_{n\to\infty}\mathbb{P}\left(\frac{\ell_n^{(1)}-2\sqrt{n}}{n^{\frac{1}{6}}}\leq s\right) = F_2(s).
\endeq
Now remove each row of $\mu$ independently with probability $1-\gamma$, thus 
obtaining a possibly different diagram.  Let $\ell_n^{(1,\gamma)}$ 
be the length of the longest observed row.  Then 
\eq
\lim_{n\to\infty}\mathbb{P}\left(\frac{\ell_n^{(1,\gamma)}-2\sqrt{n}}{n^{\frac{1}{6}}}\leq s\right) = F_2(s,\gamma),
\endeq
and so our asymptotic results apply to this problem as well.
\subsection{Overview and outline}
The overall strategy is straightforward. The Airy kernel \eqref{Airyker} displays a well-known integrable structure \cite{IIKS} which allows us to analyze the tail asymptotics with a nonlinear steepest-descent Riemann-Hilbert approach \cite{DZ}. In more detail, we compute 
an asymptotic expansion of $\partial_{\gamma}\ln F_2(s,\gamma)$ 
(see equation \eqref{partial-gamma-F2} below) and integrate this expansion definitely with respect to $\gamma$ 
(see \eqref{F2final}). This way, knowing that $F_2(s,0)=1$ for $s\in\mathbb{R}$, we are able to determine the integration constant $\tau_2(\gamma)$ explicitly. Central to this chosen approach is a local identity for $\partial_{\gamma}\ln F_2(s,\gamma)$ in terms of the solution of the Riemann-Hilbert Problem \ref{master} which we derive in Proposition \ref{diff2}. We also use a simpler identity for $\partial_s\ln F_2(s,\gamma)$ to double check our previous computations.\smallskip

All necessary steps in the nonlinear steepest descent analysis for fixed 
$\gamma\in[0,1)$ are carried out in Section \ref{sec:rhp-analysis}.  We 
introduce the basic Riemann-Hilbert problem in Subsection \ref{sec:rhp-def}
and first carry out a series of changes of variables in Subsection 
\ref{sec:rhp-prep} to arrive at a Riemann-Hilbert problem with constant jumps 
on four rays emanating from the origin.  The $g$-function, a standard 
technique for regularizing Riemann-Hilbert problems, is introduced afterwards 
in Subsection \ref{sec:rhp-gfun}, and lenses are opened in Subsection 
\ref{sec:rhp-lenses} to ensure the jump matrices decay to the identity except 
on a single band.  In Subsection \ref{sec:rhp-model} we solve the outer model 
problem that results from discarding all decaying jumps. The solution to this 
problem is expected to be a good approximation of the true solution to the 
Riemann-Hilbert problem, except near the two band endpoints at which the jump 
matrices decay sub-exponentially.  This necessitates the construction of two 
local parametrices near these two endpoints.  The Riemann-Hilbert analysis 
concludes in Subsection \ref{sec:rhp-error} by controlling the error of our 
approximate solution.\smallskip

\noindent In Section \ref{diffidentity} we show how 
$\partial_{\gamma}\ln F_2(s,\gamma)$ and $\partial_s\ln F_2(s,\gamma)$
can be constructed from the solution of the Riemann-Hilbert problem.  The 
explicit asymptotic expansion for $\partial_s\ln F_2(s,\gamma)$ is computed in 
Section \ref{sderivsec}, along with its indefinite integral with respect to 
$s$.  The explicit formula for $\partial_{\gamma}\ln F_2(s,\gamma)$ is found 
in Section \ref{gamma-deriv-sec}, which establishes Theorem \ref{Eres:3} for 
fixed $\gamma\in[0,1)$.  Section \ref{sec:extension} is dedicated to extending 
the result so $\gamma$ can approach $1$ at a controlled rate.  The major 
technical difference in this case is we must work in shrinking neighborhoods 
of the band endpoints when we build the local parametrices. Finally, in 
Section \ref{OSsec} we prove Theorem \ref{OStheo} for the GOE and GSE 
distributions by revisiting and extending the computations in 
\cite{BaikBDI:2009} on the total integral formula for the Ablowitz-Segur 
transcendent \eqref{AS:behavior}.

\section{The Riemann-Hilbert problem (RHP) and nonlinear steepest descent analysis}
\label{sec:rhp-analysis}

\subsection{The basic RHP}
\label{sec:rhp-def}

We begin with the following RHP which is central to the 
integrable structure of the Airy kernel (see \cite{IIKS}).  We will show in 
Propositions \ref{diff1} and \ref{diff2} how 
$\frac{\partial}{\partial s}\ln F_2(s,\gamma)$ and 
$\frac{\partial}{\partial \gamma}\ln F_2(s,\gamma)$ can be written in terms of 
the solution ${\bf Y}(z)$ and a function ${\bf X}(z)$ satisfying the 
equivalent RHP \ref{unifRHP}.  
\begin{problem}\label{master} Determine ${\bf Y}(z)={\bf Y}(z;s,\gamma)\in\mathbb{C}^{2\times 2}$, a matrix-valued piecewise analytic function uniquely characterized by the following four properties.
\begin{enumerate}
	\item ${\bf Y}(z)$ is analytic for $z\in\mathbb{C}\backslash[s,\infty)$ with $s\in\mathbb{R}$. We orient $[s,\infty)\subset\mathbb{R}$ from left to right.
	\item The limiting values ${\bf Y}_{\pm}(z)=\lim_{\varepsilon\downarrow 0}{\bf Y}(z\pm\im\varepsilon)$ from either side of the cut $[s,\infty)$ are square integrable and related via the jump condition
	\begin{equation*}
		{\bf Y}_+(z)={\bf Y}_-(z)\begin{pmatrix}
		1-2\pi\im\gamma\textnormal{Ai}(z)\textnormal{Ai}'(z) & 2\pi\im\gamma\textnormal{Ai}^2(z)\\
		-2\pi\im\gamma\big(\textnormal{Ai}'(z)\big)^2 & 1+2\pi\im\gamma\textnormal{Ai}(z)\textnormal{Ai}'(z)
		\end{pmatrix},\ \ \ z\in(s,\infty).
	\end{equation*}
	\item Near the endpoint $z=s$, 
	\begin{equation*}
		{\bf Y}(z)=\mathcal{O}\big(\ln|z-s|\big),\ \ \ z\rightarrow s.
	\end{equation*}
	\item As $z\rightarrow\infty$, in a full vicinity of infinity,
	\begin{equation}\label{masterASY}
		{\bf Y}(z)={\bf I}+{\bf Y}_1z^{-1}+{\bf Y}_2z^{-2}+\mathcal{O}\left(z^{-3}\right),\ \ \ \ \ {\bf Y}_{\ell}=\big(Y_{\ell}^{jk}\big)_{j,k=1}^2\in\mathbb{C}^{2\times 2}.
	\end{equation}
\end{enumerate}
\end{problem}
In order to solve this problem asymptotically, we follow the Deift-Zhou nonlinear steepest descent roadmap \cite{DZ} and carry out several explicit and invertible transformations.

\subsection{Preliminary steps}
\label{sec:rhp-prep}
Our first simplification of RHP \ref{master} is the following ``undressing transformation'' also used in \cite{CIK,B2}. Consider the entire, unimodular function
\begin{equation}\label{e:2}
	\Phi_0(\z):=\sqrt{2\pi}\,\e^{-\im\frac{\pi}{4}}\begin{pmatrix}
	\textnormal{Ai}(\z) & \e^{\im\frac{\pi}{3}}\textnormal{Ai}\left(\e^{-\im\frac{2\pi}{3}}\z\right)\\
	\textnormal{Ai}'(\z) & \e^{-\im\frac{\pi}{3}}\textnormal{Ai}'\left(\e^{-\im\frac{2\pi}{3}}\z\right)
	\end{pmatrix},\ \ \ \ \ \z\in\mathbb{C}
\end{equation}
and define an {\it Airy parametrix},
\begin{figure}[tbh]
\begin{center}
\resizebox{0.35\textwidth}{!}{\includegraphics{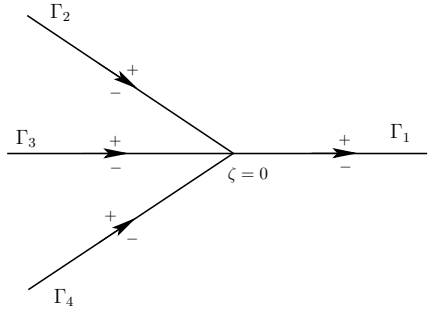}}
\caption{The oriented jump contours for the Airy parametrix $\Phi(\z)$ in the complex $\z$-plane.}
\label{figure1}
\end{center}
\end{figure}
\begin{equation}\label{e:3}
	\Phi(\z):=\Phi_0(\z)\begin{cases}
	{\bf I},&\textnormal{arg}\,\z\in(0,\frac{2\pi}{3})\\
	\bigl(\begin{smallmatrix}
	1 & 0\\
	-1 & 1
	\end{smallmatrix}\bigr),&\textnormal{arg}\,\z\in(\frac{2\pi}{3},\pi)\smallskip\\
	\bigl(\begin{smallmatrix}
	1 & -1\\
	0 & 1
	\end{smallmatrix}\bigr),&\textnormal{arg}\,\z\in(-\frac{2\pi}{3},0)\smallskip\\
	\bigl(\begin{smallmatrix}
	1 & -1\\
	0 & 1
	\end{smallmatrix}\bigr)\bigl(\begin{smallmatrix}
	1 & 0\\
	1 & 1
	\end{smallmatrix}\bigr),&\textnormal{arg}\,\z\in(-\pi,-\frac{2\pi}{3}).
	\end{cases}
\end{equation}
This matrix-valued function solves a well-known model problem:
\begin{problem}\label{Airypara} The Airy parametrix $\Phi(\z)$ has the following properties.
\begin{enumerate}
	\item $\Phi(\z)$ is analytic for $\z\in\mathbb{C}\backslash\bigcup_{j=1}^4\Gamma_j$ with 
	\begin{equation*}
		\Gamma_1:=[0,\infty),\ \ \ \Gamma_2:=\e^{-\im\frac{\pi}{3}}(-\infty,0],\ \ \ \Gamma_3:=(-\infty,0],\ \ \ \Gamma_4:=\e^{\im\frac{\pi}{3}}(-\infty,0]
	\end{equation*}
	and we orient all four rays ``from left to right" as shown in Figure \ref{figure1}.
	\item On the jump contours the limiting values obey the jump conditions
	\begin{equation*}
\begin{split}
		 \Phi_+(\z)&=\Phi_-(\z)\bigl(\begin{smallmatrix}
		1 & 1\\
		0 & 1
		\end{smallmatrix}\bigr),\ \ \ \ \z\in\Gamma_1;\\
		\Phi_+(\z)&=\Phi_-(\z)\bigl(\begin{smallmatrix}
		1 & 0\\
		1 & 1
		\end{smallmatrix}\bigr),\ \ \ \ \z\in\Gamma_2\cup\Gamma_4;\\
		\Phi_+(\z)&=\Phi_-(\z)\bigl(\begin{smallmatrix}
		0 & 1\\
		-1 & 0
		\end{smallmatrix}\bigr),\ \ \z\in\Gamma_3.
\end{split}
	\end{equation*}
	\item $\Phi(\z)$ is bounded at $\z=0$.
	\item As $\z\rightarrow\infty,\z\notin\bigcup_{j=1}^4\Gamma_j$,
	\begin{equation*}
		\Phi(\z)=\z^{-\frac{1}{4}\sigma_3}\frac{1}{\sqrt{2}}\begin{pmatrix}
		1 & 1\\
		-1 & 1
		\end{pmatrix}\e^{-\im\frac{\pi}{4}\sigma_3}\left\{{\bf I}+\frac{1}{48\z^{\frac{3}{2}}}\begin{pmatrix}
		1 & 6\im\\
		6\im & -1
		\end{pmatrix}+\mathcal{O}\left(\z^{-3}\right)\right\}\e^{-\frac{2}{3}\z^{\frac{3}{2}}\sigma_3},
	\end{equation*}
	with $\z^{\alpha}$ defined and analytic for $\z\in\mathbb{C}\backslash(-\infty,0]$ such that $\z^{\alpha}>0$ for $\z>0$.
\end{enumerate}
\end{problem}
\begin{remark}\label{gauge} Conditions (1)--(4) in RHP \ref{Airypara} characterize $\Phi(\z)$ uniquely up to left multiplication with a lower triangular $\z$-independent matrix,
\begin{equation*}
	\eta\in\mathbb{C}:\ \ \ \begin{pmatrix}
	1 & 0\\
	\eta & 1
	\end{pmatrix}\Phi(\z)=\z^{-\frac{1}{4}\sigma_3}\frac{1}{\sqrt{2}}\begin{pmatrix}
	1 & 1\\
	-1 & 1
	\end{pmatrix}\e^{-\im\frac{\pi}{4}\sigma_3}\left\{{\bf I}+\mathcal{O}\left(\z^{-\frac{1}{2}}\right)\right\}\e^{-\frac{2}{3}\z^{\frac{3}{2}}\sigma_3},\ \ \z\rightarrow\infty.
\end{equation*}
\end{remark}
\begin{remark}\label{AiryConn} The jump matrices in RHP \ref{Airypara} satisfy the cyclic relation
\begin{equation*}
  \bigl(\begin{smallmatrix}
         1 & -1\\ 0 & 1
        \end{smallmatrix}\bigr)\bigl(\begin{smallmatrix}
         1 & 0\\ 1 & 1
        \end{smallmatrix}\bigr)\bigl(\begin{smallmatrix}
         0 & 1\\ -1 & 0
        \end{smallmatrix}\bigr)\bigl(\begin{smallmatrix}
         1 & 0\\ 1 & 1
        \end{smallmatrix}\bigr)={\bf I},
\end{equation*} 
which paraphrases in particular that the matrix entries $\Phi^{11}(\z)$ and $\Phi^{21}(\z)$ are entire functions, i.e.\ they
admit analytic extensions from the sector $\textnormal{arg}\,\z\in(0,\frac{2\pi}{3})$ to the full complex plane. This observation allows
us to write the Airy kernel solely in terms of RHP \ref{Airypara},
\begin{equation*}
  K_{\textnormal{Ai}}(\lambda,\mu)=\frac{\im}{2\pi}\frac{\Phi^{11}(\lambda)\Phi^{21}(\mu)-\Phi^{11}(\mu)\Phi^{21}(\lambda)}{\lambda-\mu},
\end{equation*}
a definition that is independent of the gauge transformation outlined in Remark \ref{gauge}.
\end{remark}
\begin{rem} We record the following factorization property of the jump matrix in RHP \ref{master} using \eqref{e:2}. This factorization is at the heart of the upcoming transformation leading to RHP \ref{unifRHP} below.
\begin{equation*}
	z\in\mathbb{C}:\ \ \ \ \ \Phi_0^{-1}(z)\begin{pmatrix}1-2\pi\im\gamma\textnormal{Ai}(z)\textnormal{Ai}'(z)&2\pi\im\gamma \textnormal{Ai}^2(z)\\ -2\pi\im\gamma\big(\textnormal{Ai}'(z)\big)^2& 1+2\pi\im\gamma\textnormal{Ai}(z)\textnormal{Ai}'(z)\end{pmatrix}\Phi_0(z)=\begin{pmatrix}1&-\gamma\\ 0&1\end{pmatrix}.
\end{equation*}
\end{rem}
We now undress RHP \ref{master} and reduce it to a problem with constant 
jumps.  For $s<0$, define (see Figure \ref{figure2})
\begin{equation*}
	{\bf X}(z):={\bf Y}(z)\Phi(z)\begin{cases}
	I,&z\in\Omega_1\cup\Omega_2\cup\Omega_3\\
	\bigl(\begin{smallmatrix}
	1 & 0\\
	1 & 1
	\end{smallmatrix}\bigr),&z\in\Omega_4\smallskip\\
	\bigl(\begin{smallmatrix}
	1 & 0\\
	-1 & 1
	\end{smallmatrix}\bigr),&z\in\Omega_5
	\end{cases}
\end{equation*}
and in case $s>0$ (see Figure \ref{figure3}), set
\begin{equation*}
	{\bf X}(z):={\bf Y}(z)\Phi(z)\begin{cases}
	I,&z\in\Omega_1\cup\Omega_2\cup\Omega_3\\
	\bigl(\begin{smallmatrix}
	1 & 0\\
	-1 & 1
	\end{smallmatrix}\bigr),&z\in\Omega_4\smallskip\\
	\bigl(\begin{smallmatrix}
	1 & 0\\
	1 & 1
	\end{smallmatrix}\bigr),&z\in\Omega_5.
	\end{cases}
\end{equation*}
\begin{figure}[tbh]
\begin{minipage}{0.4\textwidth} 
\begin{center}
\resizebox{0.85\textwidth}{!}{\includegraphics{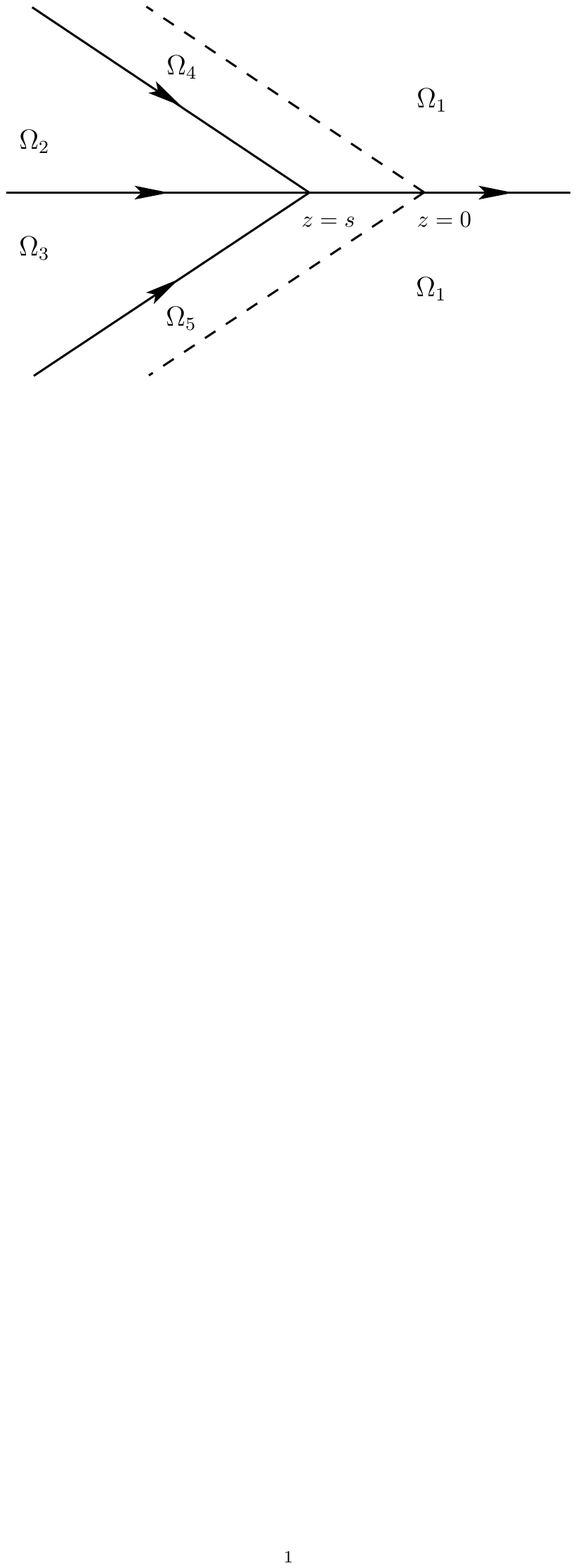}}
\caption{``Undressing" of RHP \ref{master} for $s<0$ with jump contours of ${\bf X}(z)$ as solid lines.}
\label{figure2}
\end{center}
\end{minipage}
\begin{minipage}{0.4\textwidth}
\begin{center}
\resizebox{0.8\textwidth}{!}{\includegraphics{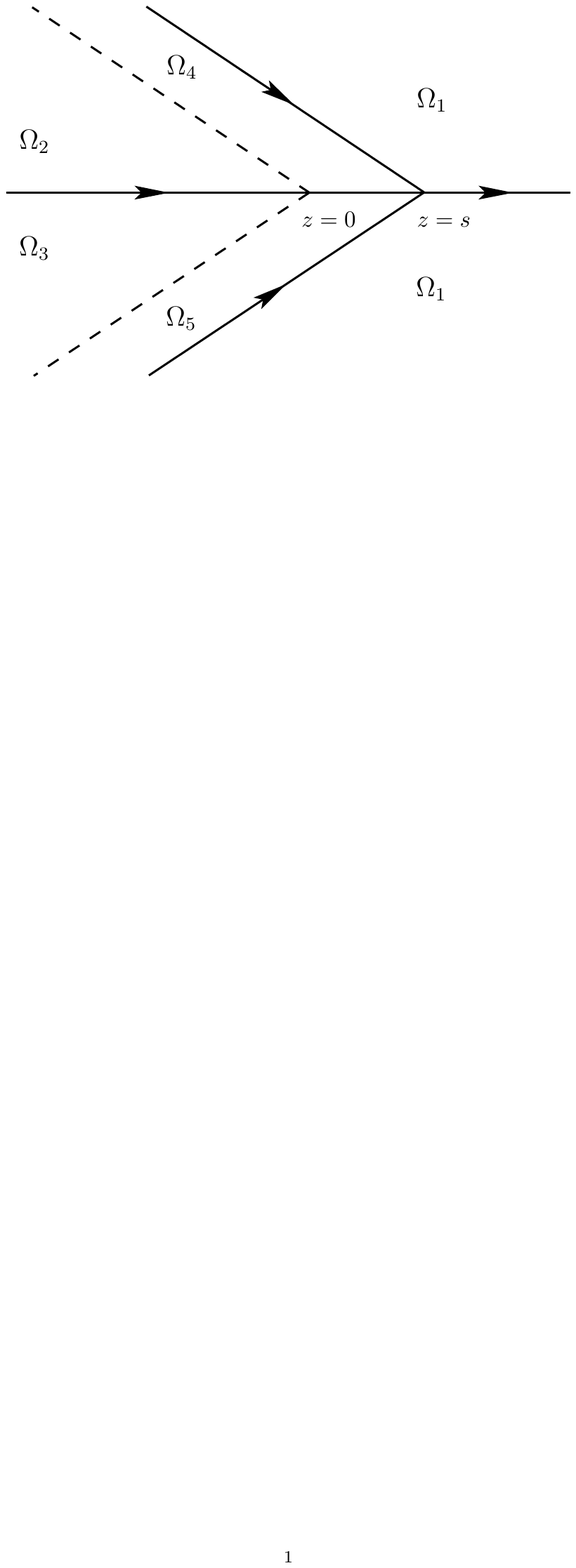}}
\caption{``Undressing" of RHP \ref{master} for $s>0$ with jump contours of ${\bf X}(z)$ as solid lines.}
\label{figure3}
\end{center}
\end{minipage}
\end{figure}

\begin{problem}\label{unifRHP} Determine ${\bf X}(z)\in\mathbb{C}^{2\times 2}$ such that
\begin{enumerate}
	\item ${\bf X}(z)$ is analytic for $z\in\mathbb{C}\backslash\bigcup_{j=1}^4\Gamma_j^{(s)}$ with
	\begin{equation*}
		\Gamma_1^{(s)}:=(s,\infty),\ \ \ \Gamma_2^{(s)}:=s+\e^{-\im\frac{\pi}{3}}(-\infty,0),\ \ \ \ \Gamma_3^{(s)}:=(-\infty,s),\ \ \ \ \Gamma_4^{(s)}:=s+\e^{\im\frac{\pi}{3}}(-\infty,0)
	\end{equation*}
	and the rays $\Gamma_j^{(s)}$ are shown in Figures \ref{figure2} and \ref{figure3}.
	\item The following jump conditions hold true:
	\begin{equation*}
\begin{split}
		{\bf X}_+(z)&={\bf X}_-(z)\bigl(\begin{smallmatrix}
		1 & 1-\gamma\\
		0 & 1
		\end{smallmatrix}\bigr),\ \ z\in\Gamma_1^{(s)};\\
                {\bf X}_+(z)&={\bf X}_-(z)\bigl(\begin{smallmatrix}
		1 & 0\\
		1 & 1
		\end{smallmatrix}\bigr),\ \ \ \ \ z\in\Gamma_2^{(s)}\cup\Gamma_4^{(s)};\\
		{\bf X}_+(z)&={\bf X}_-(z)\bigl(\begin{smallmatrix}
		0 & 1\\
		-1 & 0
		\end{smallmatrix}\bigr),\ \ \ \, z\in\Gamma_3^{(s)}.
\end{split}
	\end{equation*}
	\item Near $z=s$,
	\begin{equation}\label{Xsing}
		{\bf X}(z)=\widehat{{\bf X}}(z)\begin{pmatrix}
		1 & \frac{\gamma}{2\pi\im}\ln(z-s)\\
		0 & 1
		\end{pmatrix}\begin{cases}
		{\bf I},&\textnormal{arg}(z-s)\in(0,\frac{2\pi}{3})\\
		\bigl(\begin{smallmatrix}
		1 & 0\\
		-1 & 1
		\end{smallmatrix}\bigr),&\textnormal{arg}(z-s)\in(\frac{2\pi}{3},\pi)\\
		\bigl(\begin{smallmatrix}
		1 & -1\\
		0 & 1
		\end{smallmatrix}\bigr)\bigl(\begin{smallmatrix}
		1 & 0\\
		1 & 1
		\end{smallmatrix}\bigr),&\textnormal{arg}(z-s)\in(\pi,\frac{4\pi}{3})\\
		\bigl(\begin{smallmatrix}
		1 & -1\\
		0 & 1
		\end{smallmatrix}\bigr),&\textnormal{arg}(z-s)\in(\frac{4\pi}{3},2\pi)
		\end{cases}
	\end{equation}
	where $\widehat{{\bf X}}(z)$ is analytic at $z=s$ and we fix the branch of the logarithm with $\textnormal{arg}(z-s)\in(0,2\pi)$.
	\item As $z\rightarrow\infty$,
	\begin{equation*}
		{\bf X}(z)=z^{-\frac{1}{4}\sigma_3}\frac{1}{\sqrt{2}}\begin{pmatrix}
		1 & 1\\
		-1 & 1\\
		\end{pmatrix}\e^{-\im\frac{\pi}{4}\sigma_3}\left\{{\bf I}+{\bf X}_1z^{-\frac{1}{2}}+{\bf X}_2z^{-1}+\mathcal{O}\left(z^{-\frac{3}{2}}\right)\right\}\e^{-\frac{2}{3}z^{\frac{3}{2}}\sigma_3},
	\end{equation*}
	again with principal branches for all fractional exponents. The matrices ${\bf X}_1$ and ${\bf X}_2$ are $z$-independent, with (see \eqref{masterASY})
\begin{equation*}
  {\bf X}_1=\frac{1}{2}\begin{pmatrix}
                                 -1 & \im \\ \im  & 1
                                \end{pmatrix}Y_1^{12};\ \ \ \ {\bf X}_2=\frac{1}{2}\begin{pmatrix}
                                0 & \im\\ -\im & 0
                                \end{pmatrix}\big(Y_1^{11}-Y_1^{22}\big).
\end{equation*}
\end{enumerate}
\end{problem}
Assume from now on that $s<0$ is negative and define
\begin{equation}\label{Tdef}
	{\bf T}(z):={\bf X}(|s|z+s),\ \ z\in\mathbb{C}\backslash(\Sigma_T\cup\{0\}).
\end{equation}
This transformation ``centers" the problem at the origin $z=0$, so we have jumps on the contour
\begin{equation*}
	\Sigma_T:=\bigcup_{j=1}^4\Gamma_j
\end{equation*}
shown in Figure \ref{figure1}. In more detail
\begin{problem}\label{centerRHP} Determine a function ${\bf T}(z)={\bf T}(z;s,\gamma)\in\mathbb{C}^{2\times 2}$ uniquely characterized by the following properties:
\begin{enumerate}
	\item ${\bf T}(z)$ is analytic for $z\in\mathbb{C}\backslash(\Sigma_T\cup\{0\})$.
	\item ${\bf T}(z)$ has the jumps
\begin{equation*}
\begin{split}
	{\bf T}_+(z)&={\bf T}_-(z)\bigl(\begin{smallmatrix}
	1 & 1-\gamma\\
	0 & 1
	\end{smallmatrix}\bigr),\ \, z\in\Gamma_1\backslash\{0\};\\
	{\bf T}_+(z)&={\bf T}_-(z)\bigl(\begin{smallmatrix}
	1 & 0\\
	1 & 1
	\end{smallmatrix}\bigr),\ \ \ \ \, z\in(\Gamma_2\cup\Gamma_4)\backslash\{0\};\\
        {\bf T}_+(z)&={\bf T}_-(z)\bigl(\begin{smallmatrix}
	0 & 1\\
	-1 & 0
	\end{smallmatrix}\bigr),\ \ \ z\in\Gamma_3\backslash\{0\}.
\end{split}
\end{equation*}
	\item Near $z=0$,
	\begin{equation*}
		{\bf T}(z)=\widehat{{\bf T}}(z)\begin{pmatrix}
		1 & \frac{\gamma}{2\pi\im}\ln z\\
		0 & 1
		\end{pmatrix}\begin{cases}
		{\bf I},&\textnormal{arg}\,z\in(0,\frac{2\pi}{3})\\
		\bigl(\begin{smallmatrix}
		1 & 0\\
		-1 & 1
		\end{smallmatrix}\bigr),&\textnormal{arg}\,z\in(\frac{2\pi}{3},\pi)\\
		\bigl(\begin{smallmatrix}
		1 & -1\\
		0 & 1
		\end{smallmatrix}\bigr)\bigl(\begin{smallmatrix}
		1 & 0\\
		1 & 1
		\end{smallmatrix}\bigr),&\textnormal{arg}\,z\in(\pi,\frac{4\pi}{3})\\
		\bigl(\begin{smallmatrix}
		1 & -1\\
		0 & 1
		\end{smallmatrix}\bigr),&\textnormal{arg}\,z\in(\frac{4\pi}{3},2\pi)\\
		\end{cases},
	\end{equation*}
	where $\widehat{{\bf T}}(z)$ is analytic at $z=0$ and $\textnormal{arg}\,z\in(0,2\pi)$.
	\item As $z\rightarrow\infty$,
\begin{equation*}
	{\bf T}(z)=\big(|s|z\big)^{-\frac{1}{4}\sigma_3}\frac{1}{\sqrt{2}}\begin{pmatrix}
	1 & 1\\
	-1 & 1
	\end{pmatrix}\e^{-\im\frac{\pi}{4}\sigma_3}\left\{{\bf I}+{\bf X}_1(|s|z)^{-\frac{1}{2}}+\mathcal{O}\left(z^{-1}\right)\right\}\e^{-\frac{2}{3}(|s|z+s)^{\frac{3}{2}}\sigma_3}.
\end{equation*}
\end{enumerate}
\end{problem}
This concludes the first steps in the Deift-Zhou nonlinear steepest descent 
road map.  We point out that so far we have not used that $\gamma\in[0,1)$ is fixed. This feature enters our analysis in the next transformation.

\subsection{Normalization through the g-function transformation.}
\label{sec:rhp-gfun}
We choose to work with the function
\begin{equation*}
  g(z):=\frac{2}{3}(z-1)^{\frac{3}{2}},\ \ \ z\in\mathbb{C}\backslash(-\infty,1],
\end{equation*}
that is defined and analytic off the cut $(-\infty,1]\subset\mathbb{R}$ such that $(z-1)^{\frac{3}{2}}>0$ for $z>1$. The transformation
\begin{equation}\label{Sdef}
  {\bf S}(z):={\bf T}(z)\e^{tg(z)\sigma_3},\ \ \ z\in\mathbb{C}\backslash\Sigma_T;\ \ \ \ t=(-s)^{\frac{3}{2}}
\end{equation}
leads us then to the following problem.
\begin{problem}\label{normRHP} The normalized function ${\bf S}(z)={\bf S}(z;s,\gamma)\in\mathbb{C}^{2\times 2}$ is characterized by the following properties:
 \begin{enumerate}
  \item ${\bf S}(z)$ is analytic for $z\in\mathbb{C}\backslash(\Sigma_T\cup\{0\})$.
  \item The limiting values ${\bf S}_{\pm}(z)$, $z\in\Sigma_T$, from either side of the oriented contours are related by the equations
    \begin{equation*}
\begin{split}
      {\bf S}_+(z)&={\bf S}_-(z)\bigl(\begin{smallmatrix}
                    1 & 0\\
		    \e^{2tg(z)} & 1
                   \end{smallmatrix}\bigr), \hspace{1.65in} z\in(\Gamma_2\cup\Gamma_4)\backslash\{0\};\\
{\bf S}_+(z)& ={\bf S}_-(z)\bigl(\begin{smallmatrix}
	  0 & 1\\
-1 & 0
\end{smallmatrix}\bigr) \hspace{1.9in} z\in\Gamma_3\backslash\{0\};\\
    {\bf S}_+(z)&={\bf S}_-(z)\begin{pmatrix}
                  \e^{t(g_+(z)-g_-(z))} & 1-\gamma\\
		  0 & \e^{-t(g_+(z)-g_-(z))}
                 \end{pmatrix},\ \ z\in(0,1);\\ 
  {\bf S}_+(z)&={\bf S}_-(z)\begin{pmatrix}
		  1 & \e^{-t(\varkappa+2g(z))}\\
		  0 & 1
		  \end{pmatrix}, \hspace{1.1in} z\in(1,\infty),
\end{split}
  \end{equation*}
where we have introduced the abbreviation
\begin{equation}\label{kapdef}
	\varkappa:=-\frac{1}{t}\ln(1-\gamma)\in[0,+\infty)\ \ \ \ \Leftrightarrow\ \ \ \ 1-\gamma=\e^{-\varkappa t}.
\end{equation}
  \item Near $z=0$, with $\textnormal{arg}\,z\in(0,2\pi)$,
  \begin{equation}\label{l:0}
    {\bf S}(z)\e^{-tg(z)\sigma_3} = \widehat{{\bf T}}(z)\begin{pmatrix}
		1 & \frac{\gamma}{2\pi\im}\ln z\\
		0 & 1
		\end{pmatrix}\begin{cases}
		{\bf I},&\textnormal{arg}\,z\in(0,\frac{2\pi}{3})\\
		\bigl(\begin{smallmatrix}
		1 & 0\\
		-1 & 1
		\end{smallmatrix}\bigr),&\textnormal{arg}\,z\in(\frac{2\pi}{3},\pi)\\
		\bigl(\begin{smallmatrix}
		1 & -1\\
		0 & 1
		\end{smallmatrix}\bigr)\bigl(\begin{smallmatrix}
		1 & 0\\
		1 & 1
		\end{smallmatrix}\bigr),&\textnormal{arg}\,z\in(\pi,\frac{4\pi}{3})\\
		\bigl(\begin{smallmatrix}
		1 & -1\\
		0 & 1
		\end{smallmatrix}\bigr),&\textnormal{arg}\,z\in(\frac{4\pi}{3},2\pi).
		\end{cases}
  \end{equation}
  \item At $z=\infty$, we have the normalized behavior
    \begin{equation}\label{l:00}
	{\bf S}(z)= \big(|s|z\big)^{-\frac{1}{4}\sigma_3}\underbrace{\frac{1}{\sqrt{2}}\begin{pmatrix}
	                                                                          1 & 1\\
    -1 & 1
	                                                                         \end{pmatrix}\e^{-\im\frac{\pi}{4}\sigma_3}}_{=: {\bf N}}
  \left\{{\bf I}+{\bf X}_1(|s|z)^{-\frac{1}{2}}+\mathcal{O}\left(z^{-1}\right)\right\}.
    \end{equation}
 \end{enumerate}
\end{problem}
At this point we make three observations.
\begin{prop} Observation 1:
\begin{equation}\label{es:1}
 \Re\big(g(z)\big)<0,\ \ \ z\in (\Gamma_2\cup\Gamma_4)\backslash\{0\}.
\end{equation}
Observation 2:
\begin{equation}\label{es:2}
 \varkappa+2g(z)=\varkappa+\frac{4}{3}(z-1)^{\frac{3}{2}}>0,\ \ \ z\in(1,\infty).
\end{equation}
Observation 3:
\begin{equation*}
  \Pi(z)=g_+(z)-g_-(z)=-\frac{4\im}{3}(1-z)^{\frac{3}{2}},\ \ \ \ z\in(-\infty,1).
\end{equation*}
\end{prop}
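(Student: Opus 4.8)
The plan is to read off all three statements directly from the explicit formula $g(z)=\frac{2}{3}(z-1)^{\frac{3}{2}}$, tracking the argument of $z-1$ under the prescribed branch. First I fix notation: write $g(z)=\frac{2}{3}|z-1|^{\frac{3}{2}}\,\e^{\frac{3\im}{2}\arg(z-1)}$ with $\arg(z-1)\in(-\pi,\pi)$, which is the unique branch analytic off $(-\infty,1]$ with $g(z)>0$ for $z>1$; I will also use that $g$ is real on $(1,\infty)$, so $g(\bar z)=\overline{g(z)}$ by Schwarz reflection.

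For Observation 3 I compute the boundary values on $(-\infty,1)$ directly: there $z-1=-(1-z)$ with $1-z>0$, so $\arg(z-1\pm\im 0)=\pm\pi$ and hence $(z-1)^{\frac{3}{2}}_{\pm}=(1-z)^{\frac{3}{2}}\e^{\pm\frac{3\im\pi}{2}}=\mp\im(1-z)^{\frac{3}{2}}$; subtracting and multiplying by $\tfrac23$ gives $g_+(z)-g_-(z)=-\tfrac{4\im}{3}(1-z)^{\frac{3}{2}}$, which is in particular purely imaginary. For Observation 2 the branch gives $(z-1)^{\frac{3}{2}}>0$ on $(1,\infty)$, while $\varkappa\ge 0$ by definition \eqref{kapdef}, so $\varkappa+2g(z)=\varkappa+\tfrac43(z-1)^{\frac{3}{2}}\ge\tfrac43(z-1)^{\frac{3}{2}}>0$ there.

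For Observation 1 I would reduce to $\Gamma_2$ using the reflection symmetry, since $\Gamma_4=\overline{\Gamma_2}$ and $\Re g(\bar z)=\Re g(z)$. On $\Gamma_2\setminus\{0\}=\{\rho\,\e^{\frac{2\im\pi}{3}}:\rho>0\}$ one has $\Re(z-1)=-1-\tfrac{\rho}{2}<0$ and $\Im(z-1)=\tfrac{\sqrt3}{2}\rho>0$, so $z-1$ lies in the open second quadrant and $\arg(z-1)\in(\tfrac{\pi}{2},\pi)$, whence $\tfrac32\arg(z-1)\in(\tfrac{3\pi}{4},\tfrac{3\pi}{2})$. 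Since $\cos$ is strictly negative on $(\tfrac{\pi}{2},\tfrac{3\pi}{2})$, this yields $\Re\big(g(z)\big)=\tfrac23|z-1|^{\frac{3}{2}}\cos\big(\tfrac32\arg(z-1)\big)<0$ for every $\rho>0$, completing the proof.

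There is no genuine obstacle here; the proposition is a branch-bookkeeping exercise. The only point worth flagging is that the strict inequality in Observation 1 degenerates as $z\to 0$ along $\Gamma_2$ (where $\arg(z-1)\to\pi$, so $\tfrac32\arg(z-1)\to\tfrac{3\pi}{2}$ and $\Re g(z)\to 0$), which is precisely why the origin is excluded and foreshadows the local parametrix needed there. Together these observations set up the next transformation: Observation 3 identifies $(0,1)$ as the single oscillatory band, while Observations 1 and 2 guarantee the exponential decay of the jump matrices on $\Gamma_2\cup\Gamma_4$ and on $(1,\infty)$ that permits the lens opening in Subsection \ref{sec:rhp-lenses}.
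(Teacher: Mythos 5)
Your proof is correct and follows the only sensible route here, namely direct bookkeeping of $\arg(z-1)$ under the principal branch; the paper itself simply asserts these facts as observations without spelling out the computation, so your write-up supplies exactly the details the paper leaves implicit. The reflection reduction $\Gamma_4=\overline{\Gamma_2}$ with $\Re g(\bar z)=\Re g(z)$, the quadrant argument placing $\tfrac32\arg(z-1)\in(\tfrac{3\pi}{4},\tfrac{3\pi}{2})$, and the observation that the estimate degenerates only at $z=0$ are all accurate and consistent with the role these bounds play in the subsequent lens-opening.
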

We now define $\phi(z):=\frac{4}{3}(z-1)^{\frac{3}{2}}=2g(z)$, $z\in\mathbb{C}\backslash(-\infty,1]$ and note that
\begin{equation}\label{es:3}
	z\in(0,1):\ \ \phi_+(z)=\Pi(z)=-\phi_-(z);\ \ \ \ \ \ \ \ \Re\big(\phi(z)\big)<0\ \ \textnormal{if}\ \ \Im z\lessgtr 0,\ \ \Re z\in(0,1).
\end{equation}
This allows us in turn to perform the following transformation.
\subsection{Factorization and opening of lens} 
\label{sec:rhp-lenses}
Observe that with \eqref{kapdef}
\begin{equation*}
  z\in(0,1):\ \ \begin{pmatrix}
   \e^{t\Pi(z)} & 1-\gamma\\
  0 & \e^{-t\Pi(z)}
  \end{pmatrix}=\begin{pmatrix}
  1 & 0\\
  \e^{t(\phi_-(z)+\varkappa)} & 1
  \end{pmatrix}\begin{pmatrix}
0 & \e^{-t\varkappa}\\
-\e^{t\varkappa} & 0
\end{pmatrix}\begin{pmatrix}
1 & 0\\
\e^{t(\phi_+(z)+\varkappa)} & 1
\end{pmatrix}.
\end{equation*}
Now notice Figure \ref{figlens} below where $\gamma^{\pm}$ denote the boundaries of the lens-shaped regions $\Omega_{\pm}$ and define
\begin{equation}\label{l:1}
  {\bf L}(z):={\bf S}(z)\begin{cases}
	    \bigl(\begin{smallmatrix}
	           1 & 0\\
  		-\e^{t(\phi(z)+\varkappa)} & 1
	          \end{smallmatrix}\bigr),&z\in\Omega_+\\
	    \bigl(\begin{smallmatrix}
	           1 & 0\\
		  \e^{t(\phi(z)+\varkappa)} & 1
	          \end{smallmatrix}\bigr),&z\in\Omega_-\\
	    {\bf I},&z\in\mathbb{C}\backslash(\Sigma_T\cup\Omega_+\cup\Omega_-).
           \end{cases}
\end{equation}
\begin{figure}[tbh]
\begin{center}
\resizebox{0.5\textwidth}{!}{\includegraphics{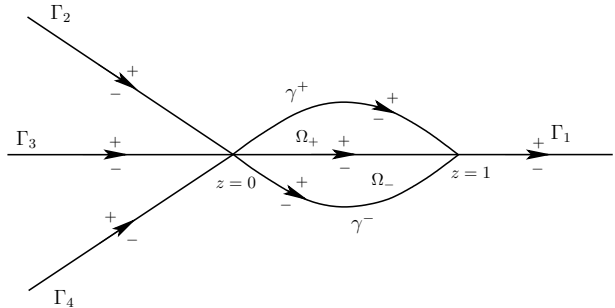}}
\caption{The oriented jump contour $\Sigma_L$ in the complex $z$-plane.}
\label{figlens}
\end{center}
\end{figure}

This transforms the previous RHP \ref{normRHP} for ${\bf S}(z)$ to the following problem.
\begin{problem}\label{lensRHP}
 Determine ${\bf L}(z)={\bf L}(z;s,\gamma)\in\mathbb{C}^{2\times 2}$ characterized by the following properties:
\begin{enumerate}
 \item 	${\bf L}(z)$ is analytic for $z\in\mathbb{C}\backslash(\Sigma_L\cup\{0\})$.
  \item Along the contour $\Sigma_L$ shown in Figure \ref{figlens}, the limiting values ${\bf L}_{\pm}(z)$ are related via
\begin{equation*}
\begin{split}
   {\bf L}_+(z)& ={\bf L}_-(z)\bigl(\begin{smallmatrix}
                    1 & 0\\
		    \e^{2tg(z)} & 1
                   \end{smallmatrix}\bigr), \hspace{.39in} z\in(\Gamma_2\cup\Gamma_4)\backslash\{0\};\\
   {\bf L}_+(z)& ={\bf L}_-(z)\bigl(\begin{smallmatrix}
	  0 & 1\\
-1 & 0
\end{smallmatrix}\bigr), \hspace{.56in} z\in\Gamma_3\backslash\{0\}; \\
  {\bf L}_+(z) & ={\bf L}_-(z)\bigl(\begin{smallmatrix}
                0 & \e^{-t\varkappa}\\
  -\e^{t\varkappa} & 0
               \end{smallmatrix}\bigr), \hspace{.27in} z\in(0,1);\\
{\bf L}_+(z) & ={\bf L}_-(z)\bigl(\begin{smallmatrix}
1 & \e^{-t(\varkappa+2g(z))}\\
0 & 1
\end{smallmatrix}\bigr),\ \ z\in(1,\infty);\\
  {\bf L}_+(z) & ={\bf L}_-(z)\bigl(\begin{smallmatrix}
                1 & 0\\
	    \e^{t(\phi(z)+\varkappa)} & 1
               \end{smallmatrix}\bigr), \hspace{.21in} z\in\gamma^+\cup\gamma^-.
\end{split}
\end{equation*}
  \item The singular behavior near $z=0$ needs to be adjusted according to \eqref{l:1}, which amounts to 
  the multiplication of \eqref{l:0} by the corresponding lower triangular matrices from the right.
  \item As $z\rightarrow\infty$, the behavior of $L(z)$ is unchanged from \eqref{l:00}.
\end{enumerate}
\end{problem}
The importance of transformation \eqref{l:1} comes from the fact that, because of \eqref{es:1}, \eqref{es:2}, and \eqref{es:3},
we have now the following behavior for the jump matrix ${\bf G}_L(z;s,\gamma)$ in the problem for ${\bf L}(z)$:
\begin{equation*}
  {\bf G}_L(z;s,\gamma)\rightarrow {\bf I},\ \ \ \ \ s\rightarrow-\infty,\ \ \gamma\in[0,1)\ \textnormal{fixed}
\end{equation*}
uniformly for $z\in\Sigma_L$ away from the line segment $(-\infty,1)$ and small neighborhoods of $z=0$ and $z=1$. 
We have thus reached the point at which we need to focus on the local model problems.
\subsection{Local analysis} 
\label{sec:rhp-model}
We first construct the outer parametrix, which satisfies the problem below.
\begin{problem}\label{out:1}
  Find ${\bf P}^{(\infty)}(z)={\bf P}^{(\infty)}(z;s,\gamma)\in\mathbb{C}^{2\times 2}$ such that
\begin{enumerate}
 \item ${\bf P}^{(\infty)}(z)$ is analytic for $z\in\mathbb{C}\backslash(-\infty,1]$.
  \item The function ${\bf P}^{(\infty)}(z)$ assumes square-integrable limiting values on $(-\infty,1]$ that are related
  by the jump conditions
  \begin{equation*}
\begin{split}
    {\bf P}_+^{(\infty)}(z)&={\bf P}_-^{(\infty)}(z)\bigl(\begin{smallmatrix}
                                        0 & 1\\
					-1 & 0
                                       \end{smallmatrix}\bigr), \hspace{.87in} z\in(-\infty,0);\\
  {\bf P}_+^{(\infty)}(z)&={\bf P}_-^{(\infty)}(z)
\e^{-\frac{t}{2}\varkappa\sigma_3}\bigl(\begin{smallmatrix}
0 & 1\\
-1 & 0
\end{smallmatrix}\bigr)\e^{\frac{t}{2}\varkappa\sigma_3},\ \ z\in(0,1).
\end{split}
\end{equation*}
  \item As $z\rightarrow\infty$ with $\textnormal{arg}\,z\in(-\pi,\pi)$, compare \eqref{l:00},
  \begin{equation*}
    {\bf P}^{(\infty)}(z)=(|s|z)^{-\frac{1}{4}\sigma_3}{\bf N}\left\{{\bf I}+\frac{2\im\nu}{z^{\frac{1}{2}}}\sigma_3+\frac{1}{4z}\begin{pmatrix}
                                                                                                        -8\nu^2 & \im\\ -\im & -8\nu^2
                                                                                                       \end{pmatrix}+
\frac{\im\nu}{3z^{\frac{3}{2}}}\begin{pmatrix}
                               1-4\nu^2 & -\im\frac{3}{2}\\ -\im\frac{3}{2} & -(1-4\nu^2)
                              \end{pmatrix}+
\mathcal{O}\left(z^{-2}\right)\right\}.
  \end{equation*}
\end{enumerate}
\end{problem}
As can be seen from a direct computation, the choice
\begin{equation}\label{o:1}
	{\bf P}^{(\infty)}(z)=\big(|s|(z-1)\big)^{-\frac{1}{4}\sigma_3}\frac{1}{\sqrt{2}}\begin{pmatrix}
	1 & 1\\
	-1 & 1
	\end{pmatrix}\e^{-\im\frac{\pi}{4}\sigma_3}\big(\mathcal{D}(z)\big)^{-\sigma_3},\ \ \ z\in\mathbb{C}\backslash(-\infty,1]
\end{equation}
with the scalar Szeg\H{o}-type function
\begin{align}
	\mathcal{D}(z)&:=\exp\left[(z-1)^{\frac{1}{2}}\frac{t\varkappa}{2\pi}\int_0^1\frac{1}{\sqrt{1-w}}\frac{\d w}{w-z}\right]
	=\left(\frac{(z-1)^{\frac{1}{2}}-\im}{(z-1)^{\frac{1}{2}}+\im}\right)^{\nu},\ \ 
	z\in\mathbb{C}\backslash(-\infty,1],\label{o:2}\\ 
	\nu&:=\frac{t\varkappa}{2\pi\im},\hspace{1cm} \mathcal{D}_+(z)\mathcal{D}_-(z)=\begin{cases}
	1,&z\in(-\infty,0)\\
	\e^{-t\varkappa},&z\in(0,1)
	\end{cases}\nonumber
\end{align}
provides a solution to RHP \ref{out:1}. All branches of fractional exponents in \eqref{o:1} and \eqref{o:2} are principal ones
such that $\z^{\alpha}>0$ for $\z>0$.\bigskip

Next, we consider a small neighborhood of $z=1$ in which we require a solution to the following model problem.
\begin{problem}\label{Airy:1} Find ${\bf P}^{(1)}(z)={\bf P}^{(1)}(z;s,\gamma)$ such that
 \begin{enumerate}
  \item ${\bf P}^{(1)}(z)$ is analytic for $z\in \mathbb{D}_{\frac{1}{4}}(1)\backslash\Sigma_L$ with $\mathbb{D}_r(z_0):=\{z\in\mathbb{C}:\,
  |z-z_0|<r\}$.
  \item The model function displays the following local jump behavior (see Figure \ref{figlens} for contour orientation):
  \begin{equation*}
\begin{split}
    {\bf P}^{(1)}_+(z)&={\bf P}^{(1)}_-(z)\e^{-\frac{t}{2}\varkappa\sigma_3}\bigl(\begin{smallmatrix}
                                                                0 & 1\\ -1 & 0
                                                               \end{smallmatrix}\bigr)\e^{\frac{t}{2}\varkappa\sigma_3}, \hspace{.35in} z\in(0,1)\cap\mathbb{D}_{\frac{1}{4}}(1);\\
   {\bf P}^{(1)}_+(z)&={\bf P}^{(1)}_-(z)\e^{-\frac{t}{2}\varkappa\sigma_3}\bigl(\begin{smallmatrix}
                                                                 1 & \e^{-2tg(z)}\\ 0 & 1
                                                                \end{smallmatrix}\bigr)\e^{\frac{t}{2}\varkappa\sigma_3},\ \
  z\in(1,\infty)\cap\mathbb{D}_{\frac{1}{4}}(1);\\
     {\bf P}^{(1)}_+(z)&={\bf P}^{(1)}_-(z)\e^{-\frac{t}{2}\varkappa\sigma_3}\bigl(\begin{smallmatrix}
                                                                 1 & 0\\ \e^{t\phi(z)} & 1
                                                                \end{smallmatrix}\bigr)\e^{\frac{t}{2}\varkappa\sigma_3}, \hspace{.21in} z\in\big(\gamma^+\cup\gamma^-\big)\cap\mathbb{D}_{\frac{1}{4}}(1).
\end{split}
  \end{equation*}
  \item As $s\rightarrow-\infty$ with $\gamma\in[0,1)$ fixed, we have a matching between ${\bf P}^{(\infty)}(z)$ and 
  ${\bf P}^{(1)}(z)$ of the form
  \begin{equation}\label{Aimatch}
    {\bf P}^{(1)}(z)={\bf P}^{(\infty)}(z)\left\{{\bf I}+\frac{1}{48t(z-1)^{\frac{3}{2}}}
  \begin{pmatrix}
      1 & 6\im\e^{-t\varkappa}\\ 6\im\e^{t\varkappa} & -1
   \end{pmatrix}+\mathcal{O}\left(t^{-2}\right)\right\},\ \ \ \ \ t=(-s)^{\frac{3}{2}},
  \end{equation}
which holds uniformly in $0<r_1\leq|z-1|\leq r_2<\frac{1}{4}$ for any fixed $r_1,r_2$.
 \end{enumerate}
\end{problem}
A solution to this problem is most easily constructed in terms of the function $\Phi(\z)$ introduced in \eqref{e:3}. To be precise, we have
\begin{equation}\label{o:3}
  {\bf P}^{(1)}(z)={\bf E}^{(1)}(z)\Phi\big(\z(z)\big)\e^{\frac{2}{3}\z^{\frac{3}{2}}(z)\sigma_3}
  \e^{\frac{t}{2}\varkappa\sigma_3},\ \ \ \ \z(z):=t^{\frac{2}{3}}(z-1),\ \ \ \ z\in\mathbb{D}_{\frac{1}{4}}(1)\backslash\Sigma_L,
\end{equation}
where 
\begin{equation*}
  {\bf E}^{(1)}(z):={\bf P}^{(\infty)}(z)\e^{-\frac{t}{2}\varkappa\sigma_3}\big(\mathcal{D}(z)\big)^{-\sigma_3}
\big({\bf P}^{(\infty)}(z)\big)^{-1},\ \ \ z\in\mathbb{D}_{\frac{1}{4}}(1)
\end{equation*}
is analytic at $z=1$. Using RHP \ref{Airypara} it is straightforward to verify the required 
properties of \eqref{o:3}.
\begin{remark} The following Taylor expansion of ${\bf E}^{(1)}(z)$ as $z\rightarrow 1$ is used later:
\begin{equation*}
{\bf E}^{(1)}(z)=|s|^{-\frac{1}{4}\sigma_3}\left\{\begin{pmatrix}1&2\im\nu\\ 0&1\end{pmatrix}-(z-1)\begin{pmatrix}2\nu^2&\frac{2}{3}\im\nu(1+2\nu^2)\\ -2\im\nu & 2\nu^2\end{pmatrix}+\mathcal{O}\left((z-1)^2\right)\right\}|s|^{\frac{1}{4}\sigma_3}.
\end{equation*}
\end{remark}

Finally, we turn towards a vicinity of the origin $z=0$. We require ${\bf P}^{(0)}(z)$ satisfying the following.
\begin{problem}\label{oripara} Determine ${\bf P}^{(0)}(z)={\bf P}^{(0)}(z;s,\gamma)\in\mathbb{C}^{2\times 2}$ such that
\begin{enumerate}
 \item ${\bf P}^{(0)}(z)$ is analytic for $z\in\mathbb{D}_{\frac{1}{4}}(0)\backslash(\Sigma_L\cup\{0\})$.
  \item The model function has these jumps, with contour orientation near $z=0$ as shown in Figure \ref{figlens}:
  \begin{equation*}
\begin{split}
    {\bf P}_+^{(0)}(z)&={\bf P}_-^{(0)}(z)\bigl(\begin{smallmatrix}
                              0 & 1\\ -1 & 0
                             \end{smallmatrix}\bigr), \hspace{.27in} z\in\big(\Gamma_3\backslash\{0\}\big)\cap\mathbb{D}_{\frac{1}{4}}(0);\\
     {\bf P}_+^{(0)}(z)&={\bf P}_-^{(0)}(z)\bigl(\begin{smallmatrix}
                              1 & 0\\ \e^{2tg(z)} & 1
                             \end{smallmatrix}\bigr),\ \ z\in\big((\Gamma_2\cup\Gamma_4)\backslash\{0\}\big)\cap\mathbb{D}_{\frac{1}{4}}(0),
\end{split}
  \end{equation*}
and in the right half plane,
\begin{equation*}
\begin{split}
    {\bf P}_+^{(0)}(z)&={\bf P}_-^{(0)}(z)\e^{-\frac{t}{2}\varkappa\sigma_3}\bigl(\begin{smallmatrix}
                              0 & 1\\ -1 & 0
                             \end{smallmatrix}\bigr)\e^{\frac{t}{2}\varkappa\sigma_3}, \hspace{.23in} z\in\big(\Gamma_1\backslash\{0\}\big)\cap\mathbb{D}_{\frac{1}{4}}(0);\\
   {\bf P}_+^{(0)}(z)&={\bf P}_-^{(0)}(z)\e^{-\frac{t}{2}\varkappa\sigma_3}\bigl(\begin{smallmatrix}
                              1 & 0\\ \e^{t\phi(z)} & 1
                             \end{smallmatrix}\bigr)\e^{\frac{t}{2}\varkappa\sigma_3},\ \ z\in\big(\gamma^{\pm}\backslash\{0\}\big)\cap\mathbb{D}_{\frac{1}{4}}(0).
\end{split}
  \end{equation*}
  \item As $z\rightarrow 0$, the parametrix ${\bf P}^{(0)}(z)$ matches the singular behavior of the function ${\bf L}(z)$ as
  outlined in RHP \ref{lensRHP}, condition (3).
  \item We have the matching, as $s\rightarrow-\infty$ with $\gamma\in[0,1)$ fixed,
\begin{equation}\label{confmatch}
  {\bf P}^{(0)}(z)=\left\{{\bf I}+\frac{\im}{\z(z)}{\bf E}^{(0)}(z)\e^{\im\frac{\pi}{2}\nu\sigma_3}\begin{pmatrix}
                                                                               \nu^2 & -\frac{\Gamma(1-\nu)}{\Gamma(\nu)}\\ \frac{\Gamma(1+\nu)}{\Gamma(-\nu)} & -\nu^2
                                                                              \end{pmatrix}\e^{-\im\frac{\pi}{2}\nu\sigma_3}\big({\bf E}^{(0)}(z)\big)^{-1}
+\mathcal{O}\left(t^{-\frac{5}{3}}\right)\right\}{\bf P}^{(\infty)}(z),
\end{equation}
which holds uniformly for $0<r_1\leq |z|\leq r_2<\frac{1}{4}$ with $r_1,r_2$ fixed. Here, $\nu=\frac{t\varkappa}{2\pi\im}$ and $\z(z)$ as well as ${\bf E}^{(0)}(z)$ are defined in
\eqref{o:7} and \eqref{o:8} below.
\end{enumerate}
\end{problem}
The construction of this model function is achieved in terms of the confluent hypergeometric function $U(a,\z)\equiv U(a,1,\z)$ (see \cite{NIST}) and differs
only marginally from the ones given, for example, in \cite{BI,IK,BogatskiyCI:2016}. We define
\begin{equation*}
  \Psi_0(\z):=\begin{pmatrix}
              U(\nu,\e^{\im\frac{\pi}{2}}\z)\e^{2\pi\im\nu} & -U(1-\nu,\e^{-\im\frac{\pi}{2}}\z)\e^{\im\pi\nu}\frac{\Gamma(1-\nu)}{\Gamma(\nu)}\\
-U(1+\nu,\e^{\im\frac{\pi}{2}}\z)\e^{\im\pi\nu}\frac{\Gamma(1+\nu)}{\Gamma(-\nu)} & U(-\nu,\e^{-\im\frac{\pi}{2}}\z)
             \end{pmatrix}\e^{-\frac{\im}{2}\z\sigma_3},\ \ \ \textnormal{arg}\,\z\in\left(-\frac{\pi}{2},\frac{3\pi}{2}\right)
\end{equation*}
where $\nu=\frac{t\varkappa}{2\pi\im}\in\im\mathbb{R}$ and assemble
\begin{equation}\label{o:4}
  \Psi(\z)=\Psi_0(\z)\begin{cases}
                      {\bf I},&\textnormal{arg}\,\z\in(0,\frac{\pi}{3})\\
		      \bigl(\begin{smallmatrix} 
		             1 & 0\\ \e^{\im\pi\nu} & 1
		            \end{smallmatrix}\bigr),&\textnormal{arg}\,\z\in(\frac{\pi}{3},\frac{\pi}{2})\\
\bigl(\begin{smallmatrix}
       1 & 0 \\ 2\im\sin\pi\nu & 1
      \end{smallmatrix}\bigr)\bigl(\begin{smallmatrix}
       1 & 0\\ \e^{-\im\pi\nu} & 1
      \end{smallmatrix}\bigr),&\textnormal{arg}\,\z\in(\frac{\pi}{2},\frac{2\pi}{3})\\
\bigl(\begin{smallmatrix}
       1 & 0\\ 2\im\sin\pi\nu & 1
      \end{smallmatrix}\bigr),&\textnormal{arg}\,\z\in(\frac{2\pi}{3},\pi)
  \end{cases}\ \ \
\begin{cases}
\bigl(\begin{smallmatrix}
       1 & 0\\ 2\im\sin\pi\nu & 1
      \end{smallmatrix}\bigr)\bigl(\begin{smallmatrix}
       0 & -\e^{\im\pi\nu}\\ \e^{-\im\pi\nu} & 0
      \end{smallmatrix}\bigr),&\textnormal{arg}\,\z\in(\pi,\frac{4\pi}{3})\smallskip\\
\bigl(\begin{smallmatrix}
       1 & 0\\ 2\im\sin\pi\nu & 1
      \end{smallmatrix}\bigr)\bigl(\begin{smallmatrix}
       1 & -\e^{\im\pi\nu} \\ \e^{-\im\pi\nu} & 0
      \end{smallmatrix}\bigr),&\textnormal{arg}\,\z\in(\frac{4\pi}{3},\frac{3\pi}{2})\\
    \bigl(\begin{smallmatrix}
       1 & -\e^{-\im\pi\nu} \\ \e^{\im\pi\nu} & 0
      \end{smallmatrix}\bigr),&\textnormal{arg}\,\z\in(-\frac{\pi}{2},-\frac{\pi}{3})\\
  \bigl(\begin{smallmatrix}
       0 & -\e^{-\im\pi\nu} \\ \e^{\im\pi\nu} & 0
      \end{smallmatrix}\bigr),&\textnormal{arg}\,\z\in(-\frac{\pi}{3},0).
\end{cases}
\end{equation}
With the standard properties of confluent hypergeometric functions in mind (see \cite{NIST}), we obtain the {\it confluent hypergeometric-type parametrix}
\begin{problem}\label{conflu} The function $\Psi(\z)\in\mathbb{C}^{2\times 2}$ has the following properties:
\begin{enumerate}
 \item $\Psi(\z)$ is analytic for $\z\in\mathbb{C}\backslash\big(\{\textnormal{arg}\,\z=0,\pm\frac{\pi}{3},\frac{2\pi}{3},\pi,\frac{4\pi}{3}\}\cup\{0\}\big)$
  and the six rays are oriented as shown locally near $z=0$ in Figure \ref{figlens}.
  \item Along the jump contours, we have
\begin{align*}
  \Psi_+(\z)=&\,\Psi_-(\z)\e^{-\im\frac{\pi}{2}\nu\sigma_3}\bigl(\begin{smallmatrix}
       0 & 1\\ -1 & 0
      \end{smallmatrix}\bigr)\e^{\im\frac{\pi}{2}\nu\sigma_3},\ \ \textnormal{arg}\,\z=0;\ \ \, \ \ \Psi_+(\z)=\Psi_-(\z)\e^{\im\frac{\pi}{2}\nu\sigma_3}\bigl(\begin{smallmatrix}
       0 & 1\\ -1 & 0
      \end{smallmatrix}\bigr)\e^{-\im\frac{\pi}{2}\nu\sigma_3},\ \ \textnormal{arg}\,\z=\pi;\\
   \Psi_+(\z)=&\,\Psi_-(\z)\e^{-\im\frac{\pi}{2}\nu\sigma_3}\bigl(\begin{smallmatrix}
       1 & 0\\ 1 & 1
      \end{smallmatrix}\bigr)\e^{\im\frac{\pi}{2}\nu\sigma_3},\ \ \textnormal{arg}\,\z=\pm\frac{\pi}{3};\ \ \ 
  \Psi_+(\z)=\Psi_-(\z)\e^{\im\frac{\pi}{2}\nu\sigma_3}\bigl(\begin{smallmatrix}
       1 & 0\\ 1 & 1
      \end{smallmatrix}\bigr)\e^{-\im\frac{\pi}{2}\nu\sigma_3},\ \ \textnormal{arg}\,\z=\frac{2\pi}{3},\frac{4\pi}{3};
\end{align*}
and there are no jumps on the vertical axis $\textnormal{arg}\,\z=\pm\frac{\pi}{2}$.
  \item Near $\z=0$,
  \begin{equation}\label{o:5}
    \Psi(\z)=\widehat{\Psi}(\z)\begin{pmatrix}
                                1 & \frac{\gamma}{2\pi\im}\ln\zeta\\
				0 & 1
                               \end{pmatrix}\left.\begin{cases}
                                             \bigl(\begin{smallmatrix}
       1 & 0\\ -\e^{2\pi\im\nu} & 1
      \end{smallmatrix}\bigr),&\textnormal{arg}\,\z\in(0,\frac{\pi}{3})\\
      {\bf I},&\textnormal{arg}\,\z\in(\frac{\pi}{3},\frac{2\pi}{3})\\
      \bigl(\begin{smallmatrix}
       1 & 0\\ -1 & 1
      \end{smallmatrix}\bigr),&\textnormal{arg}\,\z\in(\frac{2\pi}{3},\pi)\\
       \bigl(\begin{smallmatrix}
       1 & -1\\ 0 & 1
      \end{smallmatrix}\bigr)\bigl(\begin{smallmatrix}
      1 & 0\\ 1 & 1 \end{smallmatrix}\bigr),&\textnormal{arg}\,\z\in(\pi,\frac{4\pi}{3})\\
      \bigl(\begin{smallmatrix}
             1 & -1\\ 0 & 1
            \end{smallmatrix}\bigr),&\textnormal{arg}\,\z\in(\frac{4\pi}{3},\frac{5\pi}{3})\\
      \bigl(\begin{smallmatrix}
       1 & -1\\ 0 & 1
      \end{smallmatrix}\bigr)\bigl(\begin{smallmatrix}
       1 & 0\\ \e^{2\pi\im\nu} & 1
      \end{smallmatrix}\bigr),&\textnormal{arg}\,\z\in(\frac{5\pi}{3},2\pi)
                                            \end{cases}\right\}\,\times\,\e^{-\im\frac{\pi}{2}\nu\sigma_3},
  \end{equation}
  where $\widehat{\Psi}(\z)$ is analytic at $\z=0$ and the branch of 
  the logarithm in \eqref{o:5} is such that $\textnormal{arg}\,\z\in(0,2\pi)$.
  \item As $\z\rightarrow\infty$,
\begin{equation*}
\begin{split}
\Psi(\z)= & \left[{\bf I}+\frac{\im}{\z}\e^{\im\frac{\pi}{2}\nu\sigma_3} \begin{pmatrix} \nu^2 & -\frac{\Gamma(1-\nu)}{\Gamma(\nu)} \\ \frac{\Gamma(1+\nu)}{\Gamma(-\nu)} & -\nu^2 \end{pmatrix}\e^{-\im\frac{\pi}{2}\nu\sigma_3}+ \mathcal{O}\left(\z^{-2}\right)\right] \\
 & \hspace{1in} \times \z^{-\nu\sigma_3}\e^{-\frac{\im}{2}\z\sigma_3}\begin{cases}  \Bigl(\begin{smallmatrix} \e^{\im\frac{3\pi}{2}\nu} & 0\\ 0 & \e^{-\im\frac{\pi}{2}\nu}  \end{smallmatrix}\Bigr),& \textnormal{arg}\,\z\in(0,\pi)\smallskip\\ \Bigl(\begin{smallmatrix} 0 & -\e^{\im\frac{\pi}{2}\nu}\\ \e^{\im\frac{\pi}{2}\nu} & 0 \end{smallmatrix}\Bigr),&\textnormal{arg}\,\z\in(-\pi,0)\end{cases}.
\end{split}
\end{equation*}

\end{enumerate}

\end{problem}
\begin{remark} Using the local expansions of $U(a,\z)$ near $\z=0$, we have in fact
\begin{equation*}
    \widehat{\Psi}(\z) =\e^{\im\frac{\pi}{2}\nu\sigma_3}\begin{pmatrix}
                                                         \widehat{\Psi}_{11}(\z) & \widehat{\Psi}_{12}(\z)\\
							  \widehat{\Psi}_{21}(\z) & \widehat{\Psi}_{22}(\z)
                                                        \end{pmatrix}\e^{-\frac{\im}{2}\z\sigma_3},\ \ \ |\z|<r,
  \end{equation*}
where the entries equal
  \begin{align*}
    \widehat{\Psi}_{11}(\z)&=-\frac{2\pi\im}{\gamma}\,U_2(\nu,\e^{\im\frac{\pi}{2}}\z),\ \ 
    \widehat{\Psi}_{12}(\z)=-\Big[U_1(1-\nu,\e^{-\im\frac{\pi}{2}}\z)+\frac{\pi}{2}\e^{-\im\frac{\pi}{2}}\,U_2(1-\nu,\e^{-\im\frac{\pi}{2}}\z)\Big]
  \frac{\Gamma(1-\nu)}{\Gamma(\nu)},\\
    \widehat{\Psi}_{21}(\z)&=\frac{2\pi\im}{\gamma}\,U_2(1+\nu,e^{\im\frac{\pi}{2}}\z)\frac{\Gamma(1+\nu)}{\Gamma(-\nu)},\ \ 
  \widehat{\Psi}_{22}(\z)=U_1(-\nu,\e^{-\im\frac{\pi}{2}}\z)+\frac{\pi}{2}\e^{-\im\frac{\pi}{2}}\,U_2(-\nu,\e^{-\im\frac{\pi}{2}}\z)
  \end{align*}
  in terms of the locally analytic functions
  \begin{equation*}
    U_1(a,\z):=-\frac{1}{\Gamma(a)}\sum_{k=0}^{\infty}\frac{(a)_k}{(k!)^2}\big(\psi(a+k)-2\psi(1+k)\big)\z^k,\ \ \ 
    U_2(a,\z):=-\frac{1}{\Gamma(a)}\sum_{k=0}^{\infty}\frac{(a)_k}{(k!)^2}\z^k,\ \ |\z|<r.
  \end{equation*}
  These expressions imply in particular that
  \begin{align*}
	\widehat{\Psi}\big(\z(z)\big)=&\,\e^{\im\frac{\pi}{2}\nu\sigma_3}\Bigg\{\begin{pmatrix}\frac{2\pi\im}{\gamma}\frac{1}{\Gamma(\nu)} & \frac{1}{\Gamma(\nu)}\big(\psi(1-\nu)-2\psi(1)-\im\frac{\pi}{2}\big)\\ -\frac{2\pi\im}{\gamma}\frac{1}{\Gamma(-\nu)} & -\frac{1}{\Gamma(-\nu)}\big(\psi(-\nu)-2\psi(1)-\im\frac{\pi}{2}\big) \end{pmatrix} \\
	& \hspace{.5in} +z\begin{pmatrix}\frac{2\pi\im}{\gamma}\frac{2\im\nu t}{\Gamma(\nu)}&\ast\\
	-\frac{2\pi\im}{\gamma}\frac{2\im(\nu+1)t}{\Gamma(-\nu)} & \ast\end{pmatrix} +\mathcal{O}\left(z^2\right)\Bigg\}\e^{-\frac{\im}{2}\z\sigma_3},
\end{align*}
where $*$ denotes entries that are irrelevant to us.
\end{remark}
The function $\Psi(\z)$ leads directly to a solution of RHP \ref{oripara} through the relation
\begin{equation}\label{o:6}
 {\bf P}^{(0)}(z)={\bf E}^{(0)}(z)\Psi\big(\z(z)\big)\left.\begin{cases}
                                          \e^{\frac{\im}{2}(\z(z)-\frac{4}{3}t)\sigma_3},&z\in\mathbb{D}_{\frac{1}{4}}(0)\backslash\Sigma_L:\ \textnormal{arg}\,z\in(0,\pi)\\
					  \e^{-\frac{\im}{2}(\z(z)-\frac{4}{3}t)\sigma_3},&z\in\mathbb{D}_{\frac{1}{4}}(0)\backslash\Sigma_L:\ \textnormal{arg}\,z\in(-\pi,0)
                                         \end{cases}\right\}\ \times
  \e^{\im\frac{\pi}{2}\nu\sigma_3},
\end{equation}
in which
\begin{equation}\label{o:7}
  z\in\mathbb{D}_{\frac{1}{4}}(0):\ \ \z(z):=-2\im t\,\textnormal{sgn}(\Im z)\int_0^z(\lambda-1)^{\frac{1}{2}}\d\lambda=2tz\left(1-\frac{z}{4}+\mathcal{O}\left(z^2\right)\right),\ \ z\rightarrow 0,
\end{equation}
and
\begin{equation}\label{o:8}
  {\bf E}^{(0)}(z):={\bf P}^{(\infty)}(z)\e^{-\im\frac{\pi}{2}\nu\sigma_3}\begin{cases}\e^{\im\frac{2}{3}t\sigma_3}\bigl(\begin{smallmatrix}\e^{-\im\frac{3\pi}{2}\nu} & 0\\ 0&\e^{\im\frac{\pi}{2}\nu}\end{smallmatrix}\bigr)\z(z)^{\nu\sigma_3},&z\in\mathbb{D}_{\frac{1}{4}}(0):\ \textnormal{arg}\,z\in(0,\pi)\smallskip\\ \e^{-\im\frac{2}{3}t\sigma_3}\bigl(\begin{smallmatrix}0&\e^{-\im\frac{\pi}{2}\nu}\\ -\e^{-\im\frac{\pi}{2}\nu} & 0\end{smallmatrix}\bigr)
  \z(z)^{\nu\sigma_3},&z\in\mathbb{D}_{\frac{1}{4}}(0):\ \textnormal{arg}\,z\in(-\pi,0)\end{cases}
\end{equation}
are both analytic at $z=0$. We choose the integration path in \eqref{o:7} in the separate half planes without crossing the cut $(-\infty,1]$. Using RHP \ref{conflu}, it is straightforward to verify 
the required properties of \eqref{o:6}.
\begin{remark} Note that from \eqref{o:5} as well as \eqref{l:0} and \eqref{l:1}, we obtain
\begin{equation*}
	{\bf L}(z)={\bf N}_0(z){\bf P}^{(0)}(z),\ \ \ z\in\mathbb{D}_{\frac{1}{4}}(0),
\end{equation*}
where ${\bf N}_0(z)$ is analytic at $z=0$.
\end{remark}
\begin{remark} Analyticity of ${\bf E}^{(0)}(z)$ at $z=0$ allows us to compute the following Taylor expansion which is used later on.  As $z\rightarrow 0$,
\begin{equation*}
	{\bf E}^{(0)}(z)=\e^{-\im\frac{\pi}{4}\sigma_3}|s|^{-\frac{1}{4}\sigma_3}{\bf N}\left\{{\bf I}-\frac{z}{4}\big(\sigma_2+3\nu\sigma_3\big)+\mathcal{O}\left(z^2\right)\right\}\e^{\im\frac{2}{3}t\sigma_3}\begin{pmatrix}\e^{-\im\pi\nu} & 0\\ 0& 1\end{pmatrix}(8t)^{\nu\sigma_3}.
\end{equation*}
\end{remark}
This concludes the local analysis for fixed $\gamma\in[0,1)$.
\subsection{Ratio transformation and small norm estimates.}
\label{sec:rhp-error}
With \eqref{o:1}, \eqref{o:3}, and \eqref{o:6}, this step amounts to the transformation
\begin{equation}\label{Rdef}
	{\bf R}(z):=\begin{pmatrix}
	1 & 0\\
	-2\im\nu|s|^{\frac{1}{2}} & 1
	\end{pmatrix}{\bf L}(z)\begin{cases}
	\big({\bf P}^{(0)}(z)\big)^{-1},&z\in\mathbb{D}_r(0)\\
	\big({\bf P}^{(1)}(z)\big)^{-1},&z\in\mathbb{D}_r(1)\\
	\big({\bf P}^{(\infty)}(z)\big)^{-1},&z\notin\big(\overline{\mathbb{D}_r(0)}\cup\overline{\mathbb{D}_r(1)}\big)
	\end{cases}
\end{equation}
in which $0<r<\frac{1}{4}$ is kept fixed. Recalling RHPs \ref{out:1}, \ref{Airy:1}, and \ref{oripara}, we obtain the following problem.
\begin{figure}[tbh]
\begin{center}
\resizebox{0.5\textwidth}{!}{\includegraphics{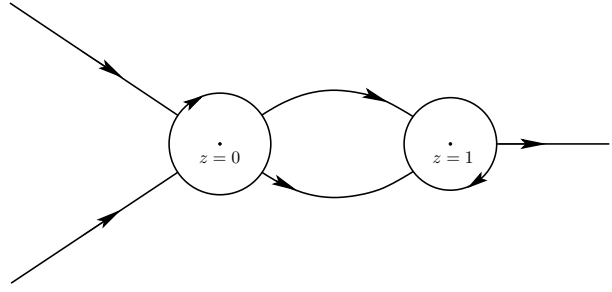}}
\caption{Jump contours for ${\bf R}(z)$ defined in \eqref{Rdef} in the complex $z$-plane.}
\label{figr2}
\end{center}
\end{figure}
\begin{problem}\label{ratio1} Determine ${\bf R}(z)={\bf R}(z;s,\gamma)\in\mathbb{C}^{2\times 2}$ such that
\begin{enumerate}
	\item $R(z)$ is analytic for $z\in\mathbb{C}\backslash\Sigma_R$ with square-integrable boundary values on the contour
	\begin{equation*}
		\Sigma_R:=\partial\mathbb{D}_r(0)\cup\partial\mathbb{D}_r(1)\cup(1+r,\infty)\cup\Big(\big(\Gamma_2\cup\Gamma_4\cup\gamma^+\cup\gamma^-\big)\cap\{z\in\mathbb{C}:|z|>r,|z-1|>r\}\Big),
	\end{equation*}
	which is shown in Figure \ref{figr2}.
	\item On the contour $\Sigma_R$, the boundary values ${\bf R}_{\pm}(z)$ are related via ${\bf R}_+(z)={\bf R}_-(z){\bf G}_R(z;s,\gamma)$ with
	\begin{align*}
		{\bf G}_R(z;s,\gamma)&={\bf P}^{(0)}(z)\big({\bf P}^{(\infty)}(z)\big)^{-1}, \hspace{1.34in} z\in\partial\mathbb{D}_r(0);\\ 
		{\bf G}_R(z;s,\gamma)&={\bf P}^{(1)}(z)\big({\bf P}^{(\infty)}(z)\big)^{-1}, \hspace{1.34in} z\in\partial\mathbb{D}_r(1);\\
		{\bf G}_R(z;s,\gamma)&={\bf P}^{(\infty)}(z)\begin{pmatrix}
		1 & \e^{-t(\varkappa+2g(z))}\\
		0 & 1
		\end{pmatrix}\big({\bf P}^{(\infty)}(z)\big)^{-1},\ \ \,z\in(1+r,\infty);\\
		{\bf G}_R(z;s,\gamma)&={\bf P}^{(\infty)}(z)\begin{pmatrix}
		1 & 0\\
		\e^{2tg(z)} & 1
		\end{pmatrix}\big({\bf P}^{(\infty)}(z)\big)^{-1}, \hspace{.45in} z\in\big(\Gamma_2\cup\Gamma_4\big)\cap\{z\in\mathbb{C}:\,|z|>r\};\\
		{\bf G}_R(z;s,\gamma)&={\bf P}^{(\infty)}(z)\begin{pmatrix}
		1 & 0\\
		\e^{t(\phi(z)+\varkappa)} & 1
		\end{pmatrix}\big({\bf P}^{(\infty)}(z)\big)^{-1}, \hspace{.26in} z\in\gamma^{\pm}\cap\{z\in\mathbb{C}:\,|z|>r,\ |z-1|>r\}.
	\end{align*}
	By construction, compare RHP \ref{out:1}, \ref{Airy:1}, and \ref{oripara}, there are no jumps inside $\mathbb{D}_r(0)\cup \mathbb{D}_r(1)$ and on $(-\infty,-r)\cup(r,1-r)$. Moreover, $R(z)$ is bounded at $z=0$.
	\item As $z\rightarrow\infty$,
	\begin{equation*}
		{\bf R}(z)=\begin{pmatrix}
	1 & 0\\
	-2\im\nu|s|^{\frac{1}{2}} & 1
	\end{pmatrix}(|s|z)^{-\frac{1}{4}\sigma_3}{\bf N}\left\{{\bf I}+{\bf X}_1(|s|z)^{-\frac{1}{2}}+\mathcal{O}\left(z^{-1}\right)\right\}
	\big({\bf P}^{(\infty)}(z)\big)^{-1}={\bf I}+\mathcal{O}\left(z^{-\frac{1}{2}}\right).
	\end{equation*}
\end{enumerate}
\end{problem}
Using standard small norm estimates, compare \eqref{es:1}, \eqref{es:2}, \eqref{es:3}, and \eqref{Aimatch}, we deduce
\begin{prop}\label{DZes:1} For any fixed $\gamma\in[0,1)$, there exist positive constants $t_0=t_0(\gamma)$ and $c=c(\gamma)$
such that
\begin{equation*}
  \|{\bf G}_R(\cdot;s,\gamma)-{\bf I}\|_{L^2\cap L^{\infty}(\Sigma_R\backslash\partial\mathbb{D}_r(0))}\leq\frac{c}{t^{\frac{2}{3}}}\text{ for all }t\geq t_0.
\end{equation*} 
\end{prop}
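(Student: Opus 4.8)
\noindent\textbf{Proof strategy for Proposition~\ref{DZes:1}.} The plan is to run the usual Deift--Zhou small-norm argument, estimating ${\bf G}_R(\cdot;s,\gamma)-{\bf I}$ separately on each of the finitely many connected pieces of $\Sigma_R\backslash\partial\mathbb{D}_r(0)$: the circle $\partial\mathbb{D}_r(1)$, the half-line $(1+r,\infty)$, the portions of $\Gamma_2\cup\Gamma_4$ with $|z|\geq r$, and the portions of the lens boundaries $\gamma^{\pm}$ with $|z|\geq r$ and $|z-1|\geq r$. Throughout I will use that, for fixed $\gamma\in[0,1)$, we have $t=(-s)^{\frac{3}{2}}\to\infty$ while $v:=t\varkappa=-\ln(1-\gamma)$ and hence $\nu=\frac{v}{2\pi\im}$ stay fixed; consequently every factor $\e^{\pm t\varkappa}=\e^{\pm v}$ is a bounded constant and, since $\Re\nu=0$, the Szeg\H{o}-type function $\mathcal{D}(z)^{\pm\sigma_3}$ from \eqref{o:2} stays uniformly bounded on the relevant contours.

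On $\partial\mathbb{D}_r(1)$ the estimate is immediate from the matching \eqref{Aimatch} of RHP~\ref{Airy:1}: since $r\in(0,\frac14)$ is fixed, \eqref{Aimatch} gives ${\bf G}_R(z;s,\gamma)={\bf P}^{(1)}(z)\big({\bf P}^{(\infty)}(z)\big)^{-1}={\bf I}+\mathcal{O}(t^{-1})$ uniformly for $|z-1|=r$; as this arc is compact, $\|{\bf G}_R-{\bf I}\|_{L^2\cap L^{\infty}(\partial\mathbb{D}_r(1))}=\mathcal{O}(t^{-1})\leq c\,t^{-\frac{2}{3}}$ for $t$ large.

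On the remaining three (unbounded) pieces the jump in RHP~\ref{ratio1} has the conjugated form ${\bf G}_R(z;s,\gamma)={\bf P}^{(\infty)}(z)\,{\bf J}(z)\,\big({\bf P}^{(\infty)}(z)\big)^{-1}$, where ${\bf J}(z)$ is the unipotent triangular matrix whose single off-diagonal entry equals $\e^{-t(\varkappa+2g(z))}$ on $(1+r,\infty)$, $\e^{2tg(z)}$ on $\Gamma_2\cup\Gamma_4$, and $\e^{t(\phi(z)+\varkappa)}=\e^{v}\e^{2tg(z)}$ on $\gamma^{\pm}$. Here I would combine two elementary observations. First, from \eqref{o:1}--\eqref{o:2}, the matrices ${\bf P}^{(\infty)}(z)$ and $\big({\bf P}^{(\infty)}(z)\big)^{-1}$ are bounded on the compact portions of these contours (which stay a fixed distance away from $z=0$ and $z=1$) and grow at most like $|z|^{\frac14}$ as $z\to\infty$. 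Second, Observations 1--3 give the requisite decay of the exponents: \eqref{es:2} shows $\varkappa+2g(z)\geq\frac{4}{3}r^{\frac{3}{2}}>0$ on $(1+r,\infty)$; \eqref{es:1}, together with the fact that $g(z)=\frac{2}{3}(z-1)^{\frac{3}{2}}$ is asymptotically negative real along $\Gamma_2,\Gamma_4$ (so $\Re g(z)\asymp-|z|^{\frac{3}{2}}$ there), yields a bound $\Re g(z)\leq-c_1(r)(1+|z|)^{\frac{3}{2}}$ for all $|z|\geq r$ on $\Gamma_2\cup\Gamma_4$; and \eqref{es:3}, by continuity and compactness together with $\varkappa\to0$, gives $\Re(\phi(z)+\varkappa)\leq-c_2(r)<0$ on the relevant part of $\gamma^{\pm}$. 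In each case this super-exponential (in $|z|$) decay, or simply a uniformly negative exponent, of the off-diagonal entry of ${\bf J}(z)$ dominates the polynomial growth of the conjugating matrices, so that $\|{\bf G}_R-{\bf I}\|_{L^2\cap L^{\infty}}=\mathcal{O}(\e^{-ct})$ on each of these pieces.

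Finally I would sum the finitely many contributions and choose $t_0=t_0(\gamma)$ so large that the $\mathcal{O}(t^{-1})$ term on $\partial\mathbb{D}_r(1)$ and the exponentially small terms on the other pieces are each at most $\frac{c}{4}t^{-\frac{2}{3}}$ for $t\geq t_0$; this yields the claimed estimate. (For fixed $\gamma$ the exponent $\frac{2}{3}$ is not sharp, as $t^{-1}$ in fact holds on $\Sigma_R\backslash\partial\mathbb{D}_r(0)$; it is simply the convenient rate matching the separate analysis near $z=0$ carried out via the confluent matching \eqref{confmatch}.) I expect the only genuinely non-routine point to be the uniform lower bound on $-\Re g(z)$ along the unbounded rays $\Gamma_2\cup\Gamma_4$ all the way down to $|z|=r$, together with the verification that the $|z|^{\frac14}$ growth of ${\bf P}^{(\infty)}$ at infinity does not spoil square-integrability; both are handled by writing $(z-1)^{\frac{3}{2}}$ out explicitly along each ray and extracting a bound of the form $\Re g(z)\leq-c(1+|z|^{\frac{3}{2}})$.
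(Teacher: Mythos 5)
Your approach matches the paper's: the paper simply cites Observations 1--3 (\eqref{es:1}, \eqref{es:2}, \eqref{es:3}) and the Airy matching \eqref{Aimatch} and calls the rest ``standard small norm estimates,'' which is exactly what you carry out. The decomposition into $\partial\mathbb{D}_r(1)$, the infinite rays, and the lens boundaries, the role of $\Re\nu=0$ in keeping $\mathcal D(z)^{\pm\sigma_3}$ bounded, and the super-exponential decay versus polynomial growth of ${\bf P}^{(\infty)}$ at $\infty$ are all the right ingredients.

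However, your claim that ${\bf G}_R={\bf I}+\mathcal O(t^{-1})$ on $\partial\mathbb{D}_r(1)$, and your parenthetical remark that $t^{-2/3}$ is merely a ``convenient rate matching the separate analysis near $z=0$,'' are both incorrect. You have to conjugate the $\mathcal O(t^{-1})$ bracket in \eqref{Aimatch} by the full outer parametrix ${\bf P}^{(\infty)}(z)=\big(|s|(z-1)\big)^{-\frac14\sigma_3}{\bf N}\big(\mathcal D(z)\big)^{-\sigma_3}$. On $\partial\mathbb{D}_r(1)$ with $r$ fixed, the factors $(z-1)^{-\frac14\sigma_3}$ and $\mathcal D(z)^{-\sigma_3}$ are bounded with bounded inverses, but the scalar matrix $|s|^{-\frac14\sigma_3}$ is not: conjugation by $|s|^{-\frac14\sigma_3}$ magnifies the $(2,1)$ entry of the bracket by $|s|^{\frac12}=t^{\frac13}$. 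Hence the sharp bound for ${\bf G}_R-{\bf I}$ on $\partial\mathbb{D}_r(1)$ is $\mathcal O(t^{-\frac23})$, not $\mathcal O(t^{-1})$, exactly as at $z=0$. (The $t^{-1}$ rate is recovered only after you strip off the $|s|^{\pm\frac14\sigma_3}$ conjugation, i.e.\ for $\widehat{\bf R}=|s|^{\frac14\sigma_3}{\bf R}\,|s|^{-\frac14\sigma_3}$; compare \eqref{Rfact}.) Your final conclusion is unaffected, since $t^{-\frac23}$ is precisely what the proposition asks for, but the intermediate claim is wrong, and the distinction between the $t^{-1}$ rate for $\widehat{\bf R}$ and the $t^{-\frac23}$ rate for ${\bf R}$ is actively used later in the paper (see Sections \ref{sderivsec} and \ref{sec:extension}), so you should keep track of where the extra $t^{\frac13}$ comes from.
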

On the other hand, $\partial\mathbb{D}_1(0)$ has to be investigated more carefully.  From \eqref{confmatch},
\begin{equation*}
  {\bf P}^{(0)}(z)\big({\bf P}^{(\infty)}(z)\big)^{-1}={\bf I}+\mathcal{O}\left(t^{-\frac{2}{3}+2|\Re\nu|}\right),\ \ t\rightarrow\infty,
\end{equation*}
and 
\begin{equation*}
  \nu=\frac{t\varkappa}{2\pi\im} = -\frac{1}{2\pi\im}\ln(1-\gamma)\in\im\mathbb{R},\ \ \gamma\in[0,1).
\end{equation*}
Hence, we can conclude
\begin{prop}\label{DZes:2} For any fixed $\gamma\in[0,1)$, there exist positive constants $t_0=t_0(\gamma)$ and $c=c(\gamma)$
such that
\begin{equation*}
  \|{\bf G}_R(\cdot;s,\gamma)-{\bf I}\|_{L^2\cap L^{\infty}(\partial\mathbb{D}_r(0))}\leq\frac{c}{t^{\frac{2}{3}}}\text{ for all }t\geq t_0.
\end{equation*} 
\end{prop}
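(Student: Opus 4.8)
The plan is to extract the estimate directly from the matching relation \eqref{confmatch}, whose validity on $\partial\mathbb{D}_r(0)$ is guaranteed since $0<r<\frac14$ is fixed, so $\partial\mathbb{D}_r(0)$ lies in the annulus $0<r_1\le|z|\le r_2<\frac14$ where \eqref{confmatch} holds. By RHP \ref{ratio1}(2), on $\partial\mathbb{D}_r(0)$ we have ${\bf G}_R={\bf P}^{(0)}({\bf P}^{(\infty)})^{-1}$, so \eqref{confmatch} gives, uniformly on $\partial\mathbb{D}_r(0)$,
\begin{equation*}
{\bf G}_R(z;s,\gamma)-{\bf I}=\frac{\im}{\z(z)}\,{\bf E}^{(0)}(z)\,\e^{\im\frac{\pi}{2}\nu\sigma_3}\begin{pmatrix}\nu^2 & -\frac{\Gamma(1-\nu)}{\Gamma(\nu)}\\ \frac{\Gamma(1+\nu)}{\Gamma(-\nu)} & -\nu^2\end{pmatrix}\e^{-\im\frac{\pi}{2}\nu\sigma_3}\,\big({\bf E}^{(0)}(z)\big)^{-1}+\mathcal{O}\big(t^{-\frac53}\big).
\end{equation*}

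Then I would bound the right-hand side entrywise. From \eqref{o:7}, $\z(z)=2tz(1+\mathcal{O}(z))$ with $|z|=r$ fixed, so $|\z(z)|\ge ct$ and $|\z(z)|^{-1}\le Ct^{-1}$ for $t$ large. For fixed $\gamma\in[0,1)$ the parameter $\nu=-\frac{1}{2\pi\im}\ln(1-\gamma)$ is purely imaginary; hence $\nu^2$ is a fixed real number, the Gamma-ratios are fixed finite numbers, and $\e^{\pm\im\frac{\pi}{2}\nu\sigma_3}$ is a fixed bounded diagonal matrix, so the central matrix, call it $M$, is $z$-independent and bounded. Using \eqref{o:1} and \eqref{o:8}, factor ${\bf E}^{(0)}(z)=\big(|s|(z-1)\big)^{-\frac14\sigma_3}B(z)\,\z(z)^{\nu\sigma_3}$, where $B(z)$ collects the constant matrices and $\mathcal{D}(z)^{-\sigma_3}$ and — for fixed $\gamma$ and fixed $r$ — is uniformly bounded on $\partial\mathbb{D}_r(0)$ together with $B(z)^{-1}$ (this needs only $|z-1|\asymp1$ and $(z-1)^{1/2}$ staying away from $\pm\im$ there). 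Since $\Re\nu=0$ and $\arg\z(z)$ ranges over a bounded set on $\partial\mathbb{D}_r(0)$, the conjugation $\z(z)^{\nu\sigma_3}M\z(z)^{-\nu\sigma_3}$ only multiplies the off-diagonal entries of $M$ by the bounded factors $\z(z)^{\pm2\nu}$ and hence stays bounded; conjugating further by $B(z)$ keeps it bounded. Thus
\begin{equation*}
{\bf G}_R(z;s,\gamma)-{\bf I}=\frac{\im}{\z(z)}\big(|s|(z-1)\big)^{-\frac14\sigma_3}M''(z)\big(|s|(z-1)\big)^{\frac14\sigma_3}+\mathcal{O}\big(t^{-\frac53}\big)
\end{equation*}
with $M''(z)$ uniformly bounded. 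Since $|z-1|\asymp1$ and $|s|=t^{2/3}$ on $\partial\mathbb{D}_r(0)$, this last conjugation multiplies the $(2,1)$ entry by $\mathcal{O}(t^{1/3})$, the $(1,2)$ entry by $\mathcal{O}(t^{-1/3})$, and leaves the diagonal $\mathcal{O}(1)$; combined with $|\z(z)|^{-1}=\mathcal{O}(t^{-1})$ this gives $\|{\bf G}_R-{\bf I}\|_{L^\infty(\partial\mathbb{D}_r(0))}\le C(\gamma)t^{-2/3}$, the dominant contribution being the $(2,1)$ entry and the $\mathcal{O}(t^{-5/3})$ remainder subdominant. The $L^2$ bound follows since $\partial\mathbb{D}_r(0)$ has finite length $2\pi r$.

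The one genuinely delicate point — and the reason this estimate is stated apart from Proposition \ref{DZes:1} — is the amplification in the last step: the naive size of the correction in \eqref{confmatch} is $\mathcal{O}(t^{-1})$, but conjugation by the growing matrix $(|s|z)^{-\frac14\sigma_3}$ built into ${\bf E}^{(0)}(z)$ degrades the $(2,1)$ entry to $\mathcal{O}(t^{-2/3})$. Equally important, had $\nu$ a nonzero real part the factors $\z(z)^{\pm2\nu}$ appearing in the off-diagonal conjugation would contribute an extra $t^{2|\Re\nu|}$ (this is precisely the $\mathcal{O}(t^{-2/3+2|\Re\nu|})$ recorded before the proposition), and the small-norm scheme would break down; the hypothesis $\gamma\in[0,1)$ is used exactly to pin $\nu$ to the imaginary axis. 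Everything else is routine bookkeeping of quantities uniformly bounded in $z$ and in fixed $\gamma$, which is what makes $t_0$ and $c$ depend on $\gamma$.
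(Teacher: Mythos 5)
Your proof is correct and follows the paper's route; the paper simply compresses the argument into the observation that \eqref{confmatch} yields ${\bf P}^{(0)}(z)\big({\bf P}^{(\infty)}(z)\big)^{-1}={\bf I}+\mathcal{O}\big(t^{-\frac{2}{3}+2|\Re\nu|}\big)$ together with $\nu\in\im\mathbb{R}$ for fixed $\gamma\in[0,1)$, while you unpack the provenance of the exponent $-\frac{2}{3}+2|\Re\nu|$ (the $t^{-1}$ from $\z(z)^{-1}$, the $t^{\frac{1}{3}}$ from conjugation by $\big(|s|(z-1)\big)^{\frac{1}{4}\sigma_3}$, and the $t^{2|\Re\nu|}$ from $\z(z)^{\pm 2\nu}$). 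A minor interpretive point: the $t^{\frac{1}{3}}$ amplification also occurs on $\partial\mathbb{D}_r(1)$ and is already absorbed in Proposition \ref{DZes:1}, so what singles out $\partial\mathbb{D}_r(0)$ is only the potential $t^{2|\Re\nu|}$ factor, which you do correctly identify as the crux and the point where the hypothesis $\gamma\in[0,1)$ is used.
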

The last Proposition, together with Proposition \ref{DZes:1}, ensures solvability of RHP \ref{ratio1} by general theory (see \cite{DZ}).
\begin{theo}\label{mainDZ} For any fixed $\gamma\in[0,1)$, there exists $t_0=t_0(\gamma)>0$ and $c=c(\gamma)>0$ such that the ratio RHP \ref{ratio1} is uniquely solvable in $L^2(\Sigma_R)$ for all $t\geq t_0$. We can compute its solution iteratively via the integral equation
\begin{equation*}
	{\bf R}(z)={\bf I}+\frac{1}{2\pi\im}\int_{\Sigma_R}{\bf R}_-(\lambda)\big({\bf G}_R(\lambda)-{\bf I}\big)\frac{\d\lambda}{\lambda-z},\ \ \ \lambda\in\mathbb{C}\backslash\Sigma_R,
\end{equation*}
using that
\begin{equation*}
	\|{\bf R}_-(\cdot;s,\gamma)-{\bf I}\|_{L^2(\Sigma_R)}\leq \frac{c}{t^{\frac{2}{3}}}\text{ for all }t\geq t_0.
\end{equation*}
\end{theo}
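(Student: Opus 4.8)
The plan is to run the standard small-norm reduction of RHP \ref{ratio1} to a singular integral equation, following the Deift--Zhou roadmap \cite{DZ}. Write $\Delta(z):={\bf G}_R(z;s,\gamma)-{\bf I}$ for $z\in\Sigma_R$. First I would note that, with $r\in(0,\tfrac14)$ fixed, $\Sigma_R$ is a finite union of smooth oriented arcs --- the two circles $\partial\mathbb{D}_r(0)$, $\partial\mathbb{D}_r(1)$ together with finitely many line segments and rays --- meeting transversally at finitely many points, so the Cauchy operator on $\Sigma_R$ has non-tangential boundary operators $\mathcal{C}_\pm$ bounded on $L^2(\Sigma_R)$ with operator norm depending only on this fixed geometry. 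In this setting a matrix ${\bf R}$ analytic off $\Sigma_R$ with ${\bf R}(z)\to{\bf I}$ at infinity, boundary values ${\bf R}_\pm-{\bf I}\in L^2(\Sigma_R)$, and jump ${\bf R}_+={\bf R}_-{\bf G}_R$ exists exactly when
\begin{equation*}
({\bf I}-\mathcal{C}_\Delta)({\bf R}_--{\bf I})=\mathcal{C}_-(\Delta),\qquad \mathcal{C}_\Delta[f]:=\mathcal{C}_-(f\,\Delta),
\end{equation*}
admits a solution ${\bf R}_--{\bf I}\in L^2(\Sigma_R)$; one then recovers ${\bf R}$ from ${\bf R}_-$ by the Cauchy integral stated in the theorem.

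Next I would feed in Propositions \ref{DZes:1} and \ref{DZes:2}, which together give $\|\Delta\|_{L^2\cap L^\infty(\Sigma_R)}\le c\,t^{-2/3}$ for $t\ge t_0(\gamma)$. On the unbounded portions of $\Sigma_R$ the matrix $\Delta$ moreover decays super-exponentially --- by \eqref{es:1}, \eqref{es:2}, \eqref{es:3} on the rays, and by \eqref{Aimatch} on $\partial\mathbb{D}_r(1)$ resp.\ \eqref{confmatch} on $\partial\mathbb{D}_r(0)$ --- so $\Delta\in L^1\cap L^2\cap L^\infty(\Sigma_R)$. Hence $\|\mathcal{C}_\Delta\|_{L^2\to L^2}\le\|\mathcal{C}_-\|_{L^2\to L^2}\,\|\Delta\|_{L^\infty(\Sigma_R)}\le\tfrac12$ for $t$ sufficiently large, so ${\bf I}-\mathcal{C}_\Delta$ is boundedly invertible on $L^2(\Sigma_R)$ by Neumann series, producing the unique
\begin{equation*}
{\bf R}_--{\bf I}=({\bf I}-\mathcal{C}_\Delta)^{-1}\mathcal{C}_-(\Delta)=\sum_{n\ge0}\mathcal{C}_\Delta^{\,n}\big(\mathcal{C}_-(\Delta)\big)\in L^2(\Sigma_R),
\end{equation*}
which is precisely the iterative scheme asserted. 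Since $\mathcal{C}_-$ is bounded and $\|({\bf I}-\mathcal{C}_\Delta)^{-1}\|\le 2$, this also yields $\|{\bf R}_--{\bf I}\|_{L^2(\Sigma_R)}\le 2\|\mathcal{C}_-\|_{L^2\to L^2}\,\|\Delta\|_{L^2(\Sigma_R)}\le c\,t^{-2/3}$ after relabelling $c$.

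It then remains to verify that ${\bf R}(z):={\bf I}+\frac{1}{2\pi\im}\int_{\Sigma_R}{\bf R}_-(\lambda)\Delta(\lambda)(\lambda-z)^{-1}\,\d\lambda$ genuinely solves RHP \ref{ratio1}: analyticity off $\Sigma_R$ is clear; the Sokhotski--Plemelj formulae together with the integral equation give ${\bf R}_+={\bf R}_-{\bf G}_R$; since ${\bf R}_-\Delta=({\bf I}+({\bf R}_--{\bf I}))\Delta\in L^1(\Sigma_R)$ (using $\Delta\in L^1\cap L^2$ and Cauchy--Schwarz) the Cauchy integral vanishes at infinity, so ${\bf R}(z)\to{\bf I}$; and the boundedness at $z=0$, the absence of jumps inside $\mathbb{D}_r(0)\cup\mathbb{D}_r(1)$, and the absence of jumps on $(-\infty,-r)\cup(r,1-r)$ all follow from the construction of the parametrices in RHPs \ref{out:1}, \ref{Airy:1}, and \ref{oripara}. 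Uniqueness is automatic: any solution of RHP \ref{ratio1} produces boundary values that solve the same integral equation, whose $L^2$-solution is unique for $t\ge t_0$ (equivalently, $\det{\bf R}\equiv1$ since every jump matrix is unimodular, and the standard vanishing lemma applies).

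I do not anticipate a real obstacle here --- the theorem is the routine small-norm wrap-up of the nonlinear steepest descent analysis, and all analytic substance has already been absorbed into Propositions \ref{DZes:1} and \ref{DZes:2}. The only points that want a little care are the $L^2$-boundedness of $\mathcal{C}_-$ across the finitely many intersection points of $\Sigma_R$ (classical once $\Sigma_R$ is recognized as a finite union of smooth arcs with transversal crossings) and the bookkeeping ensuring $\Delta\in L^1\cap L^2\cap L^\infty(\Sigma_R)$ with the $t^{-2/3}$ rate uniform over the fixed $\gamma\in[0,1)$; the weaker $\mathcal{O}(z^{-1/2})$ behaviour at infinity recorded in RHP \ref{ratio1} is harmless, being still $o(1)$.
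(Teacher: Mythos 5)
Your proof is correct and follows essentially the same route the paper takes: the paper itself offers no detailed argument for Theorem \ref{mainDZ} but simply invokes ``general theory (see \cite{DZ})'' once Propositions \ref{DZes:1} and \ref{DZes:2} are in hand, and your sketch is exactly the standard small-norm reduction (bounded Cauchy operator on the finite union of arcs $\Sigma_R$, Neumann series for $({\bf I}-\mathcal{C}_\Delta)^{-1}$ once $\|\Delta\|_{L^\infty}\leq c\,t^{-2/3}<$ something small, and the Sokhotski--Plemelj / Liouville wrap-up) that those references supply. The only cosmetic remark is that uniqueness here is already automatic from the contraction on $L^2(\Sigma_R)$; the parenthetical unimodularity/vanishing-lemma observation is true but redundant in the small-norm regime.
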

This theorem will be central to the asymptotic evaluation of \eqref{F2-determinant} as $s\rightarrow-\infty$. Before we carry out the underlying computations, we shall first derive two differential identities.

\section{Differential identities}\label{diffidentity}
We connect the two logarithmic derivatives
\begin{equation*}
	\frac{\partial}{\partial s}\ln F_2(s,\gamma),\ \ \ \ \textnormal{respectively} \ \frac{\partial}{\partial\gamma}\ln F_2(s,\gamma),
\end{equation*}
with fixed $\gamma$, respectively $s$, to RHP \ref{unifRHP}. In this process it will be useful to recall the following well known
facts about the ``ring of integrable integral operators'' (see \cite{IIKS}). The given kernel displays the structure (compare Remark \ref{AiryConn})
\begin{equation*}
  \gamma K_{\textnormal{Ai}}(\lambda,\mu)=\frac{f^T(\lambda)h(\mu)}{\lambda-\mu};\ \ \ \ \
  f(\lambda)=\binom{\Phi^{11}(\lambda)}{\Phi^{21}(\lambda)},\ \ h(\lambda)=\frac{\gamma}{2\pi\im}\binom{-\Phi^{21}(\lambda)}{\ \ \,\Phi^{11}(\lambda)},
\end{equation*}
and we observe that
\begin{eqnarray}
	\frac{\partial}{\partial s}\ln F_2(s,\gamma)
  &=&-\textnormal{trace}\,\left(\big(1-\gamma K_{\textnormal{Ai}}\upharpoonright_{L^2(s,\infty)}\big)^{-1}\gamma\frac{\partial}{\partial s}\big(K_{\textnormal{Ai}}\upharpoonright_{L^2(s,\infty)}\big)\right)\nonumber\\
  &=&\int_s^{\infty}\big(1-\gamma K_{\textnormal{Ai}}\big)^{-1}(s,\lambda)\gamma K_{\textnormal{Ai}}(\lambda,s)\,\d\lambda=R(s,s);\nonumber\\
  \frac{\partial}{\partial\gamma}\ln F_2(s,\gamma)
  &=&-\textnormal{trace}\,\left(\big(1-\gamma K_{\textnormal{Ai}}\upharpoonright_{L^2(s,\infty)}\big)^{-1}K_{\textnormal{Ai}}\upharpoonright_{L^2(s,\infty)}\right)=-\frac{1}{\gamma}\int_s^{\infty}R(\lambda,\lambda)\,\d\lambda.\label{trace2}
\end{eqnarray}
Here, $1+R\upharpoonright_{L^2(s,\infty)}$ denotes the resolvent of the operator $\gamma K_{\textnormal{Ai}}\upharpoonright_{L^2(s,\infty)}$, i.e. 
\begin{equation*}
	1+R\upharpoonright_{L^2(s,\infty)}=\big(1-\gamma K_{\textnormal{Ai}}\upharpoonright_{L^2(s,\infty)}\big)^{-1},
\end{equation*}
and we have the
kernel representation
\begin{equation*}
  R(\lambda,\mu) = \frac{F^T(\lambda)H(\mu)}{\lambda-\mu},\ \ (\lambda,\mu)\in (s,\infty);\hspace{0.5cm}
  F=\big(1-\gamma K_{\textnormal{Ai}}\upharpoonright_{L^2(s,\infty)}\big)^{-1}f,\ \ H=\big(1-\gamma K_{\textnormal{Ai}}\upharpoonright_{L^2(s,\infty)}\big)^{-1}h.
\end{equation*}
Most importantly, the connection to RHP \ref{master} is of the form
\begin{equation*}
	{\bf Y}(z)={\bf I}-\int_s^{\infty}F(\lambda)h^T(\lambda)\frac{\d\lambda}{\lambda-z};\ \ \ \ F(z)={\bf Y}_+(z)f(z),\ \ H(z)=\big({\bf Y}_+^{-1}\big)^T(z)h(z),\ \ z\in(s,\infty).
\end{equation*}

\begin{prop}\label{diff1} For fixed $\gamma\in\mathbb{R}$, we have
\begin{equation*}
	\frac{\partial}{\partial s}\ln F_2(s,\gamma)=
  \frac{\gamma}{2\pi\im}\big({\bf X}^{-1}(z){\bf X}'(z)\big)^{21}\Big|_{z\rightarrow s},\ \ \ \ (')=\frac{\d}{\d z},
\end{equation*}
in terms of the solution ${\bf X}(z)$ to RHP \ref{unifRHP}, where the limit is taken with $\textnormal{arg}(z-s)\in(0,\frac{2\pi}{3})$.
\end{prop}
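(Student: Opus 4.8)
The plan is to turn the trace formula $\frac{\partial}{\partial s}\ln F_2(s,\gamma)=R(s,s)$ recorded above into the stated entry of ${\bf X}^{-1}{\bf X}'$, using the integrable structure of the Airy kernel together with the undressing transformation ${\bf Y}\mapsto{\bf X}$ of RHP \ref{unifRHP}. Two ingredients are needed: that the resolvent kernel $R$ is smooth on the diagonal, so that $R(s,s)$ is computable from the IIKS data $F,H$ and their derivatives; and a dictionary expressing those data through boundary values of ${\bf X}$ near $z=s$. This is the ``simpler'' of the two differential identities and ultimately serves as a consistency check on the $\gamma$-derivative computation.

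First I would establish the diagonal smoothness. Since $({\bf Y}_+f)^T({\bf Y}_+^{-1})^Th=f^T\,{\bf Y}_+^T\,({\bf Y}_+^T)^{-1}\,h=f^Th$, and substituting $f=\binom{\Phi^{11}}{\Phi^{21}}$, $h=\frac{\gamma}{2\pi\im}\binom{-\Phi^{21}}{\Phi^{11}}$ gives $f^Th\equiv0$, it follows that $F^T(\lambda)H(\lambda)=0$ on $(s,\infty)$. Hence $R(\lambda,\mu)=\frac{F^T(\lambda)H(\mu)}{\lambda-\mu}$ extends analytically across the diagonal, with
\[
	R(\lambda,\lambda)=-F^T(\lambda)H'(\lambda)=\big(F'(\lambda)\big)^TH(\lambda),
\]
so it suffices to evaluate $F'$ and $H$ at the endpoint $z=s$.

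Next, the dictionary. By Remark \ref{AiryConn}, $(\Phi^{11},\Phi^{21})^T$ is the first column of $\Phi(\z)$ in the sector $\textnormal{arg}\,\z\in(0,\frac{2\pi}{3})$, and since $\Phi$ is unimodular one has there $f=\Phi\binom{1}{0}$ and $h=\frac{\gamma}{2\pi\im}\big(\Phi^{-1}\big)^T\binom{0}{1}$. The undressing ${\bf X}={\bf Y}\,\Phi\,M$ — with $M$ the piecewise-constant, unipotent lower-triangular matrix of Figures \ref{figure2}/\ref{figure3} which conjugates the jump of ${\bf Y}$ on $(s,\infty)$ to a constant — reduces to ${\bf Y}_+\Phi_+={\bf X}_+$ on the region $\Omega_1$ (where $M\equiv{\bf I}$), namely the sector adjacent to $(s,\infty)$ from above near $z=s$, i.e.\ $\textnormal{arg}(z-s)\in(0,\frac{2\pi}{3})$. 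Combined with $F={\bf Y}_+f$, $H=({\bf Y}_+^{-1})^Th$ this yields, on that sector,
\[
	F(z)={\bf X}(z)\binom{1}{0},\qquad H(z)=\frac{\gamma}{2\pi\im}\big({\bf X}^{-1}(z)\big)^{T}\binom{0}{1},
\]
the first column of ${\bf X}$ and $\frac{\gamma}{2\pi\im}$ times the transposed second row of ${\bf X}^{-1}$. By condition (3) of RHP \ref{unifRHP}, in this sector the logarithm at $z=s$ appears only in the second column of ${\bf X}$ and, correspondingly, in the first row of ${\bf X}^{-1}$; since $F$ is built from the first column of ${\bf X}$ and $H$ from the second row of ${\bf X}^{-1}$, all of $F$, $F'$, $H$ are analytic at $z=s$ and the limits below are genuine.

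Finally, writing $e_1=\binom{1}{0}$, $e_2=\binom{0}{1}$ and combining the two ingredients,
\begin{align*}
	R(s,s)&=\big(F'(s)\big)^TH(s)=\frac{\gamma}{2\pi\im}\,e_1^T\big({\bf X}'(s)\big)^T\big({\bf X}^{-1}(s)\big)^Te_2\\
	&=\frac{\gamma}{2\pi\im}\,e_2^T\,{\bf X}^{-1}(s){\bf X}'(s)\,e_1=\frac{\gamma}{2\pi\im}\big({\bf X}^{-1}(z){\bf X}'(z)\big)^{21}\Big|_{z\rightarrow s},
\end{align*}
where the middle step uses $(AB)^T=B^TA^T$ together with $e_1^TM^Te_2=e_2^TMe_1$, and the limit is taken with $\textnormal{arg}(z-s)\in(0,\frac{2\pi}{3})$. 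This is the asserted formula. The main obstacle is the bookkeeping inside the undressing: confirming that the triangular factor $M$ occupying the sector $\textnormal{arg}(z-s)\in(0,\frac{2\pi}{3})$ near $z=s$ is the identity (so no spurious off-diagonal terms appear) and that $f,h$ are precisely the first column of $\Phi$ and $\frac{\gamma}{2\pi\im}(\Phi^{-1})^T\binom{0}{1}$ there; once those are pinned down, the rest is the above linear algebra plus the observation that the endpoint logarithm is confined to matrix entries which do not enter $F$ or $H$.
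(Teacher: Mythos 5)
Your proposal is correct and follows the same route as the paper: both identify $F$ and $H$ with the first column of ${\bf X}$ and $\frac{\gamma}{2\pi\im}$ times the (transposed) second row of ${\bf X}^{-1}$ via the undressing, and then evaluate $R(s,s)=(F')^TH$ at $z=s$. You simply spell out the intermediate observations (that $f^Th\equiv 0$ gives $F^TH\equiv 0$ and hence the diagonal formula, and that condition (3) of RHP \ref{unifRHP} confines the logarithmic singularity to entries not entering $F$ or $H$) which the paper's two-line proof leaves implicit.
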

\begin{proof} Note that
\begin{equation*}
  F(z) = \binom{X^{11}(z)}{X^{21}(z)},\ \ \ \ H(z)=\frac{\gamma}{2\pi\im}\binom{-X^{21}(z)}{\ \ \,X^{11}(z)},
\end{equation*}
where $X^{11}(z)$ and $X^{21}(z)$ are analytically continued from $\textnormal{arg}(z-s)\in(0,\frac{2\pi}{3})$ to the full complex plane.
The stated identity follows now from
\begin{equation*}
  R(s,s)=\big(F^T\big)'(z)H(z)\Big|_{z\rightarrow s},
\end{equation*}
in which the limit is taken in the same sector in the upper half-plane.
\end{proof}
For the second identity, we start from the central expression used in the derivation of Proposition \ref{diff1},
\begin{equation}\label{d:1}
  R(\lambda,\lambda)=\frac{\gamma}{2\pi\im}\big({\bf X}^{-1}(z){\bf X}'(z)\big)^{21}\Big|_{z\rightarrow\lambda},\ \ \ \lambda\in (s,\infty);\hspace{0.75cm}(')=\frac{\d}{\d z},
\end{equation}
where again the limit is taken with $\textnormal{arg}(z-s)\in(0,\frac{2\pi}{3})$. Next, we replace the derivative
terms by recalling the well-known differential equation associated with the function ${\bf X}(z)$ characterized through
RHP \ref{unifRHP},
\begin{equation*}
  {\bf X}'(z) = \Bigg[\begin{pmatrix}
			    Y_1^{12} & 1\smallskip\\ z+Y_1^{22}-Y_1^{11} & -Y_1^{12} \end{pmatrix}+\frac{\gamma}{2\pi\im}
\begin{pmatrix}
 -\widehat{X}^{11}(s)\widehat{X}^{21}(s) & \big(\widehat{X}^{11}(s)\big)^2\smallskip\\
-\big(\widehat{X}^{21}(s)\big)^2 & \widehat{X}^{11}(s)\widehat{X}^{21}(s)
\end{pmatrix}\frac{1}{z-s}
\Bigg]{\bf X}(z).
\end{equation*}
This equation follows directly from the observation that ${\bf X}'{\bf X}^{-1}$ is single valued and analytic for 
$z\in\mathbb{C}\backslash\{s\}$, and hence can be computed from the asymptotics $z\rightarrow s,\infty$ by Liouville's 
Theorem. Back in \eqref{d:1}, we obtain
\begin{eqnarray}
  R(\lambda,\lambda)&=&\frac{\gamma}{2\pi\im}\Bigg\{-2X^{11}(\lambda)X^{21}(\lambda)\left[Y_1^{12}-\frac{\gamma}{2\pi\im}
  \frac{\widehat{X}^{11}(s)\widehat{X}^{21}(s)}{\lambda-s}\right]-\big(X^{21}(\lambda)\big)^2
  \left[1+\frac{\gamma}{2\pi\im}\frac{\big(\widehat{X}^{11}(s)\big)^2}{\lambda-s}\right]\nonumber\\
  &&\hspace{.4in}+\big(X^{11}(\lambda)\big)^2\left[\lambda+Y_1^{22}-Y_1^{11}-
  \frac{\gamma}{2\pi\im}\frac{\big(\widehat{X}^{21}(s)\big)^2}{\lambda-s}\right]\Bigg\},\label{trace3}
\end{eqnarray}
and we now introduce the auxiliary function
\begin{equation}\label{d:2}
  {\bf K}(z):=\frac{1}{2\pi\im}\int_s^{\infty}{\bf X}(\lambda)\bigl(\begin{smallmatrix}
                                                  0 & -1\\ 0 & 0
                                                 \end{smallmatrix}\bigr){\bf X}^{-1}(\lambda)\frac{\d\lambda}{\lambda-z},\ \ 
  z\in\mathbb{C}\backslash[s,\infty).
\end{equation}
In view of RHP \ref{unifRHP}, ${\bf K}(z)$ is well defined and we have the important identity
\begin{equation*}
  {\bf K}(z)=\bigg(\frac{\partial}{\partial\gamma}{\bf X}(z)\bigg){\bf X}^{-1}(z)=\bigg(\frac{\partial}{\partial\gamma}{\bf Y}(z)\bigg){\bf Y}^{-1}(z),\ \ \ z\in\mathbb{C}\backslash[s,\infty).
\end{equation*}
Thus, on one hand, compare again RHP \ref{unifRHP},
\begin{equation}\label{k:1}
  {\bf K}(z)={\bf Y}_{1,\gamma}\,z^{-1}+\big({\bf Y}_{2,\gamma}-{\bf Y}_{1,\gamma}{\bf Y}_{1}\big)z^{-2}+
  \mathcal{O}\left(z^{-3}\right),\ \ z\rightarrow\infty,\ z\notin[s,\infty);\ \ \ {\bf Y}_{\ell,\gamma}:=\frac{\partial}{\partial\gamma}{\bf Y}_{\ell},
\end{equation}
and, on the other hand, from \eqref{d:2} in the same limit,
\begin{equation}\label{k:2}
  {\bf K}(z)=\frac{1}{z}\frac{\im}{2\pi}\int_s^{\infty}{\bf X}(\lambda)\bigl(\begin{smallmatrix}
                                                                  0 & -1\\ 0 & 0
                                                                 \end{smallmatrix}\bigr){\bf X}^{-1}(\lambda)\,\d\lambda
+\frac{1}{z^2}\frac{\im}{2\pi}\int_s^{\infty}\lambda\,{\bf X}(\lambda)\bigl(\begin{smallmatrix}
                                                                  0 & -1\\ 0 & 0
                                                                 \end{smallmatrix}\bigr){\bf X}^{-1}(\lambda)\,\d\lambda
+\mathcal{O}\left(z^{-3}\right).
\end{equation}
Hence, comparing entries in \eqref{k:1} and \eqref{k:2}, one finds
\begin{equation*}
  Y_{1,\gamma}^{11}=\frac{\im}{2\pi}\int_s^{\infty}X^{11}(\lambda)X^{21}(\lambda)\,\d\lambda;\ \ \ \ \ \ \ \ 
  Y_{1,\gamma}^{21}=\frac{\im}{2\pi}\int_s^{\infty}\big(X^{21}(\lambda)\big)^2\,\d\lambda;
\end{equation*}
\begin{equation*}
  -Y_{1,\gamma}^{12}=\frac{\im}{2\pi}\int_s^{\infty}\big(X^{11}(\lambda)\big)^2\,\d\lambda;\ \ \ \ \ \ \ \ \ 
  \big({\bf Y}_{1,\gamma}{\bf Y}_{1}\big)^{12}-Y_{2,\gamma}^{12}=\frac{\im}{2\pi}
  \int_s^{\infty}\lambda\,\big(X^{11}(\lambda)\big)^2\,\d\lambda.
\end{equation*}
These four identities are useful in the computation of \eqref{trace2} once we use \eqref{trace3}. For the outstanding pieces we note that
\begin{equation*}
  {\bf X}(\lambda)\bigl(\begin{smallmatrix}
        0 & -1\\ 0 & 0
    \end{smallmatrix}\bigr){\bf X}^{-1}(\lambda)=\widehat{{\bf X}}(s)\bigl(\begin{smallmatrix}
0 & -1\\ 0 & 0 \end{smallmatrix}\bigr)\widehat{{\bf X}}^{-1}(s)+\mathcal{O}\left(\lambda-s\right)
\end{equation*}
as $\lambda\rightarrow s$ such that $\textnormal{arg}(\lambda-s)\in(0,\frac{2\pi}{3})\cup(\frac{4\pi}{3},2\pi)$. Hence, in the same limit,
\begin{equation*}
  2X^{11}(\lambda)X^{21}(\lambda)\widehat{X}^{11}(s)\widehat{X}^{21}(s)-\big(X^{21}(\lambda)\big)^2\big(\widehat{X}^{11}(s)\big)^2
-\big(X^{11}(\lambda)\big)^2\big(\widehat{X}^{21}(s)\big)^2=\mathcal{O}\left(\lambda-s\right).
\end{equation*}
Define now the auxiliary function
\begin{eqnarray*}
	\widehat{{\bf K}}(z)&:=&{\bf K}(z)-\frac{1}{2\pi\im}\ln(z-s)\widehat{{\bf X}}(s)\bigl(\begin{smallmatrix} 0 & 1\\ 0 & 0 \end{smallmatrix}\bigr)\widehat{{\bf X}}^{-1}(s),\ \ \ \textnormal{arg}(z-s)\in(0,2\pi)\\
	&=&\frac{1}{2\pi\im}\int_s^{\infty}
	\left[{\bf X}(\lambda)\bigl(\begin{smallmatrix} 0 & -1\\
	 0 & 0 \end{smallmatrix}\bigr){\bf X}^{-1}(\lambda)-\widehat{{\bf X}}(s)\bigl(\begin{smallmatrix} 0 & -1\\
	 0 & 0 \end{smallmatrix}\bigr)\widehat{{\bf X}}^{-1}(s)\right]\frac{\d\lambda}{\lambda-z},\ \ z\in\mathbb{C}\backslash(s,\infty)
\end{eqnarray*}
which is analytic at $z=s$, satisfying
\begin{equation*}
	\widehat{{\bf K}}(z)=\left(\frac{\partial}{\partial\gamma}\widehat{{\bf X}}(z)\right)\widehat{{\bf X}}^{-1}(z),\ \ z\rightarrow s.
\end{equation*}
Moreover, it provides us with the following identity:
\begin{align*}
	&\frac{1}{2\pi\im}\int_s^{\infty}\left[2X^{11}(\lambda)X^{21}(\lambda)\widehat{X}^{11}(s)\widehat{X}^{21}(s)-\big(X^{21}(\lambda)\big)^2\big(\widehat{X}^{11}(s)\big)^2-\big(X^{11}(\lambda)\big)^2\big(\widehat{X}^{21}(s)\big)^2\right]\frac{\d\lambda}{\lambda-s}\\
	=\,\,2&\widehat{X}^{11}(s)\widehat{X}^{21}(s)\widehat{K}^{11}(s)-\big(\widehat{X}^{11}(s)\big)^2\widehat{K}^{21}(s)+\big(\widehat{X}^{21}(s)\big)^2\widehat{K}^{12}(s);\ \ \ \widehat{K}^{jk}(s):=\left(\left(\frac{\partial}{\partial\gamma}\widehat{{\bf X}}(s)\right)\widehat{{\bf X}}^{-1}(s)\right)^{jk}.
\end{align*}
We summarize.
\begin{prop}\label{diff2} For fixed $s<0$, we have
\begin{eqnarray*}
  \frac{\partial}{\partial\gamma}\ln F_2(s,\gamma)&=& -Y_{1,\gamma}^{11}Y_1^{12}+Y_{1,\gamma}^{12}Y_1^{11}-Y_{1,\gamma}^{21}-Y_{2,\gamma}^{12}\\
&&+\frac{\gamma}{2\pi\im}\left(\big(\widehat{X}^{11}(s)\big)^2\widehat{K}^{21}(s)-\big(\widehat{X}^{21}(s)\big)^2\widehat{K}^{12}(s)
  -2\widehat{X}^{11}(s)\widehat{X}^{21}(s)\widehat{K}^{11}(s)\right)
\end{eqnarray*}
in terms of the asymptotic data ${\bf Y}_1$ and ${\bf Y}_2$ given in \eqref{masterASY} as well as the singular data $\widehat{{\bf X}}(z)$ from \eqref{Xsing}.
\end{prop}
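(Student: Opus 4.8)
The statement is a bookkeeping summary of the computations carried out just above it, so the plan is simply to feed the differential‑equation representation \eqref{trace3} of $R(\lambda,\lambda)$ into the trace formula \eqref{trace2} and integrate term by term over $\lambda\in(s,\infty)$. I would first split the integrand in \eqref{trace3} into a \emph{polynomial part} carrying the $\lambda$‑independent coefficients $Y_1^{12}$, $1$, $Y_1^{22}-Y_1^{11}$ together with the explicit factor $\lambda$, and a \emph{singular part} carrying the prefactor $\tfrac{\gamma}{2\pi\im}$ and the simple pole at $\lambda=s$.

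For the polynomial part I would invoke the four moment identities extracted above by matching \eqref{k:1} with \eqref{k:2}, namely $\int_s^\infty X^{11}X^{21}\,\d\lambda=\tfrac{2\pi}{\im}Y_{1,\gamma}^{11}$, $\int_s^\infty(X^{21})^2\,\d\lambda=\tfrac{2\pi}{\im}Y_{1,\gamma}^{21}$, $\int_s^\infty(X^{11})^2\,\d\lambda=-\tfrac{2\pi}{\im}Y_{1,\gamma}^{12}$ and $\int_s^\infty\lambda\,(X^{11})^2\,\d\lambda=\tfrac{2\pi}{\im}\bigl(({\bf Y}_{1,\gamma}{\bf Y}_1)^{12}-Y_{2,\gamma}^{12}\bigr)$. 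After multiplying by $-\tfrac1\gamma$ and using $\tfrac{2\pi}{\im}=-2\pi\im$, the polynomial contribution to $\partial_\gamma\ln F_2(s,\gamma)$ becomes $-2Y_1^{12}Y_{1,\gamma}^{11}-Y_{1,\gamma}^{21}-(Y_1^{22}-Y_1^{11})Y_{1,\gamma}^{12}+({\bf Y}_{1,\gamma}{\bf Y}_1)^{12}-Y_{2,\gamma}^{12}$, and substituting $({\bf Y}_{1,\gamma}{\bf Y}_1)^{12}=Y_{1,\gamma}^{11}Y_1^{12}+Y_{1,\gamma}^{12}Y_1^{22}$ collapses this to the first line $-Y_{1,\gamma}^{11}Y_1^{12}+Y_{1,\gamma}^{12}Y_1^{11}-Y_{1,\gamma}^{21}-Y_{2,\gamma}^{12}$ of the assertion.

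For the singular part I would first note, as already recorded above, that the combination $2X^{11}X^{21}\widehat{X}^{11}(s)\widehat{X}^{21}(s)-(X^{21})^2(\widehat{X}^{11}(s))^2-(X^{11})^2(\widehat{X}^{21}(s))^2$ is $\mathcal{O}(\lambda-s)$ as $\lambda\to s$, since $\widehat{{\bf X}}(s)\bigl(\begin{smallmatrix}0 & -1\\ 0 & 0\end{smallmatrix}\bigr)\widehat{{\bf X}}^{-1}(s)$ is precisely the value at $\lambda=s$ of the matrix ${\bf X}(\lambda)\bigl(\begin{smallmatrix}0 & -1\\ 0 & 0\end{smallmatrix}\bigr){\bf X}^{-1}(\lambda)$, which is analytic across $(s,\infty)$; hence the pole is removable and the integral converges absolutely. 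I would then evaluate it by recognizing the integrand inside the regularized Cauchy transform $\widehat{{\bf K}}(z)$ and using $\widehat{{\bf K}}(s)=(\partial_\gamma\widehat{{\bf X}}(s))\widehat{{\bf X}}^{-1}(s)$, which gives $\tfrac{1}{2\pi\im}\int_s^\infty[\cdots]\tfrac{\d\lambda}{\lambda-s}=2\widehat{X}^{11}(s)\widehat{X}^{21}(s)\widehat{K}^{11}(s)-(\widehat{X}^{11}(s))^2\widehat{K}^{21}(s)+(\widehat{X}^{21}(s))^2\widehat{K}^{12}(s)$. Multiplying by the factor $(\tfrac{\gamma}{2\pi\im})^2$ these terms carry in \eqref{trace3} and then by $-\tfrac1\gamma$ produces exactly the second line $\tfrac{\gamma}{2\pi\im}\bigl((\widehat{X}^{11}(s))^2\widehat{K}^{21}(s)-(\widehat{X}^{21}(s))^2\widehat{K}^{12}(s)-2\widehat{X}^{11}(s)\widehat{X}^{21}(s)\widehat{K}^{11}(s)\bigr)$.

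The only genuinely nontrivial ingredient underneath all of this is the identity ${\bf K}(z)=(\partial_\gamma{\bf X}(z)){\bf X}^{-1}(z)=(\partial_\gamma{\bf Y}(z)){\bf Y}^{-1}(z)$, on which both the moment identities and the $\widehat{{\bf K}}$ evaluation rest; I would establish it by a Liouville argument, using that the $\gamma$‑dependence of the jump matrix in RHP \ref{unifRHP} on $(s,\infty)$ is, after conjugation, exactly the rank‑one upper‑triangular factor $\bigl(\begin{smallmatrix}0 & -1\\ 0 & 0\end{smallmatrix}\bigr)$, so that ${\bf X}(\lambda)\bigl(\begin{smallmatrix}0 & -1\\ 0 & 0\end{smallmatrix}\bigr){\bf X}^{-1}(\lambda)$ has no jump there, while $(\partial_\gamma{\bf Y}){\bf Y}^{-1}$ decays like $z^{-1}$ at infinity and both sides carry only a logarithmic singularity at $z=s$. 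Everything else is routine algebra; the only thing to watch is the careful tracking of the constants $\tfrac{\gamma}{2\pi\im}$ and the signs coming from $\tfrac{2\pi}{\im}=-2\pi\im$, which is where I expect any slip to occur.
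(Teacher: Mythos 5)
Your proposal reproduces the paper's own derivation essentially step for step: substitute \eqref{trace3} into \eqref{trace2}, split into a polynomial part handled via the moment identities that follow from comparing \eqref{k:1} with \eqref{k:2}, and a pole part handled via the regularized Cauchy transform $\widehat{\bf K}$ and its value $\widehat{{\bf K}}(s)=(\partial_\gamma\widehat{{\bf X}}(s))\widehat{{\bf X}}^{-1}(s)$, with the factors of $\gamma/(2\pi\im)$ and the simplification of $({\bf Y}_{1,\gamma}{\bf Y}_1)^{12}$ tracked correctly. The only place you go beyond the paper's exposition is in sketching the Liouville argument for ${\bf K}(z)=(\partial_\gamma{\bf X}){\bf X}^{-1}(z)$, which the paper merely asserts; that is a helpful elaboration but not a different route.
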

\section{Extraction of asymptotics via Proposition \ref{diff1}}\label{sderivsec}
In order to obtain structural information on the large negative $s$ behavior of $F_2(s,\gamma)$, we first use Proposition \ref{diff1}. Recall to this end the relevant transformations 
\begin{equation}\label{AS:0}
  {\bf Y}(z)\mapsto {\bf X}(z)\mapsto {\bf T}(z)\mapsto {\bf S}(z)\mapsto {\bf L}(z)\mapsto {\bf R}(z),
\end{equation}
and obtain in the first step with \eqref{Tdef}, \eqref{Sdef}, and \eqref{l:1},
\begin{equation*}
 	 \frac{\partial}{\partial s}\ln F_2(s,\gamma)=\frac{\gamma}{2\pi\im|s|}\e^{-2tg_+(0)}\Big({\bf L}^{-1}(\lambda){\bf L}'(\lambda)\Big)^{21}\bigg|_{\lambda\rightarrow 0},
 \end{equation*}
where the limit is taken with $\textnormal{arg}\,\lambda\in(\frac{\pi}{3},\frac{2\pi}{3})$. After that, with \eqref{Rdef},
\begin{align*}
	\frac{\partial}{\partial s}\ln F_2(s,\gamma)= & \frac{\gamma}{2\pi\im|s|}\e^{-2tg_+(0)}\Big(\big({\bf P}^{(0)}(\lambda)\big)^{-1}\big({\bf R}(0)\big)^{-1}{\bf R}'(0){\bf P}^{(0)}(\lambda)\Big)^{21}\bigg|_{\lambda\rightarrow 0}\\
	&+\frac{\gamma}{2\pi\im|s|}\e^{-2tg_+(0)}\Big(\big({\bf P}^{(0)}(\lambda)\big)^{-1}\big({\bf P}^{(0)}(\lambda)\big)'\Big)^{21}\bigg|_{\lambda\rightarrow 0}
\end{align*}
 using the same limit convention in both terms. The second term is computed explicitly using \eqref{o:5}, \eqref{o:6}, and \eqref{o:7},
 \begin{equation}\label{2nd}
 	\frac{\gamma}{2\pi\im|s|}\e^{-2tg_+(0)}\Big(\big({\bf P}^{(0)}(\lambda)\big)^{-1}\big({\bf P}^{(0)}(\lambda)\big)'\Big)^{21}\bigg|_{\lambda\rightarrow 0}=\frac{v}{\pi}|s|^{\frac{1}{2}}+\frac{1}{|s|}\left[-\frac{3v^2}{8\pi^2}+\frac{v}{4\pi}\cos\big(2\phi(s,\gamma)\big)\right],
\end{equation}
where we use the parameter $v=-\ln(1-\gamma)\in[0,+\infty)$ and the phase function
\begin{equation}\label{phase}
	\phi(s,\gamma):=\frac{2}{3}|s|^{\frac{3}{2}}-\frac{v}{2\pi}\ln\big(8|s|^{\frac{3}{2}}\big)-\textnormal{arg}\,\Gamma\left(-\frac{iv}{2\pi}\right).
\end{equation}
For the first term we need Theorem \ref{mainDZ}.  More precisely, we write
\begin{equation}\label{Rfact}
	{\bf R}(z)=|s|^{-\frac{1}{4}\sigma_3}\widehat{{\bf R}}(z)|s|^{\frac{1}{4}\sigma_3},\ z\in\mathbb{C}\backslash\Sigma_R; \ \ \ \ \ \  \|\widehat{{\bf R}}_-(\cdot;s,\gamma)-{\bf I}\|_{L^2(\Sigma_R)}\leq\frac{c}{t}\text{ for all }t\geq t_0,
\end{equation}
and then deduce
\begin{equation}\label{1st}
	\frac{\gamma}{2\pi\im|s|}\e^{-2tg_+(0)}\Big(\big({\bf P}^{(0)}(\lambda)\big)^{-1}\big({\bf R}(0)\big)^{-1}{\bf R}'(0){\bf P}^{(0)}(\lambda)\Big)^{21}\bigg|_{\lambda\rightarrow 0}=\mathcal{O}\left(|s|^{-\frac{5}{2}}\right),\ \ s\rightarrow-\infty.
\end{equation}
At this point we can combine \eqref{2nd} with \eqref{1st} to find
\eq
\label{partial-s-F2}
\frac{\partial}{\partial s}\ln F_2(s,\gamma) = \frac{v}{\pi}|s|^{\frac{1}{2}}+\frac{1}{|s|}\left[-\frac{3v^2}{8\pi^2}+\frac{v}{4\pi}\cos\big(2\phi(s,\gamma)\big)\right] + \mathcal{O}\left(|s|^{-\frac{5}{2}}\right),\ \ s\rightarrow-\infty.
\endeq
We integrate indefinitely with respect to $s$ to obtain the next result.
\begin{prop}\label{Eres:1} For any fixed $\gamma\in[0,1)$, there exist positive constants $t_0=t_0(\gamma)$ and $c=c(v)$ such that
\begin{equation}\label{Eexp:1}
	\ln F_2(s,\gamma)=-\frac{2v}{3\pi}|s|^{\frac{3}{2}}+\frac{3v^2}{8\pi^2}\ln|s|+D(\gamma)+r(s,v)
\end{equation}
for $t=|s|^{\frac{3}{2}}\geq t_0$ with $v=-\ln(1-\gamma)\in[0,+\infty)$. The term $D(\gamma)$ is independent of $s$, and the error term $r(s,v)$ satisfies
\begin{equation*}
	\big|r(s,v)\big|\leq \frac{c(v)}{|s|^{\frac{3}{2}}}\ \ \ \ \ \ \forall\,t\geq t_0.
\end{equation*}
\end{prop}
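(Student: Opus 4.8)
The plan is to integrate the pointwise identity \eqref{partial-s-F2} for $\partial_s\ln F_2(s,\gamma)$ with respect to $s$ and to check that the three explicit terms on its right-hand side reproduce the claimed main terms of \eqref{Eexp:1}, while everything left over is absolutely integrable with a tail of order $|s|^{-3/2}$. Concretely, fix a reference point $s_\ast<0$ with $|s_\ast|^{\frac32}\ge t_0$ (where $t_0=t_0(\gamma)$ is as in Theorem \ref{mainDZ}), so that \eqref{partial-s-F2} is valid on $(-\infty,s_\ast]$, and write
\[
	\ln F_2(s,\gamma)=\ln F_2(s_\ast,\gamma)+\int_{s_\ast}^{s}\frac{\partial}{\partial\sigma}\ln F_2(\sigma,\gamma)\,\d\sigma,\qquad s<s_\ast .
\]
Since $s<0$ we have $|s|=-s$, hence $\frac{\d}{\d s}\big(-\tfrac{2v}{3\pi}|s|^{\frac32}\big)=\tfrac{v}{\pi}|s|^{\frac12}$ and $\frac{\d}{\d s}\big(\tfrac{3v^2}{8\pi^2}\ln|s|\big)=-\tfrac{3v^2}{8\pi^2}|s|^{-1}$, so the first two terms of \eqref{partial-s-F2} integrate exactly to the first two terms of \eqref{Eexp:1}, up to an $s$-independent additive constant which we fold into $D(\gamma)$.

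It remains to handle the oscillatory contribution $\tfrac{v}{4\pi}\int\cos\big(2\phi(\sigma,\gamma)\big)|\sigma|^{-1}\,\d\sigma$ and the explicit $\mathcal O(|\sigma|^{-5/2})$ remainder from \eqref{partial-s-F2}. For the latter the tail $\int_{-\infty}^{s}\mathcal O(|\sigma|^{-5/2})\,\d\sigma$ is plainly $O(|s|^{-3/2})$. For the oscillatory term I would use that, by \eqref{phase}, $\partial_\sigma\phi(\sigma,\gamma)=-|\sigma|^{\frac12}+\tfrac{3v}{4\pi}|\sigma|^{-1}$, which is asymptotic to $-|\sigma|^{1/2}$ and in particular bounded away from zero once $|\sigma|$ is large; a single integration by parts then gives
\[
	\int\frac{\cos(2\phi)}{|\sigma|}\,\d\sigma=\frac{\sin(2\phi)}{2\,\phi'(\sigma)\,|\sigma|}-\int\sin(2\phi)\,\frac{\d}{\d\sigma}\!\left(\frac{1}{2\,\phi'(\sigma)\,|\sigma|}\right)\d\sigma ,
\]
whose boundary term is of size $O(|s|^{-3/2})$ and whose remaining integrand is of size $O(|\sigma|^{-5/2})$, hence absolutely convergent with tail $O(|s|^{-3/2})$. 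Consequently $\lim_{s\to-\infty}\big(\ln F_2(s,\gamma)+\tfrac{2v}{3\pi}|s|^{\frac32}-\tfrac{3v^2}{8\pi^2}\ln|s|\big)$ exists; define $D(\gamma)$ to be this limit (this is manifestly independent of the auxiliary point $s_\ast$).

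Setting $r(s,v):=\ln F_2(s,\gamma)+\tfrac{2v}{3\pi}|s|^{\frac32}-\tfrac{3v^2}{8\pi^2}\ln|s|-D(\gamma)$, which by construction equals minus the sum of the two tail integrals above, yields \eqref{Eexp:1} with $|r(s,v)|\le c(v)|s|^{-3/2}$, the constant depending only on $\gamma$, i.e.\ on $v=-\ln(1-\gamma)$; differentiability of $r$ in $s$ is immediate because $r(s,v)$ is $\ln F_2(\cdot,\gamma)$ minus explicit smooth functions of $s$, and $\ln F_2(\cdot,\gamma)$ is smooth on $(-\infty,0)$. The main obstacle is precisely the sharp control of the oscillatory integral: obtaining an error of order $|s|^{-3/2}$ rather than merely $o(1)$ requires the non-stationary-phase estimate above, together with verifying that $\phi'$ has no zeros on the relevant range and that all implied constants can be taken to depend on $v$ alone — the latter being what upgrades the fixed-$\gamma$ statement to the stated form of the bound on $r$.
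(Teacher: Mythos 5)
Your proposal is correct and takes essentially the same route as the paper, which states only ``we integrate indefinitely with respect to $s$'' after establishing \eqref{partial-s-F2}. You supply the detail the paper leaves implicit — the integration-by-parts treatment of the oscillatory term $\frac{v}{4\pi|\sigma|}\cos(2\phi(\sigma,\gamma))$ using the non-vanishing of $\phi'(\sigma)\sim -|\sigma|^{1/2}$ for $|\sigma|$ large — which is exactly what is needed to upgrade the naive $o(1)$ to the claimed $\mathcal{O}(|s|^{-3/2})$ tail bound.
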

Our next goal is the evaluation of the term $D(\gamma)$ in Proposition \ref{Eres:1}. This will be achieved in the following section.
\section{Extraction of asymptotics via Proposition \ref{diff2}}
\label{gamma-deriv-sec}
We start from Proposition \ref{diff2}.  For fixed $s<0$,
\begin{equation*}
	\frac{\partial}{\partial\gamma}\ln F_2(s,\gamma)=T_1(s,\gamma)+T_2(s,\gamma)+T_3(s,\gamma)
\end{equation*}
where we put for simplicity
\begin{equation*}
	T_1(s,\gamma):=-Y_{1,\gamma}^{11}Y_1^{12}+Y_{1,\gamma}^{12}Y_1^{11}-Y_{1,\gamma}^{21},\hspace{1cm}T_2(s,\gamma):=-Y_{2,\gamma}^{12},
\end{equation*}
and
\begin{equation}\label{T3def}
	T_3(s,\gamma):=\frac{\gamma}{2\pi\im}\left(\big(\widehat{X}^{11}(s)\big)^2\widehat{K}^{21}(s)-\big(\widehat{X}^{21}(s)\big)^2\widehat{K}^{12}(s)
  -2\widehat{X}^{11}(s)\widehat{X}^{21}(s)\widehat{K}^{11}(s)\right).
 \end{equation}
 The idea is now to first evaluate all three $T_j(s,\gamma)$ asymptotically as $s\rightarrow-\infty$ with fixed $\gamma\in[0,1)$ and after that perform a definite integration with respect to $\gamma$,
 \begin{equation}\label{gamidea}
 	\gamma\in[0,1):\ \ \ \ \ \ln F_2(s,\gamma)=\int_0^{\gamma}\frac{\partial}{\partial\gamma'}\ln F_2(s,\gamma')\d\gamma'=\int_0^{\gamma}\big(T_1(s,\gamma')+T_2(s,\gamma')+T_3(s,\gamma')\big)\d\gamma'.
\end{equation}
Comparing the so-obtained expansion for $\ln F_2(s,\gamma)$ to \eqref{Eexp:1}, we will obtain the unknown $D(\gamma)$. 
\subsection{Computation of \texorpdfstring{$T_1(s,\gamma)$}{T1}}
We begin with the asymptotic evaluation of $T_1(s,\gamma)$ for which we trace back the  transformations 
\begin{equation*}
	{\bf Y}(z)\mapsto {\bf X}(z)\mapsto {\bf T}(z)\mapsto {\bf S}(z)\mapsto {\bf L}(z).
\end{equation*}
This provides us with the following explicit identity for ${\bf Y}_1$ (compare \eqref{masterASY}):
\begin{equation}\label{Y1exp}
{\bf Y}_1=|s|\begin{pmatrix}-2\nu^2 &-2\im\nu|s|^{-\frac{1}{2}}\\ \frac{7}{48}|s|^{-1}+\frac{2}{3}\im\nu|s|^{\frac{1}{2}}(1-4\nu^2) & 2\nu^2\end{pmatrix}+|s|\begin{pmatrix}1&0\\ 2\im\nu|s|^{\frac{1}{2}} & 1\end{pmatrix}
	{\bf J}_1\begin{pmatrix}1&0\\ -2\im\nu|s|^{\frac{1}{2}} & 1\end{pmatrix},
\end{equation}
where
\begin{equation*}
	{\bf J}_1:=\frac{\im}{2\pi}\int_{\Sigma_R}{\bf R}_-(\lambda)\big({\bf G}_R(\lambda)-{\bf I}\big)\d\lambda.
\end{equation*}
The last integral is then computed asymptotically with the help of Theorem \ref{mainDZ}, which amounts to a standard residue computation using \eqref{Aimatch} and \eqref{confmatch}. We summarize the results below.
\begin{lem}\label{lem1} As $s\rightarrow-\infty$,
\begin{equation*}
	{\bf J}_1=|s|^{-\frac{1}{4}\sigma_3}\left\{\frac{1}{48t}\begin{pmatrix}-24\im\nu & -48\nu^2\\ -7&24\im\nu\end{pmatrix}+\frac{1}{2t}\begin{pmatrix}\im\nu\sin(2\phi)& -\nu^2-\im\nu\cos(2\phi)\\ \nu^2-\im\nu\cos(2\phi) & -\im\nu\sin(2\phi)\end{pmatrix}+\mathcal{O}\left(t^{-2}\right)\right\}|s|^{\frac{1}{4}\sigma_3}
\end{equation*}
with $\phi=\phi(s,\gamma)$ as in \eqref{phase}, $t=(-s)^{\frac{3}{2}}$.  The error term is uniform with respect to $\gamma$ chosen from compact subsets of $[0,1)$, and it can be differentiated with respect to $\gamma$. In fact, after differentiation (with respect to $\gamma$), the error term is of order $\mathcal{O}\left(t^{-2}\ln t\right)$.
\end{lem}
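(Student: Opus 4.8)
The plan is to combine the small-norm control of Theorem~\ref{mainDZ} with the explicit parametrix matchings \eqref{Aimatch} and \eqref{confmatch}, so that computing ${\bf J}_1$ reduces to two residue evaluations plus an exponentially small and a quadratically small remainder. First I would pass to the conjugated quantities of \eqref{Rfact}: setting $\widehat{{\bf G}}_R(z):=|s|^{\frac14\sigma_3}{\bf G}_R(z)|s|^{-\frac14\sigma_3}$ one has $\widehat{{\bf R}}_+=\widehat{{\bf R}}_-\widehat{{\bf G}}_R$ and ${\bf J}_1=|s|^{-\frac14\sigma_3}\widehat{{\bf J}}_1|s|^{\frac14\sigma_3}$ with $\widehat{{\bf J}}_1=\frac{\im}{2\pi}\int_{\Sigma_R}\widehat{{\bf R}}_-(\lambda)\big(\widehat{{\bf G}}_R(\lambda)-{\bf I}\big)\,\d\lambda$. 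Inspecting \eqref{Aimatch}, \eqref{confmatch} together with $\z(z)=2tz(1+\mathcal{O}(z))$ from \eqref{o:7}, the boundedness of $|s|^{\frac14\sigma_3}{\bf P}^{(\infty)}(z)$ and $|s|^{\frac14\sigma_3}{\bf E}^{(0)}(z)$ on the circles, and $\nu\in\im\mathbb{R}$ (so $|(8t)^{\nu}|=1$), one gets $\|\widehat{{\bf G}}_R-{\bf I}\|_{L^2\cap L^{\infty}(\Sigma_R)}=\mathcal{O}(t^{-1})$, matching the $\mathcal{O}(t^{-1})$ bound on $\widehat{{\bf R}}_--{\bf I}$ from \eqref{Rfact}. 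Writing $\widehat{{\bf R}}_-={\bf I}+(\widehat{{\bf R}}_--{\bf I})$ then yields $\widehat{{\bf J}}_1=\frac{\im}{2\pi}\int_{\Sigma_R}\big(\widehat{{\bf G}}_R(\lambda)-{\bf I}\big)\,\d\lambda+\mathcal{O}(t^{-2})$, the error being the Cauchy--Schwarz estimate of the quadratic term.

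Next I would evaluate the leading integral contour piece by contour piece. On the unbounded parts of $\Sigma_R$ — that is $\big(\Gamma_2\cup\Gamma_4\cup\gamma^{+}\cup\gamma^{-}\big)\cap\{|z|>r,\,|z-1|>r\}$ and $(1+r,\infty)$ — the exponents $2tg(z)$, $t(\phi(z)+\varkappa)$ and $-t(\varkappa+2g(z))$ are $\le-ct$ with $c=c(r)>0$ by \eqref{es:1}, \eqref{es:2}, \eqref{es:3}, so $\widehat{{\bf G}}_R-{\bf I}=\mathcal{O}(\e^{-ct})$ there, absorbed into $\mathcal{O}(t^{-2})$. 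On $\partial\mathbb{D}_r(1)$, \eqref{Aimatch} identifies $\widehat{{\bf G}}_R(z)-{\bf I}$, up to $\mathcal{O}(t^{-2})$, with the conjugation by $|s|^{\frac14\sigma_3}{\bf P}^{(\infty)}(z)$ of $\frac{1}{48t(z-1)^{\frac32}}\bigl(\begin{smallmatrix}1&6\im\e^{-v}\\ 6\im\e^{v}&-1\end{smallmatrix}\bigr)$; inserting \eqref{o:1}--\eqref{o:2}, the $(z-1)^{\mp\frac14\sigma_3}$ and $\mathcal{D}^{\mp\sigma_3}$ factors recombine so that this matrix is single-valued and meromorphic in $\mathbb{D}_r(1)$ with only a pole at $z=1$ (the $\e^{\pm v}$ cancelling since $\mathcal{D}(1)^{\mp2}=\e^{\mp v}$), whence the circle integral is the residue at $z=1$, a routine computation producing the first matrix $\frac{1}{48t}|s|^{-\frac14\sigma_3}\bigl(\begin{smallmatrix}-24\im\nu&-48\nu^2\\ -7&24\im\nu\end{smallmatrix}\bigr)|s|^{\frac14\sigma_3}$. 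On $\partial\mathbb{D}_r(0)$, \eqref{confmatch} identifies $\widehat{{\bf G}}_R(z)-{\bf I}$ with $\frac{\im}{\z(z)}$ times the conjugation by $|s|^{\frac14\sigma_3}{\bf E}^{(0)}(z)$ of $\e^{\im\frac{\pi}{2}\nu\sigma_3}\bigl(\begin{smallmatrix}\nu^2&-\Gamma(1-\nu)/\Gamma(\nu)\\ \Gamma(1+\nu)/\Gamma(-\nu)&-\nu^2\end{smallmatrix}\bigr)\e^{-\im\frac{\pi}{2}\nu\sigma_3}$, plus a remainder; since ${\bf E}^{(0)}$ is analytic in $\mathbb{D}_r(0)$ and $\z(z)^{-1}=\frac{1}{2tz}(1+\mathcal{O}(z))$ has a simple pole at $z=0$ by \eqref{o:7}, the circle integral is again the residue at $z=0$, and the explicit form \eqref{o:6}--\eqref{o:8} of ${\bf E}^{(0)}$ — its factors $\e^{\pm\im\frac23 t\sigma_3}$, $\textnormal{diag}(\e^{-\im\pi\nu},1)$, $(8t)^{\nu\sigma_3}$, together with $\arg\frac{\Gamma(1\pm\nu)}{\Gamma(\mp\nu)}$ — collapses all phases into $\cos(2\phi)$, $\sin(2\phi)$ with $\phi=\phi(s,\gamma)$ as in \eqref{phase}, yielding the second matrix $\frac{1}{2t}|s|^{-\frac14\sigma_3}\bigl(\begin{smallmatrix}\im\nu\sin2\phi&-\nu^2-\im\nu\cos2\phi\\ \nu^2-\im\nu\cos2\phi&-\im\nu\sin2\phi\end{smallmatrix}\bigr)|s|^{\frac14\sigma_3}$; one checks the matching remainder on $\partial\mathbb{D}_r(0)$, once the explicit rational pole is subtracted, contributes only $\mathcal{O}(t^{-2})$ after integration (extracting one further order of \eqref{confmatch} if the stated error there does not suffice). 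Adding the two residues and conjugating back produces the asserted expansion of ${\bf J}_1$, noting $v=2\pi\im\nu$.

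For the differentiability statement, every parametrix (${\bf P}^{(\infty)}$, ${\bf P}^{(1)}$, ${\bf P}^{(0)}$), hence ${\bf G}_R$, depends real-analytically on $\gamma\in[0,1)$ through $v=-\ln(1-\gamma)$ — equivalently through $\nu$ and $\varkappa$ — with $\partial_\gamma$-derivatives still $\mathcal{O}(t^{-1})$ on $\Sigma_R$, and the small-norm construction of Theorem~\ref{mainDZ} accordingly gives $\widehat{{\bf R}}_-$ differentiable in $\gamma$ with $\|\partial_\gamma(\widehat{{\bf R}}_--{\bf I})\|_{L^2(\Sigma_R)}=\mathcal{O}(t^{-1})$. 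Differentiating the decomposition above term by term, the quadratic error stays $\mathcal{O}(t^{-2})$; the only new feature is that $\partial_\gamma$ may fall on the $t$-dependent factors $(8t)^{\nu\sigma_3}$ in ${\bf E}^{(0)}$ (and on $\e^{\pm\im\frac23 t}$, $\Gamma(\pm\nu)$), producing a bounded multiple of $\ln(8t)=\mathcal{O}(\ln t)$, so the differentiated error is $\mathcal{O}(t^{-2}\ln t)$, as claimed.

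I expect the principal obstacle to be the bookkeeping in the residue at $z=0$: correctly tracking the nested conjugations $|s|^{\pm\frac14\sigma_3}$, ${\bf P}^{(\infty)}$, ${\bf E}^{(0)}$ and the $\e^{\pm\im\frac{\pi}{2}\nu\sigma_3}$ brackets, resolving the branch cuts of $(z-1)^{\frac12}$ and of the Szeg\H{o} function $\mathcal{D}(z)$ where they cross $\mathbb{D}_r(0)$ and $\mathbb{D}_r(1)$, and assembling the several $\Gamma$-function and $(8t)^{\nu}$ phases into the single phase $\phi(s,\gamma)$ of \eqref{phase}; a secondary point is confirming that the subleading matching remainder on $\partial\mathbb{D}_r(0)$ really integrates to $\mathcal{O}(t^{-2})$ rather than something larger.
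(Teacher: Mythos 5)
Your proposal follows the same route the paper indicates — reduce $\widehat{{\bf J}}_1$ to $\frac{\im}{2\pi}\int_{\Sigma_R}(\widehat{{\bf G}}_R-{\bf I})\,\d\lambda+\mathcal{O}(t^{-2})$ via Theorem~\ref{mainDZ} and \eqref{Rfact}, then evaluate the two circle integrals by residues using \eqref{Aimatch} and \eqref{confmatch}, the infinite branches and lens arcs being exponentially small. Two small notes: the single-valuedness of the leading jump density inside $\mathbb{D}_r(1)$ (and $\mathbb{D}_r(0)$) is more cleanly seen from the jump relation of ${\bf P}^{(\infty)}$ across $(0,1)$ and $(-\infty,0)$ cancelling the sign flip of $(z-1)^{-\frac32}$ (respectively $\z(z)^{-1}$), rather than from the pointwise value $\mathcal{D}(1)$; and you are right to flag that the stated $\mathcal{O}(t^{-5/3})$ in \eqref{confmatch} would only give $\mathcal{O}(t^{-5/3})$ in ${\bf J}_1$ — one must observe (as the full expansion of $\Psi(\z)$ in RHP~\ref{conflu}(4) with $\nu\in\im\mathbb{R}$ on a fixed circle $|\z|\asymp t$ shows) that the matching remainder is in fact $\mathcal{O}(t^{-2})$, so that the claimed error order holds.
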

Substituting the result of this lemma into \eqref{Y1exp} gives:
\begin{lem}\label{lem2} As $s\rightarrow-\infty$, with the same statements about the error terms as in Lemma \ref{lem1},
\begin{equation*}
	Y_1^{11}=-2\nu^2|s|-\frac{\im\nu}{2|s|^{\frac{1}{2}}}+\frac{3\im\nu^3}{|s|^{\frac{1}{2}}}+\frac{1}{2|s|^{\frac{1}{2}}}\big(\im\nu\sin(2\phi)-2\nu^2\cos(2\phi)\big)+\mathcal{O}\left(|s|^{-2}\right),
\end{equation*}
followed by
\begin{equation*}
	Y_1^{12}=-2\im\nu|s|^{\frac{1}{2}}-\frac{3\nu^2}{2|s|}-\frac{\im\nu}{2|s|}\cos(2\phi)+\mathcal{O}\big(|s|^{-\frac{5}{2}}\big),
\end{equation*}
and concluding with
\begin{equation*}
	Y_1^{21}=\frac{2}{3}\im\nu|s|^{\frac{3}{2}}(1-4\nu^2)+\frac{5\nu^2}{2}-6\nu^4-2\nu^2\sin(2\phi)-\frac{\im\nu}{2}\cos(2\phi)-2\im\nu^3\cos(2\phi)+\mathcal{O}\big(|s|^{-\frac{3}{2}}\big).
\end{equation*}
\end{lem}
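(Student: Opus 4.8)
The plan is to substitute the asymptotics of ${\bf J}_1$ from Lemma~\ref{lem1} into the explicit representation \eqref{Y1exp} and read off the three entries $Y_1^{11}$, $Y_1^{12}$, $Y_1^{21}$ order by order in $|s|$. Write \eqref{Y1exp} schematically as ${\bf Y}_1=|s|{\bf M}+|s|{\bf A}{\bf J}_1{\bf A}^{-1}$, where ${\bf A}:=\bigl(\begin{smallmatrix}1&0\\2\im\nu|s|^{1/2}&1\end{smallmatrix}\bigr)$ is unipotent lower triangular and ${\bf M}$ is the explicit matrix appearing in \eqref{Y1exp}. The term $|s|{\bf M}$ immediately produces the algebraic leading behavior: $-2\nu^2|s|$ in the $(1,1)$ entry, $-2\im\nu|s|^{1/2}$ in the $(1,2)$ entry, and $\tfrac{7}{48}+\tfrac23\im\nu|s|^{3/2}(1-4\nu^2)$ in the $(2,1)$ entry. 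All remaining corrections come from the conjugated error term $|s|{\bf A}{\bf J}_1{\bf A}^{-1}$.

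For that piece I would carry out the conjugation explicitly. Since ${\bf A}$ is unipotent, ${\bf A}{\bf J}_1{\bf A}^{-1}$ differs from ${\bf J}_1=(j_{k\ell})$ only through terms carrying the single large factor $2\im\nu|s|^{1/2}$: its $(1,1)$ entry is $j_{11}-2\im\nu|s|^{1/2}j_{12}$, its $(1,2)$ entry is just $j_{12}$, and its $(2,1)$ entry is $j_{21}+2\im\nu|s|^{1/2}(j_{11}-j_{22})+4\nu^2|s|\,j_{12}$. Now substitute ${\bf J}_1=|s|^{-\frac14\sigma_3}\widetilde{{\bf J}}|s|^{\frac14\sigma_3}$ from Lemma~\ref{lem1}, which rescales the entries of $\widetilde{{\bf J}}$ by $|s|^{1/2}$ in the $(2,1)$ slot, by $1$ on the diagonal, and by $|s|^{-1/2}$ in the $(1,2)$ slot. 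The key structural point is that in every surviving product these rescalings exactly cancel the powers of $|s|^{1/2}$ picked up from ${\bf A}$, so that after the prefactor $|s|$ each contribution lands precisely at the orders $|s|^{3/2}$, $|s|$, $|s|^{1/2}$, \ldots claimed in the lemma. Feeding in the two explicit terms of the $\widetilde{{\bf J}}$-expansion (the algebraic $\tfrac1{48t}$-matrix and the oscillatory $\tfrac1{2t}$-matrix) and collecting like powers of $|s|$ --- noting that the oscillatory term also contributes genuinely non-oscillatory pieces, e.g.\ $\tfrac{\nu^2}{2}-2\nu^4$ in the $(2,1)$ slot, which combine with the $\tfrac{7}{48}$ from ${\bf M}$ and with the algebraic contribution to give $\tfrac52\nu^2-6\nu^4$ --- yields the three displayed formulas.

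The only delicate point is the propagation of the remainder. The $\mathcal{O}(t^{-2})$ error in $\widetilde{{\bf J}}$ is amplified on passing through ${\bf A}(\cdot){\bf A}^{-1}$ and the prefactor $|s|$ by at most $|s|^{3/2}$ --- the amplification is worst in the $(2,1)$ slot and is no larger precisely because the $|s|^{\pm\frac14\sigma_3}$ dressing supplied by Lemma~\ref{lem1} is calibrated against the large entries of ${\bf A}$. Since $t=|s|^{3/2}$, this gives the error $\mathcal{O}(|s|^{3/2}t^{-2})=\mathcal{O}(|s|^{-3/2})$ for $Y_1^{21}$, and the analogous smaller amplifications give $\mathcal{O}(|s|^{-2})$ for $Y_1^{11}$ and $\mathcal{O}(|s|^{-5/2})$ for $Y_1^{12}$, matching the claims. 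The uniformity over compact $\gamma$-subsets of $[0,1)$, differentiability in $\gamma$, and the logarithmically worse $\partial_\gamma$-bound all transfer directly from Lemma~\ref{lem1}, since the algebra above is smooth in $\gamma$ and the amplification factors are $\gamma$-independent. The main obstacle is therefore not conceptual but purely one of bookkeeping: the three entries live at wildly different scales, and one must keep each consistently truncated so that no cross term is dropped and no remainder underestimated.
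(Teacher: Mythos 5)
Your proposal carries out exactly the substitution the paper has in mind: the paper's proof consists of the single sentence ``Substituting the result of this lemma into \eqref{Y1exp} gives,'' and your write-up makes that substitution explicit, including the conjugation by the unipotent matrix, the cancellation between the $|s|^{\pm\frac14\sigma_3}$ dressing and the $|s|^{1/2}$ factor in $\mathbf{A}$, and the entrywise error amplification leading to $\mathcal{O}(|s|^{-2})$, $\mathcal{O}(|s|^{-5/2})$, $\mathcal{O}(|s|^{-3/2})$. The arithmetic (including the non-oscillatory residue $\tfrac{5}{2}\nu^2-6\nu^4$ in the $(2,1)$ entry) checks out, so this is correct and is the same argument as the paper, merely spelled out in more detail.
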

At this point, we only need to combine the results of Lemma \ref{lem2} with the definition of $T_1(s,\gamma)$.  
\begin{prop}\label{propT1} As $s\rightarrow-\infty$,
\begin{align*}
	T_1(s,\gamma)=&-\frac{2}{3}\im\nu_{\gamma}|s|^{\frac{3}{2}}+4\im\nu^2\nu_{\gamma}|s|^{\frac{3}{2}}+12\nu^3\nu_{\gamma}-5\nu\nu_{\gamma}+\cos(2\phi)\left\{2\nu^2\phi_{\gamma}+3\im\nu^2\nu_{\gamma}+\frac{1}{2}\im\nu_{\gamma}\right\}\\
	&+\sin(2\phi)\left\{-2\im\nu^3\phi_{\gamma}+4\nu\nu_{\gamma}-\im\nu\phi_{\gamma}\right\}+\mathcal{O}\left(|s|^{-\frac{3}{2}}\ln|s|\right),
\end{align*}
and the error term is uniform with respect to $\gamma$ chosen from compact subsets of $[0,1)$.
\end{prop}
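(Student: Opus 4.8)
The plan is to reduce Proposition \ref{propT1} to the already-established expansions of Lemma \ref{lem2}. Recall that $T_1(s,\gamma)=-Y_{1,\gamma}^{11}Y_1^{12}+Y_{1,\gamma}^{12}Y_1^{11}-Y_{1,\gamma}^{21}$, so the only new ingredient is the large-$|s|$ behaviour of the $\gamma$-derivatives $Y_{1,\gamma}^{jk}=\partial_\gamma Y_1^{jk}$. Since the error term in ${\bf J}_1$, and hence in every $Y_1^{jk}$ via \eqref{Y1exp}, is differentiable in $\gamma$ with the differentiated error controlled by $\mathcal{O}(t^{-2}\ln t)$ uniformly on compact subsets of $[0,1)$ (Lemma \ref{lem1}), one may differentiate the three displays of Lemma \ref{lem2} termwise. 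The chain rule produces $\nu_\gamma$ from $\nu=-\tfrac{1}{2\pi\im}\ln(1-\gamma)$ and $\phi_\gamma$ from the phase \eqref{phase}; note that $\nu_\gamma=\tfrac{1}{2\pi\im(1-\gamma)}$ is bounded on compacta of $[0,1)$ (this is where fixed $\gamma\in[0,1)$ enters), while $\phi_\gamma$ grows like $\ln|s|$ through the $-\tfrac{v}{2\pi}\ln(8|s|^{3/2})$ term, which is precisely why the error in Proposition \ref{propT1} carries a factor $\ln|s|$.

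First I would record, by differentiating Lemma \ref{lem2},
\[
Y_{1,\gamma}^{11}=-4\nu\nu_\gamma|s|+\mathcal{O}\!\big(|s|^{-1/2}\ln|s|\big),\qquad Y_{1,\gamma}^{12}=-2\im\nu_\gamma|s|^{1/2}+\mathcal{O}\!\big(|s|^{-1}\ln|s|\big),
\]
\[
Y_{1,\gamma}^{21}=\tfrac{2}{3}\im\big(\nu-4\nu^3\big)_\gamma|s|^{3/2}+\big(\text{explicit }\mathcal{O}(1)\text{ terms in }\nu,\nu_\gamma,\phi_\gamma,\cos 2\phi,\sin 2\phi\big)+\mathcal{O}\!\big(|s|^{-3/2}\ln|s|\big),
\]
keeping, in the first two, also the $\mathcal{O}(|s|^{-1/2})$ and $\mathcal{O}(|s|^{-1})$ corrections, since these are exactly what is needed at the end. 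The $\phi_\gamma\cos 2\phi$ and $\phi_\gamma\sin 2\phi$ pieces enter here from differentiating the oscillatory corrections $\tfrac{1}{2|s|^{1/2}}(\im\nu\sin 2\phi-2\nu^2\cos 2\phi)$ in $Y_1^{11}$ and $-\tfrac{\im\nu}{2|s|}\cos 2\phi$ in $Y_1^{12}$, and from the oscillatory $\mathcal{O}(1)$ line of $Y_1^{21}$.

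Second, I would substitute into $T_1$, multiplying out $-Y_{1,\gamma}^{11}Y_1^{12}$ and $Y_{1,\gamma}^{12}Y_1^{11}$ with the undifferentiated factors taken from Lemma \ref{lem2}, and organise the result by decreasing powers of $|s|$. At order $|s|^{3/2}$ the three contributions $-8\im\nu^2\nu_\gamma$, $+4\im\nu^2\nu_\gamma$ and $-\tfrac23\im\nu_\gamma+8\im\nu^2\nu_\gamma$ add up to $-\tfrac23\im\nu_\gamma+4\im\nu^2\nu_\gamma$, matching the leading term of the claim; the remaining $\mathcal{O}(1)$ and $\mathcal{O}(\ln|s|)$ terms come from the cross products $(|s|\text{-order})\times(|s|^{-1}\text{-order})$ and $(|s|^{-1/2}\text{-order})\times(|s|^{1/2}\text{-order})$ together with the $\mathcal{O}(1)$ part of $Y_{1,\gamma}^{21}$, and sort into the non-oscillatory block $12\nu^3\nu_\gamma-5\nu\nu_\gamma$, the $\cos(2\phi)$ block $2\nu^2\phi_\gamma+3\im\nu^2\nu_\gamma+\tfrac12\im\nu_\gamma$, and the $\sin(2\phi)$ block $-2\im\nu^3\phi_\gamma+4\nu\nu_\gamma-\im\nu\phi_\gamma$. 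Each discarded product is $\mathcal{O}(|s|^{-3/2}\ln|s|)$, uniformly for $\gamma$ in compact subsets of $[0,1)$, since $|s|^{3/2}\cdot\mathcal{O}(t^{-2}\ln t)=\mathcal{O}(|s|^{-3/2}\ln|s|)$ and the relevant bounds from Lemma \ref{lem1} are uniform.

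The argument is elementary given Lemma \ref{lem2}; the only real difficulty is bookkeeping --- identifying which of the many products contribute at orders $|s|^0$ and $\ln|s|$ (in particular tracking the $\phi_\gamma$-terms correctly through $\cos 2\phi$ and $\sin 2\phi$) and verifying the cancellations in the $|s|^{3/2}$ coefficient. A useful internal consistency check, to be applied once $T_2$ and $T_3$ are also computed in the following subsections, is that $\partial_s$ of the $\gamma$-integral $\int_0^\gamma(T_1+T_2+T_3)\,\d\gamma'$ reproduces \eqref{partial-s-F2}; this guards against sign errors accumulated in the present collection step.
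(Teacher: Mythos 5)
Your proposal is correct and follows essentially the same route as the paper, which simply combines Lemma \ref{lem2} with the definition of $T_1(s,\gamma)$ after differentiating with respect to $\gamma$ (justified by the differentiability statement in Lemma \ref{lem1}). Your explicit verification of the $|s|^{3/2}$-coefficient and your tracking of where $\phi_\gamma$ enters (through the oscillatory corrections) and why the $\ln|s|$ appears in the error are sound; the suggested consistency check against \eqref{partial-s-F2} is a helpful addition but not part of the paper's argument.
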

\subsection{Computation of \texorpdfstring{$T_2(s,\gamma)$}{T2}} Our strategy is the same as in the computation of $T_1(s,\gamma)$, however certain steps are more involved. First, after tracing back transformations, we obtain the exact identity
\begin{align}
	{\bf Y}_2=&\,|s|^2\begin{pmatrix}\frac{2}{3}\nu^2(\nu^2-1)-\frac{7\im\nu}{24}|s|^{-\frac{3}{2}} & \frac{4\im\nu}{3}(\nu^2-1)|s|^{-\frac{1}{2}}-\frac{5}{48}|s|^{-2}\\ \frac{4\im\nu}{15}(1-\nu^2)(1-4\nu^2)|s|^{\frac{1}{2}}+\frac{7}{48}(1+2\nu^2)|s|^{-1}&-\frac{2}{3}\nu^2(\nu^2-1)\end{pmatrix}-|s|{\bf Y}_1\nonumber\\
	&+|s|^2\begin{pmatrix}1&0\\ 2\im\nu|s|^{\frac{1}{2}}\end{pmatrix}{\bf J}_1\begin{pmatrix}1&0\\ -2\im\nu|s|^{\frac{1}{2}} & 1\end{pmatrix}\begin{pmatrix}-2\nu^2&-2\im\nu|s|^{-\frac{1}{2}}\\ \frac{7}{48}|s|^{-1}+\frac{2}{3}\im\nu|s|^{\frac{1}{2}}(1-4\nu^2) & 2\nu^2\end{pmatrix}\nonumber\\
	&+|s|^2\begin{pmatrix}1&0\\ 2\im\nu|s|^{\frac{1}{2}} & 1\end{pmatrix}{\bf J}_2\begin{pmatrix}1&0\\ -2\im\nu|s|^{\frac{1}{2}} & 1\end{pmatrix},\label{Y2exp}
\end{align}
where
\begin{equation*}
	{\bf J}_2:=\frac{\im}{2\pi}\int_{\Sigma_R}{\bf R}_-(\lambda)\big({\bf G}_R(\lambda)-{\bf I}\big)\lambda\,\d\lambda,
\end{equation*}
and ${\bf Y}_1$ and ${\bf J}_1$ have appeared previously in \eqref{Y1exp}. After that, we have the following analogue of Lemma \ref{lem1}.
\begin{lem} As $s\rightarrow-\infty$,
\begin{equation*}
	{\bf J}_2=|s|^{-\frac{1}{4}\sigma_3}\left\{\frac{1}{48t}\begin{pmatrix}-24\im\nu & 5-48\nu^2\\ -7& 24\im\nu\end{pmatrix}+\mathcal{O}\left(t^{-2}\right)\right\}|s|^{\frac{1}{4}\sigma_3},
\end{equation*}
and the error term is uniform with respect to $\gamma$ chosen from compact subsets of $[0,1)$.
\end{lem}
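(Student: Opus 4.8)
The plan is to run the argument of Lemma \ref{lem1} essentially verbatim, the only new feature being the extra weight $\lambda$ inside the integral defining $\mathbf{J}_2$. First I would split the contour as $\Sigma_R=\partial\mathbb{D}_r(0)\cup\partial\mathbb{D}_r(1)\cup\Sigma_R'$, where $\Sigma_R'$ is the union of $(1+r,\infty)$ with the arcs of $\Gamma_2\cup\Gamma_4\cup\gamma^+\cup\gamma^-$ lying outside $\mathbb{D}_r(0)\cup\mathbb{D}_r(1)$. By \eqref{es:1}, \eqref{es:2} and \eqref{es:3} the jump $\mathbf{G}_R(\cdot;s,\gamma)-\mathbf{I}$ is exponentially small on $\Sigma_R'$ as $s\to-\infty$, uniformly for $\gamma$ in compact subsets of $[0,1)$, and on the unbounded ray the polynomial factor $\lambda$ is dominated by $\e^{-t(\varkappa+2g(\lambda))}$; hence $\Sigma_R'$ contributes $\mathcal{O}(\e^{-ct})$ for some $c>0$. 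On the two circles I would invoke Theorem \ref{mainDZ} in the rescaled form \eqref{Rfact} to replace $\mathbf{R}_-$ by $\mathbf{I}$ at the cost of $\mathcal{O}(t^{-2})$, so that, in the conjugated frame $|s|^{-\frac14\sigma_3}\{\,\cdot\,\}|s|^{\frac14\sigma_3}$ inherited from the parametrices, $\mathbf{J}_2$ reduces to the two residue integrals $\frac{\im}{2\pi}\oint_{\partial\mathbb{D}_r(0)}(\mathbf{G}_R-\mathbf{I})\lambda\,\d\lambda$ and $\frac{\im}{2\pi}\oint_{\partial\mathbb{D}_r(1)}(\mathbf{G}_R-\mathbf{I})\lambda\,\d\lambda$.

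On $\partial\mathbb{D}_r(0)$ I would feed in the confluent hypergeometric matching \eqref{confmatch}: to leading order $\mathbf{G}_R-\mathbf{I}=\frac{\im}{\z(z)}\mathbf{E}^{(0)}(z)(\cdots)(\mathbf{E}^{(0)}(z))^{-1}$ with $\mathbf{E}^{(0)}$ analytic in $\mathbb{D}_r(0)$, while \eqref{o:7} gives $\z(z)^{-1}=\frac{1}{2tz}\bigl(1+\frac{z}{4}+\mathcal{O}(z^2)\bigr)$ near $z=0$. The key point is that the new factor $\lambda=z$ cancels the simple pole at the origin: $z\,\z(z)^{-1}$ is analytic on $\mathbb{D}_r(0)$ (after shrinking $r$ if necessary), so the leading term integrates to $0$, and the remaining terms, of size $\mathcal{O}(z\,\z(z)^{-2})$, integrate to $\mathcal{O}(t^{-2})$. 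Thus $\partial\mathbb{D}_r(0)$ drops out to the relevant order — this is exactly why the oscillatory $\cos(2\phi)$, $\sin(2\phi)$ contributions present in $\mathbf{J}_1$ do not appear in $\mathbf{J}_2$.

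On $\partial\mathbb{D}_r(1)$ I would use the Airy matching \eqref{Aimatch}, which yields $\mathbf{G}_R-\mathbf{I}=\frac{1}{48t}\widetilde{\Delta}(z)+\mathcal{O}(t^{-2})$ with $\widetilde{\Delta}(z)=(z-1)^{-3/2}\mathbf{P}^{(\infty)}(z)\bigl(\begin{smallmatrix}1&6\im\e^{-t\varkappa}\\ 6\im\e^{t\varkappa}&-1\end{smallmatrix}\bigr)\bigl(\mathbf{P}^{(\infty)}(z)\bigr)^{-1}$. One checks, using \eqref{o:1}--\eqref{o:2}, that the sign flip of $(z-1)^{3/2}$ across $(1-r,1)$ is cancelled by the jump of $\mathbf{P}^{(\infty)}$, so $\widetilde{\Delta}$ is meromorphic on $\mathbb{D}_{\frac14}(1)\setminus\{1\}$ with a pole of order $\le 2$ whose order‑two part lives solely in the $(1,2)$ entry. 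Writing $\lambda=1+(\lambda-1)$, the ``$1$''–piece reproduces (in the conjugated frame) precisely the matrix $\frac{1}{48t}\bigl(\begin{smallmatrix}-24\im\nu&-48\nu^2\\ -7&24\im\nu\end{smallmatrix}\bigr)$ of Lemma \ref{lem1}, while the ``$(\lambda-1)$''–piece extracts the double–pole coefficient of $\frac{1}{48t}\widetilde{\Delta}$; a short computation with $\mathcal{D}(1)^{\mp2}=\e^{\pm t\varkappa}$ reduces this coefficient to $\frac{1}{48t}$ times $\mathbf{N}\bigl(\begin{smallmatrix}1&6\im\\ 6\im&-1\end{smallmatrix}\bigr)\mathbf{N}^{-1}=\bigl(\begin{smallmatrix}0&5\\ -7&0\end{smallmatrix}\bigr)$, of which only the $(1,2)$ entry $5$ enters. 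Adding this to $-48\nu^2$ produces $5-48\nu^2$ in the $(1,2)$ slot, the next Airy term contributes only $\mathcal{O}(t^{-2})$, and the uniformity of \eqref{Aimatch} and \eqref{confmatch} over compact $\gamma$–sets passes to the error term.

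The main obstacle is the residue extraction at $z=1$. One must first confirm that the naive half‑integer powers of $z-1$ produced by the $(z-1)^{-\frac14\sigma_3}$ dressing together with the Szeg\H{o}–type function $\mathcal{D}$ genuinely recombine into an honest Laurent series — this is forced by single‑valuedness of $\mathbf{G}_R-\mathbf{I}$ off $z=1$ — and then identify which entry carries the order‑two pole and carry enough terms in the expansions of $\mathbf{P}^{(\infty)}$ (equivalently of $\mathcal{D}$) near $z=1$ and in the Airy far‑field so that nothing of order $t^{-1}$ is lost. Once this is in place, the remaining steps are the same bookkeeping as in the proof of Lemma \ref{lem1}.
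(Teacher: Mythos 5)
Your proof is correct and follows the same strategy the paper indicates (``a standard residue computation using \eqref{Aimatch} and \eqref{confmatch}'' combined with Theorem \ref{mainDZ} in the conjugated form \eqref{Rfact}). In particular, the two key observations — that $\lambda/\zeta(\lambda)$ is analytic in $\mathbb{D}_r(0)$ so that the origin ceases to contribute at order $t^{-1}$, and that the residue at $z=1$ of $\lambda\,\widetilde\Delta(\lambda)$ is the $\mathbf{J}_1$ residue plus the double-pole coefficient $\tfrac{1}{48t}\bigl(\begin{smallmatrix}0&5\\0&0\end{smallmatrix}\bigr)$ coming from $\mathbf{N}\bigl(\begin{smallmatrix}1&6\im\\6\im&-1\end{smallmatrix}\bigr)\mathbf{N}^{-1}=\bigl(\begin{smallmatrix}0&5\\-7&0\end{smallmatrix}\bigr)$ and $\mathcal{D}(1)^{2}=\e^{-t\varkappa}$ — are the content of the paper's implicit argument, and your computations match.
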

Now we substitute all obtained formul\ae\, into \eqref{Y2exp}, which leads to
\begin{lem} As $s\rightarrow-\infty$,
\begin{equation*}
	Y_2^{12}=\frac{4}{3}\im\nu^3|s|^{\frac{3}{2}}+\frac{2}{3}\im\nu|s|^{\frac{3}{2}}-\frac{1}{2}\nu^2+3\nu^4+\cos(2\phi)\left\{\frac{1}{2}\im\nu+\im\nu^3\right\}+\nu^2\sin(2\phi)+\mathcal{O}\left(|s|^{-\frac{3}{2}}\right),
\end{equation*}
where the error term is uniform with respect to $\gamma$ chosen from compact subsets of $[0,1)$ and can be differentiated with respect to $\gamma$.
\end{lem}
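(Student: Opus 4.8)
The plan is to substitute the asymptotic expansions already established for ${\bf Y}_1$, ${\bf J}_1$ and ${\bf J}_2$ into the exact identity \eqref{Y2exp} and then to read off the $(1,2)$ entry. The essential structural point is that conjugation by a lower triangular matrix preserves the $(1,2)$ entry of a $2\times 2$ matrix: since the conjugating factors in the last two summands of \eqref{Y2exp} are precisely $\bigl(\begin{smallmatrix}1&0\\ 2\im\nu|s|^{1/2}&1\end{smallmatrix}\bigr)^{\pm 1}$, the $(1,2)$ entry of $|s|^2\bigl(\begin{smallmatrix}1&0\\ 2\im\nu|s|^{1/2}&1\end{smallmatrix}\bigr){\bf J}_2\bigl(\begin{smallmatrix}1&0\\ -2\im\nu|s|^{1/2}&1\end{smallmatrix}\bigr)$ is simply $|s|^2 J_2^{12}$, while in the middle summand only $J_1^{12}$ and the combination $J_1^{11}-2\im\nu|s|^{1/2}J_1^{12}$ survive, paired against the second column $(-2\im\nu|s|^{-1/2},\,2\nu^2)^T$ of the accompanying explicit matrix; its large $(2,1)$ entry never reaches $Y_2^{12}$, which is why no spurious $|s|^2$ term can appear. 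All the needed entries are read off from Lemma \ref{lem1} and the foregoing expansion of ${\bf J}_2$, with $t=(-s)^{3/2}$.

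With these substitutions I would then collect contributions order by order in $|s|$. The $|s|^{3/2}$ terms come only from the first two summands of \eqref{Y2exp}: the $(1,2)$ entry of $|s|^2$ times the explicit matrix is $\frac{4\im\nu}{3}(\nu^2-1)|s|^{3/2}-\frac{5}{48}$, while $-|s|\,Y_1^{12}=2\im\nu|s|^{3/2}+\frac{3\nu^2}{2}+\frac{\im\nu}{2}\cos(2\phi)+\mathcal O(|s|^{-3/2})$ by Lemma \ref{lem2}; writing $\frac{4\im\nu}{3}(\nu^2-1)=\frac{4}{3}\im\nu^3-\frac{4}{3}\im\nu$ and adding $2\im\nu$ produces exactly $\frac{4}{3}\im\nu^3|s|^{3/2}+\frac{2}{3}\im\nu|s|^{3/2}$. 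At order $|s|^0$ all four summands contribute: $-\frac{5}{48}$ from the first, $\frac{3\nu^2}{2}+\frac{\im\nu}{2}\cos(2\phi)$ from $-|s|Y_1^{12}$, $\frac{5}{48}-\nu^2$ from $|s|^2 J_2^{12}$, and $-\nu^2+\nu^2\sin(2\phi)+3\nu^4+\im\nu^3\cos(2\phi)$ from $|s|^2$ times the ${\bf J}_1$-product. The $\pm\frac{5}{48}$ cancel, and grouping the remaining constant, $\cos(2\phi)$ and $\sin(2\phi)$ pieces yields $-\frac12\nu^2+3\nu^4$, $\bigl(\frac12\im\nu+\im\nu^3\bigr)\cos(2\phi)$ and $\nu^2\sin(2\phi)$ respectively, which is the asserted formula with $\phi=\phi(s,\gamma)$ as in \eqref{phase}. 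There are no intermediate powers $|s|^{1/2}$ or $|s|^1$, since neither $Y_1^{12}$ nor the relevant matrix entries generate them.

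For the remainder one tracks the $\mathcal O(t^{-2})$ errors of Lemma \ref{lem1} and of the ${\bf J}_2$-expansion: each sits inside a conjugation by $|s|^{\mp\frac14\sigma_3}$, and carried through the triple products in \eqref{Y2exp} against the $|s|^2$ prefactor and the $\mathcal O(|s|^{1/2})$-sized sandwiching and explicit matrices it produces a net error of size $\mathcal O(|s|^{3/2}t^{-2})=\mathcal O(|s|^{-3/2})$. Uniformity over compact subsets of $[0,1)$ and differentiability in $\gamma$ are inherited termwise from Lemmas \ref{lem1}, \ref{lem2} and the ${\bf J}_2$-expansion, just as in the $T_1$ computation; differentiation in $\gamma$ may cost an additional $\ln t$ in the error, which is still absorbed into $\mathcal O(|s|^{-3/2})$. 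I expect the only real obstacle to be the bookkeeping: there is no conceptual difficulty, but one must carefully track the several powers of $|s|$ through the matrix products and confirm the cancellations — the argument is a direct, if lengthy, analogue of the computations already carried out for $Y_1^{11},Y_1^{12},Y_1^{21}$ and $T_1$.
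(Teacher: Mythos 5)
Your proposal is correct and follows the same direct-substitution route the paper clearly intends (the paper states the exact identity \eqref{Y2exp}, Lemma \ref{lem1}, Lemma \ref{lem2}, and the ${\bf J}_2$-lemma, and then simply asserts the result). Your bookkeeping of the $(1,2)$-entry through the conjugations, the cancellation of the $\pm\frac{5}{48}$ terms, and the final coefficients of the $|s|^{3/2}$, constant, $\cos(2\phi)$, and $\sin(2\phi)$ pieces all check out.
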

At this point, we are left to summarize our current results.
\begin{prop}\label{propT2} As $s\rightarrow-\infty$,
\begin{align*}
	T_2(s,\gamma)=&-\frac{2}{3}\im\nu_{\gamma}|s|^{\frac{3}{2}}-4\im\nu^2\nu_{\gamma}|s|^{\frac{3}{2}}-12\nu^3\nu_{\gamma}+\nu\nu_{\gamma}+\cos(2\phi)\left\{-2\nu^2\phi_{\gamma}-3\im\nu^2\nu_{\gamma}-\frac{1}{2}\im\nu_{\gamma}\right\}\\
	&+\sin(2\phi)\left\{2\im\nu^3\phi_{\gamma}-2\nu\nu_{\gamma}+\im\nu \phi_{\gamma}\right\}+\mathcal{O}\left(|s|^{-\frac{3}{2}}\ln|s|\right),
\end{align*}
and the error term is uniform with respect to $\gamma$ chosen from compact subsets of $[0,1)$.
\end{prop}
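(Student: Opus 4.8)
The plan is to read off $T_2(s,\gamma)$ by differentiating, with respect to $\gamma$, the asymptotic expansion of $Y_2^{12}$ just obtained. Indeed, by Proposition \ref{diff2} and the definition $T_2(s,\gamma)=-Y_{2,\gamma}^{12}$ one has $T_2(s,\gamma)=-\partial_\gamma Y_2^{12}$, so the whole task reduces to applying $-\partial_\gamma$ (with $|s|$, equivalently $t=(-s)^{3/2}$, held fixed) to
\begin{equation*}
Y_2^{12}=\tfrac{4}{3}\im\nu^3|s|^{\frac{3}{2}}+\tfrac{2}{3}\im\nu|s|^{\frac{3}{2}}-\tfrac{1}{2}\nu^2+3\nu^4+\cos(2\phi)\big(\tfrac{1}{2}\im\nu+\im\nu^3\big)+\nu^2\sin(2\phi)+\mathcal{O}\big(|s|^{-\frac{3}{2}}\big).
\end{equation*}
First I would record that $\nu=\frac{t\varkappa}{2\pi\im}=-\frac{v}{2\pi\im}$ and the phase $\phi=\phi(s,\gamma)$ of \eqref{phase} both depend on $\gamma$ only through $v=-\ln(1-\gamma)$, and abbreviate $\nu_\gamma:=\partial_\gamma\nu$, $\phi_\gamma:=\partial_\gamma\phi$. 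The chain rule, with $\partial_\gamma\cos(2\phi)=-2\phi_\gamma\sin(2\phi)$ and $\partial_\gamma\sin(2\phi)=2\phi_\gamma\cos(2\phi)$, then converts the four $\nu$-monomials (the two carrying $|s|^{3/2}$ and the two bounded ones) into the first line of the claimed expansion, while differentiating the $\nu$-coefficients together with the trigonometric factors in the last two terms produces the $\cos(2\phi)$- and $\sin(2\phi)$-brackets; a short computation confirms everything collects to the stated formula. (Note that $\phi_\gamma$ is itself of size $\mathcal{O}(\ln|s|)$, so the $\phi_\gamma$-terms are genuine slowly growing entries of the expansion, larger than the remainder, and will telescope upon the later definite $\gamma$-integration \eqref{gamidea}.)

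The only step that is not purely mechanical is justifying that the expansion of $Y_2^{12}$ may be differentiated in $\gamma$ and that this worsens the remainder by no more than a logarithm, i.e.\ turns $\mathcal{O}(|s|^{-3/2})$ into $\mathcal{O}(|s|^{-3/2}\ln|s|)$. I would deduce this exactly as in the proof of Lemma \ref{lem1}: the remainder in $Y_2^{12}$ is produced, through the exact identity \eqref{Y2exp} (which contains ${\bf Y}_1$, ${\bf J}_1$ and ${\bf J}_2$) together with \eqref{Y1exp}, by the Deift--Zhou error; Theorem \ref{mainDZ} and the rescaled bound \eqref{Rfact} control $\widehat{{\bf R}}_--{\bf I}$ in $L^2(\Sigma_R)$ and are differentiable in $\gamma$ at the cost of a factor $\ln t$---this is precisely what Lemma \ref{lem1} records for ${\bf J}_1$, and the same residue-and-matching argument, using \eqref{Aimatch} and \eqref{confmatch}, applies verbatim to ${\bf J}_2$. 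Uniformity over compact subsets of $[0,1)$ is inherited in the same way, since Propositions \ref{DZes:1} and \ref{DZes:2} together with the matching conditions all hold uniformly there.

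Consequently I do not expect a genuine obstacle. The one delicate point of bookkeeping is (i) not losing or duplicating signs while the chain rule distributes across the $\cos(2\phi)$ and $\sin(2\phi)$ terms, and (ii) verifying that once the differentiated $\mathcal{O}(t^{-2}\ln t)$ estimates for ${\bf J}_1$ and ${\bf J}_2$ are inserted into \eqref{Y1exp} and \eqref{Y2exp}, and the explicit powers of $|s|$ and the conjugations by $\bigl(\begin{smallmatrix}1&0\\ \pm 2\im\nu|s|^{1/2}&1\end{smallmatrix}\bigr)$ are tracked, all remainder contributions combine to exactly $\mathcal{O}(|s|^{-3/2}\ln|s|)$ with nothing larger surviving.
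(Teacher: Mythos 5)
Your proposal is correct and follows the same route as the paper: Proposition \ref{propT2} is obtained precisely by applying $-\partial_\gamma$ termwise to the preceding lemma's expansion of $Y_2^{12}$, with the extra $\ln|s|$ in the remainder traced back through \eqref{Y2exp}, \eqref{Y1exp} and the Deift--Zhou estimates to the $\gamma$-differentiated bounds on ${\bf J}_1$ and ${\bf J}_2$ (as recorded in Lemma \ref{lem1}). One small typo to fix: since $t\varkappa=v$, the definition \eqref{o:2} gives $\nu=\frac{v}{2\pi\im}$, not $-\frac{v}{2\pi\im}$; this does not affect the calculation because you only use that $\nu$ and $\phi$ depend on $\gamma$ through $v$.
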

\subsection{Computation of \texorpdfstring{$T_3(s,\gamma)$}{T3}} For the last part we require the following exact identity, compare \eqref{Xsing}, \eqref{Tdef}, \eqref{Sdef}, \eqref{l:1}, \eqref{o:5}, \eqref{o:6}, \eqref{Rdef}:
\begin{equation}\label{Xhats}
	\widehat{{\bf X}}(s)=\begin{pmatrix}1&0\\ 2\im\nu|s|^{\frac{1}{2}} & 1\end{pmatrix}{\bf R}(0){\bf E}^{(0)}(0)\widehat{\Psi}(0)\begin{pmatrix}1&\frac{\gamma}{2\pi\im}\ln(2|s|^{\frac{1}{2}})\\ 0&1\end{pmatrix}.
\end{equation}
This formula implies at once that $\det\widehat{{\bf X}}(s)=1$, which in turn leads to a simplified identity for $T_3(s,\gamma)$.
\begin{lem}\label{lemsim} We have for $T_3(s,\gamma)$ as defined in \eqref{T3def}, see also Proposition \ref{diff2},
\begin{equation*}
	T_3(s,\gamma)=\frac{\gamma}{2\pi\im}\Big(\widehat{X}^{11}(s)\widehat{X}^{21}_{\gamma}(s)-\widehat{X}^{11}_{\gamma}(s)\widehat{X}^{21}(s)\Big).
\end{equation*}
\end{lem}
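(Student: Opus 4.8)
The plan is to reduce the lemma to a short linear-algebra identity, the only structural ingredient being that $\widehat{{\bf X}}(s)$ is unimodular.

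First I would record $\det\widehat{{\bf X}}(s)=1$. This follows from the factorization \eqref{Xhats} — each factor there is either visibly unimodular or, in the case of the product ${\bf E}^{(0)}(0)\widehat{\Psi}(0)$, becomes unimodular after a one-line computation using the reflection formula for $\Gamma$ — or, more directly, from Liouville's theorem: $\det{\bf Y}(z)\equiv 1$ because the jump matrix in RHP \ref{master} is unimodular, ${\bf Y}(z)\to{\bf I}$ as $z\to\infty$, and the endpoint behaviour at $z=s$ is only logarithmic; since $\det\Phi=\det\Phi_0=1$ and the undressing matrices are unimodular, $\det{\bf X}(z)\equiv 1$, and then $\det\widehat{{\bf X}}(z)\equiv 1$ by \eqref{Xsing}. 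Differentiating $\det\widehat{{\bf X}}(s)=1$ in $\gamma$ gives $\textnormal{trace}\,\widehat{{\bf K}}(s)=0$, where $\widehat{{\bf K}}(s):=(\partial_\gamma\widehat{{\bf X}}(s))\widehat{{\bf X}}^{-1}(s)$ and $\widehat{K}^{jk}(s)$ are its entries as in Proposition \ref{diff2}; equivalently $\widehat{K}^{11}(s)+\widehat{K}^{22}(s)=0$.

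Next I would use $\partial_\gamma\widehat{{\bf X}}(s)=\widehat{{\bf K}}(s)\widehat{{\bf X}}(s)$, read off on the first column: with $a:=\widehat{X}^{11}(s)$ and $c:=\widehat{X}^{21}(s)$,
\begin{equation*}
	\widehat{X}^{11}_\gamma(s)=\widehat{K}^{11}(s)\,a+\widehat{K}^{12}(s)\,c,\qquad \widehat{X}^{21}_\gamma(s)=\widehat{K}^{21}(s)\,a+\widehat{K}^{22}(s)\,c.
\end{equation*}
Substituting these into the bracket of \eqref{T3def} and using $\widehat{K}^{11}(s)+\widehat{K}^{22}(s)=0$, one finds
\begin{equation*}
	a^2\widehat{K}^{21}(s)-c^2\widehat{K}^{12}(s)-2ac\,\widehat{K}^{11}(s)=\big(a\,\widehat{X}^{21}_\gamma(s)-c\,\widehat{X}^{11}_\gamma(s)\big)-ac\big(\widehat{K}^{11}(s)+\widehat{K}^{22}(s)\big)=a\,\widehat{X}^{21}_\gamma(s)-c\,\widehat{X}^{11}_\gamma(s),
\end{equation*}
and multiplying by $\gamma/(2\pi\im)$ and recalling \eqref{T3def} gives the asserted identity for $T_3(s,\gamma)$.

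I do not anticipate any genuine obstacle: once unimodularity is available the computation is a couple of lines. The one point demanding care is the exact factorization \eqref{Xhats}, which requires propagating the four explicit gauge and shear factors through the chain ${\bf Y}\mapsto{\bf X}\mapsto{\bf T}\mapsto{\bf S}\mapsto{\bf L}\mapsto{\bf R}$ and through both local parametrices, checking in particular that the logarithmic factor isolated in the local expansion \eqref{o:5} of $\Psi$ matches the one prescribed, with the branch $\textnormal{arg}(z-s)\in(0,2\pi)$, in RHP \ref{unifRHP}(3). This is bookkeeping rather than analysis and has in effect already been done in the sentence preceding the lemma.
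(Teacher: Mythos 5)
Your proof is correct and rests on the same key observation as the paper's, namely that $\det\widehat{{\bf X}}(s)\equiv 1$. The algebraic bookkeeping differs slightly in packaging: you read off the first column of $\partial_\gamma\widehat{{\bf X}}(s)=\widehat{{\bf K}}(s)\widehat{{\bf X}}(s)$ and invoke $\textnormal{trace}\,\widehat{{\bf K}}(s)=0$, whereas the paper goes the other way, writing $\widehat{K}^{11},\widehat{K}^{12},\widehat{K}^{21}$ out of $\widehat{{\bf K}}(s)=(\partial_\gamma\widehat{{\bf X}}(s))\widehat{{\bf X}}^{-1}(s)$ via the unimodular $2\times 2$ inverse and then simplifying with the differentiated determinant relation. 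These are the same two-line computation seen from opposite ends, so this is essentially the paper's argument.
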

\begin{proof} Note that
\begin{equation*}
	\widehat{K}^{11}(s)=\widehat{X}_{\gamma}^{11}(s)\widehat{X}^{22}(s)-\widehat{X}_{\gamma}^{12}(s)\widehat{X}^{21}(s),\ \ \ \ \ \ \widehat{K}^{12}(s)=-\widehat{X}^{11}_{\gamma}(s)\widehat{X}^{12}(s)+\widehat{X}^{12}_{\gamma}(s)\widehat{X}^{11}(s),
\end{equation*}
and
\begin{equation*}
	\widehat{K}^{21}(s)=\widehat{X}^{21}_{\gamma}(s)\widehat{X}^{22}(s)-\widehat{X}^{22}_{\gamma}(s)\widehat{X}^{21}(s).
\end{equation*}
If we now use $\widehat{X}^{11}(s)\widehat{X}^{22}(s)-\widehat{X}^{12}(s)\widehat{X}^{21}(s)\equiv 1$, which leads to 
\begin{equation*}
	\widehat{X}_{\gamma}^{11}(s)\widehat{X}^{22}(s)-\widehat{X}_{\gamma}^{12}(s)\widehat{X}^{21}(s)=-\big(\widehat{X}^{11}(s)\widehat{X}_{\gamma}^{22}(s)-\widehat{X}^{12}(s)\widehat{X}_{\gamma}^{21}(s)\big),
\end{equation*}
then the identity follows directly from \eqref{T3def} after simplification.
\end{proof}
We now evaluate the outstanding matrix elements $\widehat{X}^{11}(s)$ and $\widehat{X}^{21}(s)$ through \eqref{Xhats} by referring once more to Theorem \ref{mainDZ}:
\begin{lem} For any fixed $\gamma\in[0,1)$, as $s\rightarrow-\infty$,
\begin{equation*}
	{\bf R}(0)=|s|^{-\frac{1}{4}\sigma_3}\left\{{\bf I}+\mathcal{O}\left(t^{-1}\right)\right\}|s|^{\frac{1}{4}\sigma_3}.
\end{equation*}
\end{lem}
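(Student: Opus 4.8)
The plan is to read off the estimate directly from the small-norm representation of Theorem \ref{mainDZ} together with the refined bound recorded in \eqref{Rfact}, with no new Riemann--Hilbert analysis needed. Since \eqref{Rfact} already supplies the factorization ${\bf R}(z)=|s|^{-\frac{1}{4}\sigma_3}\widehat{{\bf R}}(z)|s|^{\frac{1}{4}\sigma_3}$ with $\|\widehat{{\bf R}}_-(\cdot;s,\gamma)-{\bf I}\|_{L^2(\Sigma_R)}\le c/t$, the content of the lemma is precisely the scalar-matrix statement $\widehat{{\bf R}}(0)={\bf I}+\mathcal{O}(t^{-1})$; the $\sigma_3$-grading is built into the way the claim is phrased, so no delicate post-conjugation is required at the end.

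First I would pass to the conjugated quantities $\widehat{{\bf R}}(z):=|s|^{\frac{1}{4}\sigma_3}{\bf R}(z)|s|^{-\frac{1}{4}\sigma_3}$ and $\widehat{{\bf G}}_R(z):=|s|^{\frac{1}{4}\sigma_3}{\bf G}_R(z)|s|^{-\frac{1}{4}\sigma_3}$; conjugating the integral equation of Theorem \ref{mainDZ} gives
\[
  \widehat{{\bf R}}(z)={\bf I}+\frac{1}{2\pi\im}\int_{\Sigma_R}\widehat{{\bf R}}_-(\lambda)\big(\widehat{{\bf G}}_R(\lambda)-{\bf I}\big)\frac{\d\lambda}{\lambda-z},\qquad z\in\mathbb{C}\backslash\Sigma_R.
\]
The crucial geometric observation is that $z=0$ sits inside $\mathbb{D}_r(0)$ at fixed positive distance $\operatorname{dist}(0,\Sigma_R)=r$ from the jump contour (recall $\Gamma_2\cup\Gamma_4\cup\gamma^{\pm}$ have been truncated to $\{|z|>r\}$, $\partial\mathbb{D}_r(1)$ lies at distance $\ge 1-r$, and $(1+r,\infty)$ even farther), so the Cauchy kernel $\lambda\mapsto\lambda^{-1}$ is bounded by $r^{-1}$ on all of $\Sigma_R$ and is integrable there against the exponentially small jump on the unbounded rays. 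Evaluating at $z=0$ and splitting,
\[
  \widehat{{\bf R}}(0)-{\bf I}=\frac{1}{2\pi\im}\int_{\Sigma_R}\big(\widehat{{\bf G}}_R(\lambda)-{\bf I}\big)\frac{\d\lambda}{\lambda}+\frac{1}{2\pi\im}\int_{\Sigma_R}\big(\widehat{{\bf R}}_-(\lambda)-{\bf I}\big)\big(\widehat{{\bf G}}_R(\lambda)-{\bf I}\big)\frac{\d\lambda}{\lambda},
\]
the bilinear term is $\le r^{-1}\|\widehat{{\bf R}}_--{\bf I}\|_{L^2(\Sigma_R)}\,\|\widehat{{\bf G}}_R-{\bf I}\|_{L^2(\Sigma_R)}=\mathcal{O}(t^{-2})$ by Cauchy--Schwarz, so the whole lemma reduces to showing $\|\widehat{{\bf G}}_R-{\bf I}\|_{L^1\cap L^2(\Sigma_R)}=\mathcal{O}(t^{-1})$.

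That last bound is the only point requiring care, and it is where the $|s|^{\pm\frac14\sigma_3}$ conjugation is used essentially. On the unbounded components of $\Sigma_R$ — the ray $(1+r,\infty)$ and the parts of $\Gamma_2\cup\Gamma_4\cup\gamma^{\pm}$ with $|z|>r$, $|z-1|>r$ — the sign properties \eqref{es:1}, \eqref{es:2}, \eqref{es:3} make ${\bf G}_R-{\bf I}$ exponentially small, of order $\mathcal{O}(\e^{-ct})$ and decaying at infinity, and this survives conjugation since the spurious factor $|s|^{1/2}=t^{1/3}$ is dominated by $\e^{-ct}$; these pieces contribute $\mathcal{O}(\e^{-ct})$ in $L^1\cap L^2$. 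On the two fixed-radius circles one invokes the matching relations: on $\partial\mathbb{D}_r(1)$ equation \eqref{Aimatch} yields ${\bf G}_R-{\bf I}={\bf P}^{(\infty)}(z)\big[\tfrac{1}{48t(z-1)^{3/2}}(\cdots)+\mathcal{O}(t^{-2})\big]({\bf P}^{(\infty)}(z))^{-1}$, and on $\partial\mathbb{D}_r(0)$ equation \eqref{confmatch} yields ${\bf G}_R-{\bf I}=\tfrac{\im}{\z(z)}{\bf E}^{(0)}(z)(\cdots)({\bf E}^{(0)}(z))^{-1}+\mathcal{O}(t^{-5/3})$ with $\z(z)^{-1}=\mathcal{O}(t^{-1})$ for $|z|=r$. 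In both cases the conjugating matrices ${\bf P}^{(\infty)}$, respectively ${\bf E}^{(0)}$, carry exactly one internal factor of $|s|^{-\frac14\sigma_3}$ (visible in \eqref{o:1} and in the Taylor expansion of ${\bf E}^{(0)}$ recorded above), so the off-diagonal entries of ${\bf G}_R-{\bf I}$ acquire weights $|s|^{\mp1/2}$; after the outer conjugation $|s|^{\frac14\sigma_3}(\cdot)|s|^{-\frac14\sigma_3}$ the $(1,2)$- and $(2,1)$-entries of $\widehat{{\bf G}}_R-{\bf I}$ are re-balanced and all four entries become $\mathcal{O}(t^{-1})$ uniformly on the circles, so integrating over arcs of fixed length gives $\|\widehat{{\bf G}}_R-{\bf I}\|_{L^1\cap L^2(\partial\mathbb{D}_r(0)\cup\partial\mathbb{D}_r(1))}=\mathcal{O}(t^{-1})$. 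Combining the two estimates proves $\|\widehat{{\bf G}}_R-{\bf I}\|_{L^1\cap L^2(\Sigma_R)}=\mathcal{O}(t^{-1})$, hence $\widehat{{\bf R}}(0)={\bf I}+\mathcal{O}(t^{-1})$ and finally ${\bf R}(0)=|s|^{-\frac14\sigma_3}\{{\bf I}+\mathcal{O}(t^{-1})\}|s|^{\frac14\sigma_3}$, with constants depending only on the fixed $\gamma$. The main obstacle, as indicated, is exactly this bookkeeping of the $\sigma_3$-weights near the two band endpoints; once it is in place the rest is routine.
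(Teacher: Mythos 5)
Your proof is correct and is essentially the argument the paper leaves implicit: conjugate the small-norm integral equation of Theorem \ref{mainDZ} by $|s|^{\frac{1}{4}\sigma_3}$, use \eqref{Rfact} for the $L^2$ bound on $\widehat{{\bf R}}_--{\bf I}$, observe that $z=0$ sits a fixed distance $r$ from $\Sigma_R$, and check (exactly as in the paper's construction of ${\bf P}^{(\infty)}$, ${\bf E}^{(1)}$, ${\bf E}^{(0)}$ together with \eqref{Aimatch} and \eqref{confmatch}) that the conjugation rebalances the $|s|^{\pm\frac{1}{2}}$ off-diagonal weights so that $\widehat{{\bf G}}_R-{\bf I}=\mathcal{O}(t^{-1})$ in $L^1\cap L^2(\Sigma_R)$. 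The only cosmetic slip is referring to ``arcs of fixed length'' where you mean the two full circles of fixed circumference; this does not affect the estimate.
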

Thus, after simplification, using in particular the phase function $\phi=\phi(s,\gamma)$ in \eqref{phase},
\begin{lem}\label{lemXhat} As $s\rightarrow-\infty$,
\begin{equation*}
	\widehat{X}^{11}(s)=|s|^{-\frac{1}{4}}\left\{\sqrt{\frac{\im\pi\nu}{\gamma}}\,\big(\e^{\im\phi}-\im\e^{-\im\phi}\big)+\mathcal{O}\left(t^{-1}\right)\right\}
\end{equation*}
and
\begin{equation*}
	\widehat{X}^{21}(s)=|s|^{\frac{1}{4}}\left\{\sqrt{\frac{\im\pi\nu}{\gamma}}\,\big(\im(2\nu-1)\e^{\im\phi}+(2\nu+1)\e^{-\im\phi}\big)+\mathcal{O}\left(t^{-1}\right)\right\},
\end{equation*}
where all error terms are uniform with respect to $\gamma$ chosen from compact subsets of $[0,1)$. Again, they are also differentiable with respect to $\gamma$, subject to error correction.
\end{lem}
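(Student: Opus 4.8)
The plan is to treat this as a finite, fully explicit matrix computation built on the exact identity \eqref{Xhats} for $\widehat{{\bf X}}(s)$. Every factor on the right of \eqref{Xhats} is already available: ${\bf R}(0)$ is controlled by the preceding lemma, which gives ${\bf R}(0)=|s|^{-\frac14\sigma_3}\{{\bf I}+\mathcal{O}(t^{-1})\}|s|^{\frac14\sigma_3}$; the value ${\bf E}^{(0)}(0)$ is obtained by setting $z=0$ in the Taylor expansion of ${\bf E}^{(0)}(z)$ recorded just above; and $\widehat{\Psi}(0)$ is obtained by setting $\z=0$ in the local expansion of $\widehat{\Psi}(\z)$ near the origin recorded above. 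So the first observation I would make is that only the \emph{first column} of $\bigl(\begin{smallmatrix}1&0\\ 2\im\nu|s|^{1/2}&1\end{smallmatrix}\bigr){\bf R}(0){\bf E}^{(0)}(0)\widehat{\Psi}(0)$ is needed, because the rightmost unipotent factor $\bigl(\begin{smallmatrix}1&\frac{\gamma}{2\pi\im}\ln(2|s|^{1/2})\\ 0&1\end{smallmatrix}\bigr)$ in \eqref{Xhats} leaves first columns unchanged, and $\widehat X^{11}(s),\widehat X^{21}(s)$ are precisely the two entries of the first column of $\widehat{{\bf X}}(s)$.

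Next I would push the scaling conjugation through. Using ${\bf R}(0)=|s|^{-\frac14\sigma_3}\{{\bf I}+\mathcal{O}(t^{-1})\}|s|^{\frac14\sigma_3}$ together with $|s|^{\frac14\sigma_3}{\bf E}^{(0)}(0)=\e^{-\im\frac{\pi}{4}\sigma_3}{\bf N}\,\e^{\im\frac23 t\sigma_3}\bigl(\begin{smallmatrix}\e^{-\im\pi\nu}&0\\ 0&1\end{smallmatrix}\bigr)(8t)^{\nu\sigma_3}$ (the $|s|^{-\frac14\sigma_3}$ sitting inside ${\bf E}^{(0)}(0)$ is cancelled by the one coming from ${\bf R}(0)$), the problem reduces to computing $|s|^{-\frac14\sigma_3}\{{\bf I}+\mathcal{O}(t^{-1})\}\,\tfrac{1}{\sqrt2}\bigl(\begin{smallmatrix}-\im&1\\ -1&\im\end{smallmatrix}\bigr)\,{\bf w}$, where $\e^{-\im\frac{\pi}{4}\sigma_3}{\bf N}=\tfrac{1}{\sqrt2}\bigl(\begin{smallmatrix}-\im&1\\ -1&\im\end{smallmatrix}\bigr)$ and ${\bf w}$ is the first column of $\e^{\im\frac23 t\sigma_3}\bigl(\begin{smallmatrix}\e^{-\im\pi\nu}&0\\ 0&1\end{smallmatrix}\bigr)(8t)^{\nu\sigma_3}\widehat{\Psi}(0)$. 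The first column of $\widehat{\Psi}(0)$ is $\e^{\im\frac{\pi}{2}\nu\sigma_3}$ acting on $\bigl(\tfrac{2\pi\im}{\gamma\Gamma(\nu)},\,-\tfrac{2\pi\im}{\gamma\Gamma(-\nu)}\bigr)^{T}$, and this is where the phase \eqref{phase} is produced: since $\nu=\tfrac{t\varkappa}{2\pi\im}=-\tfrac{\im v}{2\pi}$ is purely imaginary, $\e^{\im\frac23 t}(8t)^{\nu}=\e^{\im(\frac23 t-\frac{v}{2\pi}\ln(8t))}=\e^{\im(\phi+\arg\Gamma(\nu))}$ by definition of $\phi=\phi(s,\gamma)$, so $\e^{\im\frac23 t}(8t)^{\nu}\Gamma(\nu)^{-1}=|\Gamma(\nu)|^{-1}\e^{\im\phi}$ (and the conjugate expression in the lower entry yields $|\Gamma(\nu)|^{-1}\e^{-\im\phi}$). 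Combining $\e^{-\im\pi\nu}\e^{\im\frac{\pi}{2}\nu}=\e^{-\im\frac{\pi}{2}\nu}=\e^{-v/4}$ with the reflection formula $|\Gamma(\nu)|^{2}=\bigl|\Gamma(-\tfrac{\im v}{2\pi})\bigr|^{2}=\tfrac{2\pi^{2}}{v\sinh(v/2)}$, the scalar prefactor collapses via $\e^{-v/4}\tfrac{2\pi}{\gamma}|\Gamma(\nu)|^{-1}=\tfrac{1}{\gamma}\sqrt{v(1-\e^{-v})}=\sqrt{v/\gamma}=\sqrt{2}\,\sqrt{\im\pi\nu/\gamma}$, using $1-\gamma=\e^{-v}$ and $\im\pi\nu=v/2$; hence ${\bf w}=\im\sqrt{v/\gamma}\,(\e^{\im\phi},\,-\e^{-\im\phi})^{T}$.

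Finally I would multiply out the remaining factors. Applying $\tfrac{1}{\sqrt2}\bigl(\begin{smallmatrix}-\im&1\\ -1&\im\end{smallmatrix}\bigr)$ on the left gives $\sqrt{\im\pi\nu/\gamma}\,(\e^{\im\phi}-\im\e^{-\im\phi},\ \e^{-\im\phi}-\im\e^{\im\phi})^{T}$; the error factor $\{{\bf I}+\mathcal{O}(t^{-1})\}$ contributes only an additive $\mathcal{O}(t^{-1})$, since this vector is bounded uniformly for $\gamma$ in compact subsets of $[0,1)$; then $|s|^{-\frac14\sigma_3}$ scales the entries by $|s|^{-1/4}$ and $|s|^{1/4}$; and the left triangular factor $\bigl(\begin{smallmatrix}1&0\\ 2\im\nu|s|^{1/2}&1\end{smallmatrix}\bigr)$ adds $2\im\nu|s|^{1/2}$ times the first entry to the second. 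Reading off the two components of the first column of $\widehat{{\bf X}}(s)$ gives $\widehat X^{11}(s)=|s|^{-1/4}\{\sqrt{\im\pi\nu/\gamma}(\e^{\im\phi}-\im\e^{-\im\phi})+\mathcal{O}(t^{-1})\}$ immediately, and $\widehat X^{21}(s)=|s|^{1/4}\{\sqrt{\im\pi\nu/\gamma}\,[\,2\im\nu(\e^{\im\phi}-\im\e^{-\im\phi})+\e^{-\im\phi}-\im\e^{\im\phi}\,]+\mathcal{O}(t^{-1})\}$, whose bracket simplifies to $\im(2\nu-1)\e^{\im\phi}+(2\nu+1)\e^{-\im\phi}$; this is the assertion. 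Uniformity of the errors for $\gamma$ in compact subsets of $[0,1)$ and their differentiability in $\gamma$ (with the customary logarithmic loss in the bound, as in Lemma \ref{lem1}) are inherited from the corresponding properties of ${\bf R}(0)$ established in Theorem \ref{mainDZ} and the preceding lemmas.

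I expect the only genuine obstacle to be bookkeeping: several diagonal exponential prefactors ($\e^{\im\frac23 t\sigma_3}$, $\e^{-\im\pi\nu}$, $(8t)^{\nu\sigma_3}$, $\e^{\im\frac{\pi}{2}\nu\sigma_3}$, $\Gamma(\pm\nu)^{-1}$) and the constant unimodular matrices ${\bf N}$, $\tfrac{1}{\sqrt2}\bigl(\begin{smallmatrix}-\im&1\\ -1&\im\end{smallmatrix}\bigr)$ must be combined carefully so that the rotating parts collapse \emph{exactly} to $\e^{\pm\im\phi}$ and the moduli to the single factor $\sqrt{\im\pi\nu/\gamma}$. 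The one place requiring care is that all branch conventions for $(z-1)^{1/2}$ and for $\arg\Gamma$ remain consistent with \eqref{phase} and \eqref{o:7}--\eqref{o:8}; no estimate beyond the preceding lemma is needed.
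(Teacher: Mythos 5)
Your proof is correct and is exactly the computation the paper leaves implicit: expand the first column of the exact formula \eqref{Xhats}, using the preceding lemma for ${\bf R}(0)$, the Taylor expansion of ${\bf E}^{(0)}(z)$ at $z=0$, and the local expansion of $\widehat{\Psi}(\z)$ at $\z=0$, then collapse the phases into $\e^{\pm\im\phi}$ via \eqref{phase} and the modulus into $\sqrt{\im\pi\nu/\gamma}$ via the reflection formula $|\Gamma(\nu)|^{2}=2\pi^{2}/(v\sinh(v/2))$ and $1-\gamma=\e^{-v}$. The bookkeeping of the diagonal prefactors, the observation that the unipotent factor on the right does not affect the first column, and the recombination $2\im\nu(\e^{\im\phi}-\im\e^{-\im\phi})+(\e^{-\im\phi}-\im\e^{\im\phi})=\im(2\nu-1)\e^{\im\phi}+(2\nu+1)\e^{-\im\phi}$ all check out; this is the same route the paper takes.
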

The final step of our computation consists in combining Lemma \ref{lemXhat} with Lemma \ref{lemsim}.  The result is as follows.
\begin{prop}\label{propT3} As $s\rightarrow-\infty$,
\begin{equation*}
	T_3(s,\gamma)=-2\nu\nu_{\gamma}\sin(2\phi)+2\nu\nu_{\gamma}-2\im\nu\phi_{\gamma}+\mathcal{O}\left(|s|^{-\frac{3}{2}}\ln|s|\right)
\end{equation*}
with an error term that is uniform with respect to $\gamma$ chosen from compact subsets of $[0,1)$.
\end{prop}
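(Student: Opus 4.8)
The plan is to feed the asymptotics of Lemma~\ref{lemXhat} into the simplified identity of Lemma~\ref{lemsim}. Abbreviate $c:=\sqrt{\im\pi\nu/\gamma}$, $A:=\e^{\im\phi}-\im\e^{-\im\phi}$ and $B:=\im(2\nu-1)\e^{\im\phi}+(2\nu+1)\e^{-\im\phi}$ with $\phi=\phi(s,\gamma)$ as in \eqref{phase}, so that Lemma~\ref{lemXhat} reads $\widehat{X}^{11}(s)=|s|^{-1/4}\big(cA+\mathcal{O}(t^{-1})\big)$ and $\widehat{X}^{21}(s)=|s|^{1/4}\big(cB+\mathcal{O}(t^{-1})\big)$, the two bracketed expansions being differentiable in $\gamma$ with the remainder degrading to $\mathcal{O}(t^{-1}\ln t)$ after one differentiation. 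Substituting into $T_3=\tfrac{\gamma}{2\pi\im}\big(\widehat{X}^{11}(s)\widehat{X}^{21}_{\gamma}(s)-\widehat{X}^{11}_{\gamma}(s)\widehat{X}^{21}(s)\big)$, the powers $|s|^{\mp1/4}$ cancel, and because the combination is a Wronskian-type bilinear form the contribution of $c_\gamma$ drops out: one finds $T_3=\tfrac{\gamma}{2\pi\im}\,c^2\big(AB_\gamma-A_\gamma B\big)+\mathcal{O}(|s|^{-3/2}\ln|s|)$, the error absorbing both ``main term $\times$ differentiated remainder'' and ``remainder $\times$ differentiated main term'' (the latter carrying at most a harmless $\ln t$ from $\phi_\gamma$). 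Since $\tfrac{\gamma}{2\pi\im}c^2=\tfrac{\gamma}{2\pi\im}\cdot\tfrac{\im\pi\nu}{\gamma}=\tfrac{\nu}{2}$, the whole problem reduces to evaluating $AB_\gamma-A_\gamma B$.

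For this one differentiates $A$ and $B$ using $\nu_\gamma=\tfrac{1}{2\pi\im(1-\gamma)}$ and $\phi=\phi(s,\gamma)$, expands all products in the basis $\{\e^{2\im\phi},\e^{-2\im\phi},1\}$ via $\e^{\im\phi}\e^{-\im\phi}=1$, and collects terms. This is routine once organized this way: the coefficients of $\phi_\gamma$ multiplying $\e^{\pm 2\im\phi}$ cancel between $AB_\gamma$ and $-A_\gamma B$, and one is left with $AB_\gamma-A_\gamma B=2\im\nu_\gamma\big(\e^{2\im\phi}-\e^{-2\im\phi}\big)+4\nu_\gamma-4\im\phi_\gamma=-4\nu_\gamma\sin(2\phi)+4\nu_\gamma-4\im\phi_\gamma$. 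Multiplying by $\nu/2$ produces precisely $T_3(s,\gamma)=-2\nu\nu_\gamma\sin(2\phi)+2\nu\nu_\gamma-2\im\nu\phi_\gamma+\mathcal{O}(|s|^{-3/2}\ln|s|)$, the uniformity over compact $\gamma$-subsets of $[0,1)$ being inherited from Lemma~\ref{lemXhat} and Theorem~\ref{mainDZ}.

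The one genuinely delicate point is the justification that the expansions of $\widehat{X}^{11}(s)$ and $\widehat{X}^{21}(s)$ may be differentiated in $\gamma$ termwise with only a logarithmic loss in the remainder; this is presumed in Lemma~\ref{lemXhat} but deserves care. It requires revisiting the chain of transformations leading to \eqref{Xhats} and differentiating the small-norm estimate of Theorem~\ref{mainDZ}: one estimates $\partial_\gamma{\bf R}$ through the Cauchy integral equation together with the $\gamma$-derivative of ${\bf G}_R-{\bf I}$ (which is where the extra $\ln t$ enters, via $\partial_\gamma\phi$ and $\partial_\gamma\varkappa$), and checks that the $\gamma$-derivatives of ${\bf E}^{(0)}$, $\widehat\Psi(0)$ and of the explicit prefactors in \eqref{Xhats} stay bounded on compact $\gamma$-sets. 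Everything downstream of that differentiability statement is the same kind of bookkeeping already carried out for $T_1$ and $T_2$, so I expect no further obstacle there.
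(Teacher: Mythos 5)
Your proof is correct and follows exactly the route the paper takes: combine the simplified Wronskian-type identity of Lemma \ref{lemsim} with the asymptotics of $\widehat{X}^{11}(s)$ and $\widehat{X}^{21}(s)$ from Lemma \ref{lemXhat}, note the cancellation of the $|s|^{\pm 1/4}$ powers and the $c_\gamma$ contributions, compute $AB_\gamma - A_\gamma B$, and track the logarithmic loss from $\phi_\gamma$ in the remainder. The paper states the same combination without spelling out the algebra; your expansion of $AB_\gamma - A_\gamma B = -4\nu_\gamma\sin(2\phi) + 4\nu_\gamma - 4\im\phi_\gamma$ and prefactor $\nu/2$ reproduces the stated result.
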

\subsection{Evaluation of \texorpdfstring{$D(\gamma)$}{D}} The final expression for $D(\gamma)$ is obtained by combining Propositions \ref{propT1}, \ref{propT2}, and \ref{propT3}.  We have
\eq
\label{partial-gamma-F2}
\begin{split}
\frac{\partial}{\partial\gamma}\ln F_2(s,\gamma) = & T_1(s,\gamma)+T_2(s,\gamma)+T_3(s,\gamma) \\ 
        =&-\frac{4\im}{3}\nu_{\gamma}|s|^{\frac{3}{2}}-2\nu\nu_{\gamma}-2\im\nu\phi_{\gamma}+\mathcal{O}\left(|s|^{-\frac{3}{2}}\ln|s|\right)\\
	=&-\frac{2v_{\gamma}}{3\pi}|s|^{\frac{3}{2}}+\frac{vv_{\gamma}}{2\pi^2}\ln\big(8|s|^{\frac{3}{2}}\big)+\frac{vv_{\gamma}}{2\pi^2}+\frac{v}{\pi}\frac{\d}{\d\gamma}\textnormal{arg}\,\Gamma\left(-\frac{\im v}{2\pi}\right)+\mathcal{O}\left(|s|^{-\frac{3}{2}}\ln|s|\right),
\end{split}
\endeq
where we used the definitions of $\phi=\phi(s,\gamma)$ from \eqref{phase} and $v=-\ln(1-\gamma)=2\pi\im\nu$ in the last equality. Since $v|_{\gamma=0}=0$, we now integrate and derive, compare \eqref{gamidea},
\begin{equation*}
	\ln F_2(s,\gamma)=-\frac{2v}{3\pi}|s|^{\frac{3}{2}}+\frac{v^2}{4\pi^2}\ln\big(8|s|^{\frac{3}{2}}\big)+\frac{v^2}{4\pi^2}-\frac{1}{\pi}\int_0^{\gamma}v(\gamma')\left\{\frac{\d}{\d\gamma'}\,\textnormal{arg}\,\Gamma\left(\frac{\im v(\gamma')}{2\pi}\right)\right\}\d\gamma'+\mathcal{O}\left(|s|^{-\frac{3}{2}}\ln|s|\right).
\end{equation*}
This last identity can then be further simplified by referring to the Barnes $G$-function \cite{NIST}. This special function satisfies the following useful identity
\begin{equation}\label{barnes}
	z\in\mathbb{C}:\ \Re z>-1,\ \ \ \  \ \int_0^z\ln\Gamma(1+x)\,\d x=\frac{z}{2}\ln(2\pi)-\frac{z}{2}(z+1)+z\ln\Gamma(1+z)-\ln G(1+z),
\end{equation}
and allows us to obtain the next proposition.
\begin{prop} For any fixed $\gamma\in[0,1)$, with $v=v(\gamma)=-\ln(1-\gamma)$,
\begin{equation*}
	\frac{v^2}{4\pi^2}-\frac{1}{\pi}\int_0^{\gamma}v(\gamma')\left\{\frac{\d}{\d\gamma'}\,\textnormal{arg}\,\Gamma\left(\frac{\im v(\gamma')}{2\pi}\right)\right\}\d\gamma'=\ln\big(G\Big(1+\frac{\im v}{2\pi}\Big)G\Big(1-\frac{\im v}{2\pi}\Big)\big).
\end{equation*}
\end{prop}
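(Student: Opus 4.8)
The plan is to reduce the asserted identity to a statement involving only $v$ and then to invoke the Barnes integral formula \eqref{barnes}. First I would change the integration variable from $\gamma'$ to $v'=v(\gamma')=-\ln(1-\gamma')$, which is a smooth increasing bijection of $[0,\gamma]$ onto $[0,v]$ with $v(0)=0$. Since $\Gamma$ is analytic and zero-free on the open positive imaginary axis, $\textnormal{arg}\,\Gamma(\im v'/(2\pi))$ has a well-defined continuous branch there, normalized by $\textnormal{arg}\,\Gamma(\im v'/(2\pi))\to-\frac{\pi}{2}$ as $v'\downarrow 0$ (coming from $\Gamma(z)\sim z^{-1}$ near the origin). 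Integrating by parts in $v'$, the lower boundary term vanishes because $v(0)=0$, and we obtain
\[
	-\frac{1}{\pi}\int_0^{\gamma}v(\gamma')\,\frac{\d}{\d\gamma'}\textnormal{arg}\,\Gamma\Big(\frac{\im v(\gamma')}{2\pi}\Big)\,\d\gamma' = -\frac{v}{\pi}\,\textnormal{arg}\,\Gamma\Big(\frac{\im v}{2\pi}\Big)+\frac{1}{\pi}\int_0^{v}\textnormal{arg}\,\Gamma\Big(\frac{\im w}{2\pi}\Big)\,\d w .
\]
Hence the proposition is equivalent to the identity
\[
	\frac{v^2}{4\pi^2}-\frac{v}{\pi}\,\textnormal{arg}\,\Gamma\Big(\frac{\im v}{2\pi}\Big)+\frac{1}{\pi}\int_0^{v}\textnormal{arg}\,\Gamma\Big(\frac{\im w}{2\pi}\Big)\,\d w=\ln\Big(G\big(1+\tfrac{\im v}{2\pi}\big)\,G\big(1-\tfrac{\im v}{2\pi}\big)\Big),
\]
in which $v>0$ is now treated as an independent parameter.

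Next I would pass from $\textnormal{arg}\,\Gamma(\im y)$ to $\textnormal{arg}\,\Gamma(1+\im y)$ via the functional equation $\Gamma(1+z)=z\Gamma(z)$: for $y>0$ this gives $\textnormal{arg}\,\Gamma(1+\im y)=\frac{\pi}{2}+\textnormal{arg}\,\Gamma(\im y)$ with the continuous branches fixed above (both sides tending to $0$ as $y\downarrow 0$, matching $\Gamma(1)=1$). Substituting $y=v/(2\pi)$ and $y=w/(2\pi)$, the two explicit contributions of $\frac{\pi}{2}$ cancel, namely $-\frac{v}{\pi}\cdot\frac{\pi}{2}+\frac{1}{\pi}\int_0^v\frac{\pi}{2}\,\d w=0$, so it remains to prove
\[
	\frac{v^2}{4\pi^2}-\frac{v}{\pi}\,\textnormal{arg}\,\Gamma\Big(1+\frac{\im v}{2\pi}\Big)+\frac{1}{\pi}\int_0^{v}\textnormal{arg}\,\Gamma\Big(1+\frac{\im w}{2\pi}\Big)\,\d w=\ln\Big(G\big(1+\tfrac{\im v}{2\pi}\big)\,G\big(1-\tfrac{\im v}{2\pi}\big)\Big).
\]

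Finally I would apply \eqref{barnes} at $z=\frac{\im v}{2\pi}$ and at $z=-\frac{\im v}{2\pi}$ (both have $\Re z=0>-1$) and add the two resulting identities. On the left-hand side, parametrizing the segments from $0$ by $x=\pm\frac{\im w}{2\pi}$ and using Schwarz reflection $\overline{\Gamma(1+x)}=\Gamma(1-\bar x)$ (so that $\ln\Gamma(1+\im y)-\ln\Gamma(1-\im y)=2\im\,\textnormal{arg}\,\Gamma(1+\im y)$ for the continuous branch), the sum of the two integrals collapses to $-\frac{1}{\pi}\int_0^v\textnormal{arg}\,\Gamma(1+\frac{\im w}{2\pi})\,\d w$. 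On the right-hand side, the two $\frac{z}{2}\ln(2\pi)$ terms cancel, the two $-\frac{z}{2}(z+1)$ terms combine to $\frac{v^2}{4\pi^2}$, the two $z\ln\Gamma(1+z)$ terms combine (again by Schwarz reflection) to $-\frac{v}{\pi}\,\textnormal{arg}\,\Gamma(1+\frac{\im v}{2\pi})$, and the $\ln G$ terms give $-\ln(G(1+\frac{\im v}{2\pi})G(1-\frac{\im v}{2\pi}))$. Equating the two sides and rearranging is precisely the last displayed identity, which establishes the proposition. I expect the only care needed is the bookkeeping of the branches of $\textnormal{arg}$ and $\ln$ — all taken continuously along the relevant segments of the imaginary axis, anchored at their real values at $0$ — together with the minor point that the lower boundary term in the integration by parts vanishes even though $\textnormal{arg}\,\Gamma$ itself does not tend to $0$ there; no further analytic input is required.
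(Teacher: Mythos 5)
Your argument is correct and follows essentially the same route as the paper: integration by parts with the vanishing boundary term at $\gamma'=0$, passage from $\arg\Gamma$ on the imaginary axis to $\arg\Gamma(1+\cdot)$ via the functional equation (the paper encodes this as $\frac{1}{2\im}\ln\frac{\Gamma(1+\cdot)}{\Gamma(1-\cdot)}$), and the Barnes integral formula applied at $z=\pm\frac{\im v}{2\pi}$ together with Schwarz reflection. The paper's own proof is terser but relies on exactly these steps, so your proposal is a faithful, more detailed version of the same proof.
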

\begin{proof} Using that
\begin{equation*}
	\Gamma\left(\frac{\im v}{2\pi}\right)\left\{\Gamma\left(-\frac{\im v}{2\pi}\right)\right\}^{-1} = \exp\left[2\im\,\textnormal{arg}\,\Gamma\left(\frac{\im v}{2\pi}\right)\right],
\end{equation*}
we need to evaluate
\begin{equation*}
	\int_0^{\gamma}v(\gamma')\left\{\frac{\d}{\d\gamma'}\,\textnormal{arg}\,\Gamma\left(\frac{\im}{2\pi} v(\gamma')\right)\right\}\d\gamma'=\frac{1}{2\im}\int_0^{\gamma}v(\gamma')\left\{\frac{\d}{\d\gamma'}\ln\frac{\Gamma(1+\frac{\im}{2\pi}v(\gamma'))}{\Gamma(1-\frac{\im}{2\pi}v(\gamma'))}\right\}\d\gamma'.
\end{equation*}
The last integral is exactly of the form \eqref{barnes} after integration by parts, so the stated identity follows at once.
\end{proof}
The last proposition concludes our asymptotic evaluation of $F_2(s,\gamma)$.  We have
\begin{equation}\label{F2final}
	\ln F_2(s,\gamma)=-\frac{2v}{3\pi}|s|^{\frac{3}{2}}+\frac{v^2}{4\pi^2}\ln\big(8|s|^{\frac{3}{2}}\big)+\ln\left(G\Big(1+\frac{\im v}{2\pi}\Big)G\Big(1-\frac{\im v}{2\pi}\Big)\right)+\mathcal{O}\left(|s|^{-\frac{3}{2}}\ln|s|\right).
\end{equation}
Comparing with \eqref{Eexp:1} gives the following.
\begin{cor} For any fixed $\gamma\in[0,1)$, 
\begin{equation*}
	D(\gamma)=\frac{3v^2}{4\pi^2}\ln 2+\ln\left(G\Big(1+\frac{\im v}{2\pi}\Big)G\Big(1-\frac{\im v}{2\pi}\Big)\right),\ \ \ v=-\ln(1-\gamma)\in[0,+\infty).
\end{equation*}
\end{cor}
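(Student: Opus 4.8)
The plan is to play the two asymptotic expansions of $\ln F_2(s,\gamma)$ already at our disposal against each other. The first, \eqref{Eexp:1}, came from integrating in $s$ the $\partial_s$-identity of Proposition \ref{diff1}; since one integrates in $s$, this expansion carries an $s$-independent constant of integration, which is exactly $D(\gamma)$, together with a controlled error $r(s,v)=\mathcal{O}(|s|^{-3/2})$. The second, \eqref{F2final}, came from integrating in $\gamma$ the $\partial_\gamma$-identity of Proposition \ref{diff2} from $0$ to $\gamma$; there the constant of integration is pinned by the boundary value $F_2(s,0)=1$, so \eqref{F2final} has \emph{no} undetermined term, only an error $\mathcal{O}(|s|^{-3/2}\ln|s|)$. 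Both hold for $t=|s|^{3/2}\geq t_0(\gamma)$ with $\gamma\in[0,1)$ fixed.

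First I would put \eqref{F2final} into the same normal form as \eqref{Eexp:1} by splitting the Barnes-$G$ logarithmic factor via the elementary identity $\frac{v^2}{4\pi^2}\ln\!\big(8|s|^{3/2}\big)=\frac{3v^2}{4\pi^2}\ln 2+\frac{3v^2}{8\pi^2}\ln|s|$; the coefficients of $|s|^{3/2}$ and of $\ln|s|$ in the two expansions then manifestly agree. Subtracting \eqref{F2final} from \eqref{Eexp:1}, all growing and logarithmic contributions cancel and one is left with
\begin{equation*}
D(\gamma)-\frac{3v^2}{4\pi^2}\ln 2-\ln\!\left(G\Big(1+\tfrac{\im v}{2\pi}\Big)G\Big(1-\tfrac{\im v}{2\pi}\Big)\right)=r(s,v)+\mathcal{O}\!\left(|s|^{-\frac{3}{2}}\ln|s|\right).
\end{equation*}
The right-hand side tends to $0$ as $s\to-\infty$ while the left-hand side is independent of $s$, so the left-hand side must vanish, which is precisely the asserted formula for $D(\gamma)$.

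Thus the Corollary itself is a one-line comparison, and whatever difficulty there is lies not here but in the already completed derivation of \eqref{F2final}: the asymptotic evaluation of $T_1,T_2,T_3$ in Propositions \ref{propT1}--\ref{propT3} through the ratio solution ${\bf R}$ of Theorem \ref{mainDZ}, the striking cancellation of \emph{all} oscillatory $\cos(2\phi)$ and $\sin(2\phi)$ terms in $T_1+T_2+T_3$ (leaving only $-\tfrac{4\im}{3}\nu_\gamma|s|^{3/2}-2\nu\nu_\gamma-2\im\nu\phi_\gamma$ up to $\mathcal{O}(|s|^{-3/2}\ln|s|)$), and, after integrating in $\gamma$ with $v|_{\gamma=0}=0$, the recognition of the resulting $\gamma$-integral as a Barnes-$G$ value through the functional identity \eqref{barnes} and one integration by parts. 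A useful self-consistency check is that the $|s|^{3/2}$- and $\ln|s|$-coefficients obtained along this $\partial_\gamma$-route are forced to coincide with those coming independently from the $\partial_s$-route in \eqref{Eexp:1}; they do, and this is what makes the term-by-term matching above legitimate.
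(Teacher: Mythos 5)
Your proof is correct and is essentially the paper's argument: the authors also obtain $D(\gamma)$ by comparing the two expansions \eqref{Eexp:1} and \eqref{F2final}, expanding $\frac{v^2}{4\pi^2}\ln(8|s|^{3/2})=\frac{3v^2}{4\pi^2}\ln 2+\frac{3v^2}{8\pi^2}\ln|s|$ so that the coefficients of $|s|^{3/2}$ and $\ln|s|$ match, and letting $s\to-\infty$ to kill the error terms while the $s$-independent difference must vanish. Your added remarks on where the actual work resides (Propositions \ref{propT1}--\ref{propT3}, the cancellation of oscillatory terms, and the Barnes-$G$ integration) are an accurate reading of the paper's structure.
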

In the end, combining \eqref{F2final} with Proposition \ref{Eres:1} gives the 
desired result.
\begin{theo}
\label{Eres:2} 
For any fixed $\gamma\in[0,1)$, there exist positive constants $t_0=t_0(\gamma)$ and $c=c(v)$ such that
\begin{equation}\label{Eexp:2}
	\ln F_2(s,\gamma)=-\frac{2v}{3\pi}|s|^{\frac{3}{2}}+\frac{v^2}{4\pi^2}\ln\big(8|s|^{\frac{3}{2}}\big)+\ln\left(G\Big(1+\frac{\im v}{2\pi}\Big)G\Big(1-\frac{\im v}{2\pi}\Big)\right)+r(s,v)
\end{equation}
for $t=|s|^{\frac{3}{2}}\geq t_0$ with $v=-\ln(1-\gamma)\in[0,+\infty)$ and in terms of the Barnes $G$-function. The error term is differentiable with respect to $s$ and satisfies
\begin{equation*}
	\big|r(s,v)\big|\leq\frac{c(v)}{|s|^{\frac{3}{2}}}\ \ \ \ \ \ \forall\,t\geq t_0.
\end{equation*}
\end{theo}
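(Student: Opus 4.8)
The plan is to assemble Theorem~\ref{Eres:2} from the two independent routes developed above, using each for the thing it does best. Proposition~\ref{Eres:1}, obtained by integrating \eqref{partial-s-F2} indefinitely in $s$, already pins down the full $s$-dependence of $\ln F_2(s,\gamma)$ with the sharp error $|r(s,v)|\le c(v)|s|^{-3/2}$ and with differentiability in $s$ built in, but leaves an $s$-independent constant $D(\gamma)$ undetermined. The $\gamma$-derivative computation --- Propositions~\ref{propT1}, \ref{propT2}, \ref{propT3} assembled into \eqref{partial-gamma-F2}, integrated definitely in $\gamma$ from the normalization $F_2(s,0)\equiv1$ as in \eqref{gamidea}, and simplified with the Barnes identity \eqref{barnes} --- produces instead the complete expansion \eqref{F2final}, in which the constant is identified as $\ln\!\big(G(1+\frac{\im v}{2\pi})G(1-\frac{\im v}{2\pi})\big)$ but with the weaker error $\mathcal{O}(|s|^{-3/2}\ln|s|)$.

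First I would match the two. Both \eqref{Eexp:1} and \eqref{F2final} are valid for all $t=|s|^{3/2}\ge t_0$ and the same $\gamma$; subtracting them and using $\ln\!\big(8|s|^{3/2}\big)=3\ln 2+\frac{3}{2}\ln|s|$, the $\frac{3v^2}{8\pi^2}\ln|s|$ terms cancel and one is left with $D(\gamma)-\frac{3v^2}{4\pi^2}\ln 2-\ln\!\big(G(1+\frac{\im v}{2\pi})G(1-\frac{\im v}{2\pi})\big)+\mathcal{O}(|s|^{-3/2}\ln|s|)=0$. Letting $s\to-\infty$ forces $D(\gamma)=\frac{3v^2}{4\pi^2}\ln 2+\ln\!\big(G(1+\frac{\im v}{2\pi})G(1-\frac{\im v}{2\pi})\big)$, which is the value recorded in the corollary preceding the theorem. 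As a consistency check, the $|s|^{3/2}$ and $\ln|s|$ coefficients produced by the two routes already agree, which is the ``double check'' referred to in the outline.

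Then I would substitute this value of $D(\gamma)$ back into \eqref{Eexp:1} --- not into \eqref{F2final} --- so as to retain the sharp error term and the $s$-differentiability coming from Proposition~\ref{Eres:1}. Absorbing $\frac{3v^2}{4\pi^2}\ln 2$ into the logarithm via $\frac{3v^2}{8\pi^2}\ln|s|+\frac{3v^2}{4\pi^2}\ln 2=\frac{v^2}{4\pi^2}\ln\!\big(8|s|^{3/2}\big)$ gives exactly \eqref{Eexp:2}, with $r(s,v)$ the error term of Proposition~\ref{Eres:1}, hence $|r(s,v)|\le c(v)|s|^{-3/2}$ for all $t\ge t_0$ and differentiable in $s$.

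There is no genuine obstacle at this last assembly stage: the substantive work is upstream, in the nonlinear steepest descent analysis (Theorem~\ref{mainDZ}), the two differential identities (Propositions~\ref{diff1} and \ref{diff2}), the term-by-term evaluation of $T_1,T_2,T_3$, and the Barnes-function identity. The only point requiring care is precisely the one above --- one must harvest the \emph{constant} from the $\gamma$-route but the \emph{error bound and regularity} from the $s$-route, since only the latter yields the optimal $|s|^{-3/2}$ decay and the differentiability in $s$ claimed in the statement.
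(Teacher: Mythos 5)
Your proposal is correct and matches the paper's own argument: the paper likewise harvests $D(\gamma)$ by comparing the $\gamma$-integrated expansion \eqref{F2final} against \eqref{Eexp:1}, records it in the intervening corollary, and then "combin[es] \eqref{F2final} with Proposition \ref{Eres:1}" to conclude. You have simply made explicit the point the paper leaves terse --- that one must take the constant from the $\gamma$-route but the sharp $|s|^{-3/2}$ error and $s$-differentiability from Proposition~\ref{Eres:1}, and collapse $\tfrac{3v^2}{8\pi^2}\ln|s|+\tfrac{3v^2}{4\pi^2}\ln 2=\tfrac{v^2}{4\pi^2}\ln\bigl(8|s|^{3/2}\bigr)$.
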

The last theorem was derived under the assumption that $\gamma\in[0,1)$ is kept fixed throughout. However, as we shall prove in the next section, the leading order behavior \eqref{Eexp:2} is still valid as $\gamma\uparrow 1$ at a certain (not too fast) rate.
\section{Extension of Theorem \ref{Eres:2}}
\label{sec:extension}
In order to allow for certain values of $v=-\ln(1-\gamma)\rightarrow+\infty$ we repeat all steps leading to the ratio problem RHP \ref{ratio1}.  However, now special care has to be given to the underlying error estimates. 
\subsection{Preliminary estimates}
Consider ${\bf G}_R(z;s,\gamma)$ for $z\in\partial\mathbb{D}_r(1)$, as $s\rightarrow-\infty$, from \eqref{o:3}, see also \eqref{Aimatch},
\begin{equation}\label{ext:1}
	{\bf G}_R(z;s,\gamma)-{\bf I}={\bf P}^{(\infty)}(z)\e^{-\frac{t}{2}\varkappa\sigma_3}\left\{\frac{1}{48\z^{\frac{3}{2}}(z)}\begin{pmatrix}1&6\im\\ 6\im & -1\end{pmatrix}+{\bf \mathcal{E}}^{(1)}(z;s)\right\}\e^{\frac{t}{2}\varkappa\sigma_3}\big({\bf P}^{(\infty)}(z)\big)^{-1},
\end{equation}
where the error term $\mathcal{E}^{(1)}(z;s)$ is $v$-independent. To estimate the behavior with respect to $v$, we thus only need to recall \eqref{o:1}:
\begin{equation*}
	z\in\partial\mathbb{D}_r(1),\ \ 0<r<\frac{1}{4}:\ \ \ \ \big|\mathcal{D}(z)\e^{\frac{t}{2}\varkappa}\big|=\exp\left[\frac{v}{2\pi}\arctan\left(\frac{2\sqrt{r}\cos(\frac{1}{2}\textnormal{arg}(z-1))}{1-r}\right)\right].
\end{equation*}
This shows that a contracting radius $r=r_t=t^{-\frac{2}{3}+\epsilon}$ with $0<\epsilon<\frac{2}{3}$ in the scaling region
\begin{equation}\label{scale}
	t\geq t_0,\ \ 0\leq v<t^{\frac{1}{3}-\frac{\epsilon}{2}}
\end{equation}
ensures on one end that
\begin{equation*}
	z\in\partial\mathbb{D}_{r_t}(1):\ \ \ \ \big|\mathcal{D}(z)\e^{\frac{t}{2}\varkappa}\big|^{\pm 1}=\exp\left[\pm\frac{v\sqrt{r}}{\pi}\cos\left(\frac{1}{2}\textnormal{arg}(z-1)\right)\right]\big(1+\mathcal{O}\big(vr^{\frac{3}{2}}\big)\big)=\mathcal{O}(1).
\end{equation*}
On the other end, we (still) have for $z\in\partial\mathbb{D}_{r_t}(1)$ that 
\begin{equation*}
	|\z(z)|\geq c_3 t^{\frac{2}{3}}r_t= c_3t^{\epsilon}\rightarrow +\infty\ \ \ \textnormal{as}\ \ t\rightarrow+\infty.
\end{equation*}
Combining these bounds in \eqref{ext:1} gives the following.
\begin{prop}\label{shrink1} For every fixed $0<\epsilon<\frac{2}{3}$, there 
exist $t_0>0$ and $c>0$ such that
\begin{equation*}
	\|{\bf G}_R(\cdot;s,\gamma)-{\bf I}\|_{L^2\cap L^{\infty}(\partial\mathbb{D}_{r_t}(1))}\leq\frac{c}{t^{\epsilon}}\text{ for all }t\geq t_0,\ \ 0\leq v<t^{\frac{1}{3}-\frac{\epsilon}{2}}.
\end{equation*}
\end{prop}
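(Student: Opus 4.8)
The plan is to analyze the jump matrix ${\bf G}_R(z;s,\gamma)-{\bf I}$ on the shrinking circle $\partial\mathbb{D}_{r_t}(1)$ with $r_t=t^{-2/3+\epsilon}$, tracking carefully how each factor depends on $v$ in the scaling region \eqref{scale}. Starting from the representation \eqref{ext:1}, the only $v$-dependence enters through the conjugating factors ${\bf P}^{(\infty)}(z)\e^{-\frac{t}{2}\varkappa\sigma_3}$ and $\e^{\frac{t}{2}\varkappa\sigma_3}\big({\bf P}^{(\infty)}(z)\big)^{-1}$; everything inside the braces is $v$-independent (the $\frac{1}{48\z^{3/2}}$ term comes from the Airy matching \eqref{Aimatch} and the error $\mathcal{E}^{(1)}(z;s)$ is $v$-free). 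So the proof reduces to two separate bounds: (i) showing the conjugating factors are $\mathcal{O}(1)$ uniformly on $\partial\mathbb{D}_{r_t}(1)$, and (ii) showing the inner braces are $\mathcal{O}(t^{-\epsilon})$.

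For (i) I would use the explicit formula \eqref{o:1}–\eqref{o:2} for ${\bf P}^{(\infty)}(z)$. The algebraic prefactor $\big(|s|(z-1)\big)^{-\frac14\sigma_3}{\bf N}(\cdot)$ is harmless: on $\partial\mathbb{D}_{r_t}(1)$ we have $|z-1|=r_t$, so that factor contributes $|s|^{\mp 1/4}r_t^{\mp 1/4}$, which can be absorbed by a conjugation $|s|^{-\frac14\sigma_3}(\cdot)|s|^{\frac14\sigma_3}$ and $r_t^{\pm\frac14}$-type factors that are under control (and in fact one conjugates these harmless powers out, just as in \eqref{Rfact}). The genuinely dangerous part is $\big(\mathcal{D}(z)\big)^{-\sigma_3}\e^{-\frac{t}{2}\varkappa\sigma_3}$, i.e.\ the quantity $\mathcal{D}(z)\e^{\frac{t}{2}\varkappa}$ (and its reciprocal). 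The excerpt already records
\[
	\big|\mathcal{D}(z)\e^{\frac{t}{2}\varkappa}\big|=\exp\left[\frac{v}{2\pi}\arctan\left(\frac{2\sqrt{r}\cos(\frac12\textnormal{arg}(z-1))}{1-r}\right)\right],
\]
so on $\partial\mathbb{D}_{r_t}(1)$, using $\arctan x = x + \mathcal{O}(x^3)$ and $\frac{1}{1-r_t}=1+\mathcal{O}(r_t)$, one gets $\big|\mathcal{D}(z)\e^{\frac{t}{2}\varkappa}\big|^{\pm1}=\exp\left[\pm\frac{v\sqrt{r_t}}{\pi}\cos(\frac12\textnormal{arg}(z-1))\right]\big(1+\mathcal{O}(vr_t^{3/2})\big)$. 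In the scaling region $v<t^{1/3-\epsilon/2}$ we have $v\sqrt{r_t}=v\,t^{-1/3+\epsilon/2}<1$ and $vr_t^{3/2}=v\,t^{-1+3\epsilon/2}\to 0$, so the whole factor is $\mathcal{O}(1)$. Hence the conjugation is bounded, and (i) follows.

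For (ii), the key point is the lower bound on the rescaled variable: since $\z(z)=t^{2/3}(z-1)$, on $\partial\mathbb{D}_{r_t}(1)$ we have $|\z(z)| = t^{2/3}r_t = t^{\epsilon}\to\infty$, so $|\z^{3/2}(z)|\geq c_3 t^{3\epsilon/2}$ and the leading inner term $\frac{1}{48\z^{3/2}(z)}\bigl(\begin{smallmatrix}1&6\im\\6\im&-1\end{smallmatrix}\bigr)$ is $\mathcal{O}(t^{-3\epsilon/2})=o(t^{-\epsilon})$; similarly the $v$-free remainder $\mathcal{E}^{(1)}(z;s)$ is $\mathcal{O}(|\z(z)|^{-3})=\mathcal{O}(t^{-3\epsilon})$ by the same Airy-matching argument that gave \eqref{Aimatch}. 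Combining (i) and (ii) in \eqref{ext:1} and taking $L^2\cap L^\infty$ norms over $\partial\mathbb{D}_{r_t}(1)$ (the circumference $2\pi r_t$ is bounded, so the $L^2$ norm is controlled by the $L^\infty$ norm up to a constant) yields $\|{\bf G}_R(\cdot;s,\gamma)-{\bf I}\|_{L^2\cap L^\infty(\partial\mathbb{D}_{r_t}(1))}\leq c\,t^{-\epsilon}$ uniformly for $t\geq t_0$ and $0\leq v<t^{1/3-\epsilon/2}$, which is the claim. The main obstacle — and the only place any real care is needed — is step (i): one must make sure that shrinking the radius to exactly $r_t=t^{-2/3+\epsilon}$ is simultaneously small enough that $|\z(z)|\to\infty$ (needed for (ii)) and large enough relative to the $v$-range that $v\sqrt{r_t}$ stays bounded (needed for (i)); this balance is precisely what dictates the exponent $\frac13-\frac\epsilon2$ on $v$ in \eqref{scale}.
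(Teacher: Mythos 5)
Your overall strategy — estimate the conjugating Szeg\H{o} factor $\mathcal{D}(z)\e^{t\varkappa/2}$, then use $|\z(z)|\geq c\,t^{\epsilon}$ to kill the inner Airy-matching term — is exactly the paper's plan, and your identification of the balance that fixes the exponent $\tfrac13-\tfrac\epsilon2$ is right. However, there is a genuine gap in your step (i): the algebraic prefactor $\big(|s|(z-1)\big)^{-\frac14\sigma_3}$ is \emph{not} harmless on the shrinking circle, and the exponent $-\epsilon$ in the proposition (rather than the $-\tfrac{3\epsilon}{2}$ your argument would actually deliver) comes precisely from it.

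Concretely, writing ${\bf P}^{(\infty)}(z)=\big(|s|(z-1)\big)^{-\frac14\sigma_3}{\bf N}\,\mathcal{D}(z)^{-\sigma_3}$, equation \eqref{ext:1} becomes
\begin{equation*}
{\bf G}_R-{\bf I}=\big(|s|(z-1)\big)^{-\frac14\sigma_3}\,\Big[{\bf N}\,\big(\mathcal{D}\e^{\frac{t}{2}\varkappa}\big)^{-\sigma_3}\Big\{\tfrac{1}{48\z^{3/2}}\bigl(\begin{smallmatrix}1&6\im\\6\im&-1\end{smallmatrix}\bigr)+\mathcal{E}^{(1)}\Big\}\big(\mathcal{D}\e^{\frac{t}{2}\varkappa}\big)^{\sigma_3}{\bf N}^{-1}\Big]\,\big(|s|(z-1)\big)^{\frac14\sigma_3}.
\end{equation*}
The inner bracket is indeed $\mathcal{O}(t^{-3\epsilon/2})$ in every entry, and the $\mathcal{D}\e^{t\varkappa/2}$-conjugation is indeed $\mathcal{O}(1)$ in the scaling region. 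But the outer conjugation by $\big(|s|(z-1)\big)^{-\frac14\sigma_3}$ multiplies the $(2,1)$ entry of the bracket by $\big(|s|(z-1)\big)^{1/2}$, and on $\partial\mathbb{D}_{r_t}(1)$ one has $|s|\,|z-1|=t^{2/3}\,r_t=t^{\epsilon}\to\infty$, hence the magnification is $t^{\epsilon/2}\to\infty$, \emph{not} $\mathcal{O}(1)$. Your justification — splitting into $|s|^{\mp1/4}$, ``absorbed as in \eqref{Rfact},'' and $r_t^{\pm1/4}$, ``under control'' — does not work: there is no $|s|^{\frac14\sigma_3}$-conjugation in the statement of Proposition \ref{shrink1}, and $r_t^{-1/4}=t^{(2/3-\epsilon)/4}$ is itself unbounded. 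Moreover, one can check that ${\bf N}\bigl(\begin{smallmatrix}1&6\im\\6\im&-1\end{smallmatrix}\bigr){\bf N}^{-1}=\bigl(\begin{smallmatrix}0&5\\-7&0\end{smallmatrix}\bigr)$ has a nonzero $(2,1)$ entry, so no cancellation saves you: the $(2,1)$ entry of ${\bf G}_R-{\bf I}$ is genuinely of size $t^{-3\epsilon/2}\cdot t^{\epsilon/2}=t^{-\epsilon}$. Your argument, followed literally, would yield $\|{\bf G}_R-{\bf I}\|_{L^\infty(\partial\mathbb{D}_{r_t}(1))}=\mathcal{O}(t^{-3\epsilon/2})$, which is a false (too strong) bound; the proposition's $t^{-\epsilon}$ is in fact tight, and it is exactly the cost of this algebraic conjugation — the one you declared harmless — that accounts for the loss of $t^{\epsilon/2}$.
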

The corresponding argument for $z\in\partial\mathbb{D}_r(0)$ is similar but more involved since $\Psi(\z)$ in \eqref{o:4} already depends on $\gamma$ while $\Phi(\z)$, see \eqref{e:3}, did not. This difference requires us to work with the full asymptotic series in condition (4) of RHP \ref{conflu}.  From \cite{NIST}, as $\z\rightarrow\infty$,
\begin{align*}
	\Psi(\z)\sim&\left[{\bf I}+\e^{\im\frac{\pi}{2}\nu\sigma_3}\sum_{k=1}^{\infty}\begin{pmatrix}((\nu)_k)^2\e^{\im\frac{\pi}{2}k} & ((1-\nu)_{k-1})^2k\,\e^{-\im\frac{\pi}{2}k}\frac{\Gamma(1-\nu)}{\Gamma(\nu)}\\
	((1+\nu)_{k-1})^2k\,\e^{\im\frac{\pi}{2}k}\frac{\Gamma(1+\nu)}{\Gamma(-\nu)} & ((-\nu)_k)^2\e^{-\im\frac{\pi}{2}k}\end{pmatrix}\frac{\z^{-k}}{k!}\e^{-\im\frac{\pi}{2}\nu\sigma_3}\right]\\
	&\times\,\z^{-\nu\sigma_3}\e^{-\frac{\im}{2}\z\sigma_3}\begin{cases}
                                                                                                      \Bigl(\begin{smallmatrix}
                                                                                                             \e^{\im\frac{3\pi}{2}\nu} & 0\\ 0 & \e^{-\im\frac{\pi}{2}\nu}
                                                                                                            \end{smallmatrix}\Bigr),& \textnormal{arg}\,\z\in(0,\pi)\smallskip\\
      \Bigl(\begin{smallmatrix}
             0 & -\e^{\im\frac{\pi}{2}\nu}\\ \e^{\im\frac{\pi}{2}\nu} & 0
            \end{smallmatrix}\Bigr)
,&\textnormal{arg}\,\z\in(-\pi,0),
                                                                                    \end{cases}
\end{align*}
where $(\nu)_k:=\nu(\nu+1)(\nu+2)\cdot\ldots\cdot(\nu+k-1)$ is the Pochhammer symbol and $\nu=2\pi\im v\in\im\mathbb{R}$. Now we use \eqref{o:6} and \eqref{o:8} so that first for $z\in\partial\mathbb{D}_r(0):\textnormal{arg}\,z\in(0,\pi)$, as $s\rightarrow-\infty$,
\begin{align}
	{\bf N}^{-1}&(|s|(z-1))^{\frac{1}{4}\sigma_3}\Big({\bf G}_R(z;s,\gamma)-{\bf I}\Big)(|s|(z-1))^{-\frac{1}{4}\sigma_3}{\bf N}\nonumber\\
	&\sim\,\sum_{k=1}^{\infty}\begin{pmatrix}((\nu)_k)^2\e^{\im\frac{\pi}{2}k} & -\nu((1-\nu)_{k-1})^2k\,\e^{-\im\frac{\pi}{2}k}(f^+(z;s,\gamma))^{-1}\smallskip\\
	\nu((1+\nu)_{k-1})^2k\,\e^{\im\frac{\pi}{2}k}f^+(z;s,\gamma) & ((-\nu)_k)^2\e^{-\im\frac{\pi}{2}k}\end{pmatrix}\frac{\z^{-k}}{k!}\label{low1}
\end{align}
with
\begin{equation*}
	f^+(z;s,\gamma):=\e^{-\im\frac{4}{3}t-2\im\textnormal{arg}\,\Gamma(-\nu)}\e^{2\pi\im\nu}\big(\z(z)\big)^{-2\nu}\big(\mathcal{D}(z)\big)^2,\ \ z\in\partial\mathbb{D}_r(0):\ \textnormal{arg}\,z\in(0,\pi).
\end{equation*}
Note that for $z\in\partial\mathbb{D}_r(0)$,
\begin{align}
	\big|f^+(z;s,\gamma)\big|=&\,\e^v\exp\left[-\frac{v}{\pi}\textnormal{arg}\,\z(z)+\frac{v}{\pi}\textnormal{arg}\,\left(\frac{(z-1)^{\frac{1}{2}}-\im}{(z-1)^{\frac{1}{2}}+\im}\right)\right]\nonumber\\
	=&\,\e^v\exp\left[-\frac{v}{\pi}\textnormal{arg}\,\z(z)-\frac{v}{\pi}\arccos\left(\frac{1}{r}\big\{|z-1|-1\big\}\right)\right]\label{f+1},
\end{align}
and likewise
\begin{equation*}
	|z-1|=\sqrt{1-2r\cos(\textnormal{arg}\,z)+r^2}.
\end{equation*}
Thus, as $r\downarrow 0$ with $\textnormal{arg}\,z\in(0,\pi)$,
\begin{equation*}
	\arccos\left(\frac{1}{r}\big(|z-1|-1\big)\right)=\pi-\textnormal{arg}\,z-\frac{r}{2}\sin(\textnormal{arg}\,z)+\mathcal{O}\left(r^2\right),
\end{equation*}
leading to (compare also \eqref{o:7})
\begin{align}
	\big|f^+(z;s,\gamma)\big|=&\,\exp\left[-\frac{v}{\pi}\textnormal{arg}\left(\frac{\z(z)}{z}\right)+\frac{vr}{2\pi}\sin(\textnormal{arg}\,z)\right]\big(1+\mathcal{O}\left(vr^2\right)\big)\nonumber\\
	=&\,\exp\left[\frac{3vr}{4\pi}\sin(\textnormal{arg}\,z)\right]\big(1+\mathcal{O}\left(vr^2\right)\big)\label{f+2},
\end{align}
provided that $vr\downarrow 0$. The situation is similar in the lower half plane; instead of \eqref{low1} we have now for $z\in\partial\mathbb{D}_r(0):\ \textnormal{arg}\,z\in(-\pi,0)$, as $s\rightarrow-\infty$,
\begin{align}
	{\bf N}^{-1}&(|s|(z-1))^{\frac{1}{4}\sigma_3}\Big({\bf G}_R(z;s,\gamma)-{\bf I}\Big)(|s|(z-1))^{-\frac{1}{4}\sigma_3}{\bf N}\nonumber\\
	&\sim\,\sum_{k=1}^{\infty}\begin{pmatrix}((-\nu)_k)^2\e^{-\im\frac{\pi}{2}k} & -\nu((1+\nu)_{k-1})^2k\,\e^{\im\frac{\pi}{2}k}(f^-(z;s,\gamma))^{-1}\smallskip\\
	\nu((1-\nu)_{k-1})^2k\,\e^{-\im\frac{\pi}{2}k}f^-(z;s,\gamma) & ((\nu)_k)^2\e^{\im\frac{\pi}{2} k}\end{pmatrix}\frac{\z^{-k}}{k!}\label{low2}
\end{align}
with
\begin{equation*}
	f^-(z;s,\gamma):=\e^{\im\frac{4}{3}t+2\im\textnormal{arg}\,\Gamma(-\nu)}\e^{2\pi\im\nu}\big(\z(z)\big)^{2\nu}\big(\mathcal{D}(z)\big)^2,\ \ z\in\partial\mathbb{D}_r(0):\ \textnormal{arg}\,z\in(-\pi,0).
\end{equation*}
Applying the same geometrical reasoning as above, we deduce this time
\begin{equation*}
	\big|f^-(z;s,\gamma)\big|=\exp\left[-\frac{3vr}{4\pi}\sin(\textnormal{arg}\,z)\right]\big(1+\mathcal{O}\left(vr^2\right)\big),\ \ vr\downarrow 0.
\end{equation*}
Thus, here we choose a contracting radius $r=\hat{r}_t=t^{-\frac{1}{3}}$ so that, subject to \eqref{scale},
\begin{equation}\label{rconseq}
	0\leq |vr|<t^{-\frac{\epsilon}{2}}\downarrow 0;\ \ \ \ \ \ \ \ \ \ \ z\in\partial\mathbb{D}_{\hat{r}_t}(0):\ \ |\z(z)|\geq ctr=ct^{\frac{2}{3}}\rightarrow+\infty.
\end{equation}
Hence, with \eqref{low1} and \eqref{low2},
\begin{equation*}
	\Big\||s|^{\frac{1}{4}\sigma_3}\big({\bf G}_R(z;s,\gamma)-{\bf I}\big)|s|^{-\frac{1}{4}\sigma_3}\Big\|\leq c\left\{\frac{v}{rt}\left\|\begin{pmatrix}v & |f^{\pm}(z;s,\gamma)|^{-1}\\ 
	|f^{\pm}(s;z,\gamma)| & v\end{pmatrix}\right\|+\mathcal{E}^{(0)}(z;s,\gamma)\right\},\ \ z\in\partial\mathbb{D}_{\hat{r}_t}(0).
\end{equation*}
Estimates for the error term $\mathcal{E}^{(0)}(z;s,\gamma)$ follow from known error estimates for the confluent hypergeometric function, see e.g. \cite{NIST}: there exist $t_0>0$ and constants $c_j>0$ such that
\begin{equation*}
	\big\|\mathcal{E}^{(0)}(z;s,\gamma)\big\|\leq c_1\e^{c_2v-c_3rt}\leq c_1\e^{-c_4t^{\frac{2}{3}}}\text{ for all }t\geq t_0,\ \ 0\leq v<t^{\frac{1}{3}-\frac{\epsilon}{2}}\ \ \textnormal{with}\ \ \epsilon\in\left(0,\frac{2}{3}\right)\ \textnormal{fixed}.
\end{equation*}
But in the same scaling regime, compare \eqref{rconseq}, $|f^{\pm}(z;s,\gamma)|^{\pm 1}\rightarrow 1$.  Together these show the following.
\begin{prop}\label{shrink2} For every fixed $0<\epsilon<\frac{2}{3}$, there exist $t_0>0$ and $c>0$ such that
\begin{equation*}
	\Big\||s|^{\frac{1}{4}\sigma_3}\big({\bf G}_R(\cdot;s,\gamma)-{\bf I}\big)|s|^{-\frac{1}{4}\sigma_3}\Big\|_{L^2\cap L^{\infty}(\partial\mathbb{D}_{\hat{r}_t}(0))}\leq c\,v^2\,t^{-\frac{2}{3}}\text{ for all }t\geq t_0,\ 0\leq v<t^{\frac{1}{3}-\frac{\epsilon}{2}}.
\end{equation*}
\end{prop}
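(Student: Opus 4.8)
The plan is to derive the stated bound by assembling the pointwise estimates built up in the paragraphs preceding it, evaluated on the contracting circle $\partial\mathbb{D}_{\hat r_t}(0)$ with $\hat r_t=t^{-1/3}$, and then passing to the $L^2\cap L^\infty$ norm. First I would record that on $\partial\mathbb{D}_r(0)$ one has ${\bf G}_R(z;s,\gamma)-{\bf I}={\bf P}^{(0)}(z)\big({\bf P}^{(\infty)}(z)\big)^{-1}-{\bf I}$, and that by the construction \eqref{o:6}, \eqref{o:8} together with condition~(4) of RHP~\ref{conflu} this equals, up to a bounded conjugation, the tail of the large-argument expansion of $\Psi$: explicitly \eqref{low1} in the upper half-plane and \eqref{low2} in the lower one. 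The leading ($k=1$) term of that tail carries an overall factor $\nu$ and places $f^\pm(z;s,\gamma)=\e^{\mp\im\frac43 t\mp2\im\,\textnormal{arg}\,\Gamma(-\nu)}\e^{2\pi\im\nu}(\z(z))^{\mp2\nu}(\mathcal{D}(z))^2$ in the off-diagonal slots; combined with the lower bound $|\z(z)|\ge c\,tr$ valid on $\partial\mathbb{D}_r(0)$, this yields the pointwise estimate
\[
	\big\||s|^{\frac14\sigma_3}\big({\bf G}_R(z;s,\gamma)-{\bf I}\big)|s|^{-\frac14\sigma_3}\big\|\ \le\ c\left\{\frac{v}{rt}\left\|\begin{pmatrix}v & |f^\pm|^{-1}\\ |f^\pm| & v\end{pmatrix}\right\|+\mathcal{E}^{(0)}(z;s,\gamma)\right\},
\]
where $\mathcal{E}^{(0)}$ collects the $k\ge2$ part of the series and the confluent hypergeometric error, $\|\mathcal{E}^{(0)}(z;s,\gamma)\|\le c_1\e^{c_2v-c_3rt}$.

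The one genuinely delicate ingredient, which I would carry out next, is the control of the off-diagonal modulus $|f^\pm|^{\pm1}$ near the contracting circle. Tracking the geometry of $\partial\mathbb{D}_r(0)$ as $r\downarrow0$ gives, as in \eqref{f+1}--\eqref{f+2}, the expansion $|f^\pm(z;s,\gamma)|^{\pm1}=\exp\!\big[\pm\tfrac{3vr}{4\pi}\sin(\textnormal{arg}\,z)\big]\big(1+\mathcal{O}(vr^2)\big)$, which tends to $1$ precisely when $vr\to0$. This requirement competes with the opposite demand $|\z(z)|\ge c\,tr\to\infty$ (needed both for the $\Psi$-series to be usable and for $\mathcal{E}^{(0)}$ to decay), and balancing the two dictates the contracting rate: with $r=\hat r_t=t^{-1/3}$, throughout the scaling window \eqref{scale} one has simultaneously $|vr|<t^{-\epsilon/2}\downarrow0$ and $|\z(z)|\ge c\,t^{2/3}\to+\infty$ (see \eqref{rconseq}), so that $\|\mathcal{E}^{(0)}\|\le c_1\e^{-c_4 t^{2/3}}$ and $|f^\pm|^{\pm1}\to1$ uniformly on $\partial\mathbb{D}_{\hat r_t}(0)$.

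It then remains to substitute $r=\hat r_t=t^{-1/3}$ into the pointwise bound: $\tfrac{v}{rt}=v\,t^{-2/3}$, and since $|\nu|=\tfrac{v}{2\pi}$ and $|f^\pm|^{\pm1}=1+o(1)$ the $2\times2$ matrix there has operator norm $v+1+o(1)$, so its first term is $\mathcal{O}\big(v(1+v)\,t^{-2/3}\big)$ -- in particular $\mathcal{O}(v^2\,t^{-2/3})$ in the regime $v\to\infty$ that this section targets -- while the second term is super-exponentially small; both bounds are uniform in $z\in\partial\mathbb{D}_{\hat r_t}(0)$. Since this circle has finite length, its $L^2\cap L^\infty$ norm is controlled by the pointwise supremum, and the Proposition follows. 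The main obstacle throughout is exactly the $v$-uniform estimate of ${\bf G}_R-{\bf I}$ described in the second paragraph: unlike the Airy parametrix at $z=1$ (for which the larger radius $r_t=t^{-2/3+\epsilon}$ of Proposition~\ref{shrink1} already worked, since $\Phi$ is $\gamma$-independent), the confluent hypergeometric parametrix $\Psi$ depends on $\gamma$ through $\nu$, so one cannot truncate its large-$\z$ expansion at the leading term and must instead do the argument/modulus bookkeeping for $(\z)^{\mp2\nu}\mathcal{D}^2$; this is what pins down $\hat r_t=t^{-1/3}$, and everything downstream is routine.
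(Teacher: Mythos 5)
Your proposal reproduces the paper's own argument step by step: the decomposition of ${\bf G}_R-{\bf I}$ on $\partial\mathbb{D}_{\hat r_t}(0)$ into the tail of the large-$\z$ series for $\Psi$ (conjugated as in \eqref{low1}--\eqref{low2}), the pointwise bound with the $f^\pm$-dependent matrix, the geometrical control of $|f^\pm|^{\pm1}$ via \eqref{f+1}--\eqref{f+2}, the balancing of $vr\to0$ against $|\z|\geq ctr\to\infty$ that forces $\hat r_t=t^{-1/3}$, the super-exponential estimate for $\mathcal{E}^{(0)}$, and the final $L^2\cap L^\infty$ bound from the pointwise sup on a finite-length contour. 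One remark worth making: your bookkeeping actually gives the sharper pointwise rate $\mathcal{O}\big(v(1+v)\,t^{-2/3}\big)$, and you correctly note that this is $\mathcal{O}(v^2\,t^{-2/3})$ only once $v$ is bounded away from $0$; as stated the Proposition is a bit loose for small $v$ (it should read $v(1+v)t^{-2/3}$ or $\max\{v,v^2\}t^{-2/3}$), but this is harmless downstream since Theorem \ref{improvedDZ} restricts to $\epsilon<\frac{1}{3}$, where $t^{-2/3}\leq t^{-\frac{1}{3}-\epsilon}$ absorbs the small-$v$ contribution.
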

At this point we have modified our construction of $\partial\mathbb{D}_{r_t}(1)\cup\partial\mathbb{D}_{\hat{r}_t}(0)$ in RHP \ref{ratio1} to \eqref{scale}. We now have to analyze the remaining five jump contours. First, on the three infinite branches, starting with
\begin{align}
	z&\in\big(\Gamma_2\cup\Gamma_4)\cap\{z\in\mathbb{C}:\ |z|>\hat{r}_t\}:\ \ \ \ \ t\geq t_0,\ \ 0\leq v<t^{\frac{1}{3}-\frac{\epsilon}{2}}\nonumber\\
	&\big|(\mathcal{D}(z))^2\e^{2tg(z)}\big|=\exp\left[-\frac{v}{\pi}\arccos\left\{\frac{1}{\hat{r}_t}\big(|z-1|-1\big)\right\}+\frac{4t}{3}|z-1|^{\frac{3}{2}}\cos\left(\frac{3}{2}\textnormal{arg}(z-1)\right)\right]\nonumber\\
	&\hspace{2.3cm}\leq c_1\exp\left[-\frac{v}{3}+c_2v\hat{r}_t-c_3t\hat{r}_t|z-1|^{\frac{3}{2}}\right]\leq c_4\,\e^{-c_5t^{\frac{2}{3}}},\ \ \ \ c_j>0\label{inf1}
\end{align}
and 
\begin{align}
	z&\in(1+r_t,+\infty):\ \ \ \ \ t\geq t_0,\ \ 0\leq v<t^{\frac{1}{3}-\frac{\epsilon}{2}}\label{inf2}\\
	&\big|\e^{-t(\varkappa+2g(z))}(\mathcal{D}(z))^{-2}\big|=\exp\left[-\frac{v}{\pi}\arctan\left(\frac{2\sqrt{r_t}}{1-r_t}\right)-\frac{4t}{3}r_t^{\frac{3}{2}}\right]\leq c_1\e^{-c_2tr_t^{\frac{3}{2}}-c_3v\sqrt{r_t}}\leq c_4\e^{-c_5t^{\frac{3\epsilon}{2}}}, \ c_j>0.\nonumber
\end{align}
We summarize \eqref{inf1} and \eqref{inf2}.
\begin{prop}\label{shrink3} For every fixed $0<\epsilon<\frac{2}{3}$, there exist $t_0>0$ and $c_j>0$ such that
\begin{equation*}
	\|{\bf G}_R(\cdot;s,\gamma)-{\bf I}\|_{L^2\cap L^{\infty}(R_{\infty})}\leq c_1 t^{\frac{1}{3}}\exp\left[-c_2\,t^{\min\{\frac{2}{3},\frac{3\epsilon}{2}\}}\right]\text{ for all }t\geq t_0\text{ and }0\leq v<t^{\frac{1}{3}-\frac{\epsilon}{2}},
\end{equation*}
where $R_{\infty}$ denotes the three jump contours in RHP \ref{ratio1} that extend to infinity.
\end{prop}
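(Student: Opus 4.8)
The plan is to read the bound off directly from the pointwise estimates \eqref{inf1}--\eqref{inf2}, which already contain all the analytic input; what remains is bookkeeping: (i) reduce $\|{\bf G}_R(\cdot;s,\gamma)-{\bf I}\|$ on the three infinite rays to the scalar quantities controlled there, (ii) absorb the diagonal conjugation by ${\bf P}^{(\infty)}$ into an at-worst-$t^{1/3}$ polynomial factor, and (iii) integrate along arc length to upgrade the resulting $L^{\infty}$ bound to an $L^{2}$ bound.

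First I would recall that $R_{\infty}$ consists of the truncated rays $(\Gamma_2\cup\Gamma_4)\cap\{|z|>\hat{r}_t\}$, $\hat r_t=t^{-1/3}$, together with $(1+r_t,\infty)$, $r_t=t^{-2/3+\epsilon}$, and that on the former ${\bf G}_R={\bf P}^{(\infty)}(z)\bigl(\begin{smallmatrix}1&0\\ \e^{2tg(z)}&1\end{smallmatrix}\bigr)\bigl({\bf P}^{(\infty)}(z)\bigr)^{-1}$ while on the latter ${\bf G}_R={\bf P}^{(\infty)}(z)\bigl(\begin{smallmatrix}1&\e^{-t(\varkappa+2g(z))}\\ 0&1\end{smallmatrix}\bigr)\bigl({\bf P}^{(\infty)}(z)\bigr)^{-1}$. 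Writing ${\bf P}^{(\infty)}(z)=\bigl(|s|(z-1)\bigr)^{-\frac14\sigma_3}{\bf N}\,\bigl(\mathcal{D}(z)\bigr)^{-\sigma_3}$, cf.\ \eqref{o:1} and \eqref{l:00}, the inner conjugation by $\mathcal{D}(z)^{-\sigma_3}$ converts the off-diagonal entry of the triangular factor into $\bigl(\mathcal{D}(z)\bigr)^{\pm2}$ times the exponential, that is, into exactly the scalars bounded by $c_4\e^{-c_5 t^{2/3}}$ in \eqref{inf1} and by $c_4\e^{-c_5 t^{3\epsilon/2}}$ in \eqref{inf2} throughout the window \eqref{scale}, while conjugation by the constant ${\bf N}$ costs only a fixed factor. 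The outer conjugation by $\bigl(|s|(z-1)\bigr)^{-\frac14\sigma_3}$ multiplies that entry by $\bigl(|s|(z-1)\bigr)^{\mp\frac12}$: on $(1+r_t,\infty)$ this is $\bigl(|s|(z-1)\bigr)^{-\frac12}\le1$, and on $\Gamma_2\cup\Gamma_4$ it is $t^{\frac13}|z-1|^{\frac12}$, which, combined with the sharper pointwise bound $c_1\e^{-c_3 t^{2/3}|z-1|^{3/2}}$ from \eqref{inf1} together with $|z-1|\ge1$ there and the monotonicity of $x\mapsto x^{1/3}\e^{-ax}$ on $[1,\infty)$ for large $a$, stays $\le c\,t^{1/3}\e^{-c_3 t^{2/3}}$. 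This already yields the asserted $L^{\infty}$ bound.

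For the $L^{2}$ bound I would integrate the same pointwise estimates along arc length. On $\Gamma_2\cup\Gamma_4$, parametrizing by $\rho=|z|\in[\hat r_t,\infty)$ and using $|z-1|^2=\rho^2+\rho+1$, the integral $\int_{\hat r_t}^{\infty}\bigl(1+|s||z-1|\bigr)\bigl|(\mathcal{D}(z))^2\e^{2tg(z)}\bigr|^2\,\d\rho$ is, after the substitution $y=(\rho^2+\rho+1)^{3/4}$, bounded by $c\,t^{2/3}\e^{-c' t^{2/3}}$, so the $L^{2}$ norm there is $\le c\,t^{1/3}\e^{-c' t^{2/3}}$; on $(1+r_t,\infty)$, with $u=z-1\in[r_t,\infty)$ and $u^{3/2}\ge r_t^{1/2}u$, \eqref{inf2} gives $\int_{r_t}^{\infty}\bigl|(\mathcal{D}(z))^{-2}\e^{-t(\varkappa+2g(z))}\bigr|^2\,\d u\le c\,t^{-\alpha}\e^{-c' t^{3\epsilon/2}}$ for some $\alpha>0$, hence an $L^{2}$ norm $\le c\,\e^{-c' t^{3\epsilon/2}}$ there. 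Adding the three contributions and keeping the slower of the two exponential rates produces $c_1 t^{1/3}\e^{-c_2 t^{\min\{2/3,3\epsilon/2\}}}$.

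The only delicate point — and the reason for the precise choices $\hat r_t=t^{-1/3}$ and $r_t=t^{-2/3+\epsilon}$ — is that $g(0)=\pm\tfrac{2\im}{3}$ is purely imaginary, so $\e^{2tg(z)}$ does \emph{not} decay at $z=0$; one needs $\Re g(z)$ to be negative of size $\gtrsim\hat r_t$ already at $|z|=\hat r_t$, so that $t\hat r_t=t^{2/3}$ (and likewise $t r_t^{3/2}=t^{3\epsilon/2}$ on $(1+r_t,\infty)$) diverges and dominates every polynomial-in-$t$ loss from the conjugation and the integration, while $\e^{-t\varkappa}=1-\gamma\le1$ on $(1+r_t,\infty)$ is harmless. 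This quantitative control of $\Re g$ near the origin is exactly what is packaged into \eqref{inf1}--\eqref{inf2}, and its compatibility with \eqref{scale} is immediate, since $v\hat r_t<t^{-\epsilon/2}\downarrow0$ and $v r_t<t^{-1/3+\epsilon/2}\downarrow0$ for $0<\epsilon<\tfrac23$.
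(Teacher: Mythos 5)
Your overall strategy matches the paper's, which presents only the scalar estimates \eqref{inf1}--\eqref{inf2} and then states Proposition~\ref{shrink3} as a ``summary''; you are filling in precisely the bookkeeping the paper leaves implicit, and the arc-length integration for the $L^2$ norm and the monotonicity argument on $\Gamma_2\cup\Gamma_4$ are fine.

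There is, however, one step that as written would fail inspection. You assert that on $(1+r_t,\infty)$ the outer conjugation by $\bigl(|s|(z-1)\bigr)^{-\frac14\sigma_3}$ produces only the harmless factor $\bigl(|s|(z-1)\bigr)^{-\frac12}\leq 1$. That would be the case if the conjugated jump were still upper-triangular, but the intermediate conjugation by the \emph{non-diagonal} matrix ${\bf N}$ turns the rank-one nilpotent $\bigl(\begin{smallmatrix}0&\e^{-t(\varkappa+2g)}\\0&0\end{smallmatrix}\bigr)$ into a matrix with all four entries nonzero (indeed ${\bf N}\bigl(\begin{smallmatrix}0&1\\0&0\end{smallmatrix}\bigr){\bf N}^{-1}=\frac{1}{2}\bigl(\begin{smallmatrix}-\im&-\im\\ \im&\im\end{smallmatrix}\bigr)$), so the outer conjugation also yields a $21$-entry carrying the factor $\bigl(|s|(z-1)\bigr)^{+\frac12}$, exactly as on $\Gamma_2\cup\Gamma_4$. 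On $(1+r_t,\infty)$ this factor is as large as $t^{\epsilon/2}$ (attained at $z=1+r_t$, where $|s|(z-1)=t^{\epsilon}$), and since the maximum of $u^{1/2}\e^{-\frac{4t}{3}u^{3/2}}$ over $u\geq r_t$ sits at $u=r_t$, the correct $L^\infty$ bound on that branch is $\mathcal{O}\bigl(t^{\epsilon/2}\e^{-c\,t^{3\epsilon/2}}\bigr)$ rather than $\mathcal{O}\bigl(\e^{-c\,t^{3\epsilon/2}}\bigr)$. Fortunately $\epsilon/2<\frac{1}{3}$ for every $\epsilon\in(0,\frac{2}{3})$, so $t^{\epsilon/2}\leq t^{1/3}$ and the stated bound of Proposition~\ref{shrink3} survives, but you should correct this so the argument is airtight.
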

Now we are left with the finite lens boundaries
\begin{align*}
	z&\in\gamma^{\pm}\cap\big\{z\in\mathbb{C}:\ |z|>\hat{r}_t,\ |z-1|>r_t\big\}:\ \ \ \ \ \ t\geq t_0,\ \ 0\leq v<t^{\frac{1}{3}-\frac{\epsilon}{2}}\\
	& \big|(\mathcal{D}(z))^2\e^{t(\varkappa+\phi(z))}\big|=\e^v\exp\left[\frac{v}{\pi}\textnormal{arg}\left(\frac{(z-1)^{\frac{1}{2}}-\im}{(z-1)^{\frac{1}{2}}+\im}\right)+\frac{4t}{3}|z-1|^{\frac{3}{2}}\cos\left(\frac{3}{2}\textnormal{arg}(z-1)\right)\right].
\end{align*}
For any point $z$ on the two boundaries that are away from the four endpoints, we obtain at once that
\begin{equation*}
	\big|(\mathcal{D}(z))^2\e^{t(\varkappa+\phi(z))}\big|\leq c_1\e^{c_2v-c_3t}\leq c_4\e^{-c_5t},\ \ t\geq t_0,\ \ 0\leq v<t^{\frac{1}{3}-\frac{\epsilon}{2}}.
\end{equation*}
At the endpoints, i.e. for $z\in\gamma^{\pm}\cap(\partial\mathbb{D}_{r_t}(1)\cup\partial\mathbb{D}_{\hat{r}_t}(0))$, we obtain quantitatively different estimates, namely,
\begin{equation*}
	z\in\gamma^{\pm}\cap\partial\mathbb{D}_{r_t}(1):\ \ \ \big|(\mathcal{D}(z))^2\e^{t(\varkappa+\phi(z))}\big|\leq c_1\e^{-c_2t^{\frac{3\epsilon}{2}}},\ \ \ t\geq t_0,\ \ 0\leq v<t^{\frac{1}{3}-\frac{\epsilon}{2}}
\end{equation*}
and
\begin{equation*}
	z\in\gamma^{\pm}\cap\partial\mathbb{D}_{\hat{r}_t}(0):\ \ \ \big|(\mathcal{D}(z))^2\e^{t(\varkappa+\phi(z))}\big|\leq c_1\e^{c_2v-c_3t^{\frac{5}{6}}}\leq c_4\e^{-c_5t^{\frac{5}{6}}},\ \ \ t\geq t_0,\ \ 0\leq v<t^{\frac{1}{3}-\frac{\epsilon}{2}}.
\end{equation*}
All together we obtain the next proposition.
\begin{prop}\label{shrink4} For every fixed $0<\epsilon<\frac{2}{3}$, there exist $t_0>0$ and $c_j>0$ such that
\begin{equation*}
	\|{\bf G}_R(\cdot;s,\gamma)-{\bf I}\|_{L^2\cap L^{\infty}(R_{\textnormal{lens}})}\leq c_1 t^{\frac{1}{3}}\exp\left[-c_2\,t^{\min\{\frac{5}{6},\frac{3\epsilon}{2}\}}\right]\text{ for all }t\geq t_0,\ \ 0\leq v<t^{\frac{1}{3}-\frac{\epsilon}{2}},
\end{equation*}
where $R_{\textnormal{lens}}$ denotes the two finite lens boundaries in RHP \ref{ratio1}.
\end{prop}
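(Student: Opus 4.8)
The plan is to reduce Proposition~\ref{shrink4} to the three pointwise bounds for $\big|(\mathcal{D}(z))^2\e^{t(\varkappa+\phi(z))}\big|$ that have already been established in the display preceding its statement: one for the bulk of $\gamma^{\pm}$ (order $\e^{-ct}$), one for $\gamma^{\pm}\cap\partial\mathbb{D}_{r_t}(1)$ (order $\e^{-ct^{3\epsilon/2}}$), and one for $\gamma^{\pm}\cap\partial\mathbb{D}_{\hat r_t}(0)$ (order $\e^{-ct^{5/6}}$). The remaining work is the routine conversion of these pointwise data into $L^2\cap L^{\infty}$ estimates of the jump matrix on $R_{\textnormal{lens}}$.

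First I would make explicit the single scalar that governs the size of ${\bf G}_R(z;s,\gamma)-{\bf I}$ on the finite lens boundaries. Substituting \eqref{o:1} into the jump formula ${\bf G}_R(z;s,\gamma)={\bf P}^{(\infty)}(z)\bigl(\begin{smallmatrix}1&0\\ \e^{t(\phi(z)+\varkappa)}&1\end{smallmatrix}\bigr)\big({\bf P}^{(\infty)}(z)\big)^{-1}$ from RHP~\ref{ratio1} and pushing the one off-diagonal factor through $(\mathcal{D}(z))^{-\sigma_3}$, through the constant matrix ${\bf N}$, and finally through $(|s|(z-1))^{-\frac14\sigma_3}$, one arrives at the closed form
\begin{equation*}
{\bf G}_R(z;s,\gamma)-{\bf I}=\frac{\im}{2}\,(\mathcal{D}(z))^2\e^{t(\varkappa+\phi(z))}\begin{pmatrix}1 & -\big(|s|(z-1)\big)^{-\frac12}\\ \big(|s|(z-1)\big)^{\frac12} & -1\end{pmatrix},\qquad z\in\gamma^{\pm}.
\end{equation*}
Since $|s|=t^{\frac23}$ and $|z-1|$ stays bounded above by a fixed constant on $R_{\textnormal{lens}}$ (and on $\gamma^{\pm}\cap\partial\mathbb{D}_{r_t}(1)$ one even has $(|s|\,|z-1|)^{\pm\frac12}=t^{\pm\epsilon/2}$), this yields the uniform bound $\|{\bf G}_R(z;s,\gamma)-{\bf I}\|\le c\,t^{\frac13}\big|(\mathcal{D}(z))^2\e^{t(\varkappa+\phi(z))}\big|$ for $z\in R_{\textnormal{lens}}$.

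Next I would split $R_{\textnormal{lens}}$ along the two circles $\partial\mathbb{D}_{r_t}(1)$ and $\partial\mathbb{D}_{\hat r_t}(0)$ and feed in the three pointwise estimates recalled above; retaining only the slowest of the three decay rates (the factor $\e^{-ct}$ being negligible against the others) gives
\begin{equation*}
\|{\bf G}_R(\cdot;s,\gamma)-{\bf I}\|_{L^{\infty}(R_{\textnormal{lens}})}\le c_1\,t^{\frac13}\exp\big[-c_2\,t^{\min\{5/6,\,3\epsilon/2\}}\big],\qquad t\ge t_0,\ \ 0\le v<t^{\frac13-\frac\epsilon2}.
\end{equation*}

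Finally I would note that $R_{\textnormal{lens}}$ is made up of two arcs of the fixed lens contour whose total arclength remains $\mathcal{O}(1)$ as $t\to\infty$ (only their endpoints recede toward $z=0$ and $z=1$), so that $\|\,\cdot\,\|_{L^2(R_{\textnormal{lens}})}\le|R_{\textnormal{lens}}|^{\frac12}\,\|\,\cdot\,\|_{L^{\infty}(R_{\textnormal{lens}})}$ with $|R_{\textnormal{lens}}|^{\frac12}=\mathcal{O}(1)$; absorbing constants then produces the stated $L^2\cap L^{\infty}$ bound. I expect essentially no difficulty in this assembly. The genuine work sits in the endpoint estimates displayed just before the proposition, where the growth $\e^{t\varkappa}=\e^{v}$, the Szeg\H{o} factor $(\mathcal{D}(z))^2$, and the sub-exponentially small $\e^{t\Re\phi(z)}$ all compete on the shrinking circles, and the choices $r_t=t^{-\frac23+\epsilon}$, $\hat r_t=t^{-\frac13}$ together with the scaling restriction $v<t^{\frac13-\frac\epsilon2}$ are precisely what render the $v$-dependent factors $\mathcal{O}(1)$ while keeping a net exponential gain; once those are in hand, the present argument is pure bookkeeping.
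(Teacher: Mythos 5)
Your proposal is correct and is essentially the paper's own argument, which is left implicit under the phrase ``All together we obtain the next proposition.'' The conjugation computation you make explicit,
\begin{equation*}
{\bf G}_R(z;s,\gamma)-{\bf I}=\frac{\im}{2}\,(\mathcal{D}(z))^2\e^{t(\varkappa+\phi(z))}\begin{pmatrix}1 & -\big(|s|(z-1)\big)^{-\frac12}\\ \big(|s|(z-1)\big)^{\frac12} & -1\end{pmatrix},\qquad z\in\gamma^{\pm},
\end{equation*}
is exactly right, and on $R_{\textnormal{lens}}$ one has $t^{\epsilon}\le |s|\,|z-1|\lesssim t^{\frac23}$, so $(|s|(z-1))^{\pm\frac12}=\mathcal{O}(t^{\frac13})$, which accounts for the $t^{\frac13}$ prefactor in the stated bound. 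Feeding in the three pointwise estimates displayed just before the proposition, keeping the weakest decay rate $\exp[-c\,t^{\min\{5/6,3\epsilon/2\}}]$, and using that $|R_{\textnormal{lens}}|=\mathcal{O}(1)$ for the passage from $L^\infty$ to $L^2$, reproduces the claim; your assembly is the intended one.
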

The content of Propositions \ref{shrink1}--\ref{shrink4} in summary allows us to solve the corresponding ratio RHP \ref{ratio1} iteratively.
\subsection{Iterative solution and expansion of \texorpdfstring{$F_2(s,\gamma)$}{F2}}
As indicated before, we are working with the ratio function
\begin{equation*}
	{\bf R}(z)=\begin{pmatrix}
	1 & 0\\
	-2\im\nu|s|^{\frac{1}{2}} & 1
	\end{pmatrix}{\bf L}(z)\begin{cases}
	\big({\bf P}^{(0)}(z)\big)^{-1},&z\in\mathbb{D}_{\hat{r}_t}(0)\\
	\big({\bf P}^{(1)}(z)\big)^{-1},&z\in\mathbb{D}_{r_t}(1)\\
	\big({\bf P}^{(\infty)}(z)\big)^{-1},&z\notin\big(\overline{\mathbb{D}_{\hat{r}_t}(0)}\cup\overline{\mathbb{D}_{r_t}(1)}\big)
	\end{cases}
\end{equation*}
in which $0<\hat{r}_t=t^{-\frac{1}{3}}$ and $0<r_t=t^{-\frac{2}{3}+\epsilon}$ with fixed $\epsilon\in(0,\frac{2}{3})$ both depend on $t$. Recalling our estimates in the previous subsection, we already have the following result.
\begin{prop} 
For every fixed $0<\epsilon<\frac{2}{3}$, there exist $t_0>0$ and $c_j>0$ such 
that
\begin{equation*}
	\Big\||s|^{\frac{1}{4}\sigma_3}\big({\bf G}_R(\cdot;s,\gamma)-{\bf I}\big)|s|^{-\frac{1}{4}\sigma_3}\Big\|_{L^2\cap L^{\infty}(\Sigma_R)}\leq c_1v^2\,t^{-\frac{2}{3}}+c_2t^{-\frac{1}{3}-\epsilon}\ \ \ \ \ \forall\,t\geq t_0,\ \ \ 0\leq v<t^{\frac{1}{3}-\frac{\epsilon}{2}}.
\end{equation*}
\end{prop}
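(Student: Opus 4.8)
The plan is to obtain the Proposition as a direct synthesis of Propositions~\ref{shrink1}--\ref{shrink4}, using that the jump contour decomposes as $\Sigma_R=\partial\mathbb{D}_{r_t}(1)\cup\partial\mathbb{D}_{\hat r_t}(0)\cup R_\infty\cup R_{\textnormal{lens}}$, so that $\|\cdot\|_{L^2\cap L^\infty(\Sigma_R)}$ is bounded by the sum of the contributions from these four pieces. First I would dispose of the two ``easy'' pieces. On $\partial\mathbb{D}_{\hat r_t}(0)$ the estimate for the \emph{conjugated} quantity is exactly Proposition~\ref{shrink2}, which supplies the term $c_1v^2t^{-\frac{2}{3}}$. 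On $R_\infty\cup R_{\textnormal{lens}}$, Propositions~\ref{shrink3} and~\ref{shrink4} give $\|{\bf G}_R-{\bf I}\|_{L^2\cap L^\infty}=\mathcal{O}\big(t^{\frac{1}{3}}\exp[-c\,t^{\min\{2/3,5/6,3\epsilon/2\}}]\big)$, and since conjugation by $|s|^{\frac14\sigma_3}$ multiplies off-diagonal entries by at most $|s|^{\pm1/2}=t^{\pm1/3}$, the conjugated quantity remains superpolynomially small there and is absorbed into $c_2t^{-\frac{1}{3}-\epsilon}$ for $t\geq t_0$.

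The remaining and most delicate contribution is $\partial\mathbb{D}_{r_t}(1)$, which is where the term $c_2t^{-\frac{1}{3}-\epsilon}$ is really produced. Proposition~\ref{shrink1} gives the \emph{unconjugated} bound $\|{\bf G}_R-{\bf I}\|_{L^2\cap L^\infty(\partial\mathbb{D}_{r_t}(1))}\leq c\,t^{-\epsilon}$, so the task is to track how conjugation by $|s|^{\frac14\sigma_3}$ degrades this. I would return to the explicit formula~\eqref{ext:1}, absorb $|s|^{\frac14\sigma_3}$ into the $(|s|(z-1))^{-\frac14\sigma_3}$ prefactor hidden in ${\bf P}^{(\infty)}(z)$ (see~\eqref{o:1}), and thereby rewrite the conjugated jump on the circle as $(z-1)^{-\frac14\sigma_3}{\bf N}\big(\mathcal{D}(z)\e^{\frac{t}{2}\varkappa}\big)^{-\sigma_3}\big\{\frac{1}{48\z^{\frac{3}{2}}(z)}\bigl(\begin{smallmatrix}1&6\im\\6\im&-1\end{smallmatrix}\bigr)+\mathcal{E}^{(1)}(z;s)\big\}\big(\mathcal{D}(z)\e^{\frac{t}{2}\varkappa}\big)^{\sigma_3}{\bf N}^{-1}(z-1)^{\frac14\sigma_3}$, and then estimate factor by factor on $\partial\mathbb{D}_{r_t}(1)$: $|(z-1)^{\pm1/4}|\leq c\,r_t^{\pm1/4}$; $\big|\mathcal{D}(z)\e^{\frac{t}{2}\varkappa}\big|^{\pm1}=\mathcal{O}(1)$ throughout the scaling region~\eqref{scale} (this is precisely the bound exploited in the proof of Proposition~\ref{shrink1}); $|\z^{-\frac{3}{2}}(z)|\leq c\,(t\,r_t^{3/2})^{-1}$; and finally substitute $r_t=t^{-2/3+\epsilon}$. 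The $L^2$ half of the norm is even easier, since the circumference $|\partial\mathbb{D}_{r_t}(1)|=2\pi r_t\to0$ as $t\to\infty$.

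I expect the main obstacle to be precisely this last bookkeeping step: keeping track of how the square-root blow-up of $(z-1)^{-1/4}$ at the band endpoint $z=1$, the subexponential decay of the Airy tail $\z^{-3/2}$, and the uniformly bounded Szeg\H{o} factor $\mathcal{D}(z)\e^{\frac{t}{2}\varkappa}$ balance against one another to yield the exponent $-\frac13-\epsilon$ rather than the naive $-\epsilon$, while retaining uniformity over the full range $0\leq v<t^{1/3-\epsilon/2}$. By contrast, the genuinely novel model-problem analysis --- the shrinking-disc matching at the origin built from the confluent-hypergeometric parametrix --- has already been carried out in Proposition~\ref{shrink2} and the surrounding discussion, so no new local analysis is needed here; the Proposition then follows by collecting the four estimates and taking the maximum.
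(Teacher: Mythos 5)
Your decomposition of $\Sigma_R$ and dispatch of $\partial\mathbb{D}_{\hat{r}_t}(0)$, $R_\infty$, and $R_{\textnormal{lens}}$ via Propositions~\ref{shrink2}--\ref{shrink4} is exactly the paper's (implicit) route, and those pieces are fine. The gap is in the only genuinely delicate piece, $\partial\mathbb{D}_{r_t}(1)$, where the bookkeeping you outline does not in fact produce the exponent $-\tfrac{1}{3}-\epsilon$.

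Concretely, write $\z=\z(z)=|s|(z-1)$ and absorb $|s|^{\frac14\sigma_3}$ into ${\bf P}^{(\infty)}$ as you propose, so that on $\partial\mathbb{D}_{r_t}(1)$
\begin{equation*}
|s|^{\frac14\sigma_3}\big({\bf G}_R(z)-{\bf I}\big)|s|^{-\frac14\sigma_3}=(z-1)^{-\frac14\sigma_3}\,{\bf N}\,\big(\mathcal{D}(z)\e^{\frac{t}{2}\varkappa}\big)^{-\sigma_3}\Big\{\tfrac{1}{48\z^{3/2}}\bigl(\begin{smallmatrix}1&6\im\\6\im&-1\end{smallmatrix}\bigr)+\mathcal{E}^{(1)}\Big\}\big(\mathcal{D}(z)\e^{\frac{t}{2}\varkappa}\big)^{\sigma_3}{\bf N}^{-1}(z-1)^{\frac14\sigma_3}.
\end{equation*}
The inner matrix (after conjugation by the $\mathcal{O}(1)$ factors ${\bf N}$ and $\mathcal{D}\e^{\frac{t}{2}\varkappa}$) has all entries of size $\mathcal{O}(\z^{-3/2})$, and its $(1,2)$ entry is \emph{not} zero. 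The outer conjugation by $(z-1)^{-\frac14\sigma_3}$ then multiplies the $(1,2)$ entry by $(z-1)^{-1/2}$ --- not $(z-1)^{-1/4}$ as you wrote; the $\pm\frac14$ powers occur on both sides and combine. Using $\z^{-3/2}=t^{-1}(z-1)^{-3/2}$ and $|z-1|=r_t=t^{-\frac23+\epsilon}$, the $(1,2)$ entry of the conjugated jump on $\partial\mathbb{D}_{r_t}(1)$ is therefore of order
\begin{equation*}
(z-1)^{-\frac12}\,\z^{-\frac32}\ \sim\ t^{-1}\,r_t^{-2}\ =\ t^{\frac13-2\epsilon},
\end{equation*}
which for $\epsilon<\tfrac16$ is not even bounded, let alone $\mathcal{O}(t^{-\frac13-\epsilon})$. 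The exponent $-\tfrac13-\epsilon$ you are aiming for is what the $(2,1)$ entry yields (which is multiplied by $(z-1)^{+1/2}$), but the $L^2\cap L^\infty$ norm is governed by the worst entry. So the ``factor-by-factor'' plan as written does not close, and some additional ingredient --- e.g.\ absorbing the $\mathcal{O}(\z^{-3/2})$ Airy correction into a modified local parametrix to suppress the offending $(1,2)$ term, or restricting the admissible range of $\epsilon$ --- is needed before the stated bound on $\partial\mathbb{D}_{r_t}(1)$ follows. You should be explicit that this is where the argument genuinely requires more than collecting Propositions~\ref{shrink1}--\ref{shrink4}.
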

We can now refer to general theory \cite{DZ} (see also \cite{BK} for the required modifications when working with contracting disks) and obtain the following result for the function (see also \eqref{Rfact})
\begin{equation*}
	\widehat{{\bf R}}(z)=|s|^{\frac{1}{4}\sigma_3}{\bf R}(z)|s|^{-\frac{1}{4}\sigma_3},\ \ z\in\mathbb{C}\backslash\Sigma_R;\ \ \ \ \ \ \ \widehat{{\bf R}}_+(z)=\widehat{{\bf R}}_-(z)|s|^{\frac{1}{4}\sigma_3}{\bf G}_R(z;s,\gamma)|s|^{-\frac{1}{4}\sigma_3},\ \ z\in\Sigma_R.
\end{equation*}
\begin{theo}
\label{improvedDZ} 
For every fixed $0<\epsilon<\frac{1}{3}$, there exist $t_0>0$ and $c_j>0$ 
such that the RHP for $\widehat{{\bf R}}(z;s,\gamma)$ is uniquely solvable for all $t\geq t_0$ and $0\leq v<t^{\frac{1}{3}-\epsilon}$. We can compute its solution iteratively through the integral equation
\begin{equation*}
	\widehat{{\bf R}}(z)={\bf I}+\frac{1}{2\pi\im}\int_{\Sigma_R}\widehat{{\bf R}}_-(\lambda)|s|^{\frac{1}{4}\sigma_3}\big({\bf G}_R(\lambda)-{\bf I}\big)|s|^{-\frac{1}{4}\sigma_3}\frac{\d\lambda}{\lambda-z},\ \ \lambda\in\mathbb{C}\backslash\Sigma_R,
\end{equation*}
using that
\begin{equation*}
	\|\widehat{{\bf R}}_-(\cdot;s,\gamma)-{\bf I}\|_{L^2(\Sigma_R)}\leq c_1v^2t^{-\frac{2}{3}}+c_2t^{-\frac{1}{3}-2\epsilon}\ \ \ \ \forall\,t\geq t_0,\ \ 0\leq v<t^{\frac{1}{3}-\epsilon}.
\end{equation*}
\end{theo}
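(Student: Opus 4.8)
The plan is to run the standard small-norm Riemann-Hilbert argument of \cite{DZ}, incorporating the modifications of \cite{BK} that are needed because the disks $\partial\mathbb{D}_{\hat{r}_t}(0)$ and $\partial\mathbb{D}_{r_t}(1)$ now shrink as $t\to\infty$. First I would write $\widehat{{\bf G}}_R(z):=|s|^{\frac{1}{4}\sigma_3}{\bf G}_R(z;s,\gamma)|s|^{-\frac{1}{4}\sigma_3}$ and observe, exactly as in the derivation of RHP \ref{ratio1}, that the RHP for $\widehat{{\bf R}}(z)$ is equivalent to the singular integral equation
\begin{equation*}
	\big(1-C_{\widehat{{\bf G}}_R-{\bf I}}\big)\big(\widehat{{\bf R}}_--{\bf I}\big)=C_-\big(\widehat{{\bf G}}_R-{\bf I}\big)\quad\text{on }L^2(\Sigma_R),
\end{equation*}
where $C_\pm$ are the boundary values of the Cauchy transform on $\Sigma_R$ and $C_{\widehat{{\bf G}}_R-{\bf I}}f:=C_-\big(f(\widehat{{\bf G}}_R-{\bf I})\big)$; conversely, any $L^2$ solution of this equation reproduces $\widehat{{\bf R}}$ off the contour through the stated Cauchy integral, and one checks it then satisfies conditions (1)--(3) of the ratio problem.

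The key preliminary fact to nail down is that $\|C_-\|_{L^2(\Sigma_R)\to L^2(\Sigma_R)}$ stays bounded as $t\to\infty$, notwithstanding the vanishing radii $\hat{r}_t=t^{-\frac{1}{3}}$ and $r_t=t^{-\frac{2}{3}+\epsilon}$. This is precisely the point addressed in \cite{BK}: one applies the local affine rescalings $z\mapsto z/\hat{r}_t$ near $z=0$ and $z\mapsto(z-1)/r_t$ near $z=1$, under which the Cauchy operator on the rescaled circle of radius $1$ is the standard bounded one, and one verifies that the rescaled \emph{global} contour remains a finite union of smooth oriented arcs meeting at finitely many points with opening angles uniformly bounded away from $0$ and $\pi$; this yields $\|C_-\|_{L^2(\Sigma_R)}\leq c$ uniformly in $t\geq t_0$. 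I expect this to be the main obstacle — everything else is bookkeeping — although it is a fairly routine adaptation once the contracting-disk geometry of Section \ref{sec:extension} is in place.

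With that operator bound in hand I would combine it with the summary estimate $\|\widehat{{\bf G}}_R-{\bf I}\|_{L^2\cap L^{\infty}(\Sigma_R)}\leq c_1v^2t^{-\frac{2}{3}}+c_2t^{-\frac{1}{3}-\epsilon}$, which collects Propositions \ref{shrink1}--\ref{shrink4}, to get $\|C_{\widehat{{\bf G}}_R-{\bf I}}\|_{L^2\to L^2}\leq\|C_-\|\,\|\widehat{{\bf G}}_R-{\bf I}\|_{L^{\infty}}\to 0$ as $t\to\infty$, uniformly for $0\leq v<t^{\frac{1}{3}-\epsilon}$, since in that regime $v^2t^{-\frac{2}{3}}<t^{-2\epsilon}$ and $t^{-\frac{1}{3}-\epsilon}\to 0$. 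Hence for $t\geq t_0$ the operator $1-C_{\widehat{{\bf G}}_R-{\bf I}}$ is invertible on $L^2(\Sigma_R)$ by Neumann series, which gives a unique $\widehat{{\bf R}}_--{\bf I}\in L^2(\Sigma_R)$, produced by the advertised iteration, together with the geometric-series bound $\|\widehat{{\bf R}}_--{\bf I}\|_{L^2(\Sigma_R)}\leq\|C_-(\widehat{{\bf G}}_R-{\bf I})\|_{L^2}\big(1-\|C_{\widehat{{\bf G}}_R-{\bf I}}\|\big)^{-1}\leq c\,\|\widehat{{\bf G}}_R-{\bf I}\|_{L^2(\Sigma_R)}$.

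To pass from this to the claimed estimate with $t^{-\frac{1}{3}-2\epsilon}$, I would re-examine the $L^2$ contribution of each piece of $\Sigma_R$ separately under the sharper constraint $0\leq v<t^{\frac{1}{3}-\epsilon}$: on $\partial\mathbb{D}_{\hat{r}_t}(0)$ Proposition \ref{shrink2} gives $O(v^2t^{-\frac{2}{3}})$ over an arc of length $O(t^{-\frac{1}{3}})$, on $\partial\mathbb{D}_{r_t}(1)$ Proposition \ref{shrink1} gives $O(t^{-\epsilon})$, and the three infinite branches and two lens boundaries contribute exponentially small amounts by Propositions \ref{shrink3}--\ref{shrink4} (the $t^{\frac{1}{3}}$ prefactors there being harmless arc-length corrections). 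Tracking the surviving powers of $\epsilon$ and relabelling as needed yields the stated bound, and the restriction $\epsilon<\frac{1}{3}$ is exactly what keeps $t^{-\frac{1}{3}-2\epsilon}\to 0$ and the whole scheme self-consistent. Finally, uniqueness of $\widehat{{\bf R}}(z)$ follows from the usual Liouville argument using $\det\widehat{{\bf R}}\equiv 1$ and $\widehat{{\bf R}}(z)\to{\bf I}$ as $z\to\infty$.
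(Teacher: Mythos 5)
Your proposal is correct and follows exactly the route the paper intends: the paper itself gives no explicit proof at this step but simply invokes the general small-norm theory of \cite{DZ} together with the contracting-disk modification of \cite{BK}, and your filling-in of that machinery (singular integral equation with $C_-$, uniform $L^2$-boundedness of $C_-$ on the shrinking-disk contour via local affine rescaling and the uniform-angle condition, Neumann series inversion, Liouville uniqueness) is the standard and correct elaboration.

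One small caveat about the bookkeeping in your final paragraph. The exponent $t^{-\frac{1}{3}-2\epsilon}$ in the theorem should be read directly off the unnumbered summary Proposition preceding it --- which already bounds the \emph{conjugated} quantity $\big\||s|^{\frac{1}{4}\sigma_3}({\bf G}_R-{\bf I})|s|^{-\frac{1}{4}\sigma_3}\big\|_{L^2\cap L^\infty(\Sigma_R)}$ by $c_1v^2t^{-\frac{2}{3}}+c_2t^{-\frac{1}{3}-\epsilon}$ for $\epsilon\in(0,\frac{2}{3})$ and $v<t^{\frac{1}{3}-\frac{\epsilon}{2}}$ --- followed by the relabeling $\epsilon\mapsto 2\epsilon$ (whence $\epsilon<\frac{1}{3}$ and $v<t^{\frac{1}{3}-\epsilon}$). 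Citing Proposition \ref{shrink1} directly, which gives the \emph{unconjugated} $O(t^{-\epsilon})$ bound on $\partial\mathbb{D}_{r_t}(1)$, is not quite enough: the conjugation by $|s|^{\frac{1}{4}\sigma_3}$ does not act as a harmless arc-length correction there, since together with the $(|s|(z-1))^{-\frac{1}{4}\sigma_3}$ factor in ${\bf P}^{(\infty)}$ it leaves a residual $(z-1)^{\mp\frac{1}{2}}$ dressing of modulus $r_t^{\mp\frac{1}{2}}$ on the off-diagonal entries, genuinely shifting the $L^\infty$ exponent. So the clean path is the summary Proposition plus relabeling rather than re-deriving from the unconjugated pieces; your ``relabelling as needed'' gestures at this but conflates the two bounds.
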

In order to obtain the resulting expansion for $F_2(s,\gamma)$, we repeat and adjust the steps of Section \ref{sderivsec}. First, all steps leading to the exact identity \eqref{2nd} naturally carry over to the scaling regime \eqref{scale}. The influence of Theorem \ref{improvedDZ} manifests itself only in \eqref{1st}.  We now have
\begin{equation*}
	\frac{\gamma}{2\pi\im|s|}\e^{-2tg_+(0)}\Big(\big({\bf P}^{(0)}(\lambda)\big)^{-1}\big({\bf R}(0)\big)^{-1}{\bf R}'(0){\bf P}^{(0)}(\lambda)\Big)^{21}\bigg|_{\lambda\rightarrow 0}=\mathcal{O}\left(v^3|s|^{-\frac{5}{2}}\right)+\mathcal{O}\left(v|s|^{-2}\right),
\end{equation*}
and thus, after indefinite integration with respect to $s$,
\begin{equation}\label{panul}
	\ln F_2(s,\gamma)=-\frac{2v}{3\pi}|s|^{\frac{3}{2}}+\frac{3v^2}{8\pi^2}\ln|s|+C(v)+r(s,v),
\end{equation}
where the error term $r(s,v)$ is differentiable with respect to $s$ and for any $0<\epsilon<\frac{1}{3}$ there exist $t_0>0$ and $c_j>0$ such that
\begin{equation*}
	\big|r(s,v)\big|\leq c_1\frac{v^3}{|s|^{\frac{3}{2}}}+c_2\frac{v}{|s|}\ \ \ \ \text{ for all }t\geq t_0\text{ and }0\leq v<t^{\frac{1}{3}-\epsilon}.
\end{equation*}
The term $C(v)$ in \eqref{panul} is $s$-independent, so we can simply read it off from Theorem \ref{Eres:2} which, in turn, completes the proof of Theorem \ref{Eres:3}.

\section{Proof of the GOE and GSE expansions (Theorem \ref{OStheo})}
\label{OSsec}
We now determine the leading-order asymptotic expansion as $s\to-\infty$ of the thinned GOE and GSE Tracy-Widom distributions $F_1(s,\gamma)$ and $F_4(s,\gamma)$ for any fixed $\gamma\in[0,1)$. Note that Theorem \ref{OStheo} follows from Theorem \ref{Eres:3}, Proposition \ref{detformu}, and the following lemma.
\begin{lem}
For any fixed $\gamma\in[0,1)$, as $s\rightarrow-\infty$,
\eq
\label{as-partial-integral}
\int_s^\infty u_{_\textnormal{AS}}(t,\gamma)\,\d t = \frac{1}{2}\ln\left(\frac{1+\sqrt{\gamma}}{1-\sqrt{\gamma}}\right)+\mathcal{O}\left((-s)^{-\frac{3}{4}}\right).\endeq
\end{lem}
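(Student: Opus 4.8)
The plan is to split $\mu(s,\gamma)=\int_s^\infty u_{_\textnormal{AS}}(t,\gamma)\,\d t$ (see \eqref{mu-s-gamma}) as $\mu(s,\gamma)=\mu_\infty(\gamma)+\mathcal O((-s)^{-3/4})$ with an $s$--independent constant $\mu_\infty(\gamma):=\lim_{s\to-\infty}\mu(s,\gamma)$, so that the lemma reduces to the identification $\mu_\infty(\gamma)=\tfrac12\ln\tfrac{1+\sqrt\gamma}{1-\sqrt\gamma}$. For the reduction itself: by \eqref{mu-s-gamma}, $\partial_s\mu=-u_{_\textnormal{AS}}$, and for fixed $v=-\ln(1-\gamma)$ the Ablowitz--Segur transcendent has the rigorously established oscillatory expansion quoted in the introduction, $u_{_\textnormal{AS}}(t,\gamma)=(-t)^{-1/4}\sqrt{v/\pi}\,\cos\bigl(\tfrac23(-t)^{3/2}-\tfrac{v}{2\pi}\ln(8(-t)^{3/2})+\phi\bigr)+\mathcal O((-t)^{-1})$ as $t\to-\infty$ (see \cite{AblowitzS:1976,AS2}), the corrections being themselves oscillatory and absolutely integrable at $-\infty$. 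Since the stationary phase has derivative of magnitude $(-t)^{1/2}$, one integration by parts in $\int_{-\infty}^{s}u_{_\textnormal{AS}}(t,\gamma)\,\d t$ converts the amplitude $(-t)^{-1/4}$ into a boundary term of size $(-s)^{-3/4}$ plus an absolutely convergent remainder of the same order; hence this improper integral exists, $\mu_\infty(\gamma)$ exists, and $\mu(s,\gamma)-\mu_\infty(\gamma)=-\int_{-\infty}^{s}u_{_\textnormal{AS}}(t,\gamma)\,\d t=\mathcal O((-s)^{-3/4})$, uniformly on compact $\gamma$--subsets of $[0,1)$.

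To compute $\mu_\infty(\gamma)$ I would use the integrable--operator structure of the Airy kernel (cf.~\cite{IIKS} and Section~\ref{diffidentity}). By the Tracy--Widom representation \cite{TracyW:1994}, $u_{_\textnormal{AS}}(s,\gamma)$ equals, up to an explicit constant, $\sqrt\gamma\,\bigl((1-\gamma K_{\textnormal{Ai}})^{-1}\textnormal{Ai}\bigr)(s)$, which in the notation of Section~\ref{diffidentity} is a constant multiple of $\sqrt\gamma\,\widehat X^{11}(s)$, with $\widehat{\mathbf X}$ the regular part at $z=s$ from \eqref{Xsing}. Consequently $\mu(s,\gamma)=\sqrt\gamma\,\bigl\langle(1-\gamma K_{\textnormal{Ai}})^{-1}\textnormal{Ai},\,\mathbf 1\bigr\rangle_{L^2(s,\infty)}$ is a ``moment'' of the solution of RHP~\ref{master}, of the same type as the moment identities appearing in the derivation of Proposition~\ref{diff2} (combinations of $Y_1$, $Y_{1,\gamma}$, $\widehat{\mathbf X}(s)$, $\widehat{\mathbf K}(s)$). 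It is cleanest to then differentiate in $\gamma$ with $s$ fixed, express $\partial_\gamma\mu(s,\gamma)$ through these Riemann--Hilbert data, insert the nonlinear steepest descent asymptotics of Section~\ref{sec:rhp-analysis} (Theorem~\ref{mainDZ} together with the explicit outer and confluent--hypergeometric parametrices), and integrate back from $\gamma=0$ --- where $u_{_\textnormal{AS}}\equiv0$, hence $\mu\equiv0$ --- exactly the device used to pin down $D(\gamma)$ in Section~\ref{gamma-deriv-sec}. This is, in effect, an extension to $\gamma\in[0,1)$ of the total--integral computation for the Ablowitz--Segur function carried out in \cite{BaikBDI:2009}.

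The hard part will be the exact extraction of the $s$--independent constant: the steepest descent delivers $\mu(s,\gamma)$ as an $s$--independent term plus oscillatory pieces of order $\mathcal O((-s)^{-3/4})$ and smaller, and isolating the former requires carefully propagating all normalization constants, the branch cuts of the fractional powers and of the Szeg\H{o} function $\mathcal D(z)=\bigl(\tfrac{(z-1)^{1/2}-\im}{(z-1)^{1/2}+\im}\bigr)^{\nu}$ with $\nu=\tfrac{t\varkappa}{2\pi\im}=\tfrac{v}{2\pi\im}$ (cf.~\eqref{kapdef}), and the $\arg\Gamma$--factors entering the local parametrix, through the whole chain of transformations and back. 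The resulting constant must then be simplified to $\tfrac12\ln\tfrac{1+\sqrt\gamma}{1-\sqrt\gamma}=\tanh^{-1}\sqrt\gamma$ by elementary hyperbolic identities (e.g.\ $\cosh\bigl(\tfrac14\ln\tfrac{1+\sqrt\gamma}{1-\sqrt\gamma}\bigr)=\tfrac12\bigl((\tfrac{1+\sqrt\gamma}{1-\sqrt\gamma})^{1/4}+(\tfrac{1-\sqrt\gamma}{1+\sqrt\gamma})^{1/4}\bigr)$) together with the $\Gamma$/Barnes--$G$ manipulations already used in Section~\ref{gamma-deriv-sec}; a useful consistency check is that the $F_1$-- and $F_4$--expansions in Theorem~\ref{OStheo} then collapse, in the $\gamma\downarrow0$ limit, to the expected Weibull behavior. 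A minor technical point is justifying that the resolvent may be paired against the non--$L^2$ constant function $\mathbf 1$, which follows from the super--exponential decay of $(1-\gamma K_{\textnormal{Ai}})^{-1}\textnormal{Ai}$ at $+\infty$ and its oscillatory $(-t)^{-1/4}$ decay at $-\infty$; and we emphasize that one cannot shortcut the argument via the identities \eqref{th:1}--\eqref{th:2}, since Theorem~\ref{OStheo} is deduced \emph{from} the present lemma.
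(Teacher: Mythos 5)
Your proposal is in principle workable, but it takes a genuinely different and substantially longer route than the paper. The paper does not rederive the total integral; it cites \cite{BaikBDI:2009}, Theorem 2.1 for $\int_{-\infty}^\infty u_{_\textnormal{AS}}(t,\gamma)\,\d t=\tfrac12\ln\tfrac{1+\sqrt\gamma}{1-\sqrt\gamma}$ (with $s_1=-\im\sqrt\gamma$), and then proves the error rate by exploiting an \emph{exact} identity coming from the Flaschka--Newell Lax pair: writing $P(x)=\lim_{\lambda\to0}\Psi_2(\lambda;x)$ for the Lax-pair solution considered in \cite{BaikBDI:2009}, one has the closed formula $P^{21}(x)=\im\sinh\mu(x,\gamma)-\im\sqrt\gamma\cosh\mu(x,\gamma)$, and the RHP asymptotics of \cite{BaikBDI:2009} (their Eqs.\ (81), (82), (78), (84), (85)) upgrade $\lim_{x\to-\infty}P^{21}(x)=0$ to $P^{21}(x)=\mathcal O\bigl((-x)^{-3/4}\bigr)$. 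Because $\mu$ is uniformly bounded, this single transcendental estimate simultaneously pins down the constant (via $\tanh\mu\to\sqrt\gamma$, i.e.\ $\mu\to\tanh^{-1}\sqrt\gamma$) and the $\mathcal O\bigl((-x)^{-3/4}\bigr)$ rate. By contrast, you obtain the rate by one integration by parts on the oscillatory Ablowitz--Segur asymptotics, and then propose an independent and fresh computation of $\mu_\infty(\gamma)$ via the $\gamma$-differentiation / steepest-descent / integrate-back scheme of Section~\ref{gamma-deriv-sec}. That second part, while plausible, is a whole new Section~5-sized calculation and is exactly what citing \cite{BaikBDI:2009} allows the paper to avoid; the paper's use of the exact hyperbolic relation is the key idea your plan is missing.

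One small inaccuracy in your first step: you assert that the $\mathcal O\bigl((-t)^{-1}\bigr)$ corrections in the Ablowitz--Segur expansion are ``absolutely integrable'' at $-\infty$. They are not --- $(-t)^{-1}$ is not absolutely integrable. What saves you is that these corrections are themselves oscillatory with the same rapid phase, so a second integration by parts reduces them to $\mathcal O\bigl((-s)^{-3/2}\bigr)$, dominated by the $\mathcal O\bigl((-s)^{-3/4}\bigr)$ boundary term from the leading piece. You should say ``conditionally convergent by oscillation'' (and justify the oscillatory structure of the remainder, which is part of the content of the rigorous expansions in \cite{AS2} or the RHP analysis) rather than ``absolutely integrable''; otherwise the integral of the remainder could a priori contribute $\mathcal O(\ln(-s))$ and spoil convergence. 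Your observation that one cannot shortcut via \eqref{th:1}--\eqref{th:2} is correct, and the $\gamma\downarrow0$ consistency check is a good sanity test.
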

\begin{proof}
The total integral for the Ablowitz-Segur solution is known from \cite{BaikBDI:2009}, Theorem $2.1$ (with $s_1=-\im\sqrt{\gamma}$, compare \eqref{AS:behavior} and \cite{BaikBDI:2009}(27)) as
\eq
\label{total-integral}
\int_{-\infty}^\infty u_{_\textnormal{AS}}(t,\gamma)\,\d t = \frac{1}{2}\ln\left(\frac{1+\sqrt{\gamma}}{1-\sqrt{\gamma}}\right).
\endeq
A slight modification of the proof in \cite{BaikBDI:2009} yields the 
necessary result.  Equation \eqref{total-integral} was proven by analyzing 
the function $\Psi_2(\lambda;x)$ in \cite{BaikBDI:2009}, Equation (3) satisfying the Flaschka-Newell Lax pair 
for the Painlev\'e-II equation (corresponding to
$u_{_\textnormal{AS}}(x,\gamma)\sim\sqrt{\gamma}\text{Ai}(x)$ as $x\to+\infty$) for 
$\lambda\in\Omega_2$ (see \cite{BaikBDI:2009}, Figure $1$).  Here $\lambda$ is the auxiliary 
spectral variable and $x$ the Painlev\'e-II space variable.  As the 
$x$-equation in the underlying Lax pair (see \cite{BaikBDI:2009}, Equations (8) and (9)) simplifies when $\lambda=0$, one considers 
instead $P(x)=\lim_{\lambda\to 0}\Psi_2(\lambda;x)$, where the limit is taken 
in $\Omega_2$.  From \cite{BaikBDI:2009}, Equation (36), we find
\eq
\label{P21-left}
P^{21}(x) = \im\sinh\mu(x,\gamma)-\im\sqrt{\gamma}\cosh\mu(x,\gamma).
\endeq
Now \cite{BaikBDI:2009}, Equation (37) gives $\lim_{x\to-\infty}P^{21}(x)=0$.  
This can be strengthened to 
\eq
\label{P21-right}
P^{21}(x) = \mathcal{O}\left((-x)^{-\frac{3}{4}}\right) \text{ as } x\to-\infty
\endeq
by relating $\Psi_2(\lambda;x)$ to $\Psi_1(\lambda;x)$ in an adjacent 
$\lambda$-sector by the change of variables found in \cite{BaikBDI:2009}, 
Equation (37), and then computing the leading order asymptotics of 
$\Psi_2(0;x)$ via \cite{BaikBDI:2009}, Equations (81), (82), (78), (84), and 
(85).  Combining \eqref{P21-left} and \eqref{P21-right} shows 
\eq
\im\sinh\mu(x,\gamma)-\im\sqrt{\gamma}\cosh\mu(x,\gamma) = \mathcal{O}\left((-x)^{-\frac{3}{4}}\right) \text{ as } x\to-\infty.
\endeq
Taking into account that $\mu(x,\gamma)$ is uniformly 
bounded in $x$, the last expansion can be solved for $\mu(x,\gamma)$, and we obtain 
\eqref{as-partial-integral} upon replacing $x$ with $s$.
\end{proof}


\begin{thebibliography}{100}

\bibitem{AblowitzS:1976} M. Ablowitz and H. Segur, Asymptotic solutions of the Korteweg-de Vries equation, {\it Stud. Appl. Math.}
{\bf 57}, 13--44 (1976).

\bibitem{AS2} M. Ablowitz and H. Segur, Asymptotic solutions of nonlinear evolution equations and a Painlev\'e transcendent, {\it Physica D} {\bf 3}, 165--184 (1981).

\bibitem{BaikBD:2008} J. Baik, R. Buckingham, and J. DiFranco, Asymptotics of Tracy-Widom distributions and the total integral of a Painlev\'e II function, {\it Commun. Math. Phys.} {\bf 280}, 463--497 (2008).

\bibitem{BaikBDI:2009} J. Baik, R. Buckingham, J. DiFranco, and A. Its, Total integrals of Painlev\'e II solutions, 
{\it Nonlinearity} {\bf 22}, 1021--1061 (2009).

\bibitem{BaikDJ:1999}  J. Baik, P. Deift, and K. Johansson, On the distribution of the length of the longest increasing subsequence of random permutations, {\it J. Amer. Math. Soc.} {\bf 12}, 1119--1178 (1999).

\bibitem{BaikDJ:2000} J. Baik, P. Deift, and K. Johansson, On the distribution of the length of the second row of a Young diagram under Plancherel measure, {\it Geom. Funct. Anal.} {\bf 10}, 702--731 (2000).

\bibitem{BaikR:2001} J. Baik and E. Rains, Symmetrized random permutations.  Random matrix models and their applications,
{\it MSRI Volume 40} (eds. P. Bleher and A. Its), 1--19 (2001).

\bibitem{BasorW:1983} E. Basor and H. Widom, Toeplitz and Wiener-Hopf determinants with piecewise continuous symbols, {\it J. Funct. Anal.} {\bf 50}, 387--413 (1983).

\bibitem{BD} T. Berggren and M. Duits, Mesoscopic fluctuations for the thinned Circular Unitary Ensemble, preprint arXiv:1611.00991.

\bibitem{BK} P. Bleher and A. Kuijlaars, Large n limit of Gaussian random matrices with external source, part III: Double scaling limit, {\it Commun. Math. Phys.} {\bf 270}, 481--517 (2007).

\bibitem{BogatskiyCI:2016} A. Bogatskiy, T. Claeys, and A. Its, Hankel determinant and orthogonal polynomials for a Gaussian weight with a discontinuity at the edge, {\it Commun. Math. Phys.} {\bf 347}, 127--162 (2016).

\bibitem{BohigasCP:2009} O. Bohigas, J. de Carvalho, and M. Pato, Deformations of the Tracy-Widom distribution, {\it Phys. Rev. E} {\bf 79}, 031117 (2009).

\bibitem{BohigasP:2004} O. Bohigas and M. Pato, Missing levels in correlated spectra, {\it Phys. Lett. B} {\bf 595}, 171--176
(2004).

\bibitem{BohigasP:2006} O. Bohigas and M. Pato, Randomly incomplete spectra and intermediate statistics, {\it Phys. Rev. E} 
{\bf 74}, 036212 (2006).

\bibitem{BFM} F. Bornemann, P. Forrester, and A. Mays, Finite size effects for spacing distributions in random matrix theory: circular ensembles and Riemann zeros, {\it Stud. Appl. Math.}, DOI: 10.1111/sapm.12160.

\bibitem{BorotN:2012} G. Borot and C. Nadal, Right tail asymptotic expansion of Tracy-Widom beta laws, {\it Random Matrices Theory Appl.} {\bf 1}, 1250006 (2012).

\bibitem{B1} T. Bothner, Transition asymptotics for the Painlev\'e II transcendent, {\it Duke Math. J.} {\bf 166}, 205--324 (2017).

\bibitem{B2} T. Bothner, From gap probabilities in random matrix theory to eigenvalue expansions, {\it J. Phys. A: Math. Theor.} {\bf 49}, 075204 (2016).

\bibitem{BB2} T. Bothner and R. Buckingham, Large deformations of the Tracy-Widom distribution II. Oscillatory asymptotics, in preparation.

\bibitem{BDIK1} T. Bothner, P. Deift, A. Its, and I. Krasovsky, On the asymptotic behavior of a log gas in the bulk scaling limit in the presence of a varying external potential I, {\it Commun. Math. Phys.} {\bf 337}, 1397--1463 (2015).

\bibitem{BDIK2} T. Bothner, P. Deift, A. Its, and I. Krasovsky, On the asymptotic behavior of a log gas in the bulk scaling limit in the presence of a varying external potential II, preprint: arXiv:1512.02883.

\bibitem{BI} T. Bothner and A. Its, Asymptotics of a cubic sine kernel determinant, {\it St. Petersburg Math. J.} {\bf 26}, 22--92 (2014).

\bibitem{BB} A. Budylin and V. Buslaev, Quasiclassical asymptotics of the resolvent of an integral convolution operator with a sine kernel on a finite interval, {\it Algebra i Analiz} {\bf 7}, 79--103 (1995).

\bibitem{CIK} T. Claeys, A. Its, and I. Krasovsky, Higher-order analogues of the Tracy-Widom distribution and the Painlev\'e II hierarchy, {\it Commun. Pure. Appl. Math.} {\bf 63}, 362--412 (2010).

\bibitem{CC} C. Charlier and T. Claeys, Thinning and conditioning of the Circular Unitary Ensemble, preprint: arXiv:1604.08399.

\bibitem{DeiftIK:2008} P. Deift, A. Its, and I. Krasovsky, Asymptotics of the Airy-kernel determinant, {\it Commun. Math. Phys.} {\bf 278},  643--678 (2008).

\bibitem{DeiftIKZ:2007} P. Deift, A. Its, I. Krasovsky, and X. Zhou, The Widom-Dyson constant for the gap probability in random matrix theory, {\it J. Comput. Appl. Math.} {\bf 202}, 26--47 (2007).

\bibitem{DZ} P. Deift and X. Zhou, A steepest descent method for oscillatory Riemann-Hilbert problems. Asymptotics for the MKdV equation, {\it Ann. Math} {\bf 137}, 295--368 (1993).

\bibitem{D} M. Dieng, Distribution functions for edge eigenvalues in orthogonal and symplectic ensembles: Painlev\'e representations, {\it Int. Math. Res. Notices} {\bf (37)}, 2263--2287 (2005).

\bibitem{Dyson:1976} F. Dyson, Fredholm determinants and inverse scattering problems, {\it Commun. Math. Phys.} 
{\bf 47}, 171--183 (1976).

\bibitem{Dys} The Coulomb fluid and the fifth Painleve transendent. 
Chen Ning Yang: A Great Physicist of the Twentieth Century
(eds. C. Liu and S.-T. Yau),
131--146,
International Press, Cambridge 
(1995).

\bibitem{Ehrhardt:2006} T. Ehrhardt, Dyson's constant in the asymptotics of the Fredholm determinant of the sine kernel,
{\it Commun. Math. Phys.} {\bf 262}, 317--341 (2006).

\bibitem{Ehrhardt:2007} T. Ehrhardt, Dyson's constants in the asymptotics of the determinants of Wiener-Hopf-Hankel operators with the sine kernel,
{\it Commun. Math. Phys.} {\bf 272}, 683--698 (2007).

\bibitem{F} P. Forrester, {\it Log-gases and random matrices}, Princeton University Press, Princeton, NJ, 2010.

\bibitem{F2} P. Forrester, Hard and soft edge spacing distributions for random matrix ensembles with orthogonal and symplectic symmetry, {\it Nonlinearity} {\bf 19} (2006), 2989--3002.

\bibitem{FM} P. Forrester and A. Mays, Finite size corrections in random matrix theory and Odlyzko's data set for the Riemann zeros, {\it Proc. Royal Soc. A} {\bf 471}, 2182 (2015).

\bibitem{HM} S. Hastings and J. McLeod, A boundary value problem associated with the second Painlev\'e transcendent and the Korteweg-de Vries equation, {\it Arch. Ration. Mech. Anal.} {\bf 73}, 31--51 (1980).

\bibitem{IIKS} A. Its, A. Izergin, V. Korepin, and N. Slavnov, Differential equations for quantum correlation functions, {\it Int. J. Mod. Phys. B} {\bf 4}, 1003--1037 (1990).

\bibitem{IK} A. Its and I. Krasovsky,  Hankel determinant and orthogonal polynomials for the Gaussian weight with a jump, {\it Contemp. Math.} {\bf 458}, 215--247 (2008).


\bibitem{IPS} J. Illian, A. Penttinen, H. Stoyan, and D. Stoyan, {\it Statistical analysis and modelling of spatial point patterns}, Wiley, 2008.

\bibitem{Johansson:2000} K. Johansson, Shape fluctuations and random matrices, {\it Commun. Math. Phys.} {\bf 209}, 437--476 (2000).

\bibitem{Krasovsky:2004} I. Krasovsky, Gap probability in the spectrum of random matrices and asymptotics of polynomials orthogonal on an arc of the unit circle. {\it Int. Math. Res. Not.} {\bf 25}, 1249--1272 (2004).

\bibitem{NIST} NIST Digital Library of Mathematical Functions, http://dlmf.nist.gov.

\bibitem{PrahoferS:2000} M. Pr\"ahofer and H. Spohn, Universal distributions for growth processes in 1+1 dimensions and random matrices, {\it Phys. Rev. Lett.} {\bf 84}, 4882--4885 (2000).

\bibitem{TakeuchiS:2010} K. Takeuchi and M. Sano, Universal fluctuations of growing interfaces:  evidence in turbulent liquid crystals, {\it Phys. Rev. Lett.} {\bf 104}, 230601 (2010).

\bibitem{TracyW:1994} C. Tracy and H. Widom, Level-spacing distributions and the Airy kernel, {\it Commun. Math. Phys.} {\bf 159},
151--174 (1994).

\bibitem{TW} C. Tracy and H. Widom, On orthogonal and symplectic matrix ensembles, {\it Commun. Math. Phys.} {\bf 177}, 727--754 (1996).

\bibitem{Widom:1995} H. Widom, Asymptotics for the Fredholm determinant of the sine kernel on a union of intervals,
{\it Commun. Math. Phys.} {\bf 171}, 159--180 (1995).

\end{thebibliography}
\end{document}